
\documentclass[12pt]{amsart}
\usepackage{amsfonts}
\usepackage{changepage}
\usepackage{natbib}
\usepackage{eurosym}
\usepackage{amssymb, amsmath, url,color,  amsthm, amssymb,graphicx,multirow, appendix}
\usepackage[margin=0.8in]{geometry}
\usepackage{bigints}
\usepackage{lscape,amssymb, amsmath,verbatim,graphicx}
\usepackage{epsfig,subfigure,float,subfigure}
\usepackage{graphicx}
\usepackage{epsfig}
\usepackage{threeparttable}
\usepackage{ragged2e}
\usepackage{epstopdf}
\usepackage{booktabs}
\usepackage{float}
\usepackage{multirow}
\usepackage{adjustbox}
\usepackage{lscape,lipsum}
\usepackage{url}
\usepackage{natbib}
\usepackage[T1]{fontenc}
\usepackage{a4wide, url}
\usepackage[english]{babel}
\usepackage{graphicx}
\usepackage{array}
\usepackage{multirow}
\usepackage{amsthm}
\usepackage{amsmath}
\usepackage[foot]{amsaddr}
\usepackage{amsfonts}
\usepackage{amssymb}
\usepackage{pgf}
\usepackage{paralist}
\usepackage{pdflscape}
\usepackage{lscape,amssymb, amsmath,verbatim}
\usepackage{epsfig,subfigure,float,subfigure}
\usepackage{graphicx}
\usepackage{epsfig}
\usepackage{amssymb}
\usepackage{amsmath}
\usepackage{amsfonts}
\usepackage{geometry}
\usepackage{scalefnt}
\usepackage{graphicx}
\usepackage{caption}
\usepackage{setspace}
\usepackage{array}
\usepackage{multirow}
\usepackage{amsthm}
\usepackage{amsmath}
\usepackage{amsfonts}
\usepackage{amssymb}
\usepackage{pgf}
\usepackage{paralist}
\usepackage{pdflscape}
\usepackage{rotating}
\usepackage{scalefnt}
\usepackage{xr}
\usepackage[colorlinks = true,
            linkcolor = blue,
            urlcolor  = blue,
            citecolor = blue,
            anchorcolor = blue]{hyperref}
\usepackage{algpseudocode}
\usepackage{algorithm}

\setcounter{MaxMatrixCols}{10}

\captionsetup[figure]{font=footnotesize, labelfont=footnotesize}
\captionsetup[subfigure]{font=scriptsize, labelfont=scriptsize}

\setlength{\headsep}{0.5in}
\allowdisplaybreaks
\DeclareFontFamily{OT1}{pzc}{}
\DeclareFontShape{OT1}{pzc}{m}{it}{<-> s * [1.10] pzcmi7t}{}
\DeclareMathAlphabet{\mathpzc}{OT1}{pzc}{m}{it}
\DeclareMathOperator*{\argmax}{argmax}
\DeclareMathOperator*{\sargmax}{sargmax}
\DeclareMathOperator*{\dist}{dist}

\DeclareMathOperator{\cov}{cov}

\renewcommand{\P}{P}

\pagestyle{plain}
\thispagestyle{empty}
\newtheoremstyle{exampstyle}
{4pt} 
{4pt} 
{\itshape} 
{} 
{\bfseries} 
{.} 
{.75em} 
{} 
\theoremstyle{exampstyle}

\numberwithin{table}{section}
\numberwithin{figure}{section}
\newtheorem{theorem}{Theorem}[section]
\newtheorem{lemma}{Lemma}[section]

\newtheorem{assumption}{Assumption}[section]

\def\beq{\begin{equation}}
\def\eeq{\end{equation}}
\def\bals{\begin{align*}}
\def\eals{\end{align*}}
\def\bal{\begin{align}}
\def\eal{\end{align}}

\numberwithin{equation}{section}
\numberwithin{theorem}{section}
\numberwithin{corollary}{section}
\allowdisplaybreaks
\onehalfspacing
\addtolength{\textwidth}{0.1in}
\linepenalty=2000

\begin{document}
\title[Changepoint detection in functional data]{On changepoint detection in
functional data using empirical energy distance}
\author{B.\ Cooper Boniece$^{1,*}$}
\email{cooper.boniece@drexel.edu}
\author{Lajos Horv\'ath$^2$}
\email{horvath@math.utah.edu}
\author{Lorenzo Trapani$^3$}
\email{lt285@leicester.ac.uk}
\address{$^{1}$Department of Mathematics, Drexel University, Philadelphia,
PA 19104 USA }
\address{$^2$Department of Mathematics, University of Utah, Salt Lake City,
UT 84112--0090 USA }
\address{$^3$Department of Economics, Finance and Accounting; School of
Business and Economics, University of Leicester, Leicester, U.K.; Department
of Economics and Management, University of Pavia, Pavia, Italy.}
\address{$^*$Research supported in part under NSF grant DMS-2309570.}
\subjclass[2020]{60F17}
\keywords{Change-point detection; Functional data analysis; Energy distance;
Empirical characteristic function; Karhunen-Lo\`{e}ve expansion}

\begin{adjustwidth}{-2.5pt}{-2.5pt}
\begin{abstract}
We propose a novel family of test statistics to detect the presence of
changepoints in a sequence of dependent, possibly multivariate, functional-valued
observations. Our approach allows to test for a very general class of
changepoints, including \ the \textquotedblleft classical\textquotedblright\
case of changes in the mean, and even
changes in the whole distribution. Our statistics are based on a
generalisation of the empirical energy distance; we propose weighted functionals of
the energy distance process, which are designed in order to enhance the
ability to detect breaks occurring at sample endpoints. The limiting
distribution of the maximally selected version of our statistics requires
only the computation of the eigenvalues of the covariance function, thus
being readily implementable in the most commonly employed packages, e.g. 
\textsc{R}. We show that, under the alternative, our statistics are able to
detect changepoints occurring even very close to the beginning/end of the
sample. In the presence of multiple changepoints, we propose a binary
segmentation algorithm to estimate the number of breaks and the locations
thereof. Simulations show that our procedures work very well in finite
samples. We complement our theory with applications to financial and temperature data.
\end{abstract}
\end{adjustwidth}

\maketitle


\doublespacing

\section{Introduction\label{intro}}

The analysis of datasets where the data are observed as functions, rather
than scalars or vectors, has been investigated in numerous contributions
over the past few years. Functional Data Analysis (FDA) has become
ubiquitous in virtually all applied sciences, in the case where data are
genuinely functional in nature, and also when a parsimonious description of
the data is called for. FDA appears naturally in the analysis of economic
and financial data; examples include analysing the term structure of
interest rates, where, for each time period, the observed maturities are the
discrete approximation of the continuum of maturities (\citealp{hays}); and
modelling intraday return density trajectories (\citealp{bathia2010}). In
climate science, it is typical to model temperatures - which are recorded at
a high frequency basis, e.g. several times per day - at a lower frequency
(e.g. yearly), with the intra-period data representing the discretised
functional observations - see, for example, \citet{horvath:kokoszka:2012}
and \citet{king2018functional}. In medical imaging, several datasets arise
that can be modelled as possibly multi-dimensional and functional valued (%
\citealp{sangalli}). See also \citet{ramsay2002applied} for further examples.

On account of the huge relevance of the topic, inferential theory for FDA
has been studied in many contributions. The literature has developed useful
dimension reduction tools such as the functional version of Principal
Components (\citealp{hall2006properties}), and the full-blown estimation
theory for linear regression, dynamic models and also nonlinear models such
as the functional version of ARCH and GARCH (\citealp{horvath:kokoszka:2012}%
). However, the validity of inferential theory often hinges on having some
stability in the structure of the data, such as the constancy of the mean
function, or of the whole distribution. Hence, testing for the possible
presence of changepoints (in the mean, in higher order moments, or even in
the whole distribution) is of paramount importance.

Changepoint detection is well-studied in the context of scalar or
vector-valued time series, and we refer, \textit{inter alia}, to %
\citet{casini2019} for a useful review containing several examples and
applications. In contrast, changepoint analysis in functional data has
received only limited attention. \citet{berkes2009detecting} propose a
CUSUM-based test statistic to detect changepoints in the mean of independent
functional-valued observations; \citet{hormann:kokoszka:2010}, %
\citet{zhang2011testing}, \citet{aston} and \citet{aue2018detecting}
consider extensions to deal with dependent data, which typically occur in a
time series context. These contributions, broadly speaking, are based on the 
\textit{unweighted} CUSUM process, and it is possible to show that, in this
case, $N^{1/2}$ periods away from the beginning/end of the sample (where $N$
is the sample size). On the other hand, detection of early/late occurring
breaks is very important, due to its implications on the ability to assess
timely whether a model which has been valid so far is still appropriate e.g.
for forecasting.%

\medskip

\noindent \textit{Main contributions of this paper}

\medskip

In this paper, we bridge the gaps discussed above, by proposing a procedure
to detect changepoints for serially dependent, possibly multivariate
functional-valued time series, allowing for breaks to occur close to the
sample endpoints. We consider very general changes, which could occur in
various functions/functionals of the data, including the mean, higher order
moments, and in general functions which completely characterize the
underlying distribution such as the characteristic function. Specifically,
we develop a novel family of weighted test statistics based on the notion of 
\textit{energy distance }(see e.g. \citealp{szekely:rizzo:2005}; %
\citealp{szekely:rizzo:2017}; and \citealp{baringhaus:franz:2004}). The
energy distance is a metric designed to measure the distance between the
distributions of two independent random vectors (say $X$ and $Y$), defined
as 
\begin{equation}
\mathcal{E}_{\eta }(X,Y)=2E\left\vert X-Y\right\vert ^{\eta }-E\left\vert
X-X^{\prime }\right\vert ^{\eta }-E\left\vert Y-Y^{\prime }\right\vert
^{\eta },  \label{energy}
\end{equation}%
where $X^{\prime },Y^{\prime }$ are independent copies of $X$ and $Y$,
respectively, $\left\vert \cdot \right\vert $ is the Euclidean norm, and $%
0<\eta <2$; it can be shown (see Theorem 2 in \citealp{szekely:rizzo:2005})
that $\mathcal{E}_{\eta }(X,Y)=0$ if and only if $X$ and $Y$ have the same
distribution. Empirical energy distances have been used by %
\citet{matteson:james:2014} and \citet{biau:bleakley:mason:2016} to study
distributional changepoint problems for a sequence of independent,
vector-valued time series; \citet{chakraborty2021high} extend the theory to
the case of high-dimensional sequences.

Taking $\eta =2$ in (\ref{energy}) leads to a statistic suitable for testing
equality of expectations rather than equality of distributions (%
\citealp{szekely:rizzo:2005}). Hence, we consider an empirical version of $%
\mathcal{E}_{2}(X,Y)$, constructed at every point in the sample $1\leq k\leq
N$, comparing the sample average before and after $k$ in a (conceptually)
similar way to the CUSUM\ process. We then consider \textit{weighted}
versions of the empirical energy process, with weights designed to boost the
value taken by the process when $k$ is close to the beginning/end of the
sample. Changepoint detection can thus be based on the maximally selected
weighted empirical energy process. The resulting tests have nontrivial power
versus breaks occurring (much) closer to the sample endpoints than $N^{1/2}$
periods, while still having power versus mid-sample breaks. In Section \ref%
{asymptotics} we show that the limiting distribution of our test statistics
contains the integral of the square of a Gaussian process which depends - in
a highly nontrivial way - on nuisance parameters. Hence, in order to compute
critical values, we propose a method based on the Karhunen-Lo\`{e}ve (KL
henceforth) expansion, which appears to be easier to use than e.g. the
bootstrap (see e.g., albeit in a different context, \citealp{inoue2001}).
Our theory is stated for the general case of multivariate functional time
series whose argument can also be multivariate, which is relevant in several
applications of FDA, including shape analysis (\citealp{kenobi2010shape})
and medical imaging (\citealp{kurtek2010novel}).

For the sake of clarity, our presentation focuses mainly on detecting
changes in the \textit{mean} of functional observations. However, our
approach can be readily applied 
to consider different changepoint problems; in Section \ref{distrib} we
discuss how our tests can be used to detect \textit{distributional} changes,
by applying it to the empirical characteristic function. Testing for changes
in the distribution is arguably of great importance; as \citet{inoue2001}
puts it, \textquotedblleft \lbrack ...]\ stability of distribution, moments,
or parameters is essential to the proofs of asymptotic properties of the
maximum likelihood method, generalized method of moments, and nonparametric
method. Consequently, instability can affect estimation and
inference.\textquotedblright\ (p. 156). Contributions on this topic often
require independence assumptions, and are relatively scarce even in the case
of scalar or vector-valued observations: in addition to the papers by %
\citet{matteson:james:2014} 
and others referred to above, other approaches include \citet{inoue2001},
who uses the unweighted CUSUM process based on the empirical distribution
function; \citet{antoch2008}, who use a combination of rank statistics; and %
\citet{huskova:meintanis:2006}, who use the empirical characteristic
function for scalar observations. 

The remainder of the paper is organised as follows. We present our test
statistics in Section \ref{tests}. We study its asymptotic theory in Section %
\ref{asymptotics}: we derive the weak limit under the null in Section \ref%
{asy_null}; we study power, estimation of the breakdate, and binary
segmentation in Section \ref{asy_alt}; we offer a methodology to compute
critical values in Section \ref{cr_values}. We extend our approach to
detecting changes in the distribution of the data is in Section \ref{distrib}%
. In Section \ref{simulations}, we report a comprehensive simulation
exercise; an empirical application to intraday returns is in Section \ref%
{empirical}. Section \ref{conclusions} concludes. Further Monte Carlo
evidence, an empirical application to temperature data, lemmas and proofs
are relegated to the Supplement.

NOTATION. Henceforth, $\mathcal{T}$ denotes a compact subset of $\mathbb{R}%
^{d}$, and $\{x(t),~t\in \mathcal{T}\}$ is a square integrable function;
whenever convenient, we write $x$ in place of $x(t)$ or in place of $%
\{x(t),~t\in \mathcal{T}\}$. For any $r\geq 1$, given two square integrable $%
\mathbb{R}^{r}$-valued functions $\{x(t),~t\in \mathcal{T}\}$ and $%
\{y(t),~t\in \mathcal{T}\}$, we define the inner product $\langle x,y\rangle
=\int_{\mathcal{T}}x^{\top }(t)y(t)dt$, where \textquotedblleft $^{\top }$%
\textquotedblright\ is the usual transpose; and we define the $L^{2}$-norm $%
\Vert x\Vert =\sqrt{\langle x,x\rangle }$, writing $x=y$ if $\Vert x-y\Vert
=0$. When unambiguous, we write $\{a_{\ell }\}$ to denote a given sequence $%
\{a_{\ell },-\infty <\ell <\infty \}$. We also write the symbol $\int $ in
place of $\int_{\mathcal{T}}$. We use: \textquotedblleft $\underset{\mathcal{%
D}[0,1]}{\overset{w}{\longrightarrow }}$\textquotedblright\ to denote weak
convergence in $\mathcal{D}[0,1]$; \textquotedblleft $\overset{{\mathcal{D}}}%
{\rightarrow }$\textquotedblright\ to denote convergence in distribution;
\textquotedblleft $\overset{\mathcal{P}}{\rightarrow }$\textquotedblright\
for convergence in probability; \textquotedblleft a.s.\textquotedblright\
for \textquotedblleft almost surely\textquotedblright ; \textquotedblleft $%
\overset{{\mathcal{D}}}{=}$\textquotedblright\ for equality in distribution; 
$\lfloor \cdot \rfloor $ for the integer value function; and $\left\vert
\cdot \right\vert $ to denote the Euclidean norm of a vector, or the
Frobenius norm of a matrix. Other relevant notation is introduced further in
the paper.

\section{The test statistics: definition, assumptions and asymptotics\label%
{tests}}

\noindent We consider a sequence of $\mathbb{R}^{r}$-valued functional
observations of the form 
\begin{equation*}
X_{i}(t)=\mu _{i}(t)+\epsilon _{i}(t),\qquad t\in \mathcal{T\subset }\mathbb{%
R}^{d},\quad 1\leq i\leq N,
\end{equation*}%
where for each $i$, $\mu _{i}$ and $\epsilon _{i}$ are $\mathbb{R}^{r}$%
-valued square integrable functions. We aim to test 
\begin{equation}
H_{0}:\mu _{1}=\mu _{2}=\ldots =\mu _{N},  \label{h_0}
\end{equation}%
against the $R$-change alternative: 
\begin{equation}
H_{A}:\text{there are $1<k_{1}<\ldots <k_{R}<N$ s.t. }\Vert \mu
_{k_{i}+1}-\mu _{k_{i}}\Vert >0,\quad \mu _{k_{i-1}+1}=\ldots =\mu _{k_{i}},
\label{h_A}
\end{equation}%
for $i=1,\ldots ,R$, with the convention that $k_{0}=1$ and $k_{R+1}=N$.

As discussed in the introduction, a possible way of detecting changes is
based on the energy distance defined in (\ref{energy}). Our approach is
based on weighted functionals of $\{V_{N}(k),\,2\leq k\leq N-2\}$, defined
as the empirical version of the energy distance\footnote{%
See also \citet{sejdinovic:sriperumbudur:gretton:fukumizu:2013} for further
discussion on generalizations of the energy distance.} calculated for $\eta
=2$ 
\begin{equation}
V_{N}(k)=\frac{2}{k(N-k)}\sum_{i=1}^{k}\sum_{j=k+1}^{N}\Vert
X_{i}-X_{j}\Vert ^{2}-\frac{1}{\displaystyle{{\binom{k}{2}}}}\sum_{1\leq
i<j\leq k}\Vert X_{i}-X_{j}\Vert ^{2}-\frac{1}{\displaystyle{{\binom{N-k}{2}}%
}}\sum_{k<i<j\leq N}\Vert X_{i}-X_{j}\Vert ^{2}.  \label{v_nk}
\end{equation}
Throughout the paper, we assume that the sequence $\{X_{i}\}$ is weakly
dependent:

\begin{assumption}
\label{as1} \textit{(i)} the sequence $\{\epsilon _{\ell },-\infty <\ell
<\infty \}$ is a Bernoulli shift sequence, i.e., it has the representation $%
\epsilon _{\ell }=g(\eta _{\ell },\eta _{\ell -1},\ldots )$, where for each $%
\ell $, $\eta _{\ell }=\eta _{\ell }(t,\omega )$ are \textit{i.i.d.}%
~functions jointly measurable in $(t,\omega )$ taking values in a measurable
space $\mathcal{S}$, and $g$ is a nonrandom measurable function $g:\mathcal{S%
}^{\infty }\rightarrow L^{2}(\mathcal{T})$; \textit{(ii)} $E\epsilon
_{1}(t)=0$ and $E\Vert \epsilon _{1}\Vert ^{{4+\epsilon }}<\infty $ with
some ${\epsilon }>0$; \textit{(iii)} for some $\kappa >{4+\epsilon }$, $%
\sum_{m=1}^{\infty }\big(E\Vert \epsilon _{1}-\epsilon _{1}^{(m)}\Vert ^{{%
4+\epsilon }}\big)^{1/\kappa }<\infty $, where for each pair $(j,\ell )$, we
set $\epsilon _{\ell }^{(j)}=g(\eta _{\ell },\eta _{\ell -1},\ldots ,\eta
_{\ell -j+1},\eta _{\ell -j}^{(j)},\eta _{\ell -j-1}^{(j)},\ldots )$, with $%
\{\eta _{\ell }^{(j)},-\infty <\ell <\infty \}$ an independent copy of $%
\{\eta _{\ell },-\infty <\ell <\infty \}$.
\end{assumption}

Assumption \ref{as1} states that the sequence $\{\epsilon _{\ell },~-\infty
<\ell <\infty \}$ is stationary and ergodic, and it can be approximated by a
sequence with finite-order dependence (see \citealp{hormann:kokoszka:2010}).
The Bernoulli shift representation in Assumption \ref{as1} is widely
employed in the analysis of scalar time series, where it can be verified in
the most commonly used DGPs in econometrics and statistics. As far as
functional-valued data are concerned, examples when Assumption \ref{as1}
hold include linear processes in Hilbert spaces (%
\citealp{horvath:kokoszka:2012}), and a large class of non-linear processes,
including functional ARCH and GARCH models (%
\citealp{aue:horvath:pellatt:2017}) and bilinear models (%
\citealp{hormann:kokoszka:2010}).

\section{Asymptotics\label{asymptotics}}

We will consider the following statistics:
\begin{equation}
\frac{1}{2}N\left( u\left( 1-u\right) \right) ^{2-\alpha }V_{N}\left(
\left\lfloor Nu\right\rfloor \right) , \quad2/N\leq u \leq 1-2/N,
\label{functionals}
\end{equation}%
where $0\leq \alpha <1$ and $u=k/N$. A full-blown discussion is after
Theorems \ref{th-1} and \ref{th-2} below; here we offer a heuristic preview
of the rationale of (\ref{functionals}). The statistic $V_{N}\left(
\left\lfloor Nu\right\rfloor \right) $ could be sensitive to outliers
occurring a few periods after the start of the sample, which could inflate $%
V_{N}\left( \left\lfloor Nu\right\rfloor \right) $ and lead to a spurious
rejection of the null of no changepoint; the weights $\left( u\left(
1-u\right) \right) ^{2}$ reduce the impact of outliers close to the sample
endpoints on $V_{N}\left( \left\lfloor Nu\right\rfloor \right) $. On the
other hand, this weighing scheme also reduces power in the presence of a
genuine break at the beginning/end of the sample. The further weight $\left(
u\left( 1-u\right) \right) ^{-\alpha }$ is designed to \textquotedblleft
pick up\textquotedblright\ the test statistic at sample endpoints, boosting
power versus breaks located close to the sample endpoints. The process in (%
\ref{functionals}) can be compared with 
the weighted CUSUM process employed in the changepoint detection literature (%
\citealp{csorgo1997}); as we show in Section \ref{asy_alt}, larger values of 
$\alpha $ result in having nontrivial power versus changepoints closer to
the beginning/end of the sample.

\subsection{Asymptotics under the null\label{asy_null}}

Let%
\begin{eqnarray}
\mathbf{D}(t,t^{\prime }) &=&\sum_{\ell =-\infty }^{\infty }E\epsilon
_{0}\left( t\right) \epsilon _{\ell }^{\top }\left( t^{\prime }\right)
\qquad t,t^{\prime }\in \mathcal{T},  \label{e:def_D} \\
\sigma _{0}^{2} &=&E\Vert X_{1}-\mu _{1}\Vert ^{2}=E\Vert \epsilon _{1}\Vert
^{2}.  \label{e:epsilon2}
\end{eqnarray}%
Define also the process%
\begin{equation}
\Delta (u)=\int_{\mathcal{T}}|\Gamma (u,t)|^{2}dt-\sigma _{0}^{2}u(1-u).
\label{e:def_Delta(u)}
\end{equation}%
where $\left\{ \Gamma (u,t),0\leq u\leq 1,t\in \mathcal{T}\right\} $ is an $%
r $-dimensional Gaussian process with $E\Gamma (u,t)=0$ and covariance
kernel $E\left( \Gamma (u,t)\Gamma ^{\top }\left( u^{\prime },t^{\prime
}\right) \right) $ $=$ $\left( \min \left\{ u,u^{\prime }\right\}
-uu^{\prime }\right) \mathbf{D}(t,t^{\prime })$. \newline
Our first main result provides the functional weak limit of weighted
versions of $V_{N}$ under $H_{0}$.

\begin{theorem}
\label{th-1} We assume that Assumption \ref{as1} is satisfied. Then, as $%
N\rightarrow \infty $, under $H_{0}$ it holds that, for all $0\leq \alpha <1$
\begin{equation*}
\frac{1}{2}N\left( u\left( 1-u\right) \right) ^{2-\alpha }V_{N}\left(
\left\lfloor Nu\right\rfloor \right) \underset{\mathcal{D}[0,1]}{\overset{w}{%
\longrightarrow }}\frac{\Delta (u)}{(u(1-u))^{\alpha }}.
\end{equation*}
\end{theorem}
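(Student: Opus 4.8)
The plan is to reduce $V_N(\lfloor Nu\rfloor)$ to a quadratic functional of the partial sums of the $\epsilon_i$ and then apply an invariance principle in $L^2(\mathcal T)$. Under $H_0$ we have $X_i-X_j=\epsilon_i-\epsilon_j$, so the $\mu_i$ cancel, and expanding $\|\epsilon_i-\epsilon_j\|^2=\|\epsilon_i\|^2-2\langle\epsilon_i,\epsilon_j\rangle+\|\epsilon_j\|^2$ in each of the three double sums of \eqref{v_nk} lets me collect everything. Writing $S_k=\sum_{i=1}^k\epsilon_i$ and $A_k=\sum_{i=1}^k\|\epsilon_i\|^2$, a direct computation using $\sum_{1\le i<j\le k}\|\epsilon_i-\epsilon_j\|^2=kA_k-\|S_k\|^2$ gives
\begin{equation*}
\begin{split}
\tfrac12 V_N(k)={}&\frac{\|S_k\|^2-A_k}{k(k-1)}+\frac{\|S_N-S_k\|^2-(A_N-A_k)}{(N-k)(N-k-1)}\\
&-\frac{2\langle S_k,S_N-S_k\rangle}{k(N-k)}.
\end{split}
\end{equation*}

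The crucial algebraic observation is that the purely quadratic part assembles into a single square,
\begin{equation*}
\frac{\|S_k\|^2}{k^2}+\frac{\|S_N-S_k\|^2}{(N-k)^2}-\frac{2\langle S_k,S_N-S_k\rangle}{k(N-k)}=\Big\|\frac{S_k}{k}-\frac{S_N-S_k}{N-k}\Big\|^2=\frac{\|S_k-uS_N\|^2}{N^2(u(1-u))^2},
\end{equation*}
with $k=\lfloor Nu\rfloor$, while replacing $k(k-1)$ by $k^2$ (and its companion) and invoking the ergodic theorem $A_k/k\to\sigma_0^2$ on the diagonal terms $-A_k/(k(k-1))$ contributes the $\sigma_0^2$ correction. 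Together these yield
\begin{equation*}
\tfrac12 N(u(1-u))^{2-\alpha}V_N(\lfloor Nu\rfloor)=\frac{1}{(u(1-u))^{\alpha}}\Big[\big\|N^{-1/2}(S_{\lfloor Nu\rfloor}-uS_N)\big\|^2-u(1-u)\sigma_0^2\Big]+\mathcal R_N(u),
\end{equation*}
where $\mathcal R_N(u)$ collects the lower-order remainders. The bracketed main term is then handled by the functional CLT for Hilbert-space-valued Bernoulli shifts implied by Assumption \ref{as1}: $N^{-1/2}(S_{\lfloor Nu\rfloor}-uS_N)\overset{w}{\longrightarrow}\Gamma(u,\cdot)$ in $\mathcal D([0,1],L^2(\mathcal T))$, the Brownian-bridge limit whose covariance $(\min\{u,u'\}-uu')\mathbf D(t,t')$ matches the one prescribed for $\Gamma$. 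Since $x\mapsto\int|x|^2\,dt$ is continuous on $L^2(\mathcal T)$, the continuous mapping theorem gives, on compact subsets of $(0,1)$, convergence of the bracket to $\int|\Gamma(u,t)|^2\,dt-u(1-u)\sigma_0^2=\Delta(u)$, cf. \eqref{e:def_Delta(u)}.

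The hard part is not this finite-dimensional/interior convergence but the behaviour at the sample endpoints, where $(u(1-u))^{-\alpha}\uparrow\infty$ and $u$ is as small as $2/N$; there the convergence in $\mathcal D[0,1]$ requires a genuine tightness argument. I would establish it by a Hájek–Rényi / maximal-inequality estimate: from the $4+\epsilon$ moments and the summable coupling condition in Assumption \ref{as1} one obtains $E\max_{1\le k\le m}\|S_k\|^2\le Cm$, and a dyadic blocking of the range $2/N\le u\le\delta$ over intervals $(2^{j},2^{j+1}]$ gives $E\big[N^{\alpha-1}\sup_{k\le N\delta}k^{-\alpha}\|S_k\|^2\big]\le C\delta^{1-\alpha}$, where the geometric sum converges and the exponent is favorable \emph{precisely because} $\alpha<1$. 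Hence $\lim_{\delta\to0}\limsup_{N}P\big(\sup_{2/N\le u\le\delta}(u(1-u))^{-\alpha}\|N^{-1/2}S_{\lfloor Nu\rfloor}\|^2>\varepsilon\big)=0$; the $uS_N$ contribution is negligible since $(u(1-u))^{-\alpha}u^2\to0$, and the endpoint $u\to1$ is symmetric. The same bounds give $\sup_u(u(1-u))^{-\alpha}|\mathcal R_N(u)|\overset{\mathcal P}{\longrightarrow}0$. Combining tightness near the endpoints with weak convergence on each $[\delta,1-\delta]$ and letting $\delta\to0$ delivers weak convergence in $\mathcal D[0,1]$ to $\Delta(u)/(u(1-u))^{\alpha}$, completing the proof.
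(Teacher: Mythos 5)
Your proposal is correct and takes essentially the same route as the paper's proof: your algebraic reduction of $V_N$ to the weighted squared CUSUM plus the $\sigma_0^2$ correction is the paper's Lemma \ref{l:replace_V_with_Q}, the interior convergence via the invariance principle for Hilbert-space-valued Bernoulli shifts is Lemma \ref{l:BHR_11}, and your dyadic-blocking endpoint estimates, with $\alpha<1$ entering exactly as you describe, are Lemmas \ref{l:weightappx}, \ref{l:Q_N_boundary_bound} and \ref{l:Delta_boundary_bound}. The only step to tighten is the $\sigma_0^2$ correction: the ergodic theorem alone does not give the uniform weighted control of $\sum_{i\leq k}\big(\Vert\epsilon_i\Vert^2-\sigma_0^2\big)$ required near the endpoints, so these fluctuations must be handled by the same M\'oricz--Serfling--Stout maximal inequality you apply to $S_k$, after verifying that $\Vert\epsilon_i\Vert^2$ is itself a Bernoulli shift with $2+\epsilon/2$ moments (this is exactly the second half of the paper's Lemma \ref{l:weightappx}).
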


Theorem \ref{th-1} is the building block to carry out changepoint detection.
We note that, heuristically, the Gaussian process $\Gamma (u,t)$ is a
Brownian bridge at each \textquotedblleft slice\textquotedblright\ across $t$%
. As we show in Lemma \ref{l:replace_V_with_Q}, this is a consequence of the
fact that, under the null, $V_{N}\left( k\right) $ and its weighted versions
are well approximated by the (weighted) squared CUSUM process, modulo some
extra terms that, in the limit, either vanish or enter the expression as
constants.

A natural approach to test for changepoints is to use the max-type statistic 
\begin{equation}
T_{N}=\sup_{0\leq u\leq 1}\frac{1}{2}N\left( u\left( 1-u\right) \right)
^{2-\alpha }\left\vert V_{N}\left( \left\lfloor Nu\right\rfloor \right)
\right\vert .  \label{e:def_T_N}
\end{equation}%
Under $H_{0}$, it follows by Theorem \ref{th-1} and continuity\footnote{%
The Law of the Iterated Logartihm for Gaussian processes entails that the
limit in (\ref{tn}) is a.s. finite - see also the proof of Theorem \ref{th-1}
for details. Indeed, having $\alpha <1$ is crucial to this argument, since
it also holds that 
\begin{equation*}
\sup_{0<u<1}\Delta \left( u\right) /(u\left( 1-u\right) )=\infty \text{ a.s.}
\end{equation*}%
} that 
\begin{equation}
T_{N}\overset{\mathcal{D}}{\rightarrow }\sup_{0<u<1}\frac{|\Delta (u)|}{%
(u(1-u))^{\alpha }}.  \label{tn}
\end{equation}%
From a practical point of view, the limiting law of $T_{N}$ contains several
nuisance parameters, such as the covariance kernel $\mathbf{D}(t,t^{\prime
}) $ defined in (\ref{e:def_D}), and the variance $\sigma _{0}^{2}$ defined
in (\ref{e:epsilon2}). In Section \ref{cr_values}, we discuss the
computation of critical values for tests based on $T_{N}$. From a technical
point of view, one of the main ingredients to show Theorem \ref{th-1} is the
weak invariance principle for partial sums of dependent functional time
series, shown in \citet{berkes:horvath:rice:2013}. However, in our case we
consider a \textit{weighted} version of the partial sum process, which
requires a nontrivial extension of the arguments in %
\citet{berkes:horvath:rice:2013}. The case $\alpha =1$ is also of interest,
and it corresponds to the standardised CUSUM\ (\citealp{csorgo1997});
studying this would require a \textit{strong} invariance principle for
partial sums of functional time series which, to our knowledge, is not
available in the literature.

\subsection{Asymptotics under the alternative\label{asy_alt}}

\subsubsection{Consistency under a single break and asymptotic power
function \label{power_func}}

We begin by considering the case of a single break - i.e., $R=1$ in (\ref%
{h_A}) - in the presence of a changepoint of size%
\begin{equation}
\mathcal{\delta }(t)=EX_{{k^{\ast }}}(t)-EX_{{k^{\ast }}+1}(t),
\label{delta}
\end{equation}%
and we also defined its rescaled counterpart as%
\begin{equation}
\rho _{N}(t)=\left\Vert \mathcal{\delta }\right\Vert ^{-1}\mathcal{\delta }%
(t).  \label{rho}
\end{equation}%
Define%
\begin{equation}
\sigma ^{2}=\int \int \rho _{N}^{\top }(t)\mathbf{D}(t,t^{\prime })\rho
_{N}(t^{\prime })dtdt^{\prime },  \label{sigma}
\end{equation}%
$a_{N}=N^{1/2}\Vert \mathcal{\delta }\Vert /\left( 2\sigma (\theta (1-\theta
))^{\frac{3}{2}-\alpha }\right) $, and let $\mathcal{N}$ denote a standard
normal random variable.

\begin{theorem}
\label{th-2} We assume that Assumption \ref{as1} is satisfied. Then, if,
under $H_{A}$ with $R=1$, it holds that%
\begin{equation}
\lim_{N\rightarrow \infty }N\left\Vert \mathcal{\delta }\right\Vert ^{2}%
\left[ \frac{k^{\ast }}{N}\left( 1-\frac{k^{\ast }}{N}\right) \right]
^{2-\alpha }=\infty ,  \label{power}
\end{equation}%
it follows that $T_{N}\overset{\mathcal{P}}{\rightarrow }\infty $. Further,
for all $0<\theta <1$, it holds that 
\begin{equation}
a_{N}\Big(\left( N\Vert \mathcal{\delta }\Vert ^{2}\right) ^{-1}T_{N}-\big(%
\theta (1-\theta )\big)^{2-\alpha }\Big)\overset{\mathcal{D}}{\rightarrow }%
\mathcal{N}.  \label{e:T_N_asymp_norm}
\end{equation}
\end{theorem}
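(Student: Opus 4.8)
The plan is to expand $V_N(k)$ under $H_A$ into a deterministic ``mean'' part, a linear ``cross'' part and a pure-noise part, to identify the deterministic part as a sharply peaked envelope that fixes the centering, and to extract the Gaussian fluctuation from the cross part evaluated at the break. Writing $\mu_i=a$ for $i\le k^{\ast}$ and $\mu_i=b=a-\delta$ for $i>k^{\ast}$, I would substitute $X_i=\mu_i+\epsilon_i$ into \eqref{v_nk} and split $\|X_i-X_j\|^2=\|\mu_i-\mu_j\|^2+2\langle\mu_i-\mu_j,\epsilon_i-\epsilon_j\rangle+\|\epsilon_i-\epsilon_j\|^2$, obtaining $V_N(k)=A_N(k)+B_N(k)+C_N(k)$. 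A direct count of the pairs straddling $k^{\ast}$ shows that, with $u=k/N$ and $\theta=k^{\ast}/N$, the deterministic term satisfies $\tfrac12 N(u(1-u))^{2-\alpha}A_N(\lfloor Nu\rfloor)=N\|\delta\|^2\Phi(u)(1+o(1))$, where $\Phi(u)=(1-\theta)^2u^{2-\alpha}(1-u)^{-\alpha}$ for $u\le\theta$ and $\Phi(u)=\theta^2u^{-\alpha}(1-u)^{2-\alpha}$ for $u\ge\theta$. Elementary calculus shows $\Phi$ is strictly increasing to the left of $\theta$ and strictly decreasing to the right, so it has a (corner) maximum at $u=\theta$ with $\Phi(\theta)=(\theta(1-\theta))^{2-\alpha}$; this produces the deterministic centering in \eqref{e:T_N_asymp_norm}. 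Evaluating at $u=k^{\ast}/N$, where $A_N(k^{\ast})=2\|\delta\|^2$ exactly, the divergence $N\|\delta\|^2(\tfrac{k^{\ast}}{N}(1-\tfrac{k^{\ast}}{N}))^{2-\alpha}\to\infty$ in \eqref{power} drives the consistency claim $T_N\overset{\mathcal{P}}{\rightarrow}\infty$.

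Next I would control the two stochastic parts. The pure-noise term $C_N$ is exactly the object governed by Theorem \ref{th-1}: its weighted version is $O_P(1)$ at each fixed $u\in(0,1)$, hence negligible against the diverging mean part. The cross term is the source of the Gaussian limit. Setting $Z_i=\langle\rho_N,\epsilon_i\rangle$ with $\rho_N$ as in \eqref{rho}, and $S_k=\sum_{i\le k}Z_i$, at $k=k^{\ast}$ the within-group contributions to $B_N$ vanish because $\mu_i-\mu_j=0$ inside each block, and a short computation gives $\tfrac12 N(\theta(1-\theta))^{2-\alpha}B_N(k^{\ast})=2\|\delta\|(\theta(1-\theta))^{1-\alpha}(S_{k^{\ast}}-\theta S_N)(1+o_P(1))$. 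Under Assumption \ref{as1} the scalar weakly dependent sequence $\{Z_i\}$ obeys an invariance principle with long-run variance $\sigma^2$ from \eqref{sigma}, so $N^{-1/2}(S_{k^{\ast}}-\theta S_N)\overset{\mathcal{D}}{\rightarrow}\sigma(W(\theta)-\theta W(1))\overset{\mathcal{D}}{=}\mathcal{N}(0,\sigma^2\theta(1-\theta))$. Combining the last two displays yields $\tfrac12 N(\theta(1-\theta))^{2-\alpha}B_N(k^{\ast})\approx 2\|\delta\|\sigma\sqrt{N}(\theta(1-\theta))^{3/2-\alpha}\mathcal{N}$, which after division by $N\|\delta\|^2$ and multiplication by $a_N$ gives exactly the standard normal limit of \eqref{e:T_N_asymp_norm}; the normalisation $a_N$ is designed precisely to absorb $\sigma$ and $\theta$.

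The main obstacle is the passage from this pointwise statement at $u=\theta$ to the supremum defining $T_N$ in \eqref{e:def_T_N}: I must show that $\sup_u\tfrac12 N(u(1-u))^{2-\alpha}|V_N(\lfloor Nu\rfloor)|$ is, to the order that matters, attained in a shrinking neighbourhood of $\theta$ and equals $N\|\delta\|^2\Phi(\theta)+\tfrac12 N(\theta(1-\theta))^{2-\alpha}B_N(k^{\ast})+o_P(\sqrt{N}\|\delta\|)$. The corner maximum of $\Phi$ helps: writing $u=\theta+h$, the envelope decreases linearly, $N\|\delta\|^2(\Phi(\theta+h)-\Phi(\theta))\asymp -N\|\delta\|^2|h|$, while the weighted cross process has increments of order $\sqrt{N}\|\delta\||h|^{1/2}$ by its Brownian scaling; balancing the two confines the argmax to $|h|=O_P(\sigma^2/(N\|\delta\|^2))$ and bounds the induced correction to the maximal value by $O_P(\sigma^2)=o_P(\sqrt{N}\|\delta\|)$, negligible after multiplication by $a_N$. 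Making this rigorous requires a uniform-in-$u$ handle on $B_N$ and $C_N$ (via the weighted invariance principle underlying Theorem \ref{th-1}) together with the peaked envelope, and this is the delicate step. It also disposes of the absolute value, since near $\theta$ one has $V_N\approx 2\|\delta\|^2>0$ while elsewhere the mean part is strictly smaller, so $|\cdot|$ creates no competing maximum. For the consistency claim under the general condition \eqref{power}, where $k^{\ast}/N$ may approach $0$ or $1$, I would instead argue from the lower bound $T_N\ge\tfrac12 N(\theta(1-\theta))^{2-\alpha}|V_N(k^{\ast})|$ and moment bounds from Assumption \ref{as1}, the only real care being the endpoint bookkeeping of the variances of $B_N(k^{\ast})$ and $C_N(k^{\ast})$.
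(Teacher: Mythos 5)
Your proposal is correct and follows essentially the same route as the paper: the same three-way decomposition (the paper's $V_{N}^{0}(k)+g_{N}(k)+R_{N}(k)$, matching your $C_N+A_N+B_N$), the same envelope $\widetilde{g}(u)$ with corner maximum $(\theta(1-\theta))^{2-\alpha}$, uniform control of the noise part via Theorem \ref{th-1}, and extraction of the Gaussian limit from the cross term at $k^{\ast}$ via the invariance principle of Lemma \ref{l:BHR_11} (your exact identity $\tfrac12 N(\theta(1-\theta))^{2-\alpha}R_{N}(k^{\ast})=2\Vert\delta\Vert(\theta(1-\theta))^{1-\alpha}\langle w_{k^{\ast}}-\theta w_{N},\rho_{N}\rangle$ and the resulting cancellation against $a_N$ coincide with the paper's computation). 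The only difference is in the sup-to-point step, and it is a technical variant rather than a different method: where you localise the argmax to a window of width $O_{P}\left(1/(N\Vert\delta\Vert^{2})\right)$ by balancing the linear envelope decay against Brownian-scaling increments, the paper shows that $T_{N}$ equals the sup over a \emph{fixed} window $[\theta-h,\theta+h]$ with probability tending to one, squeezes that sup between the value at $\theta$ and the sup of the approximating continuous Gaussian process over the window, and then lets $h\rightarrow 0$ using path continuity.
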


Theorem \ref{th-2} states that tests based on $T_{N}$ have power in the
presence of changepoints of possibly vanishing magnitude - i.e. $\left\Vert 
\mathcal{\delta }\right\Vert =o\left( 1\right) $ - and occurring close to
sample endpoints - i.e. $k^{\ast }=o\left( N\right) $ or $N-k^{\ast
}=o\left( N\right) $. In order to understand the result in Theorem \ref{th-2}%
, some examples may be helpful. Considering the case of a mid-sample break -
with $k^{\ast }=cN$ for some $0<c<1$ - equation (\ref{power}) boils down to
requiring $\lim_{N\rightarrow \infty }N\left\Vert \mathcal{\delta }%
\right\Vert ^{2}=\infty $. Hence, mid-sample breaks can be detected even
when the magnitude $\left\Vert \mathcal{\delta }\right\Vert $ drifts to zero
as $N\rightarrow \infty $, as long as $\left\Vert \mathcal{\delta }%
\right\Vert $ shrinks at a rate slower than $N^{-1/2}$. Conversely, consider
the case of a non-vanishing break, i.e. $\left\Vert \mathcal{\delta }%
\right\Vert >0$, and a changepoint located close to the beginning of the
sample, viz. $k^{\ast }=o\left( N\right) $. In such a case, changepoints can
be detected as long as they occur at least $N^{\left( 1-\alpha \right)
/\left( 2-\alpha \right) }$ periods from the beginning of the sample. In the
unweighted case $\alpha =0$, this reflects that breaks occurring $o\left(
N^{1/2}\right) $\ periods from the sample endpoints cannot be reliably
detected; 
on the other hand, increasing $\alpha $ makes tests more able to detect
breaks occurring closer to the beginning/end of sample. We note however
that, upon inspecting our proofs, the rates of asymptotic approximation
deteriorate as $\alpha $ approaches $1$, thus reflecting the size/power
trade-off. Finally, equation (\ref{e:T_N_asymp_norm}) describes the
asymptotic power function in the case of a changepoint occurring
\textquotedblleft not too close\textquotedblright\ to the sample endpoints.

\subsubsection{Estimation of the breakdate: consistency and limiting
distribution\label{change-date}}

We now consider, in greater depth, the case where the changepoint occurs
mid-sample, viz. 
\begin{equation}
{k^{\ast }}=\lfloor N\theta \rfloor ,  \label{k-star}
\end{equation}%
for some $0<\theta <1$. The max-type statistic $T_{N}$ defined in (\ref%
{e:def_T_N}) gives the estimator 
\begin{equation}
\widehat{\theta }_{N}=\argmax_{0\leq u\leq 1}\left( u\left( 1-u\right)
\right) ^{2-\alpha }\left\vert V_{N}\left( \left\lfloor Nu\right\rfloor
\right) \right\vert ,  \label{theta_hat}
\end{equation}%
from which the estimated breakdate can be computed as $\widehat{k}%
_{N}=\left\lfloor N\widehat{\theta }_{N}\right\rfloor $. In the next
theorem, we state the consistency of the break fraction estimator $\widehat{%
\theta }_{N}$, and derive its asymptotic distribution in the (customarily
studied) case where the size of the break drifts to zero as $N\rightarrow
\infty $.\footnote{%
The fixed break case, i.e. $\left\Vert \delta \right\Vert >0$, can be
studied along similar lines as the proof of Theorem \ref{th-3}; however, in
this case the limiting distribution of the estimated changepoint depends on
many nuisance parameters, thus being of scarce practical use. In the large
break case $\Vert \delta \Vert \rightarrow \infty $, it can be shown that $P(%
\widehat{k}_{N}=k_{*})\rightarrow 1$.} We define the drift function%
\begin{equation*}
m_{\alpha }\left( u\right) =\left[ \left( 1-\alpha /2\right) \left( 1-\theta
\right) +\alpha \theta /2\right] I\left( u<0\right) +\left[ \left( 1-\alpha
/2\right) \theta +\alpha \left( 1-\theta \right) /2\right] I\left(
u>0\right) ,
\end{equation*}%
with $m_{\alpha }\left( 0\right) =0$, where $I\left( \cdot \right) $ is the
indicator function; and the two-sided standard Wiener process $\widetilde{W}%
\left( u\right) =W_{1}\left( -u\right) I\left( u\leq 0\right) +W_{2}\left(
u\right) I\left( u\geq 0\right) $, where $\left\{ W_{1}\left( u\right)
,u\geq 0\right\} $ and $\left\{ W_{2}\left( u\right) ,u\geq 0\right\} $\ are
two independent standard Wiener processes.

\begin{theorem}
\label{th-3} We assume that Assumption \ref{as1} and (\ref{k-star}) are
satisfied, and that $N\Vert \mathcal{\delta }\Vert ^{2}\rightarrow \infty $
as $N\rightarrow \infty $. Then, it holds that $\widehat{\theta }_{N}\overset%
{\mathcal{P}}{\rightarrow }\theta $, for all $0\leq \alpha <1$. Further, if,
as $N\rightarrow \infty $%
\begin{equation}
\left\Vert \mathcal{\delta }\right\Vert \rightarrow 0\text{ \ \ and \ \ }%
N\left\Vert \mathcal{\delta }\right\Vert ^{2}\rightarrow \infty ,
\label{vanishing}
\end{equation}%
then it holds that $\left\Vert \mathcal{\delta }\right\Vert ^{2}\left( 
\widehat{k}_{N}-{k^{\ast }}\right) /\sigma ^{2}\overset{\mathcal{D}}{%
\rightarrow }\xi _{\alpha }$, where $\sigma ^{2}$ is defined in (\ref{sigma}%
) and $\xi _{\alpha }=\xi _{\alpha }\left( \theta \right) $ is an almost
surely unique random variable such that $\xi _{\alpha }\overset{\mathcal{D}}{%
=}\argmax_{u\in \mathbb{R}}\left( \widetilde{W}\left( u\right) -\left\vert
u\right\vert m_{\alpha }\left( u\right) \right) $.
\end{theorem}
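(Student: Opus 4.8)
The plan is to first pass from $V_{N}$ to the weighted squared CUSUM process. Extending the representation behind Lemma~\ref{l:replace_V_with_Q} to the one-break alternative, one writes, for $u=k/N$,
\[
\frac{1}{2}N\,(u(1-u))^{2-\alpha}V_{N}(\lfloor Nu\rfloor)=(u(1-u))^{-\alpha}\big(\Vert Z_{N}(u)\Vert^{2}-c_{N}(u)\big)+r_{N}(u),
\]
where $Z_{N}(u)=N^{-1/2}\big(\sum_{i\leq \lfloor Nu\rfloor}X_{i}-\tfrac{\lfloor Nu\rfloor}{N}\sum_{i\leq N}X_{i}\big)$ is the functional CUSUM bridge, $c_{N}(u)$ is the centering appearing in Theorem~\ref{th-1}, and $r_{N}$ is a uniformly negligible remainder. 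Under $H_{A}$ with a single break at $k^{\ast }=\lfloor N\theta \rfloor $ one has $EZ_{N}(u)=N^{1/2}b(u)\,\delta $ with the tent function $b(u)=u(1-\theta )$ for $u\leq \theta $ and $b(u)=\theta (1-u)$ for $u\geq \theta $; writing $Z_{N}(u)=N^{1/2}b(u)\delta +G_{N}(u)$ with $G_{N}$ the centred bridge, the leading deterministic part of the objective is $N\Vert \delta \Vert ^{2}h(u)$, where $h(u):=(u(1-u))^{-\alpha }b(u)^{2}$. A one-line log-derivative computation gives $(\log h)^{\prime }(u)=(2-\alpha )/u+\alpha /(1-u)>0$ on $(0,\theta )$ and $(\log h)^{\prime }(u)=-(2-\alpha )/(1-u)-\alpha /u<0$ on $(\theta ,1)$ for $0\le \alpha <1$, so $h$ has a unique interior maximiser at $u=\theta $, with $h(\theta )=(\theta (1-\theta ))^{2-\alpha }$.

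\textbf{Consistency.} Dividing the objective by $N\Vert \delta \Vert ^{2}$, the decomposition above exhibits the signal $h(u)$ plus a cross term of order $O_{P}((N\Vert \delta \Vert ^{2})^{-1/2})$ and a pure-noise-plus-centering term of order $O_{P}((N\Vert \delta \Vert ^{2})^{-1})$. Since $N\Vert \delta \Vert ^{2}\to \infty $, both remainders are $o_{P}(1)$ uniformly on compact subsets of $(0,1)$, where $G_{N}$ is tight by the invariance principle of \citet{berkes:horvath:rice:2013}. Near the endpoints the signal vanishes like $(u(1-u))^{2-\alpha }$, while the weighted fluctuation stays $O_{P}(1)$ exactly as in the proof of Theorem~\ref{th-1} (the law-of-the-iterated-logarithm bound that requires $\alpha <1$), so no spurious maximiser can arise there; in particular the positive signal dominates and the absolute value in $\widehat{\theta }_{N}$ may be dropped on the relevant region. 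Uniform convergence to $h$, the unique maximiser of $h$, and a standard argmax (M-estimation) argument then yield $\widehat{\theta }_{N}\overset{\mathcal{P}}{\rightarrow }\theta $ for every $0\le \alpha <1$.

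\textbf{Localisation and the limit law.} For the refined statement assume $\Vert \delta \Vert \to 0$ and $N\Vert \delta \Vert ^{2}\to \infty $, and zoom in at the window length $m=\sigma ^{2}/\Vert \delta \Vert ^{2}$, writing the local coordinate $v=\Vert \delta \Vert ^{2}(k-k^{\ast })/\sigma ^{2}$, so that $k=k^{\ast }+\lfloor vm\rfloor $. Expanding $h$ to first order at $\theta $ and using the kink of $b$, the one-sided derivatives are $h^{\prime }(\theta ^{-})=2(\theta (1-\theta ))^{1-\alpha }[(1-\alpha /2)(1-\theta )+\alpha \theta /2]$ and $h^{\prime }(\theta ^{+})=-2(\theta (1-\theta ))^{1-\alpha }[(1-\alpha /2)\theta +\alpha (1-\theta )/2]$, i.e. exactly $\pm 2(\theta (1-\theta ))^{1-\alpha }$ times the two branches of $m_{\alpha }$. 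Hence the deterministic increment $N\Vert \delta \Vert ^{2}[h(\theta +\lfloor vm\rfloor /N)-h(\theta )]$ converges to $-2(\theta (1-\theta ))^{1-\alpha }\sigma ^{2}\,|v|\,m_{\alpha }(v)$. For the stochastic increment the only non-negligible piece is the cross term, whose local variation equals $2(\theta (1-\theta ))^{1-\alpha }\Vert \delta \Vert \sum_{i=k^{\ast }+1}^{k^{\ast }+\lfloor vm\rfloor }\langle \rho _{N},\epsilon _{i}\rangle +o_{P}(1)$; since $\langle \rho _{N},\epsilon _{i}\rangle $ is a scalar Bernoulli-shift sequence with long-run variance $\sum_{\ell }E\langle \rho _{N},\epsilon _{0}\rangle \langle \rho _{N},\epsilon _{\ell }\rangle =\sigma ^{2}$ as in (\ref{sigma}), the invariance principle applied on the moving window gives $m^{-1/2}\sum_{i=k^{\ast }+1}^{k^{\ast }+\lfloor vm\rfloor }\langle \rho _{N},\epsilon _{i}\rangle \rightarrow \sigma \widetilde{W}(v)$, with $\widetilde{W}$ the two-sided Wiener process, whence the stochastic increment converges to $2(\theta (1-\theta ))^{1-\alpha }\sigma ^{2}\widetilde{W}(v)$. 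Normalising by $[2(\theta (1-\theta ))^{1-\alpha }\sigma ^{2}]^{-1}$, the recentred local objective converges to $v\mapsto \widetilde{W}(v)-|v|m_{\alpha }(v)$.

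\textbf{Conclusion and the main difficulty.} The stated limit follows from the argmax continuous-mapping theorem: the normalisation leaves the maximiser unchanged, so $\Vert \delta \Vert ^{2}(\widehat{k}_{N}-k^{\ast })/\sigma ^{2}\overset{\mathcal{D}}{\rightarrow }\argmax_{v}(\widetilde{W}(v)-|v|m_{\alpha }(v))=\xi _{\alpha }$, the maximiser being almost surely unique because the drift $-|v|m_{\alpha }(v)$ is linear with strictly positive slopes on each side while $\widetilde{W}(v)=o(|v|)$ a.s. The two technical obstacles I expect to absorb most of the work are (i) establishing the local invariance principle uniformly over the shrinking window $|k-k^{\ast }|\leq Mm$ while the direction $\rho _{N}$, and hence $\sigma ^{2}$, is itself allowed to depend on $N$, which requires pushing the weighted partial-sum arguments of Theorem~\ref{th-1} into a triangular-array, moving-window setting; and (ii) the tightness of the argmax, i.e. $\Vert \delta \Vert ^{2}(\widehat{k}_{N}-k^{\ast })/\sigma ^{2}=O_{P}(1)$, so that the localisation actually captures the maximiser. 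For (ii) one shows that outside a large compact set in $v$ the negative drift $-|v|m_{\alpha }(v)$ overwhelms both the fluctuations of the fixed-sign cross term and the discarded $\Vert G_{N}\Vert ^{2}$ remainder, ruling out a far-away maximiser — precisely the mechanism that makes $\alpha <1$ indispensable.
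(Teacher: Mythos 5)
Your proposal is correct and follows essentially the same route as the paper's proof: consistency via the uniquely-maximised tent profile (the paper's $\widetilde{g}$, your $h$), localisation at scale $\sigma ^{2}/\Vert \mathcal{\delta }\Vert ^{2}$, the drift $-|v|\,m_{\alpha }(v)$ obtained from the one-sided derivatives of the profile, the noise term $2(\theta (1-\theta ))^{1-\alpha }\sigma ^{2}\widetilde{W}(v)$ from a moving-window invariance principle applied to $\langle \rho ,\epsilon _{i}\rangle $, and the (s)argmax continuous-mapping theorem on compacts followed by $C\rightarrow \infty $. The two ``obstacles'' you flag are precisely what the paper supplies: the argmax tightness $\Vert \mathcal{\delta }\Vert ^{2}(\widehat{k}_{N}-k^{\ast })=O_{P}(1)$ is its Lemma \ref{refinement}, proved by exactly your drift-domination mechanism applied to a ten-term decomposition of the weighted statistic, and the moving-window invariance principle with $\rho _{N}\rightarrow \rho $ is handled via the blocking construction behind Lemma \ref{l:BHR_11}.
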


According to Theorem \ref{th-3}, the estimator of the break fraction $\theta 
$ is consistent; the estimated breakdate $\widehat{k}_{N}$ is also
consistent in the sense that $\widehat{k}_{N}-{k^{\ast }}=o_{P}\left(
N\right) $. Theorem \ref{th-3} refines the consistency of $\widehat{\theta }%
_{N}$ in the case of a break of vanishing magnitude, stating, in essence,
that $\widehat{k}_{N}-{k^{\ast }}=O_{P}\left( \Vert \mathcal{\delta }\Vert
^{-2}\right) $. The limiting distribution is the same as one would have when
using the maximally selected weighted CUSUM process - this (again)
reinforces the conclusion from Lemma \ref{l:replace_V_with_Q} that $%
V_{N}\left( k\right) $ is related to the CUSUM process. In principle, it
would be possible to construct confidence intervals for ${k^{\ast }}$, by
simulating the percentiles of the (nuisance free) random variable $\xi
_{\alpha }$ calculated at $\widehat{\theta }_{N}$, and using the means $%
\widehat{k}_{N}^{-1}\sum_{i=1}^{\widehat{k}_{N}}X_{i}\left( t\right) $ and $%
\left( N-\widehat{k}_{N}\right) ^{-1}\sum_{i=\widehat{k}_{N}+1}^{N}X_{i}%
\left( t\right) $ to estimate $\mathcal{\delta }\left( t\right) $ and the
long run variance $\sigma ^{2}=\int \int^{\top }\Vert \mathcal{\delta }\Vert
^{-2}\mathcal{\delta }^{\top }\left( t\right) \mathbf{D}(t,t^{\prime })%
\mathcal{\delta }\left( t^{\prime }\right) dtdt^{\prime }$.

\subsubsection{The case of multiple breaks: binary segmentation\label%
{segment}}

We now consider the case of multiple breaks. Recalling that $I\left( \cdot
\right) $ is the indicator function, this case corresponds to 
\begin{equation}
X_{i}\left( t\right) =\sum_{j=1}^{R+1}\mu _{j}\left( t\right) I\left\{
k_{j-1}\leq i<k_{j}\right\} +\epsilon _{i}\left( t\right) .  \label{multiple}
\end{equation}%
We consider the case of \textquotedblleft well-separated\textquotedblright\
breaks of non-vanishing magnitude.

\begin{assumption}
\label{multibreak} \textit{(i)} $k_{j}=\left\lfloor N\theta
_{j}\right\rfloor $ for $1\leq j\leq R$, with $0=\theta _{0}<\theta
_{1}<\theta _{2}<...<\theta _{R}<1=\theta _{R+1}$; \textit{(ii)} $%
\min_{1\leq j\leq R}\left\Vert \mu _{k_{j}+1}-\mu _{k_{j}}\right\Vert \geq
c_{0}>0$.
\end{assumption}

In this case, it is possible to show that our tests have power, by
marginally adapting the proof of Theorem \ref{th-2}. Here, we discuss in
greater detail how to estimate the number of changepoints, $R$, in addition
to the locations thereof. Whilst the literature has developed several
techniques, we focus on the binary segmentation approach proposed by %
\citet{vostrikova1982detection}. The algorithm can be described as follows
(see also Algorithm \ref{alg111} in the Supplement for pseudocode). Starting
from the whole sample, we apply our test using a fixed $0<\alpha <1$, to
check whether there is at least one changepoint. If a break is detected, we
estimate its location using (\ref{theta_hat}), and then split the sample
around the estimated breakdate. The procedure is then iterated on each
subsample, until either no changepoint is detected, or a stopping rule
(typically based on the length of the sub-sample) is triggered.

Formally, consider a subsample with starting and ending points $1\leq \ell
<u\leq N$, under the constraint that $u-\ell > 4$; define the weighted
statistic $\left( \left( k-\ell \right) \left( u-k\right) /\left( u-\ell
\right) ^{2}\right) ^{-\alpha }V_{N}^{\left( \ell ,u\right) }\left( k\right) 
$, where 
\begin{eqnarray}
V_{N}^{\left( \ell ,u\right) }\left( k\right) &=&\frac{2}{\left( k-\ell
\right) \left( u-k\right) }\sum_{i=\ell }^{k}\sum_{j=k+1}^{u}\Vert
X_{i}-X_{j}\Vert ^{2}  \label{subs-seg} \\
&&-\frac{1}{\displaystyle{{\binom{k-\ell }{2}}}}\sum_{l\leq i<j\leq k}\Vert
X_{i}-X_{j}\Vert ^{2}-\frac{1}{\displaystyle{{\binom{u-k}{2}}}}%
\sum_{k<i<j\leq u}\Vert X_{i}-X_{j}\Vert ^{2};  \notag
\end{eqnarray}%
and let its maximally selected counterpart be $T_{N}^{\left( \ell ,u\right)
} $. The interval $(\ell ,u)$ is marked to have a changepoint if $%
T_{N}^{\left( \ell ,u\right) }$ exceeds a (user-chosen) threshold $\tau _{N}$%
. Practically, the choice of the threshold $\tau _{N}$ can be based on any
slowly diverging sequence satisfying mild growth constraints (see expression %
\eqref{threshold}, below), and we refer to Section \ref{simulations_binary}
in the Supplement for examples. Hence, the corresponding changepoint
estimator in the interval $(\ell ,u)$ can be defined as $\widehat{k}$ $=$ $%
\sargmax_{l\leq k\leq u}\left( \left( k-\ell \right) \left( u-k\right)
/\left( u-\ell \right) ^{2}\right) ^{-\alpha }V_{N}^{\left( \ell ,u\right)
}\left( k\right) $, where \textquotedblleft $\sargmax$\textquotedblright\
denotes the smallest integer that maximizes the expression. The sample is
then split around $\widehat{k}$, and the procedure iterated until it comes
to a stop. The final output is a set of estimated changepoints $\widehat{%
\mathcal{H}}=\left\{ \widehat{k}_{1},...,\widehat{k}_{\widehat{R}}\right\} $
sorted in increasing order, and the estimate $\widehat{R}$.

\begin{theorem}
\label{vostrikova}We assume that Assumptions \ref{as1} and \ref{multibreak},
and (\ref{multiple}), are satisfied, and that the threshold sequence $\tau
_{N}$ satisfies%
\begin{equation}
\frac{\left( \log N\right) ^{2/\nu }}{\tau _{N}}+\frac{\tau _{N}}{N}%
\rightarrow 0,  \label{threshold}
\end{equation}%
as $N\rightarrow \infty $, where $\nu >{4+\epsilon }$ is such that $%
E\left\Vert X_{i}\right\Vert ^{\nu }<\infty $. Then, for all $0<\alpha <1$
and any sequence $r_{N}$ satisfying $r_{N}\rightarrow \infty $, it holds that%
\begin{equation*}
\lim_{N\rightarrow \infty }P\left( \left\{ \widehat{R}=R\right\} \cap
\left\{ \max_{1\leq r\leq R}\left\vert \widehat{k}_{r}-k_{r}\right\vert \leq
r_{N}\right\} \right) =1.
\end{equation*}
\end{theorem}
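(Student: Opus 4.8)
The plan is to reduce the energy-distance statistic to the weighted squared CUSUM via Lemma \ref{l:replace_V_with_Q} (which applies verbatim on a subinterval $(\ell,u)$ once one recenters by the within-interval mean), and then to establish two complementary facts that together drive the Vostrikova recursion: (i) a uniform upper bound showing that on any subinterval free of true changepoints the maximally selected statistic $T_N^{(\ell,u)}$ is too small to trigger a detection; and (ii) a divergence-plus-localization statement showing that on any subinterval still containing a break, $T_N^{(\ell,u)}$ exceeds $\tau_N$ with probability tending to one, and that the resulting $\sargmax$ estimator $\hat k$ falls within $O_P(1)$ of a true breakdate. Here $T_N^{(\ell,u)}$ is understood with the full normalization analogous to (\ref{e:def_T_N}). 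Throughout, Assumption \ref{multibreak} guarantees that every interval handled before the algorithm stops has length of order $N$, so the boundary weights $((k-\ell)(u-k)/(u-\ell)^2)^{-\alpha}$ stay bounded on the relevant range of $k$.

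For (i), the crux is a maximal inequality controlling the weighted squared CUSUM of $\{\epsilon_i\}$ simultaneously over all candidate endpoints $(\ell,u)$ and split points $k$. Under the null on $(\ell,u)$ one has $X_i=\mu+\epsilon_i$ and $\|X_i-X_j\|^2=\|\epsilon_i-\epsilon_j\|^2$, so $V_N^{(\ell,u)}$ reduces to this CUSUM object. I would establish the required bound (as a supplementary lemma) by a truncation argument: exploiting $E\|X_i\|^{\nu}<\infty$ with $\nu>4+\epsilon$ together with the dependence decay of Assumption \ref{as1}(iii) to discard the heavy-tailed part, and a Bernstein/Fuk--Nagaev estimate on the bounded part whose maximum over the candidate triples is paid for by a union bound; this calibrates the rate to $(\log N)^{2/\nu}$, giving $\sup T_N^{(\ell,u)}=O_P((\log N)^{2/\nu})$ over break-free intervals. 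The first summand of (\ref{threshold}) is exactly what forces this to be $o_P(\tau_N)$, so that no break-free interval is ever flagged. I expect this uniform maximal inequality---uniform in both interval endpoints under only finitely many moments and weak dependence---to be the principal technical obstacle.

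For (ii), divergence follows from the alternative-theory analysis: if $(\ell,u)$ has length of order $N$ and contains a break $k_j$ with $\|\mu_{k_j+1}-\mu_{k_j}\|\ge c_0$, then evaluating the population version of $V_N^{(\ell,u)}$ at $k_j$ shows, exactly as in Theorem \ref{th-2}, that $T_N^{(\ell,u)}$ is of order $N$ with probability tending to one, which dominates $\tau_N$ since $\tau_N/N\to0$. Localization reuses the breakdate theory: the population weighted-energy curve $k\mapsto((k-\ell)(u-k)/(u-\ell)^2)^{-\alpha}\bar V^{(\ell,u)}(k)$ is maximized at one of the true breaks $k_{j^\ast}$ in $(\ell,u)$, and arguing as in the fixed-break counterpart of Theorem \ref{th-3} (where $\|\delta\|\ge c_0$ gives $\hat k-k^\ast=O_P(\|\delta\|^{-2})=O_P(1)$) the $\sargmax$ $\hat k$ concentrates within $O_P(1)$ of $k_{j^\ast}$. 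Since $r_N\to\infty$, the event $|\hat k-k_{j^\ast}|\le r_N$ then holds with probability tending to one.

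Finally, I would combine (i) and (ii) by induction on the number of true breaks contained in the current interval, following \citet{vostrikova1982detection}. The delicate point---the classical subtlety of binary segmentation---is that after cutting at $\hat k\approx k_{j^\ast}$ the break $k_{j^\ast}$ sits within $O_P(1)$ of an endpoint of one of the two new subintervals. Here $\alpha<1$ is essential: the split point at $k_{j^\ast}$ in that subinterval carries weight of order $(N^{-1})^{2-\alpha}$, so $k_{j^\ast}$'s contribution to the subinterval statistic is only $N\cdot N^{-(2-\alpha)}=N^{-(1-\alpha)}=o_P(1)$, too small to re-trigger a detection or to be double-counted. Thus each recursion peels off exactly one true break, the remaining macroscopically interior breaks are detected at the next level, and a subinterval with no interior break has statistic $O_P((\log N)^{2/\nu})<\tau_N$ and correctly halts. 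Running the recursion to termination yields $\hat R=R$, and since the $O_P(1)$ localization errors are far smaller than the $\Theta(N)$ spacing guaranteed by Assumption \ref{multibreak}, the matching $\hat k_r\leftrightarrow k_r$ is unambiguous and $\max_{1\le r\le R}|\hat k_r-k_r|=O_P(1)$. As $r_N\to\infty$, the claimed event therefore has probability tending to one.
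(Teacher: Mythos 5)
Your overall architecture---(a) an upper bound on break-free segments forcing the algorithm to halt, (b) detection plus $O_P(1)$ localization on segments containing a macroscopically interior break, (c) a Vostrikova-style induction in which the weight kills the drift of breaks sitting $O_P(1)$ from a segment endpoint---is exactly the skeleton of the paper's proof: its ``stopping condition'' (\ref{lemma823}), its consistency statement (\ref{greg-1}) for the first estimated break (proved, as you suggest, by showing the drift differences dominate the noise terms, in the spirit of Lemma \ref{refinement}), and the concluding induction, with Lemmas \ref{drift} and \ref{drift-2} playing the roles you assign to the population energy curve.

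The genuine gap is your step (i). You propose a maximal inequality $\sup_{(\ell,u)}T_N^{(\ell,u)}=O_P((\log N)^{2/\nu})$ holding \emph{uniformly over all} break-free candidate intervals, obtained by truncation, Fuk--Nagaev bounds and a union bound over the $O(N^3)$ triples $(\ell,k,u)$. Under the standing assumption of only $\nu>4+\epsilon$ finite moments this cannot work, and for $\alpha$ close to $1$ the uniform statement is simply false: for an interval of length $L\asymp N$ and a split point adjacent to the left endpoint (say $k=\ell+2$), the endpoint weight is of order $L^{\alpha}$ and the statistic is bounded below by a constant times $L^{\alpha-1}\Vert\epsilon_{\ell+1}+\epsilon_{\ell+2}\Vert^{2}$; taking the supremum over all starting points $\ell$ produces a term of order $N^{\alpha-1}\max_{i\leq N}\Vert\epsilon_i\Vert^{2}$, which for noise with tail index exactly $\nu$ is of order $N^{\alpha-1+2/\nu}$ and diverges polynomially whenever $\alpha>1-2/\nu$ (a regime permitted by the assumptions). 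More generally, union bounds over polynomially many intervals with only polynomial tails force thresholds of polynomial order $N^{c/\nu}$, incompatible with (\ref{threshold}), which allows $\tau_N$ to grow just faster than $(\log N)^{2/\nu}$. The paper never needs such uniformity: the $(\ln N)^{2/\nu}$ rate (via Lemma \ref{l:BHR_33}, the M\'oricz--Serfling--Stout inequality and blocking, as in (\ref{mineq1})) is proved only for segments whose endpoints are sample endpoints or lie within a \emph{fixed} distance $N^{\prime}$ of true changepoints. The induction makes this legitimate: the $O_P(1)$ localization established at each step implies that, on a high-probability event, the random segments tested next belong to a family of at most $CN^{\prime 2}$ deterministic intervals, so the union bound costs only a constant, and one concludes with the iterated limit $\limsup_{N\rightarrow\infty}$ followed by $N^{\prime}\rightarrow\infty$. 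Replacing your uniform inequality by this conditioning-and-finite-union-bound device turns your outline into essentially the paper's proof.
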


Theorem \ref{vostrikova} stipulates the consistency of $\widehat{R}$ and of
the breaks locations, $\left\{ \widehat{k}_{r}\right\} _{r=1}^{R}$.
Heuristically, this is because, under the alternative, $V_{N}\left( k\right) 
$ is equal to the (squared) CUSUM process plus a \textquotedblleft
small\textquotedblright\ term, thus having the same properties as the CUSUM.
Importantly, our results require $\alpha >0$, which reinforces the
importance of considering weighted statistics.

\subsection{Computation of critical values\label{cr_values}}

By the multivariate KL expansion (\citealp{happ}), the $\mathbb{R}^{r}$%
-valued Gaussian process $\Gamma (u,t)$ in \eqref{e:def_Delta(u)} can
represented as 
\begin{equation}
\Gamma (u,t)=\sum_{\ell =1}^{\infty }\lambda _{\ell }^{1/2}B_{\ell }(u)\phi
_{\ell }(t),  \label{kl-gamma}
\end{equation}%
where $\left\{ B_{\ell }(u),0\leq u\leq 1\right\} $ is a sequence of
independent, standard univariate Brownian bridges, and the
eigenvalue/eigenfunction pairs $(\lambda _{\ell },\phi _{\ell })$ satisfy 
\begin{equation}
\lambda _{\ell }\phi _{\ell }(t)=\int \mathbf{D}(t,s)\phi _{\ell
}(s)ds,\qquad \lambda _{1}\geq \lambda _{2}\geq \ldots \geq 0,
\label{e:eigenvalues_def}
\end{equation}%
where the eigenfunctions $\phi _{\ell }$ are $r$-valued and form an
orthonormal basis. Hence 
\begin{equation}
\Delta (u)=\sum_{\ell =1}^{\infty }\lambda _{\ell }B_{\ell }^{2}(u)-\sigma
_{0}^{2}u(1-u).  \label{e:delta(u)_eigs}
\end{equation}%
In view of \eqref{e:delta(u)_eigs}, inference based on functionals of $%
\Delta (u)$ requires an estimate of $\sigma _{0}^{2}$, and of the
eigenvalues $\lambda _{\ell }$ in \eqref{e:eigenvalues_def}. As far as the
latter is concerned, note that $\mathbf{D}(t,s)$ is the long-run covariance
of the sequence $\{X_{j}\}$. Therefore, a standard
weighted-sum-of-covariances estimator can be employed for the consistent
estimation of $\mathbf{D}(t,s)$, which in turn leads to estimates for the
eigenvalues $\lambda _{\ell }$. We describe this procedure below. For a
kernel function $\mathcal{K}:\mathbb{R}\rightarrow \mathbb{R}$ (see
Assumption \ref{as:kernel} below), we define 
\begin{equation}
\widehat{\mathbf{D}}_{N}(t,s)=\widehat{\boldsymbol{\gamma }}%
_{0}(t,s)+\sum_{\ell =1}^{\infty }\mathcal{K}\left( \frac{\ell }{h}\right)
\left( \widehat{\boldsymbol{\gamma }}_{\ell }(t,s)+\widehat{\boldsymbol{%
\gamma }}_{\ell }^{\top }(t,s)\right)  \label{e:def_D_N}
\end{equation}%
where $h>0$ is a bandwidth parameter, 
%
\begin{equation}
\widehat{\boldsymbol{\gamma }}_{\ell }(t,s)=\displaystyle\frac{1}{N-|\ell |}%
\sum_{j=1}^{N-\left\vert \ell \right\vert }\overline{X}_{j}\left( t\right) 
\overline{X}_{j+\left\vert \ell \right\vert }^{\top }(s),  \label{gammas}
\end{equation}%
$\overline{X}_{j}\left( t\right) =X_{j}(t)-\widehat{\mu }_{N}(t)$, and $%
\widehat{\mu }_{N}(t)=N^{-1}\sum_{j=1}^{N}X_{j}(t)$. (Above, we set $%
\widehat{\boldsymbol{\gamma }}_{\ell }(t,s)\equiv 0$ for $\ell\geq N$). Note
that, in (\ref{e:def_D_N}), we estimate the mean function $\widehat{\mu }%
_{N}(t)$ using the full sample. Under the null, this does not pose any
problems given that $\mu (t)$ is constant. However, under the alternative $%
\mu (t)$ is not estimated consistently; the bias in the estimation of $\mu
(t)$ would enter $\widehat{\mathbf{D}}_{N}(t,s)$, making it diverge at a
rate $h$. This is well-known in the literature on changepoint detection, and
it has been associated with a decrease in power and the phenomenon known as
\textquotedblleft non-monotonic\textquotedblright\ power (see %
\citealp{casini2021prewhitened}). This can be ameliorated by implementing a
\textquotedblleft piecewise demeaning\textquotedblright , where the mean
function is estimated by splitting the sample around each candidate
changepoint $k$; however, unreported simulations show that using
\textquotedblleft piecewise demeaning\textquotedblright\ yields some
improvements in the power, but the test becomes (sometimes massively)
oversized in small samples.

\begin{assumption}
\label{as:kernel} $\mathcal{K}\left( \cdot \right) $ is a non-negative
function such that: \textit{(i)} $\mathcal{K}(0)=1$; \textit{(ii)} $\mathcal{%
K}(u)=\mathcal{K}(-u)$; \textit{(iii)} there exists a $c>0$ such that $%
\mathcal{K}\left( u\right) =0$ for all $\left\vert u\right\vert >c$; and 
\textit{(iv)} $\mathcal{K}\left( u\right) $ is Lipschitz continuous on $%
\left[ -c,c\right] $ with $\sup_{-c<u<c}\mathcal{K}\left( u\right) <\infty $.
\end{assumption}

\begin{assumption}
\label{as:band} As $N\rightarrow \infty $: \textit{(i)} $h=h(N)\rightarrow
\infty $; and \textit{(ii)} $h(N)/N\rightarrow 0$.
\end{assumption}

Assumptions \ref{as:kernel} and \ref{as:band} characterise the kernel $%
\mathcal{K}\left( \cdot \right) $ and the bandwidth $h$, respectively; many
of the customarily employed kernels satisfy Assumption \ref{as:kernel}.

\begin{lemma}
\label{vcv}We assume that Assumptions \ref{as1}, \ref{as:kernel}, and \ref%
{as:band} are satisfied. Then 
\begin{equation}
\iint \left\vert \widehat{\mathbf{D}}_{N}(t,s)-\mathbf{D}(t,s)\right\vert
^{2}dtds=o_{P}(1).  \label{e:D_N_to_D}
\end{equation}
\end{lemma}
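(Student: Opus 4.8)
The plan is to carry out a bias--variance decomposition after first replacing the demeaned observations $\overline{X}_j$ by the underlying innovations $\epsilon_j$. Since $\widehat{\mathbf{D}}_N$ is used to calibrate critical values, which are computed under $H_0$, we may take $\mu_i\equiv\mu$ constant, so that $\overline{X}_j=\epsilon_j-\bar{\epsilon}$ with $\bar{\epsilon}=N^{-1}\sum_{i=1}^N\epsilon_i$. Introduce the \emph{infeasible} estimator $\widetilde{\mathbf{D}}_N(t,s)=\widetilde{\boldsymbol{\gamma}}_0(t,s)+\sum_{\ell\ge1}\mathcal{K}(\ell/h)(\widetilde{\boldsymbol{\gamma}}_\ell(t,s)+\widetilde{\boldsymbol{\gamma}}_\ell^\top(t,s))$, obtained from \eqref{e:def_D_N}--\eqref{gammas} upon substituting $\epsilon_j$ for $\overline{X}_j$, and write
\begin{equation*}
\widehat{\mathbf{D}}_N-\mathbf{D}=\big(\widehat{\mathbf{D}}_N-\widetilde{\mathbf{D}}_N\big)+\big(\widetilde{\mathbf{D}}_N-E\widetilde{\mathbf{D}}_N\big)+\big(E\widetilde{\mathbf{D}}_N-\mathbf{D}\big).
\end{equation*}
I would show that each summand is $\op{1}$ in the $\iint|\cdot|^2\,dtds$ norm; \eqref{e:D_N_to_D} then follows by the triangle inequality. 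Throughout, Assumption \ref{as:kernel}(iii) is used to truncate the nominally infinite sum: $\mathcal{K}(\ell/h)=0$ once $\ell>ch$, so only $O(h)$ lags are active, and since $h/N\to0$ by Assumption \ref{as:band}(ii) the convention $\widehat{\boldsymbol{\gamma}}_\ell\equiv0$ for $\ell\ge N$ is eventually vacuous.

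For the demeaning term, expanding $\overline{X}_j(t)\overline{X}_{j+\ell}^\top(s)$ produces, besides $\epsilon_j(t)\epsilon_{j+\ell}^\top(s)$, cross terms built from the partial averages $(N-\ell)^{-1}\sum_j\epsilon_j$ and $(N-\ell)^{-1}\sum_j\epsilon_{j+\ell}$ multiplied by $\bar{\epsilon}^\top$, plus the term $\bar{\epsilon}\,\bar{\epsilon}^\top$. By stationarity and summability of the autocovariances, $\|\bar{\epsilon}\|=\Op{N^{-1/2}}$ and $\max_{\ell\le ch}\|(N-\ell)^{-1}\sum_j\epsilon_j\|=\Op{N^{-1/2}}$ (the latter via the invariance principle of \citet{berkes:horvath:rice:2013}), so each lag contributes at most $\Op{N^{-1}}$ in $L^2(\mathcal{T}^2)$, uniformly in $\ell$; summing the $O(h)$ kernel-weighted contributions gives $\big(\iint|\widehat{\mathbf{D}}_N-\widetilde{\mathbf{D}}_N|^2\big)^{1/2}=\Op{h/N}=\op{1}$. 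For the bias term, stationarity yields $E\widetilde{\mathbf{D}}_N(t,s)=\boldsymbol{\gamma}_0(t,s)+\sum_{\ell\ge1}\mathcal{K}(\ell/h)(\boldsymbol{\gamma}_\ell(t,s)+\boldsymbol{\gamma}_{-\ell}(t,s))$, with $\boldsymbol{\gamma}_\ell(t,s)=E\epsilon_0(t)\epsilon_\ell^\top(s)$, so that $E\widetilde{\mathbf{D}}_N-\mathbf{D}=\sum_{\ell\ge1}(\mathcal{K}(\ell/h)-1)(\boldsymbol{\gamma}_\ell+\boldsymbol{\gamma}_{-\ell})$. The coupling bound $\|\boldsymbol{\gamma}_\ell\|_{L^2}\le(E\|\epsilon_0\|^2)^{1/2}(E\|\epsilon_1-\epsilon_1^{(\ell)}\|^2)^{1/2}$, combined with Assumption \ref{as1}(iii), gives $\sum_\ell\|\boldsymbol{\gamma}_\ell\|_{L^2}<\infty$; since $\mathcal{K}(0)=1$ and $\mathcal{K}$ is continuous, $\mathcal{K}(\ell/h)\to1$ for each fixed $\ell$, and dominated convergence yields $\iint|E\widetilde{\mathbf{D}}_N-\mathbf{D}|^2\,dtds\to0$.

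The main obstacle is the variance term $E\iint|\widetilde{\mathbf{D}}_N-E\widetilde{\mathbf{D}}_N|^2\,dtds$. Each $\widetilde{\boldsymbol{\gamma}}_\ell$ is a sample average of the weakly dependent products $\epsilon_j(t)\epsilon_{j+\ell}^\top(s)$, so its centred squared $L^2$ moment is $O(1/N)$; expanding the square of the kernel-weighted sum, the diagonal lags contribute $O(h/N)$, and the crux is that the off-diagonal covariances $\mathrm{Cov}(\widetilde{\boldsymbol{\gamma}}_\ell,\widetilde{\boldsymbol{\gamma}}_{\ell'})$ decay in $|\ell-\ell'|$ fast enough that the total remains $O(h/N)\to0$. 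This is where the fourth-moment hypothesis $E\|\epsilon_1\|^{4+\epsilon}<\infty$ in Assumption \ref{as1}(ii) is indispensable, since the second moment of $\widetilde{\boldsymbol{\gamma}}_\ell$ involves the joint fourth-order moments of the innovations. To bound these uniformly across lags I would approximate each product by its finitely dependent analogue obtained from the Bernoulli-shift coupling $\epsilon_\ell^{(m)}$ of Assumption \ref{as1}(iii), control the resulting fourth-order covariance sums by the summable quantities $\big(E\|\epsilon_1-\epsilon_1^{(m)}\|^{4+\epsilon}\big)^{1/\kappa}$, and pass the bounds through $\iint\,dtds$ via Fubini and the Cauchy--Schwarz inequality in $L^2(\mathcal{T})$. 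Chebyshev's inequality then gives $\widetilde{\mathbf{D}}_N-E\widetilde{\mathbf{D}}_N=\op{1}$ in $L^2(\mathcal{T}^2)$, and combining the three bounds establishes \eqref{e:D_N_to_D}.
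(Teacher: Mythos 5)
Your proposal is correct, but it takes a genuinely different route from the paper, whose ``proof'' of Lemma \ref{vcv} is a single sentence deferring to the literature: the authors state that the result follows, with minor modifications, from the arguments in \citet{horvath:kokoszka:reeder:2013} and \citet{berkes:horvath:rice:2016}. What you have written is, in substance, a reconstruction of what those cited papers actually do: split off the demeaning effect (your $O_P(h/N)$ bound is right, since $\|\bar{\epsilon}\|$ and the relevant partial averages are $O_P(N^{-1/2})$ uniformly over the $O(h)$ active lags), kill the bias via the coupling bound $\|\boldsymbol{\gamma}_\ell\|_{L^2}\leq (E\|\epsilon_0\|^2)^{1/2}\big(E\|\epsilon_1-\epsilon_1^{(\ell)}\|^2\big)^{1/2}$ together with dominated convergence (and the summability you need does follow from Assumption \ref{as1}\textit{(iii)}, because its summands tend to zero and $\kappa>4+\epsilon$ makes the exponent comparison go the right way), and control the fluctuation term by $m$-dependent approximation and fourth moments. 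The one step where your sketch stops short of a proof is precisely the one you flag as the crux: a naive triangle inequality over the $O(h)$ lags only yields $O_P(hN^{-1/2})$, which is \emph{not} $o_P(1)$ under Assumption \ref{as:band} alone (since $h/N\to 0$ permits $h\gg N^{1/2}$), so the cross-lag covariance decay -- i.e.\ summability of fourth-order cumulants obtained from the Bernoulli-shift coupling -- is indispensable, and executing those bounds is exactly the technical content of the references the paper invokes. The trade-off between the two routes is clear: the paper's citation is economical and legitimate because those references work under the same $L^{4+\epsilon}$-$m$-approximability framework as Assumption \ref{as1}, whereas your direct argument has the merit of making explicit where each hypothesis enters (Assumption \ref{as:band}\textit{(ii)} for the demeaning and variance terms, Assumption \ref{as:kernel} plus coupling summability for the bias, the $4+\epsilon$ moments for the variance), at the cost of a substantial amount of fourth-moment bookkeeping that you have outlined but not carried out.
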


Lemma \ref{vcv} stipulates the consistency (in Frobenius norm) of $\widehat{%
\mathbf{D}}_{N}(t,s)$. The lemma immediately entails that, for every fixed $%
1\leq \ell \leq N-1$, 
\begin{equation}
\left\vert \widehat{\lambda }_{\ell }-\lambda _{\ell }\right\vert =o_{P}(1),
\label{eige_cons}
\end{equation}%
where $\widehat{\lambda }_{1}\geq \widehat{\lambda }_{2}\geq \ldots $ are
the eigenvalues of the operator $\phi \mapsto \int \widehat{\mathbf{D}}%
_{N}(t,s)\phi (s)ds$, $\phi \in L^{2}(\mathcal{T})$, suggesting that $%
\widehat{\lambda }_{\ell }$\ is a good estimate of $\lambda _{\ell }$.
Further, by the ergodic theorem (\citealp{breiman:1968}), under $H_{0}$ 
\begin{equation}
\widehat{\sigma }_{N}^{2}=N^{-1}\sum_{i=1}^{N}\left\Vert \overline{X}%
_{i}\right\Vert ^{2}\overset{\mathcal{P}}{\rightarrow }\sigma _{0}^{2}.
\label{sighat}
\end{equation}%
Hence, we can approximate the distribution of functionals of $\Delta (u)$
with functionals of 
\begin{equation}
\Delta _{N,\widehat{M}}(u)=\sum_{\ell =1}^{\widehat{M}}\widehat{\lambda }%
_{\ell }B_{\ell }^{2}(u)-\widehat{\sigma }_{N}^{2}u(1-u),  \label{e:Delta_Nu}
\end{equation}%
for sufficiently large $\widehat{M}$, $N$, using standard Monte Carlo
techniques.

\section{Testing for distributional change\label{distrib}}

We consider an extension of the testing procedure defined above to detect
changes in the \textit{distribution} of functional observations. Our
approach is based on testing for the equality of the characteristic
function, i.e., ultimately, on comparing expectations of a transformation of
the data. Given that the data undergo a transformation, but the test
statistics are the same, it can be expected that all the theory developed
above can still be applied with no changes required. Indeed, compared with
approaches based on using (\ref{energy}) with $\eta <2$, our methodology has
three distinct advantages. Firstly, the limiting distribution, in our case,
involves the integral of the square of the Gaussian process \eqref {kl-gamma}%
, 
which greatly simplifies our computations. This is a consequence of having $%
\eta=2 $; using $\eta <2$ would preclude this result (%
\citealp{biau:bleakley:mason:2016}). Secondly, the binary segmentation
algorithm discussed in Section \ref{asy_alt} can be applied also in this
case, with no modifications required. This is a consequence of the fact that
our test statistics for the detection of distributional changes are based on
comparing expectations; conversely, as \citet{matteson:james:2014} put it,
when using (\ref{energy}) with $\eta <2$, binary segmentation
\textquotedblleft cannot be applied in this general situation because it
assumes that the expectation of the observed sequence consists of a
piecewise linear function, making it only suitable for estimating
changepoints resulting from breaks in expectation.\textquotedblright~
Thirdly, although we consider the empirical characteristic function, our
tests can be immediately generalised to to include weighted empirical
characteristic functions, or other transformations that may characterize the
underlying distribution, such as e.g. moment generating function, or the
Mellin transform, among other possibilities. This is a consequence of the
fact that the theory in Section \ref{asymptotics} can be applied to test for
the constancy of the expectation of any (univariate or multivariate) weakly
dependent functional-valued time series, including transformations of
functional-valued series.

Let $\mathbf{i}$ denote the imaginary unit, i.e. $\mathbf{i=}\sqrt{-1}$.
Given a sequence of $Y_{\ell }=\{Y_{\ell }(s),0\leq s\leq 1\}$, $\ell
=1,\ldots ,N$ of $L^{2}([0,1];\mathbb{R})$-valued functional observations,
we consider the following null and alternative hypotheses 
\begin{equation}
H_{0}^{\prime }:Y_{1},Y_{2},\ldots ,Y_{N}\text{ have the same distribution}
\label{h0-distr}
\end{equation}%
\begin{equation}
H_{A}^{\prime }:\text{there are }1\text{$<k_{1}<\ldots <k_{R}<N$ such that }
\label{hA-distr}
\end{equation}%
\begin{equation*}
Y_{k_{i}+1},Y_{k_{i}}\text{ have different distributions, and }Y_{k_{i-1}+1}%
\overset{\mathcal{D}}{=}Y_{k_{i-1}+2}\overset{\mathcal{D}}{=}\ldots \overset{%
\mathcal{D}}{=}Y_{k_{i}},
\end{equation*}%
for $i=1,\ldots ,R$, again with the convention $k_{0}=1$ and $k_{R+1}=N$.
Testing $H_{0}^{\prime }$ versus $H_{A}^{\prime }$ can be done with
substantively weaker assumptions on the (moments of the) sequence $Y_{\ell }$
than what is required by Assumption \ref{as1}.

\begin{assumption}
\label{as:Yk} \textit{(i)} the sequence $\{Y_{\ell },-\infty <\ell <\infty
\} $ is a Bernoulli shift sequence, i.e., it has the representation $Y_{\ell
}=\widetilde{g}(\tilde{\eta}_{\ell },\tilde{\eta}_{\ell -1},\ldots )$, where
for each $\ell $, $\tilde{\eta}_{\ell }=\tilde{\eta}_{\ell }(t,\omega )$ are 
\textit{i.i.d.}~functions jointly measurable in $(t,\omega )$ taking values
in a measurable space $\mathcal{S}$, and $\widetilde{g}$ is a nonrandom
measurable function $\widetilde{g}:\mathcal{S}^{\infty }\rightarrow
L^{2}([0,1];\mathbb{R})$; \textit{(ii)} $E\Vert Y_{1}\Vert ^{\beta }<\infty $
for some $\beta >0$; \textit{(iii)} for $\beta $ defined in part \textit{(ii)%
}, there is some $\alpha _{0}>2$ such that $E\Vert Y_{1}-Y_{1}^{(m)}\Vert
^{\beta }\leq Cm^{-\alpha _{0}}$, where for each pair $(j,\ell )$, we set $%
Y_{\ell }^{(j)}=\widetilde{g}(\tilde{\eta}_{\ell },\tilde{\eta}_{\ell
-1},\ldots ,\tilde{\eta}_{\ell -j+1},\tilde{\eta}_{\ell -j}^{(j)},\tilde{\eta%
}_{\ell -j-1}^{(j)},\ldots )$, with $\{\tilde{\eta}_{\ell }^{(j)},-\infty
<\ell <\infty \}$ an independent copy of $\{\tilde{\eta}_{\ell },-\infty
<\ell <\infty \}$.
\end{assumption}

Inspired by \citet{berkes2009detecting}, we pre-process the infinite
dimensional data $Y_{\ell }(t)$ by projecting them into a finite dimensional
vector 
\begin{equation}
\xi _{j,\ell }=\int_{0}^{1}{Y_{\ell }}(s)\psi _{j}(s)ds,\quad 1\leq \ell
\leq N,\quad 1\leq j\leq d,  \label{e:xi_i}
\end{equation}%
where $\{\psi _{\ell },\ell \geq 1\}$ is an orthonormal basis of $L^{2}[0,1]$%
. Thence, we define the corresponding $\mathbb{C}$-valued random functions 
\begin{equation}
X_{\ell }(t)=\exp \left( \mathbf{i}\sum_{j=1}^{d}t_{j}\xi _{j,\ell }\right)
,\quad 1\leq \ell \leq N,  \label{e:xi_i_to_X0}
\end{equation}%
where $t=(t_{1},\ldots ,t_{d})^{\top }\in \lbrack -1,1]^{d}$, and $d$ is
user-chosen. Heuristically, $EX_{\ell }(t)$ is (an approximation of) the
characteristic functional of $Y_{\ell }$, and therefore comparing averages
of $X_{\ell }(t)$ before and after a point in time $k$ is a natural way of
checking whether the distribution of $Y_{\ell }$ changes or not. Viewing
each $\left\{ X_{\ell }(t),t\in \lbrack -1,1]^{d}\right\} $ in %
\eqref{e:xi_i_to_X0} as an $\mathbb{R}^{2}$-valued random function $(\text{Re%
}\,X_{\ell }(t),\text{Im}\,X_{\ell }(t))^{\top }$, we may apply the test
statistics proposed in Section \ref{tests} to test the hypotheses $%
H_{0}^{\prime }$ versus $H_{A}^{\prime }$.

Let $\xi _{j,\ell }^{(m)}=\int_{0}^{1}Y_{\ell }^{(m)}(s)\psi _{j}(s)ds$, and
define $X_{\ell }^{(m)}(t)=\exp \left( \mathbf{i}\sum_{j=1}^{d}t_{j}\xi
_{j,\ell }^{(m)}\right) $. We show that $\left\{ X_{\ell },-\infty <\ell
<\infty \right\} $ is a Bernoulli shift sequence which satisfies Assumption %
\ref{as1}.

\begin{lemma}
\label{x-bernoulli}We assume that Assumption \ref{as:Yk} is satisfied. Then,
for every $\gamma >2\beta (\alpha _{0}-2)$, there is an $\alpha _{0}^{\prime
}>2$ such that $E\Vert X_{1}-X_{1}^{(m)}\Vert ^{\gamma }\leq Cm^{-\alpha
_{0}^{\prime }}$.
\end{lemma}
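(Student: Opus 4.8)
The plan is to reduce everything to the pointwise Lipschitz behaviour of the complex exponential. Writing $\delta _j=\xi _{j,1}-\xi _{j,1}^{(m)}$, I would start from the elementary identity $|e^{\mathbf{i}a}-e^{\mathbf{i}b}|=2\,|\sin((a-b)/2)|\le \min (2,|a-b|)$, which gives, pointwise in $t\in \lbrack -1,1]^{d}$,
\[
|X_{1}(t)-X_{1}^{(m)}(t)|\le \min \Big(2,\Big|\sum_{j=1}^{d}t_{j}\delta _{j}\Big|\Big).
\]
Squaring and integrating over the compact box $[-1,1]^{d}$, I would bound $\min (2,\cdot )\le 2$ to control the tail while using the exact identity $\int_{[-1,1]^{d}}(\sum_{j}t_{j}\delta _{j})^{2}\,dt=(2^{d}/3)\sum_{j}\delta _{j}^{2}$ (the cross terms vanish by oddness) on the bulk. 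This reduces the whole estimate to controlling the single random scalar $\sum_{j}\delta _{j}^{2}$.

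The second step is to transfer the projection coefficients back to the functions. Since $\delta _{j}=\langle Y_{1}-Y_{1}^{(m)},\psi _{j}\rangle $ and $\{\psi _{j}\}$ is orthonormal, Bessel's inequality yields $\sum_{j=1}^{d}\delta _{j}^{2}\le \Vert Y_{1}-Y_{1}^{(m)}\Vert ^{2}$. Combining this with the first step produces a single random bound of the form
\[
\Vert X_{1}-X_{1}^{(m)}\Vert \le C\,\min \big(1,\Vert Y_{1}-Y_{1}^{(m)}\Vert \big),
\]
with $C$ depending only on $d$. In particular $\Vert X_{1}-X_{1}^{(m)}\Vert $ is uniformly bounded, which is the structural feature that makes arbitrarily high moments available despite the very weak moment assumption on $Y_{1}$.

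The final step is moment bookkeeping. Raising to the power $\gamma $ and using the interpolation inequality $\min (1,z)^{\gamma }\le z^{\min (\gamma ,\beta )}$ (valid for every $z\ge 0$, since $z\mapsto z^{a}$ is decreasing on $[0,1]$), I obtain
\[
E\Vert X_{1}-X_{1}^{(m)}\Vert ^{\gamma }\le C^{\gamma }\,E\Vert Y_{1}-Y_{1}^{(m)}\Vert ^{\min (\gamma ,\beta )}.
\]
When $\gamma \ge \beta $ the right-hand side is $C^{\gamma }E\Vert Y_{1}-Y_{1}^{(m)}\Vert ^{\beta }$, controlled directly by Assumption \ref{as:Yk}\textit{(iii)}; when $\gamma <\beta $ a single application of Jensen's inequality trades the unavailable $\gamma $-th moment for the controlled $\beta $-th moment, giving $(E\Vert Y_{1}-Y_{1}^{(m)}\Vert ^{\beta })^{\gamma /\beta }$. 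In either case Assumption \ref{as:Yk}\textit{(iii)} delivers $E\Vert X_{1}-X_{1}^{(m)}\Vert ^{\gamma }\le Cm^{-\alpha _{0}^{\prime }}$ with decay exponent $\alpha _{0}^{\prime }=\alpha _{0}\min (1,\gamma /\beta )$, and the stated lower bound on $\gamma $ is exactly what keeps $\alpha _{0}^{\prime }>2$.

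The only genuine obstacle is this last bookkeeping: the moment order $\beta $ furnished by Assumption \ref{as:Yk} may be far smaller than the target order $\gamma $, so one cannot simply dominate the $\gamma $-th moment of the $Y$-increment. The resolution is structural rather than computational---the map $Y_{\ell }\mapsto X_{\ell }$ takes values on the unit circle, so the $X$-increments are bounded and all their high moments are governed by the single low-order moment that Assumption \ref{as:Yk} controls; the interpolation and Jensen steps merely record this quantitatively.
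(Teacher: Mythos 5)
Your proof is correct, and its core estimate is genuinely different from the paper's, in a way worth recording. Both arguments open identically, with $|e^{\mathbf{i}a}-e^{\mathbf{i}b}|\le \min\{2,|a-b|\}$, but from there the paper works coordinatewise: it bounds $\Vert X_1-X_1^{(m)}\Vert \le C\sum_{\ell=1}^{d}\min\{2,|\xi_{1,\ell}-\xi_{1,\ell}^{(m)}|\}$ and then, for each coordinate, runs a truncation argument --- split at a level $u$, apply Markov's inequality to the tail, and optimize the choice $u=m^{-\alpha_0/(\beta+\gamma)}$ --- which yields the decay exponent $\alpha_0^{\prime}=\gamma\alpha_0/(\beta+\gamma)$ and hence the threshold $\gamma>2\beta/(\alpha_0-2)$ (the statement's ``$\gamma>2\beta(\alpha_0-2)$'' is a typo for this). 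You instead collapse the $d$ projections at once via Bessel's inequality, $\sum_{j}(\xi_{j,1}-\xi_{j,1}^{(m)})^{2}\le \Vert Y_1-Y_1^{(m)}\Vert^{2}$, obtaining the cleaner intermediate bound $\Vert X_1-X_1^{(m)}\Vert\le C\min(1,\Vert Y_1-Y_1^{(m)}\Vert)$, and then replace the truncation/Markov step by the pointwise inequality $\min(1,z)^{\gamma}\le z^{\min(\gamma,\beta)}$ together with Jensen when $\gamma<\beta$. This is more elementary (no optimization over a truncation level) and quantitatively sharper: you obtain $\alpha_0^{\prime}=\alpha_0\min(1,\gamma/\beta)$, which strictly exceeds $\gamma\alpha_0/(\beta+\gamma)$ in every case, and your argument only needs $\gamma>2\beta/\alpha_0$ --- with no condition at all once $\gamma\ge\beta$ --- which is weaker than the lemma's hypothesis, so the lemma follows a fortiori. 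Two cosmetic slips: the parenthetical justification of your interpolation inequality should say that $a\mapsto z^{a}$ is nonincreasing for fixed $z\in[0,1]$, not that $z\mapsto z^{a}$ is decreasing on $[0,1]$; and your closing remark that the stated lower bound on $\gamma$ is ``exactly'' what keeps $\alpha_0^{\prime}>2$ undersells your own result, since your route needs strictly less than the paper's threshold.
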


In order to construct the auxiliary functions $X_{\ell }(t)$ defined in %
\eqref{e:xi_i_to_X0}, one must first choose a basis $\{\psi _{\ell }\}$.
Though any orthonormal basis of $L^{2}([0,1];\mathbb{R})$ will suffice, when
the observations $Y_{\ell }$ satisfy $E|Y_{1}(s)|^{2}<\infty $, $s\in
\lbrack 0,1]$, Principal Component Analysis (PCA) based approaches are among
the most popular choices for selecting $\{\psi _{\ell }\}$, and typically
lead to good finite-sample performance. Under the assumption that $%
E|Y_{1}(s)|^{2}<\infty $, $s\in \lbrack 0,1]$, define 
\begin{equation}
C(t,s)=\cov\left( Y_{1}(t),Y_{1}(s)\right)  \label{e:cov(Y)}
\end{equation}%
According to the PCA approach, the $\psi _{\ell }$ in (\ref{e:xi_i}) are
chosen as the eigenfunctions of $C(t,s)$ 
\begin{equation*}
\chi _{\ell }\psi _{\ell }(t)=\int C(t,s)\psi _{\ell }(s)ds,
\end{equation*}%
where $\chi _{1}>\chi _{2}>\ldots $, and $\{\psi _{\ell },\ell \geq 1\}$ are
orthonormal - note the requirement that eigenvalues are well-separated,
which is typical of (functional) PCA\ (see e.g. %
\citealp{horvath:kokoszka:2012}). With this choice of basis, typically the
approximation $Y_{\ell }(t)\approx \sum_{i=1}^{d}\xi _{j,\ell }\psi _{\ell
}(t)$ requires only a small number $d$ of projections for good finite-sample
performance. We estimate the covariance function $C(t,s)$ in \eqref{e:cov(Y)}
as 
\begin{equation*}
\widehat{C}_{N}(t,s)=\frac{1}{N}\sum_{j=1}^{N}(Y_{j}(t)-\widehat{\mu }%
_{Y,N}(t))(Y_{j}(s)-\widehat{\mu }_{Y,N}(s)),
\end{equation*}%
where $\widehat{\mu }_{Y,N}(t)=N^{-1}\sum_{j=1}^{N}Y_{j}(t)$ is the sample
mean. If Assumption \ref{as:Yk} holds with $\beta >2$, then by the ergodic
theorem it holds that 
\begin{equation}
\iint \big(\widehat{C}_{N}(t,s)-C(t,s)\big)^{2}dtds\rightarrow 0\quad \text{%
a.s.}  \label{e:C_N_to_C}
\end{equation}%
Thus, if $(\widehat{\chi }_{\ell },\widehat{\psi }_{\ell })$ are the
eigenvalue-eigenfunction pairs defined by 
\begin{equation*}
\widehat{\chi }_{\ell }(t)\widehat{\psi }_{\ell }=\int_{0}^{1}\widehat{C}%
_{N}(t,s)\widehat{\psi }_{\ell }(s)ds,\quad \widehat{\chi }_{1}\geq \widehat{%
\chi }_{2}\geq \ldots ,
\end{equation*}%
where $\int \left\vert \widehat{\psi }_{\ell }\left( t\right) \right\vert
^{2}dt=1$, then for each fixed $1\leq \ell \leq N-1$, the eigenfunctions are
estimated consistently modulo a sign - i.e., it holds that $E\Vert \widehat{%
\psi }_{\ell }(t)-\iota _{\ell }\psi _{\ell }(t)\Vert ^{2}=o_{P}(1)$, where $%
\iota _{\ell }$ is a random sign (see Theorem 2.8 in %
\citealp{horvath:kokoszka:2012}). Since the variables $X_{\ell }(t)$ do not
depend on the sign of $\psi _{\ell }$, one can then construct $X_{\ell }(t)$
in \eqref{e:xi_i_to_X0} based on 
\begin{equation}
\widehat{\xi }_{j,\ell }=\int_{0}^{1}Y_{\ell }(s)\widehat{\psi }%
_{j}(s)ds,\quad 1\leq \ell \leq N,\quad 1\leq j\leq d.  \label{e:xi_i_hat}
\end{equation}%
With the PCA-based choice \eqref{e:xi_i_hat}, it can be verified that $%
\widehat{X}_{\ell }\left( t\right) =\exp \left( \mathbf{i}\sum_{j=1}^{d}t_{j}%
\widehat{\xi }_{j,\ell }\right) $ still satisfies Lemma \ref{x-bernoulli}.
Hence, all the results of Section \ref{asymptotics} hold when using the
empirical energy function $V_{N}\left( k\right) $ based on $\widehat{X}%
_{\ell }\left( t\right) $ to test for $H_{0}^{\prime }$ in (\ref{h0-distr})
versus $H_{A}^{\prime }$ in (\ref{hA-distr}).

\section{Simulations\label{simulations}}

We provide some Monte Carlo evidence on the performance of our test
statistics, and some guidelines on how to implement the tests; further
details and results (including a set of experiments on binary segmentation)
are reported in Section \ref{furtherMC} in the Supplement. We use the
following Data Generating Process (DGP), inspired by \citet{happ}, based on
a truncated multivariate KL representation%
\begin{equation}
X_{i}(t)=\mu _{i}(t)+\sum_{\ell =1}^{M}\lambda _{\ell }^{1/2}\mathcal{Z}%
_{\ell ,i}\phi _{\ell }(t)+\nu _{i}\left( t\right) ,  \label{dgp}
\end{equation}%
for $1\leq i\leq N$, where: $t\in \mathcal{T=}\left[ 0,1\right] $, $X_{i}(t)$
is univariate, $\mathcal{Z}_{\ell ,i}$ are $N\left( 0,1\right) $ and {%
uncorrelated across $\ell $, $\left\{ \phi _{\ell }(t),1\leq \ell \leq
M\right\} $ form an orthonormal basis, and $\nu _{i}\left( t\right) $ is an 
\textit{i.i.d.} Gaussian measurement error with mean zero and scale $%
E\left\Vert \nu _{i}\left( t\right) \right\Vert ^{2}=\sigma _{\nu }^{2}$. As
far as $\nu _{i}\left( t\right) $ is concerned, we consider two designs: a
benchmark one with no measurement error (i.e., $\sigma _{\nu }^{2}=0$), and
one with $\sigma _{\nu }^{2}=0.25$. We allow for serial dependence in the $%
X_{i}(t)$'s through an AR($1$) structure in the $\mathcal{Z}_{\ell ,i}$
across $i$, viz. $\mathcal{Z}_{\ell ,i}=\rho \mathcal{Z}_{\ell ,i-1}+e_{\ell
,i}^{\mathcal{Z}}$ for all $1\leq \ell \leq M$, with $e_{\ell ,i}^{\mathcal{Z%
}}\sim i.i.d.N\left( 0,1\right) $ across $i$ and $\ell $. Under the null, we
set $\mu _{i}(t)=0$ for all $1\leq i\leq N$, for simplicity and with no loss
of generality. The eigenvalues $\lambda _{\ell }$ in (\ref{dgp}) are
generated as%
\begin{equation}
\lambda _{\ell }=\exp \left( -\left( \ell -1\right) /2\right) ;
\label{eig_exp}
\end{equation}%
unreported simulations show that using different schemes (e.g. a linear, or
a Wiener one) does not alter the results. The observations $X_{i}(t)$ are
sampled on an equispaced grid of $S=128$ points. As is typical in FDA, a
possible approach would be to pre-process and smooth the data, converting
the discretely observed $X_{i}(t_{j})$, $1\leq j\leq S$ into functional
objects by projecting them onto a suitably chosen basis; in our case, this
would only help with dimensionality reduction, since the coefficients of the
expansion are not required by any of our procedures. However, our test
statistics are not particularly computationally demanding, and therefore
pre-processing is not strictly required. Indeed, as %
\citet{hormann2022consistently} put it \textquotedblleft for the processing
of real data we will most often use the discretised curves
anyway\textquotedblright . In our case, for example, the integral in
equation (\ref{e:eigenvalues_def}) will be computed numerically, and the
most natural choice of nodes in the numerical computations are the
discretised sampling points $t_{j}$, $1\leq j\leq S$ - hence, we suggest as
a guideline that no data smoothing/pre-processing is carried out, at least
for \textquotedblleft reasonable\textquotedblright\ values of $S$. As far as
other specifications are concerned, we compute $\widehat{\sigma }_{N}^{2}$
and $\widehat{\mathbf{D}}_{N}(t,s)$ as described in Section \ref{cr_values}.
We have used the Parzen kernel, and we have selected the bandwidth $h$
according to the optimal rules derived in \citet{andrews1991}. We simulate $%
\Delta _{N,M}(u)$ over a grid with exactly $N$ points, which we recommend in
practical applications. All results are based on using an estimate $\widehat{%
M}$ of $M$, chosen so that the first $\widehat{M}$ eigenvalues of $\widehat{%
\mathbf{D}}_{N}(t,s)$ explain a prespecified amount of the total variability
(we set this to $0.95$, which is a bit higher than in other papers, but
still comes with a great dimensionality reduction). Critical values for
weighted functionals of $\Delta _{N,M}(u)$ are computed using $500$
replications. All simulations are carried out with $1,000$ replications; all
routines have been written using GAUSS 21.0.6. }

Empirical rejection frequencies under the null, at a nominal $5\%$ level,
are reported in Table \ref{tab:ERF}; see also Section \ref{furtherMC} in the
Supplement for further cases. In the \textit{i.i.d.} case, our tests have
excellent size control in all cases: the empirical rejection frequencies lie
in the confidence interval $\left[ 0.036,0.064\right] $ even for sample
sizes as small as $N=50$, and for all the values of $\alpha $ considered in
our simulations. In general, our tests are almost never oversized,
suggesting that spurious break detection is highly unlikely. When serial
dependence (especially) and/or measurement errors are present, the tests are
somewhat conservative for small samples and large $\alpha $, but this
improves as $N$ increases. Upon closer inspection, this is due to the fact
that the bandwidth $h$ employed in (\ref{e:def_D_N}) seems too high, and
reducing it would increase the size; in turn, this suggests that, prior to
implementing the tests, some qualitative considerations based on the
presence of measurement error, and a bandwidth selection rule based on $%
\alpha $, may yield improvements.

\begin{table*}[t]
\caption{{\protect\footnotesize {Empirical rejection frequencies under the
null of no changepoint}}}
\label{tab:ERF}\centering
{\footnotesize {\ }}
\par
\resizebox{\textwidth}{!}{

\begin{tabular}{lllllllllllllllllllll}
\hline\hline
&  &  &  &  &  &  &  &  &  &  &  &  &  &  &  &  &  &  &  &  \\ 
&  &  & \multicolumn{8}{c}{\textit{i.i.d. }case, no measurement error} & 
\multicolumn{1}{c}{} & \multicolumn{1}{c}{} & \multicolumn{8}{c}{serial
dependence with measurement error} \\ 
&  &  & \multicolumn{1}{c}{} & \multicolumn{1}{c}{} & \multicolumn{1}{c}{} & 
\multicolumn{1}{c}{} & \multicolumn{1}{c}{} & \multicolumn{1}{c}{} & 
\multicolumn{1}{c}{} & \multicolumn{1}{c}{} & \multicolumn{1}{c}{} & 
\multicolumn{1}{c}{} & \multicolumn{1}{c}{} & \multicolumn{1}{c}{} & 
\multicolumn{1}{c}{} & \multicolumn{1}{c}{} & \multicolumn{1}{c}{} & 
\multicolumn{1}{c}{} & \multicolumn{1}{c}{} & \multicolumn{1}{c}{} \\ 
\multicolumn{1}{c}{$N$} & $\alpha $ &  & \multicolumn{1}{c}{$0.00$} & 
\multicolumn{1}{c}{$0.10$} & \multicolumn{1}{c}{$0.25$} & \multicolumn{1}{c}{$0.50$} & \multicolumn{1}{c}{$0.75$} & \multicolumn{1}{c}{$0.85$} & 
\multicolumn{1}{c}{$0.95$} & \multicolumn{1}{c}{$0.99$} & \multicolumn{1}{c}{
} & \multicolumn{1}{|c}{} & \multicolumn{1}{c}{$0.00$} & \multicolumn{1}{c}{$0.10$} & \multicolumn{1}{c}{$0.25$} & \multicolumn{1}{c}{$0.50$} & 
\multicolumn{1}{c}{$0.75$} & \multicolumn{1}{c}{$0.85$} & \multicolumn{1}{c}{$0.95$} & \multicolumn{1}{c}{$0.99$} \\ 
\multicolumn{1}{c}{} &  &  & \multicolumn{1}{c}{} & \multicolumn{1}{c}{} & 
\multicolumn{1}{c}{} & \multicolumn{1}{c}{} & \multicolumn{1}{c}{} & 
\multicolumn{1}{c}{} & \multicolumn{1}{c}{} & \multicolumn{1}{c}{} & 
\multicolumn{1}{c}{} & \multicolumn{1}{|c}{} & \multicolumn{1}{c}{} & 
\multicolumn{1}{c}{} & \multicolumn{1}{c}{} & \multicolumn{1}{c}{} & 
\multicolumn{1}{c}{} & \multicolumn{1}{c}{} & \multicolumn{1}{c}{} & 
\multicolumn{1}{c}{} \\ 
\multicolumn{1}{c}{$50$} &  &  & \multicolumn{1}{c}{$0.046$} & 
\multicolumn{1}{c}{$0.048$} & \multicolumn{1}{c}{$0.047$} & 
\multicolumn{1}{c}{$0.048$} & \multicolumn{1}{c}{$0.047$} & 
\multicolumn{1}{c}{$0.048$} & \multicolumn{1}{c}{$0.047$} & 
\multicolumn{1}{c}{$0.051$} & \multicolumn{1}{c}{} & \multicolumn{1}{|c}{} & 
\multicolumn{1}{c}{$0.045$} & \multicolumn{1}{c}{$0.036$} & 
\multicolumn{1}{c}{$0.042$} & \multicolumn{1}{c}{$0.027$} & 
\multicolumn{1}{c}{$0.029$} & \multicolumn{1}{c}{$0.021$} & 
\multicolumn{1}{c}{$0.016$} & \multicolumn{1}{c}{$0.011$} \\ 
\multicolumn{1}{c}{$100$} &  &  & \multicolumn{1}{c}{$0.058$} & 
\multicolumn{1}{c}{$0.058$} & \multicolumn{1}{c}{$0.056$} & 
\multicolumn{1}{c}{$0.058$} & \multicolumn{1}{c}{$0.062$} & 
\multicolumn{1}{c}{$0.058$} & \multicolumn{1}{c}{$0.062$} & 
\multicolumn{1}{c}{$0.062$} & \multicolumn{1}{c}{} & \multicolumn{1}{|c}{} & 
\multicolumn{1}{c}{$0.054$} & \multicolumn{1}{c}{$0.050$} & 
\multicolumn{1}{c}{$0.059$} & \multicolumn{1}{c}{$0.047$} & 
\multicolumn{1}{c}{$0.053$} & \multicolumn{1}{c}{$0.036$} & 
\multicolumn{1}{c}{$0.040$} & \multicolumn{1}{c}{$0.025$} \\ 
\multicolumn{1}{c}{$150$} &  &  & \multicolumn{1}{c}{$0.058$} & 
\multicolumn{1}{c}{$0.055$} & \multicolumn{1}{c}{$0.057$} & 
\multicolumn{1}{c}{$0.056$} & \multicolumn{1}{c}{$0.057$} & 
\multicolumn{1}{c}{$0.052$} & \multicolumn{1}{c}{$0.052$} & 
\multicolumn{1}{c}{$0.051$} & \multicolumn{1}{c}{} & \multicolumn{1}{|c}{} & 
\multicolumn{1}{c}{$0.059$} & \multicolumn{1}{c}{$0.070$} & 
\multicolumn{1}{c}{$0.063$} & \multicolumn{1}{c}{$0.069$} & 
\multicolumn{1}{c}{$0.051$} & \multicolumn{1}{c}{$0.037$} & 
\multicolumn{1}{c}{$0.039$} & \multicolumn{1}{c}{$0.027$} \\ 
\multicolumn{1}{c}{$200$} &  &  & \multicolumn{1}{c}{$0.051$} & 
\multicolumn{1}{c}{$0.050$} & \multicolumn{1}{c}{$0.051$} & 
\multicolumn{1}{c}{$0.052$} & \multicolumn{1}{c}{$0.059$} & 
\multicolumn{1}{c}{$0.052$} & \multicolumn{1}{c}{$0.050$} & 
\multicolumn{1}{c}{$0.051$} & \multicolumn{1}{c}{} & \multicolumn{1}{|c}{} & 
\multicolumn{1}{c}{$0.052$} & \multicolumn{1}{c}{$0.050$} & 
\multicolumn{1}{c}{$0.047$} & \multicolumn{1}{c}{$0.050$} & 
\multicolumn{1}{c}{$0.043$} & \multicolumn{1}{c}{$0.041$} & 
\multicolumn{1}{c}{$0.039$} & \multicolumn{1}{c}{$0.035$} \\ 
\multicolumn{1}{c}{} &  &  & \multicolumn{1}{c}{} & \multicolumn{1}{c}{} & 
\multicolumn{1}{c}{} & \multicolumn{1}{c}{} & \multicolumn{1}{c}{} & 
\multicolumn{1}{c}{} & \multicolumn{1}{c}{} & \multicolumn{1}{c}{} & 
\multicolumn{1}{c}{} & \multicolumn{1}{|c}{} & \multicolumn{1}{c}{} & 
\multicolumn{1}{c}{} & \multicolumn{1}{c}{} & \multicolumn{1}{c}{} & 
\multicolumn{1}{c}{} & \multicolumn{1}{c}{} & \multicolumn{1}{c}{} & 
\multicolumn{1}{c}{} \\ \hline\hline
\end{tabular}

}
\par
{\footnotesize 
\begin{tablenotes}
      \tiny
            \item The table contains the empirical rejection frequencies under the null of no changepoint, using $M=40$ orthonormal bases in (\ref{dgp}), for tests at a $5\%$ nominal level. The specifications of (\ref{dgp}) are described in the main text.
            
\end{tablenotes}
}
\end{table*}

Turning to power, we compute empirical rejection frequencies under the
at-most-one-change alternative where, in (\ref{dgp}), for simplicity in
illustration, we shift the data by a constant after the breakpoint, namely: 
\begin{equation}
\mu _{i}(t)=\mathcal{\delta }(t)I\left( i>k^{\ast }\right) ,  \label{amoc}
\end{equation}%
where $\delta(t)\equiv C_\delta>0$, and the size of the change set to $%
\|\delta\|=C_\delta \in \left\{ 0.4,0.8,1.2,1.6,2\right\} $. We consider two
scenarios: a mid-sample break, with $k^{\ast }=\left\lfloor N/2\right\rfloor 
$ (Figure \ref{fig:FigM200}), and a late-occurring changepoint, with $%
k^{\ast }=\left\lfloor 0.9N\right\rfloor $ (Figure \ref{fig:FigE200}).
Results are obtained for $N=200$, and using $200$ replications to save
computational time; for brevity, in Figures \ref{fig:FigM200} and \ref%
{fig:FigE200} we report only results for the \textit{i.i.d.} case and the
case of serial dependence and measurement error.\footnote{%
Further results, with $N=100 $, confirm the findings reported in this
section, and are reported in Section \ref{furtherMC} in the Supplement.}
Figures \ref{fig:FigM200} and \ref{fig:FigE200} confirm that using higher $%
\alpha $ is beneficial when the breakdate $k^{\ast }$ is close to sample
endpoints, whereas, in the presence of mid-sample breaks, the test generally
has good power, which tends to be lower as $\alpha $ increases (the
discrepancy increases as $\left\Vert \delta \right\Vert $ declines).

\begin{figure}[t]
\caption{{\protect\footnotesize {Empirical rejection frequencies under a
mid-sample break, $N=200$ - \textit{i.i.d.} data (left panel) and data with
serial dependence and measurement error (right panel)}}}
\label{fig:FigM200}\centering
\begin{minipage}[c]{0.49\textwidth}
\centering
    \includegraphics[width=\linewidth]{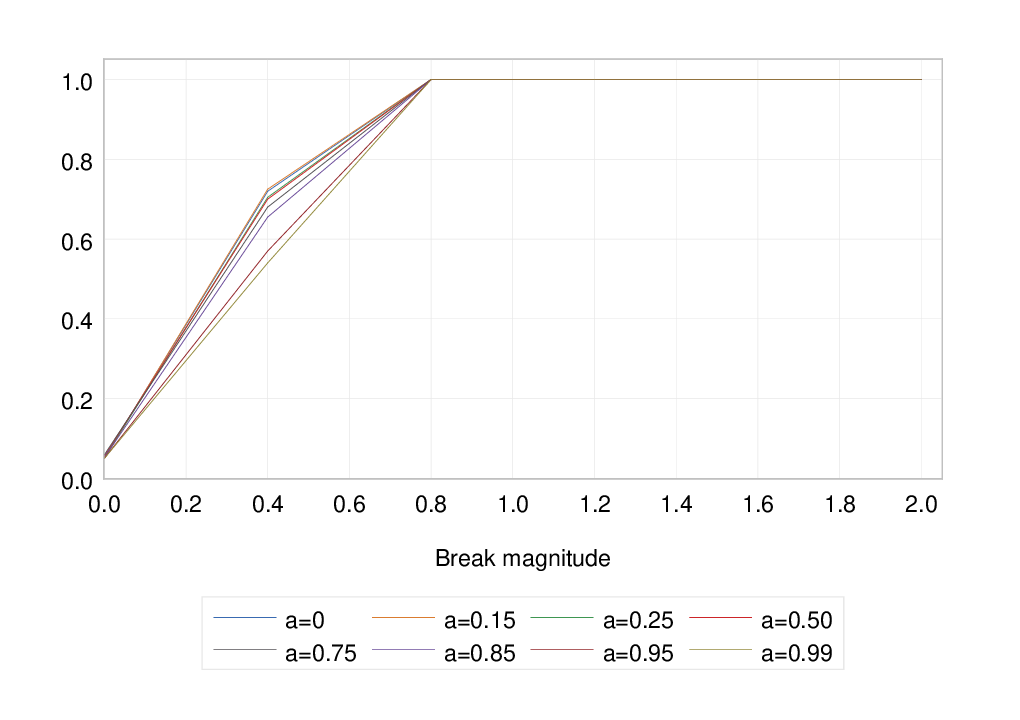}
\end{minipage}
\begin{minipage}[c]{0.49\textwidth}
\centering
    \includegraphics[width=\linewidth]{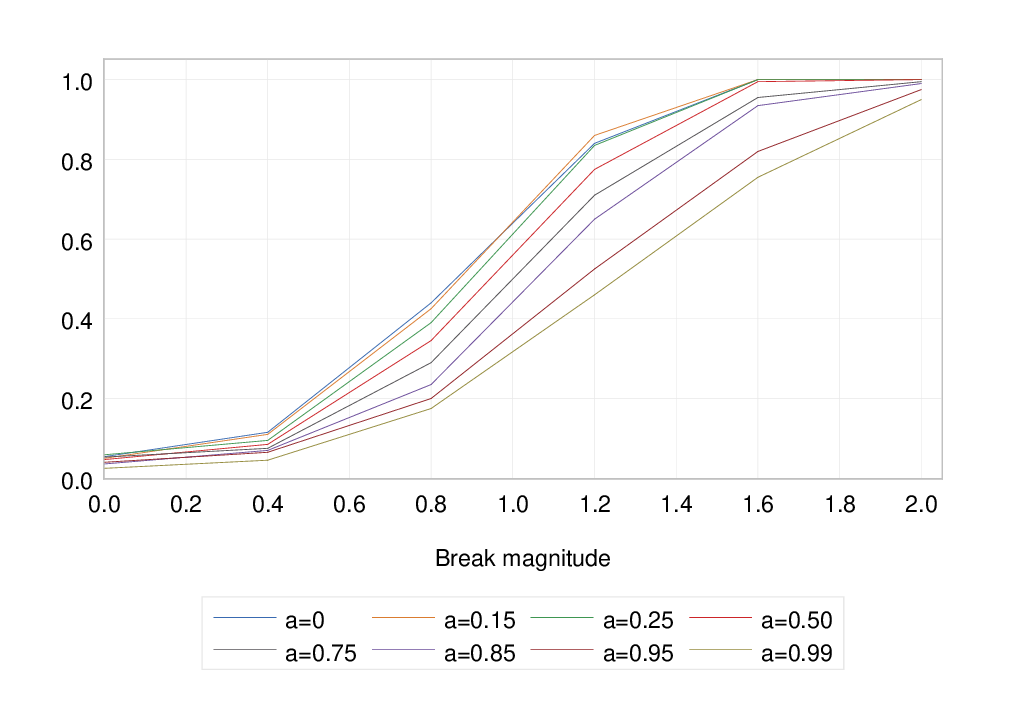}
\end{minipage}
\par
\end{figure}

\begin{figure}[b]
\caption{{\protect\footnotesize {Empirical rejection frequencies under an
end-of-sample break, $N=200$ - \textit{i.i.d.} data (left panel) and data
with serial dependence and measurement error (right panel)}}}
\label{fig:FigE200}\centering
\begin{minipage}{0.49\textwidth}
\centering
    \includegraphics[width=\linewidth]{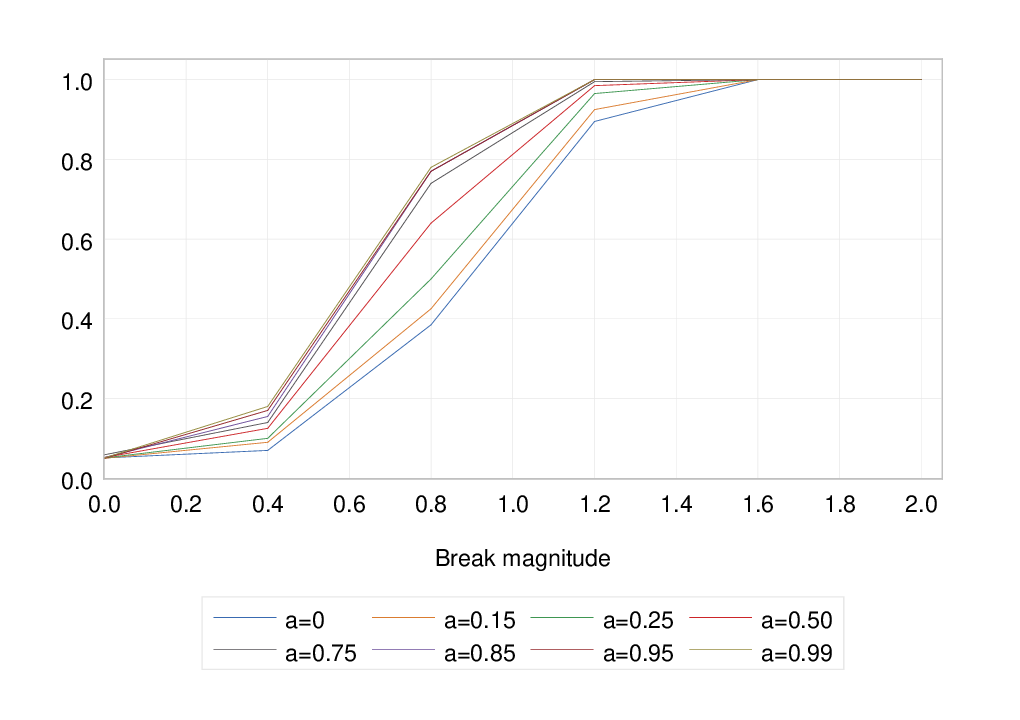}
    \label{fig:t11}
\end{minipage}
\begin{minipage}{0.49\textwidth}
\centering
    \includegraphics[width=\linewidth]{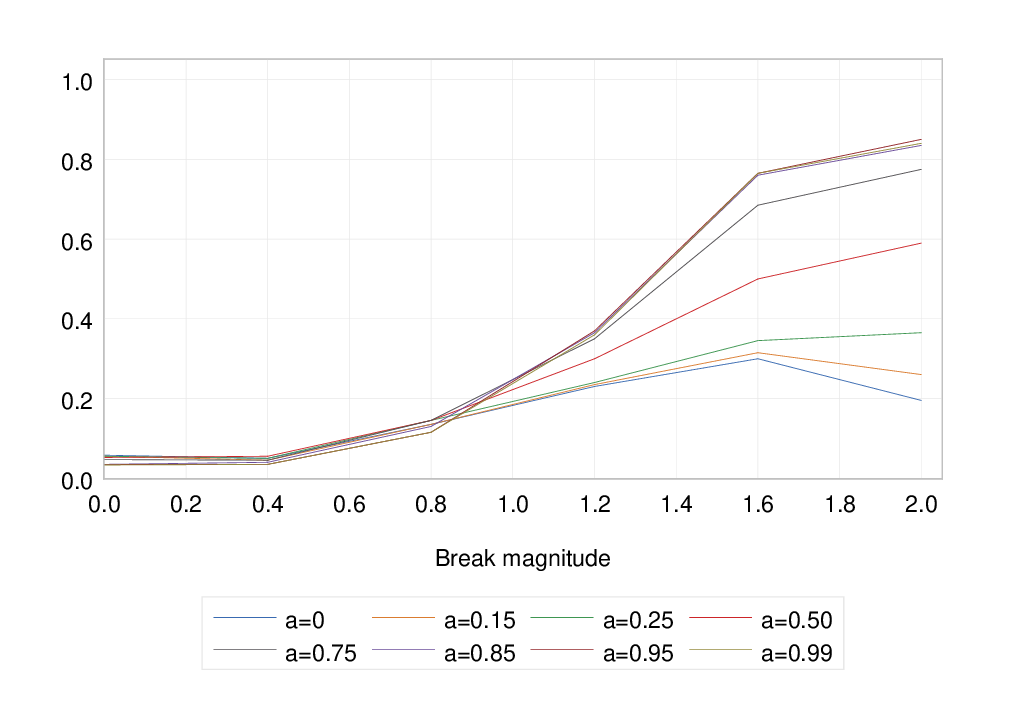}
    \label{fig:t13}
\end{minipage}
\par
\end{figure}

In a second set of experiments, we explore the performance of our
methodology to test for changes in the distribution proposed in Section \ref%
{distrib} via a small Monte Carlo exercise. We generate the one-dimensional
functional data ${Y_{\ell }}(t)$\ using (\ref{dgp}) with no measurement
error, viz. ${Y_{\ell }}(t)=\mu _{\ell }(t)+\epsilon _{\ell }(t)$ with $%
\epsilon _{\ell }(t)$%
\begin{equation}
\epsilon _{\ell }(t)=\sum_{m=1}^{M}\lambda _{m}^{1/2}\mathcal{Z}_{m,\ell
}\phi _{m}(t).  \label{dgp-distr}
\end{equation}%
We project $Y_{\ell }(t)$ onto its \textit{first} Principal Component - that
is, we use $d=1$ in (\ref{e:xi_i_to_X0}). We do this merely for
computational simplicity; when computing the eigenvalues of the long-run
variance matrix associated with $X_{\ell }(t)$, we use the algorithm in
Section 3.2 in \citet{happ}, based on the multivariate KL expansion. In
Table \ref{tab:ERFDistr}, we report the empirical rejection frequencies
under the null, showing that our methodology has excellent size control for $%
N\geq 100$; when $N=50$, tests appear to be mildly oversized.

\medskip

\begin{table*}[t]
\caption{{\protect\footnotesize {Empirical rejection frequencies under the
null of no changepoint in distribution - \textit{i.i.d.} data}}}
\label{tab:ERFDistr}\centering
\par
{\scriptsize {\ 
\begin{tabular}{lllllllllll}
\hline\hline
&  &  &  &  &  &  &  &  &  &  \\ 
\multicolumn{1}{c}{$N$} & $\alpha $ &  & \multicolumn{1}{c}{$0.00$} & 
\multicolumn{1}{c}{$0.10$} & \multicolumn{1}{c}{$0.25$} & \multicolumn{1}{c}{%
$0.50$} & \multicolumn{1}{c}{$0.75$} & \multicolumn{1}{c}{$0.85$} & 
\multicolumn{1}{c}{$0.95$} & \multicolumn{1}{c}{$0.99$} \\ 
\multicolumn{1}{c}{} &  &  & \multicolumn{1}{c}{} & \multicolumn{1}{c}{} & 
\multicolumn{1}{c}{} & \multicolumn{1}{c}{} & \multicolumn{1}{c}{} & 
\multicolumn{1}{c}{} & \multicolumn{1}{c}{} & \multicolumn{1}{c}{} \\ 
\multicolumn{1}{c}{$50$} &  &  & \multicolumn{1}{c}{$0.067$} & 
\multicolumn{1}{c}{$0.069$} & \multicolumn{1}{c}{$0.069$} & 
\multicolumn{1}{c}{$0.069$} & \multicolumn{1}{c}{$0.072$} & 
\multicolumn{1}{c}{$0.072$} & \multicolumn{1}{c}{$0.077$} & 
\multicolumn{1}{c}{$0.076$} \\ 
\multicolumn{1}{c}{$100$} &  &  & \multicolumn{1}{c}{$0.058$} & 
\multicolumn{1}{c}{$0.058$} & \multicolumn{1}{c}{$0.057$} & 
\multicolumn{1}{c}{$0.056$} & \multicolumn{1}{c}{$0.057$} & 
\multicolumn{1}{c}{$0.056$} & \multicolumn{1}{c}{$0.059$} & 
\multicolumn{1}{c}{$0.061$} \\ 
\multicolumn{1}{c}{$150$} &  &  & \multicolumn{1}{c}{$0.053$} & 
\multicolumn{1}{c}{$0.051$} & \multicolumn{1}{c}{$0.050$} & 
\multicolumn{1}{c}{$0.050$} & \multicolumn{1}{c}{$0.051$} & 
\multicolumn{1}{c}{$0.047$} & \multicolumn{1}{c}{$0.049$} & 
\multicolumn{1}{c}{$0.048$} \\ 
\multicolumn{1}{c}{$200$} &  &  & \multicolumn{1}{c}{$0.057$} & 
\multicolumn{1}{c}{$0.057$} & \multicolumn{1}{c}{$0.054$} & 
\multicolumn{1}{c}{$0.053$} & \multicolumn{1}{c}{$0.052$} & 
\multicolumn{1}{c}{$0.056$} & \multicolumn{1}{c}{$0.060$} & 
\multicolumn{1}{c}{$0.057$} \\ 
\multicolumn{1}{c}{} &  &  & \multicolumn{1}{c}{} & \multicolumn{1}{c}{} & 
\multicolumn{1}{c}{} & \multicolumn{1}{c}{} & \multicolumn{1}{c}{} & 
\multicolumn{1}{c}{} & \multicolumn{1}{c}{} & \multicolumn{1}{c}{} \\ 
\hline\hline
\end{tabular}
} }
\par
{\scriptsize {\footnotesize 
\begin{tablenotes}
      \tiny
            \item The table contains the empirical rejection frequencies under the null of no changepoint. Data are generated according to (\ref{dgp-distr}) with specifications as in the main text, for tests at a $5\%$ nominal level.
            
\end{tablenotes}
} }
\end{table*}

We separately consider the following alternative hypotheses: 
\begin{equation}
\mu _{\ell }(t)=\mathcal{\delta }(t)I\left( k^{\ast }\leq \ell \leq N\right)
,  \label{mean_distr}
\end{equation}%
with $\left\Vert \mathcal{\delta }\right\Vert =1$, to consider changes in
the mean function;%
\begin{equation}
\sigma _{\mathcal{Z}}=\sigma _{\mathcal{Z},\ell }=I\left( 1\leq \ell
<k^{\ast }\right) +2I\left( k^{\ast }\leq \ell \leq N\right) ,
\label{var_distr}
\end{equation}%
with $\mu _{\ell }(t)=0$ for all $1\leq \ell \leq N$, to consider a change
in the (unconditional) variance which is helpful to understand whether our
methodology can detect heteroskedasticity; and lastly 
\begin{equation}
\epsilon _{\ell }(t)=\left( \sum_{m=1}^{M}\lambda _{m}^{1/2}\mathcal{Z}%
_{m,\ell }\phi _{m}(t)\right) I\left( 1\leq \ell <k^{\ast }\right) +\left(
\sum_{m=1}^{M}\lambda _{m}^{1/2}t_{m,\ell }^{\left( 3\right) }\phi
_{m}(t)\right) I\left( k^{\ast }\leq \ell \leq N\right) ,  \label{tail_distr}
\end{equation}%
where $t_{m,\ell }^{\left( 3\right) }$\ are \textit{i.i.d.} random
variables, independent across $m$ and $\ell $, with a Student's t
distribution with $3$ degrees of freedom, and $\mu _{\ell }(t)=0$ for all $%
1\leq \ell \leq N$, as a more general alternative where the data, after a
period of \textquotedblleft normal\textquotedblright\ fluctuations, exhibit
heavy tails. Results in Table \ref{tab:PowerDistr1} show that our tests -
even when using $d=1$ - have excellent power under all cases in the presence
of a mid-sample break, which is also estimated correctly. Hence, the test
developed in Section \ref{distrib} can be used to detect shifts in the mean,
in the variance, or in the tails - of course, the test is an \textit{omnibus}
test, and therefore it is non-constructive in that, upon rejecting the null,
it does not indicate a specific alternative. In the case of end-of-sample
breaks, Table \ref{tab:PowerDistr2} shows that the test is sensitive, as
expected, to the choice of $\alpha $, and that as $\alpha $ approaches $1$
the power increases, as does the accuracy in estimating the changepoint. The
test performs very well, even for small sample sizes ($N=100$) in the
presence of changes in the mean and in the tails (i.e., under (\ref%
{mean_distr}) and (\ref{tail_distr}) respectively), whereas its performance
is less good in the presence of shifts in the variance (i.e., under (\ref%
{var_distr})) although it picks up as both $\alpha $ and $N$ increase.

\begin{table*}[t]
\caption{{\protect\footnotesize {Empirical rejection frequencies in the
presence of a mid-sample changepoint in distribution}}}
\label{tab:PowerDistr1}\centering
\par
{\scriptsize {\ 
\begin{tabular}{lllllllllll}
\hline\hline
&  &  &  &  &  &  &  &  &  &  \\ 
\multicolumn{11}{c}{Empirical rejection frequencies under (\ref{mean_distr})}
\\ 
&  &  &  &  &  &  &  &  &  &  \\ 
\multicolumn{1}{c}{$N$} & $\alpha $ &  & \multicolumn{1}{c}{$0.00$} & 
\multicolumn{1}{c}{$0.10$} & \multicolumn{1}{c}{$0.25$} & \multicolumn{1}{c}{%
$0.50$} & \multicolumn{1}{c}{$0.75$} & \multicolumn{1}{c}{$0.85$} & 
\multicolumn{1}{c}{$0.95$} & \multicolumn{1}{c}{$0.99$} \\ 
\multicolumn{1}{c}{} &  &  & \multicolumn{1}{c}{} & \multicolumn{1}{c}{} & 
\multicolumn{1}{c}{} & \multicolumn{1}{c}{} & \multicolumn{1}{c}{} & 
\multicolumn{1}{c}{} & \multicolumn{1}{c}{} & \multicolumn{1}{c}{} \\ 
\multicolumn{1}{c}{$100$} &  &  & \multicolumn{1}{c}{$\underset{\left(
50\right) }{0.875}$} & \multicolumn{1}{c}{$\underset{\left( 50\right) }{0.885%
}$} & \multicolumn{1}{c}{$\underset{\left( 50\right) }{0.885}$} & 
\multicolumn{1}{c}{$\underset{\left( 50\right) }{0.875}$} & 
\multicolumn{1}{c}{$\underset{\left( 50\right) }{0.870}$} & 
\multicolumn{1}{c}{$\underset{\left( 50\right) }{0.865}$} & 
\multicolumn{1}{c}{$\underset{\left( 50\right) }{0.855}$} & 
\multicolumn{1}{c}{$\underset{\left( 50\right) }{0.850}$} \\ 
\multicolumn{1}{c}{$200$} &  &  & \multicolumn{1}{c}{$\underset{\left(
100\right) }{1.000}$} & \multicolumn{1}{c}{$\underset{\left( 100\right) }{%
1.000}$} & \multicolumn{1}{c}{$\underset{\left( 100\right) }{1.000}$} & 
\multicolumn{1}{c}{$\underset{\left( 100\right) }{1.000}$} & 
\multicolumn{1}{c}{$\underset{\left( 100\right) }{1.000}$} & 
\multicolumn{1}{c}{$\underset{\left( 100\right) }{1.000}$} & 
\multicolumn{1}{c}{$\underset{\left( 100\right) }{1.000}$} & 
\multicolumn{1}{c}{$\underset{\left( 100\right) }{1.000}$} \\ 
&  &  &  &  &  &  &  &  &  &  \\ \hline
&  &  &  &  &  &  &  &  &  &  \\ 
\multicolumn{11}{c}{Empirical rejection frequencies under (\ref{var_distr})}
\\ 
&  &  &  &  &  &  &  &  &  &  \\ 
\multicolumn{1}{c}{$N$} & $\alpha $ &  & \multicolumn{1}{c}{$0.00$} & 
\multicolumn{1}{c}{$0.10$} & \multicolumn{1}{c}{$0.25$} & \multicolumn{1}{c}{%
$0.50$} & \multicolumn{1}{c}{$0.75$} & \multicolumn{1}{c}{$0.85$} & 
\multicolumn{1}{c}{$0.95$} & \multicolumn{1}{c}{$0.99$} \\ 
\multicolumn{1}{c}{} &  &  & \multicolumn{1}{c}{} & \multicolumn{1}{c}{} & 
\multicolumn{1}{c}{} & \multicolumn{1}{c}{} & \multicolumn{1}{c}{} & 
\multicolumn{1}{c}{} & \multicolumn{1}{c}{} & \multicolumn{1}{c}{} \\ 
\multicolumn{1}{c}{$100$} &  &  & \multicolumn{1}{c}{$\underset{\left(
51\right) }{0.925}$} & \multicolumn{1}{c}{$\underset{\left( 51\right) }{0.925%
}$} & \multicolumn{1}{c}{$\underset{\left( 51\right) }{0.925}$} & 
\multicolumn{1}{c}{$\underset{\left( 51\right) }{0.905}$} & 
\multicolumn{1}{c}{$\underset{\left( 51\right) }{0.850}$} & 
\multicolumn{1}{c}{$\underset{\left( 51\right) }{0.830}$} & 
\multicolumn{1}{c}{$\underset{\left( 51\right) }{0.790}$} & 
\multicolumn{1}{c}{$\underset{\left( 51\right) }{0.765}$} \\ 
\multicolumn{1}{c}{$200$} &  &  & \multicolumn{1}{c}{$\underset{\left(
101\right) }{1.000}$} & \multicolumn{1}{c}{$\underset{\left( 101\right) }{%
1.000}$} & \multicolumn{1}{c}{$\underset{\left( 101\right) }{1.000}$} & 
\multicolumn{1}{c}{$\underset{\left( 101\right) }{1.000}$} & 
\multicolumn{1}{c}{$\underset{\left( 101\right) }{1.000}$} & 
\multicolumn{1}{c}{$\underset{\left( 101\right) }{1.000}$} & 
\multicolumn{1}{c}{$\underset{\left( 101\right) }{1.000}$} & 
\multicolumn{1}{c}{$\underset{\left( 101\right) }{1.000}$} \\ 
&  &  &  &  &  &  &  &  &  &  \\ \hline
&  &  &  &  &  &  &  &  &  &  \\ 
\multicolumn{11}{c}{Empirical rejection frequencies under (\ref{tail_distr})}
\\ 
&  &  &  &  &  &  &  &  &  &  \\ 
\multicolumn{1}{c}{$N$} & $\alpha $ &  & \multicolumn{1}{c}{$0.00$} & 
\multicolumn{1}{c}{$0.10$} & \multicolumn{1}{c}{$0.25$} & \multicolumn{1}{c}{%
$0.50$} & \multicolumn{1}{c}{$0.75$} & \multicolumn{1}{c}{$0.85$} & 
\multicolumn{1}{c}{$0.95$} & \multicolumn{1}{c}{$0.99$} \\ 
\multicolumn{1}{c}{} &  &  & \multicolumn{1}{c}{} & \multicolumn{1}{c}{} & 
\multicolumn{1}{c}{} & \multicolumn{1}{c}{} & \multicolumn{1}{c}{} & 
\multicolumn{1}{c}{} & \multicolumn{1}{c}{} & \multicolumn{1}{c}{} \\ 
\multicolumn{1}{c}{$100$} &  &  & \multicolumn{1}{c}{$\underset{\left(
51\right) }{0.945}$} & \multicolumn{1}{c}{$\underset{\left( 51\right) }{0.945%
}$} & \multicolumn{1}{c}{$\underset{\left( 51\right) }{0.945}$} & 
\multicolumn{1}{c}{$\underset{\left( 51\right) }{0.950}$} & 
\multicolumn{1}{c}{$\underset{\left( 51\right) }{0.945}$} & 
\multicolumn{1}{c}{$\underset{\left( 51\right) }{0.945}$} & 
\multicolumn{1}{c}{$\underset{\left( 51\right) }{0.930}$} & 
\multicolumn{1}{c}{$\underset{\left( 51\right) }{0.915}$} \\ 
\multicolumn{1}{c}{$200$} &  &  & \multicolumn{1}{c}{$\underset{\left(
100\right) }{1.000}$} & \multicolumn{1}{c}{$\underset{\left( 100.5\right) }{%
1.000}$} & \multicolumn{1}{c}{$\underset{\left( 100.5\right) }{1.000}$} & 
\multicolumn{1}{c}{$\underset{\left( 101\right) }{1.000}$} & 
\multicolumn{1}{c}{$\underset{\left( 101\right) }{1.000}$} & 
\multicolumn{1}{c}{$\underset{\left( 101\right) }{1.000}$} & 
\multicolumn{1}{c}{$\underset{\left( 101\right) }{1.000}$} & 
\multicolumn{1}{c}{$\underset{\left( 101\right) }{1.000}$} \\ 
&  &  &  &  &  &  &  &  &  &  \\ \hline\hline
\end{tabular}
} }
\par
{\scriptsize {\footnotesize 
\begin{tablenotes}
      \tiny
            \item The table contains the empirical rejection frequencies under a changepoint occurring at $k^{\ast}=N/2$; the numbers in round brackets are the median estimated break dates. Data are generated according to (\ref{dgp-distr}) with specifications as in the main text, for tests at a $5\%$ nominal level, under the alternative hypotheses in (\ref{mean_distr})-(\ref{tail_distr}).
            
\end{tablenotes}
} }
\end{table*}

\begin{table*}[t]
\caption{{\protect\footnotesize {Empirical rejection frequencies in the
presence of an end-of-sample changepoint in distribution}}}
\label{tab:PowerDistr2}\centering
\par
{\scriptsize {\ 
\begin{tabular}{lllllllllll}
\hline\hline
&  &  &  &  &  &  &  &  &  &  \\ 
\multicolumn{11}{c}{Empirical rejection frequencies under (\ref{mean_distr})}
\\ 
&  &  &  &  &  &  &  &  &  &  \\ 
\multicolumn{1}{c}{$N$} & $\alpha $ &  & \multicolumn{1}{c}{$0.00$} & 
\multicolumn{1}{c}{$0.10$} & \multicolumn{1}{c}{$0.25$} & \multicolumn{1}{c}{%
$0.50$} & \multicolumn{1}{c}{$0.75$} & \multicolumn{1}{c}{$0.85$} & 
\multicolumn{1}{c}{$0.95$} & \multicolumn{1}{c}{$0.99$} \\ 
\multicolumn{1}{c}{} &  &  & \multicolumn{1}{c}{} & \multicolumn{1}{c}{} & 
\multicolumn{1}{c}{} & \multicolumn{1}{c}{} & \multicolumn{1}{c}{} & 
\multicolumn{1}{c}{} & \multicolumn{1}{c}{} & \multicolumn{1}{c}{} \\ 
\multicolumn{1}{c}{$100$} &  &  & \multicolumn{1}{c}{$\underset{\left(
71\right) }{0.200}$} & \multicolumn{1}{c}{$\underset{\left( 71\right) }{0.210%
}$} & \multicolumn{1}{c}{$\underset{\left( 71.5\right) }{0.220}$} & 
\multicolumn{1}{c}{$\underset{\left( 81\right) }{0.265}$} & 
\multicolumn{1}{c}{$\underset{\left( 86.5\right) }{0.340}$} & 
\multicolumn{1}{c}{$\underset{\left( 88\right) }{0.360}$} & 
\multicolumn{1}{c}{$\underset{\left( 89.5\right) }{0.380}$} & 
\multicolumn{1}{c}{$\underset{\left( 90\right) }{0.380}$} \\ 
\multicolumn{1}{c}{$200$} &  &  & \multicolumn{1}{c}{$\underset{\left(
157\right) }{0.485}$} & \multicolumn{1}{c}{$\underset{\left( 160\right) }{%
0.525}$} & \multicolumn{1}{c}{$\underset{\left( 164\right) }{0.555}$} & 
\multicolumn{1}{c}{$\underset{\left( 172.5\right) }{0.620}$} & 
\multicolumn{1}{c}{$\underset{\left( 177\right) }{0.695}$} & 
\multicolumn{1}{c}{$\underset{\left( 179\right) }{0.715}$} & 
\multicolumn{1}{c}{$\underset{\left( 179\right) }{0.725}$} & 
\multicolumn{1}{c}{$\underset{\left( 179\right) }{0.735}$} \\ 
&  &  &  &  &  &  &  &  &  &  \\ \hline
&  &  &  &  &  &  &  &  &  &  \\ 
\multicolumn{11}{c}{Empirical rejection frequencies under (\ref{var_distr})}
\\ 
&  &  &  &  &  &  &  &  &  &  \\ 
\multicolumn{1}{c}{$N$} & $\alpha $ &  & \multicolumn{1}{c}{$0.00$} & 
\multicolumn{1}{c}{$0.10$} & \multicolumn{1}{c}{$0.25$} & \multicolumn{1}{c}{%
$0.50$} & \multicolumn{1}{c}{$0.75$} & \multicolumn{1}{c}{$0.85$} & 
\multicolumn{1}{c}{$0.95$} & \multicolumn{1}{c}{$0.99$} \\ 
\multicolumn{1}{c}{} &  &  & \multicolumn{1}{c}{} & \multicolumn{1}{c}{} & 
\multicolumn{1}{c}{} & \multicolumn{1}{c}{} & \multicolumn{1}{c}{} & 
\multicolumn{1}{c}{} & \multicolumn{1}{c}{} & \multicolumn{1}{c}{} \\ 
\multicolumn{1}{c}{$100$} &  &  & \multicolumn{1}{c}{$\underset{\left(
63\right) }{0.145}$} & \multicolumn{1}{c}{$\underset{\left( 79\right) }{0.165%
}$} & \multicolumn{1}{c}{$\underset{\left( 83.5\right) }{0.200}$} & 
\multicolumn{1}{c}{$\underset{\left( 89\right) }{0.260}$} & 
\multicolumn{1}{c}{$\underset{\left( 90\right) }{0.340}$} & 
\multicolumn{1}{c}{$\underset{\left( 90\right) }{0.370}$} & 
\multicolumn{1}{c}{$\underset{\left( 91\right) }{0.400}$} & 
\multicolumn{1}{c}{$\underset{\left( 91\right) }{0.410}$} \\ 
\multicolumn{1}{c}{$200$} &  &  & \multicolumn{1}{c}{$\underset{\left(
156.5\right) }{0.300}$} & \multicolumn{1}{c}{$\underset{\left( 162.5\right) }%
{0.320}$} & \multicolumn{1}{c}{$\underset{\left( 171\right) }{0.370}$} & 
\multicolumn{1}{c}{$\underset{\left( 179\right) }{0.455}$} & 
\multicolumn{1}{c}{$\underset{\left( 180\right) }{0.545}$} & 
\multicolumn{1}{c}{$\underset{\left( 180\right) }{0.560}$} & 
\multicolumn{1}{c}{$\underset{\left( 180\right) }{0.590}$} & 
\multicolumn{1}{c}{$\underset{\left( 180\right) }{0.595}$} \\ 
&  &  &  &  &  &  &  &  &  &  \\ \hline
&  &  &  &  &  &  &  &  &  &  \\ 
\multicolumn{11}{c}{Empirical rejection frequencies under (\ref{tail_distr})}
\\ 
&  &  &  &  &  &  &  &  &  &  \\ 
\multicolumn{1}{c}{$N$} & $\alpha $ &  & \multicolumn{1}{c}{$0.00$} & 
\multicolumn{1}{c}{$0.10$} & \multicolumn{1}{c}{$0.25$} & \multicolumn{1}{c}{%
$0.50$} & \multicolumn{1}{c}{$0.75$} & \multicolumn{1}{c}{$0.85$} & 
\multicolumn{1}{c}{$0.95$} & \multicolumn{1}{c}{$0.99$} \\ 
\multicolumn{1}{c}{} &  &  & \multicolumn{1}{c}{} & \multicolumn{1}{c}{} & 
\multicolumn{1}{c}{} & \multicolumn{1}{c}{} & \multicolumn{1}{c}{} & 
\multicolumn{1}{c}{} & \multicolumn{1}{c}{} & \multicolumn{1}{c}{} \\ 
\multicolumn{1}{c}{$100$} &  &  & \multicolumn{1}{c}{$\underset{\left(
86\right) }{0.320}$} & \multicolumn{1}{c}{$\underset{\left( 88\right) }{0.345%
}$} & \multicolumn{1}{c}{$\underset{\left( 90\right) }{0.385}$} & 
\multicolumn{1}{c}{$\underset{\left( 90\right) }{0.460}$} & 
\multicolumn{1}{c}{$\underset{\left( 90\right) }{0.575}$} & 
\multicolumn{1}{c}{$\underset{\left( 91\right) }{0.610}$} & 
\multicolumn{1}{c}{$\underset{\left( 91\right) }{0.650}$} & 
\multicolumn{1}{c}{$\underset{\left( 91\right) }{0.660}$} \\ 
\multicolumn{1}{c}{$200$} &  &  & \multicolumn{1}{c}{$\underset{\left(
173\right) }{0.505}$} & \multicolumn{1}{c}{$\underset{\left( 178\right) }{%
0.540}$} & \multicolumn{1}{c}{$\underset{\left( 179\right) }{0.630}$} & 
\multicolumn{1}{c}{$\underset{\left( 180\right) }{0.750}$} & 
\multicolumn{1}{c}{$\underset{\left( 180\right) }{0.835}$} & 
\multicolumn{1}{c}{$\underset{\left( 180\right) }{0.855}$} & 
\multicolumn{1}{c}{$\underset{\left( 181\right) }{0.860}$} & 
\multicolumn{1}{c}{$\underset{\left( 181\right) }{0.860}$} \\ 
&  &  &  &  &  &  &  &  &  &  \\ \hline\hline
\end{tabular}
} }
\par
{\scriptsize {\footnotesize 
\begin{tablenotes}
      \tiny
            \item The table contains the empirical rejection frequencies under a changepoint occurring at $k^{\ast}=0.9N$; the numbers in round brackets are the median estimated break dates. Data are generated according to (\ref{dgp-distr}) with specifications as in the main text, for tests at a $5\%$ nominal level, under the alternative hypotheses in (\ref{mean_distr})-(\ref{tail_distr}).
            
\end{tablenotes}
} }
\end{table*}

\section{Changepoint detection in high-frequency financial data\label%
{empirical}}

We apply our tests for changepoint detection in the mean and in the
distribution of intraday return patterns (on a month-on-month basis) in
high-frequency trading of the S\&P 500 index.\footnote{%
In Section \ref{empiric_further} in the Supplement, we study temperature
data, which is another classical application of FDA (see e.g. %
\citealp{berkes2009detecting}).} High-frequency trading data lend themselves
to being studied through the lenses of FDA, as they typically contain a huge
amount of data for which a parsimonious representation is necessary;
furthermore prices change continuously on a daily basis, and therefore daily
prices are genuinely functional objects, whose sampling points are the
observed prices recorded over several points in time each day. Examples of
applications of FDA to high frequency financial data include e.g. %
\citet{genccay2001introduction}, and \citet{muller2011functional}; %
\citet{kokoszka2012functional} consider an alternative definition of return
(know as \textit{cumulative} intraday returns, or CIDRs) which seems to be
particularly suited for predictions using FDA.\footnote{%
In Section \ref{cidr} in the Supplement, we complement our analysis by
considering changes in the mean and in the distribution of CIDRs.}

Returns are calculated from closing prices that are recorded at equispaced
5-min intervals between $00:00$ and $23:55$ each day, corresponding to a
sampling frequency $S=276$. We have used the period spanning from January
3rd, $2022$, until September 3rd, $2023$. In order to balance the sample and
ensure that there are $S=276$ sampling points each day, we have removed $21$
trading daily curves in which some data were missing,\footnote{%
A list of the relevant days is available upon request.} for a total of $414$
functional datapoints. Denoting prices at day $i$ as $P_{i}\left( t\right) $%
, we construct month-on-month log returns as $Y_{i}\left( t\right) =\ln
P_{i}\left( t\right) -\ln P_{i-21}\left( t\right) $, having used $21$ lags
as the average amount of trading days in a month; hence, the resulting
sample size is $N=393$, effectively starting from February 3rd, $2022$. We
have implemented our test using the guidelines and specifications suggested
in Section \ref{simulations}.\footnote{%
Critical values for weighted functionals of $\Delta _{N,M}(u)$ are computed
using $500$ replications, due to the reduced computational times in the
empirical exercise; we note however that results do not differ in any
significant way upon altering this specification.} In particular, when using
the KL expansion, we employ a number of bases $\widehat{M}$, chosen so that
the first $\widehat{M}$ eigenvalues of the estimated long-run variance
explain $95\%$ of the total variability; we note that, in unreported
experiments, altering this specification did not change any of the final
results. Contrary to Section \ref{simulations}, we use an estimate of the
covariance kernel\ $\mathbf{D}\left( t,s\right) $ computed using the
estimator of the covariance functions (\ref{gammas}) with pre- and
post-break demeaning; we note that we tried to use $\widehat{\mathbf{D}}%
_{N}(t,s)$ defined in (\ref{e:def_D_N}) - that is, without demeaning before
and after the candidate breakdate - but results do not change in any way.%
\footnote{%
As far as other specifications of $\widehat{\mathbf{D}}_{N}(t,s)$ are
concerned, we have used a Parzen kernel and the bandwidth $h$ chosen
according to \citet{andrews1991}.} Upon inspection, this is due to the fact
that, across all exercises, the bandwidth is selected as at most $h=1$;
this, in turn, suggests that the data have only little serial dependence. We
tried to assess the sensitivity of our results by varying $h$, but virtually
no changes were noted. Finally, results are reported at a nominal level of $%
5\%$; however, we have also carried out - by way of sensitivity analysis -
detection at $1\%$ and $10\%$ levels, and results are discussed in the notes
of our tables.

We begin by applying our test for a changepoint in the mean. In Table \ref%
{tab:empiricsSP}, we report results for $\alpha =0.5$, but we also tried $%
\alpha =0$ and $\alpha =0.99$, obtaining exactly the same outcome: there is
only one break, located at October 20th, $2022$, whose estimate appears to
be remarkably robust (note also the discrepancy between the daily averages).
Whilst it is difficult to associate a particular event to that date, in
general the common wisdom among financial analysts is that S\&P 500 index
started recovering, after a turbulent year and after hitting its low in
October $2022$, around the second half of that month.\footnote{%
A qualitative description can be e.g. found at
https://www.usbank.com/investing/financial-perspectives/market-news/is-a-market-correction-coming.html%
} No further breaks in the mean function were found.

\medskip

\begin{table*}[h!]
\caption{{\protect\footnotesize {Changepoint in the mean in intraday,
month-on-month return curves - S\&P 500, January 3rd, 2022 - September 3rd,
2023}}}
\label{tab:empiricsSP}\centering
\par
\resizebox{\textwidth}{!}{

\begin{tabular}{ccccccccccc}
\hline\hline
&  &  &  &  &  &  &  &  &  &  \\ 
\multicolumn{11}{c}{Changepoint detection in the mean} \\ 
&  &  &  &  &  &  &  &  &  &  \\ 
& Iteration &  & Segment &  & Outcome &  & Estimated date &  & Notes &  \\ 
&  &  &  &  &  &  &  &  &  &  \\ 
& $1$ &  & Jan 3rd, 2022 -- Sep 3rd, 2023 &  & Reject &  & Oct 20th, 2022 & 
& significant also at $1\%$ &  \\ 
&  &  &  &  &  &  &  &  & break found also using $\alpha =0,$ $0.99$, at the
same date &  \\ 
&  &  &  &  &  &  &  &  & daily average = $0.00268$ &  \\ 
&  &  &  &  &  &  &  &  &  &  \\ 
& $2$ &  & Oct 20th, 2022 -- Sep 3rd, 2023 &  & Not reject &  &  &  & no
break found even at $10\%$ &  \\ 
&  &  &  &  &  &  &  &  & daily average = $-0.02493$ &  \\ 
&  &  &  &  &  &  &  &  &  &  \\ 
& $3$ &  & Jan 3rd, 2022 -- Oct 19th, 2022 &  & Not reject &  &  &  & no
break found even at $10\%$ &  \\ 
&  &  &  &  &  &  &  &  & daily average = $0.01631$ &  \\ 
&  &  &  &  &  &  &  &  &  &  \\ \hline\hline
\end{tabular}

}
\par
{\footnotesize 
\begin{tablenotes}
      \tiny
            \item Daily averages are computed as averages within each regime, and across sampling points. 
            
\end{tablenotes}
}
\end{table*}

\medskip

It is well known (see e.g. \citealp{kim2004more}) that the behaviour of
financial markets is characterised by not being adequately described by the
Gaussian distribution; hence, it is important to check if there are changes
not merely in the mean (or in the variance), but in the whole distribution.
Thus, after finding the presence of a break in the mean, we demean the data
in each of the two segments around October 20th, $2022$, and carry out the
test for distributional changes, on the demeaned data, discussed in Section %
\ref{distrib}. We use exactly the same specifications as in Section \ref%
{simulations}, using only one principal component (i.e. $d=1$) in the
construction of $X_{i}(t)$ in (\ref{e:xi_i_to_X0}). Tests are carried out at
a nominal level of $5\%$ by default (we also tried $1\%$ and $10\%$, see the
notes to Table \ref{tab:empiricsSPD}); when using binary segmentation, we
use, as threshold, $\tau _{N}=c_{a}\sqrt{\ln N}$, where $a$ is the nominal
level of the test; results are generally robust to this (we tried $\tau
_{N}=c_{a}\ln \ln N$, and $\tau _{N}=c_{a}{\ln N}$, and no changes were
noted). We report our findings using weights $\alpha =0$, $0.5$ and $0.99$;
we used binary segmentation, so the case $\alpha =0$ is not reliable \textit{%
per se}, as it may lead to overestimation of the number of regimes, but we
use it as a benchmark for the other two sets of results.

\begin{table*}[t]
\caption{{\protect\footnotesize {Changepoint in the distribution in
intraday, month-on-month return curves - S\&P 500, January 3rd, 2022 -
September 3rd, 2023}}}
\label{tab:empiricsSPD}\centering
\par
\resizebox{\textwidth}{!}{

\begin{tabular}{ccccccccccc}
\hline\hline
&  &  &  &  &  &  &  &  &  &  \\ 
\multicolumn{11}{c}{Changepoint detection using $\alpha =0.50$} \\ 
&  &  &  &  &  &  &  &  &  &  \\ 
& Iteration &  & Segment &  & Outcome &  & Estimated date &  & Notes &  \\ 
&  &  &  &  &  &  &  &  &  &  \\ 
& $1$ &  & Jan 3rd, 2022 -- Sep 3rd, 2023 &  & Reject &  & Oct 20th, 2022 & 
& significant also at $1\%$ &  \\ 
&  &  &  &  &  &  &  &  & break found also using $\alpha =0,$ $0.99$, at the
same date &  \\ 
&  &  &  &  &  &  &  &  &  &  \\ 
& $2$ &  & Oct 20th, 2022 -- Sep 3rd, 2023 &  & Not reject &  &  &  & no
break found even at $10\%$, or with $\alpha =0,$ $0.99$ &  \\ 
&  &  &  &  &  &  &  &  & $\begin{array}{c}
\widehat{\sigma }_{N}^{2}=0.002 \\ 
sk=0.950 \\ 
ku=2.008\end{array}$ &  \\ 
&  &  &  &  &  &  &  &  &  &  \\ 
& $3$ &  & Jan 3rd, 2022 -- Oct 19th, 2022 &  & Reject &  & Aug 30th, 2022 & 
& significant at $5\%$ &  \\ 
&  &  &  &  &  &  &  &  & break found also using $\alpha =0,$ at the same
date &  \\ 
&  &  &  &  &  &  &  &  & break found also using $\alpha =0.99,$ Sep 12th,
2022 &  \\ 
&  &  &  &  &  &  &  &  &  &  \\ 
& $4$ &  & Aug 30th, 2022 -- Oct 19th, 2022 &  & Not reject &  &  &  & no
break found even at $10\%$, or with $\alpha =0,$ $0.99$ &  \\ 
&  &  &  &  &  &  &  &  & $\begin{array}{c}
\widehat{\sigma }_{N}^{2}=0.006 \\ 
sk=-1.234 \\ 
ku=1.635\end{array}$ &  \\ 
&  &  &  &  &  &  &  &  &  &  \\ 
& $5$ &  & Jan 3rd, 2022 -- Aug 29th, 2022 &  & Reject &  & Jul 13th, 2022 & 
& significant at $5\%$ &  \\ 
&  &  &  &  &  &  &  &  & break found also using $\alpha =0.99$, at the same
date &  \\ 
&  &  &  &  &  &  &  &  & break found also using $\alpha =0$, at Jul 14th,
2022 &  \\ 
&  &  &  &  &  &  &  &  &  &  \\ 
& $6$ &  & Jul 13th, 2022 -- Aug 29th, 2022 &  & Not reject &  &  &  & no
break found even at $10\%$, or with $\alpha =0,$ $0.99$ &  \\ 
&  &  &  &  &  &  &  &  & $\begin{array}{c}
\widehat{\sigma }_{N}^{2}=0.005 \\ 
sk=1.233 \\ 
ku=1.656\end{array}$ &  \\ 
&  &  &  &  &  &  &  &  &  &  \\ 
& $7$ &  & Jan 3rd, 2022 -- Jul 12th, 2022 &  & Reject &  & Apr 21st, 2022 & 
& significant at $5\%$ &  \\ 
&  &  &  &  &  &  &  &  & break found also using $\alpha =0,0.99$, at the
same date &  \\ 
&  &  &  &  &  &  &  &  &  &  \\ 
& $8$ &  & Jan 3rd, 2022 -- Apr 20th, 2022 &  & Reject &  & Mar 20th, 2022 & 
& significant at $5\%$ &  \\ 
&  &  &  &  &  &  &  &  & break found also using $\alpha =0,0.99$, at the
same date &  \\ 
&  &  &  &  &  &  &  &  &  &  \\ 
& $9$ &  & Jan 3rd, 2022 -- Mar 19th, 2022 &  & Not reject &  &  &  & no
break found even at $10\%$, or with $\alpha =0,$ $0.99$ &  \\ 
&  &  &  &  &  &  &  &  & $\begin{array}{c}
\widehat{\sigma }_{N}^{2}=0.002 \\ 
sk=-1.269 \\ 
ku=1.734\end{array}$ &  \\ 
&  &  &  &  &  &  &  &  &  &  \\ 
& $10$ &  & Mar 20th, 2022 -- Apr 20th, 2022 &  & Not reject &  &  &  & no
break found even at $10\%$, or with $\alpha =0,$ $0.99$ &  \\ 
&  &  &  &  &  &  &  &  & $\begin{array}{c}
\widehat{\sigma }_{N}^{2}=0.003 \\ 
sk=1.247 \\ 
ku=1.645\end{array}$ &  \\ 
&  &  &  &  &  &  &  &  &  &  \\ 
& $11$ &  & Apr 21st, 2022 -- Jul 12th, 2022 &  & Not reject &  &  &  & no
break found even at $10\%$, or with $\alpha =0,$ $0.99$ &  \\ 
&  &  &  &  &  &  &  &  & $\begin{array}{c}
\widehat{\sigma }_{N}^{2}=0.005 \\ 
sk=-1.291 \\ 
ku=1.807\end{array}$ &  \\ 
&  &  &  &  &  &  &  &  &  &  \\ \hline\hline
\end{tabular}

}
\par
{\footnotesize 
\begin{tablenotes}
      \tiny
            \item We have used the estimator of the variance $\widehat{\sigma }_{N}^{2}$
defined in (\ref{sighat}). As far as the other descriptive statistics are concerned,  \textquotedblleft $sk$\textquotedblright\ and
\textquotedblleft $ku$\textquotedblright\ represent overall measures of
skewness and kurtosis respectively, computed within each regime, and across sampling points (see equations (\ref{sk}) and (\ref{ku}) in the Supplement).

\end{tablenotes}
}
\end{table*}

The results in Table \ref{tab:empiricsSPD} show a much richer picture that
changes in the mean alone. Interestingly, the same changepoints are found
across all values of $\alpha $, including $\alpha =0$; the only difference
is in the date of the break estimated between July and October $2022$, which
appears to be estimated $8$ trading days later when using larger values of $%
\alpha $. Otherwise, results are exactly the same; indeed, we also
experimented with other values, but results were the same even in those
cases. The same robustness was found when altering other specifications of
the procedure, e.g. the estimation the covariance kernel or of the number of
terms $\widehat{M}$ in the KL expansion. The estimated breakdates are, at
least in some cases, highly suggestive; interestingly, in all cases, a
change in regime corresponds to a change in the sign of our measure of
skewness, which confirms the stylised fact that skewness is time-varying (%
\citealp{alles1994regularities}; \citealp{bekaert1998distributional}). The
first break, recorded at March 20th, $2022$, corresponds to a peak in the
S\&P 500, after which the market entered a bear phase to stay below that
peak until July $2023$; after removing the first 21 observations, our
month-on-month return series starts effectively in February, so that the
first regime - characterised by a strongly negative measure of skewness -
reflects the uncertainty due to the war in Ukraine and its impact on the
global economy. The second regime, between March 20th and April 20th, is
characterised by a positive skewness, possibly indicating that the market -
after the stalling of the Russian offensive - was expecting an upward price
movement. This did not materialise, and in April $2022$ the market
experienced a strong correction, partly also due to inflation expectation
and underperformance of high-tech firms.\footnote{%
https://www.marketwatch.com/story/the-stock-market-swoon-just-sent-the-s-p-500-into-its-second-correction-of-2022-11651265882%
} After April 20th the market entered a bear phase, chracterised by negative
skewness, until a turning point was reached on July13th, $2022$ on account
of the FED ending (temporarily) its rate hiking. A correction occurred after
August 30th, $2022$, with a slump that lasted until approximately the second
half of October (the changepoint was recorded on October 20th, $2022$). From
thereon, the market started a rebound which lasted for the remainder of our
sample period; during this long horizon, the market was again characterised
by positive skewness.%

\section{Discussion and conclusions\label{conclusions}}

In this paper, we propose a family of weighted statistics to detect
changepoints possibly dependent, multivariate functional data. Although we
focus our exposition on the well-studied case of changes in the mean, our
tests can be applied to much more general changepoint problems, such as
detecting changes in the whole distribution. We base our test statistics on
the notion of \textit{energy distance}, a recently proposed measure of
proximity between distributions; we use a 
version of the (empirical) energy distance which is particularly suited to
determining the equality of the first moment of random variables, showing
that, under the null of no breaks, this is related to the familiar CUSUM\
process. Our statistics can be applied under very general forms of (weak)
serial dependence, thus being suitable for the analysis of several datasets,
including meteorological, financial and economic time series. By using a set
of weights which place more emphasis on observations occurring close to the
sample endpoints, we are able to detect changepoints occurring very close to
the beginning/end of the sample. Also, our approach 
is sufficiently flexible to allow for generalisations to e.g. testing for
changepoints in the (marginal) distributions of a sequence. In particular,
our approach is based on checking whether \textit{expectations} of \textit{%
functions} 
of our data remain constant over time; consequently, we can use all the
technology available in the literature, such as e.g. binary segmentation in
order to detect (and estimate the number and location of) multiple
changepoints. An important feature of our procedures is its computational
simplicity: critical values can be derived with arbitrary precision, and
this requires only the eigenvalues of the covariance operator of the data,
which can be quickly computed via any available statistical package. Our
simulations show that our statistics have excellent finite sample
performance even for small samples, thus making their use possible in
virtually all contexts involving FDA.

This work leads to several possible future directions, including extensions
to energy distances for functional data beyond the case $\eta =2$, and more
broadly further exploration of generalized energy distances. The use of the
characteristic function in Section \ref{distrib} can be viewed as a
finite-dimensional approximation of the characteristic \textit{functional};
an interesting direction would be to more deeply explore finite-dimensional
approximations of the characteristic functional and similar transformations
in the context of functional time series.


\begin{adjustwidth}{-0pt}{-0pt}

{\footnotesize {\ 
\bibliographystyle{chicago}
\bibliography{BHTbiblio}
} }

\end{adjustwidth}

\newpage 
\newpage

\clearpage
\renewcommand*{\thesection}{\Alph{section}}

\setcounter{section}{0} \setcounter{subsection}{-1} %
\setcounter{subsubsection}{-1} \setcounter{equation}{0} \setcounter{lemma}{0}
\setcounter{theorem}{0} \renewcommand{\theassumption}{A.\arabic{assumption}} 
\renewcommand{\thetheorem}{A.\arabic{theorem}} \renewcommand{\thelemma}{A.%
\arabic{lemma}} \renewcommand{\theproposition}{A.\arabic{proposition}} %
\renewcommand{\thecorollary}{A.\arabic{corollary}} \renewcommand{%
\theequation}{A.\arabic{equation}}

\section{Further Monte Carlo evidence and guidelines\label{furtherMC}}

\subsection{Empirical rejection frequencies under the null: further results}

We complement the results in Table \ref{tab:ERF} by considering the cases of 
\textit{i.i.d.} data with measurement error, and the case of serially
dependent data without measurement error.

\bigskip

\begin{table*}[h!]
\caption{{\protect\footnotesize {Empirical rejection frequencies under the
null of no changepoint}}}
\label{tab:ERFbis}\centering
{\footnotesize {\ }}
\par
\resizebox{\textwidth}{!}{

\begin{tabular}{lllllllllllllllllllll}
\hline\hline
&  &  &  &  &  &  &  &  &  &  &  &  &  &  &  &  &  &  &  &  \\ 
&  &  & \multicolumn{8}{c}{serial dependence, no measurement error} & 
\multicolumn{1}{c}{} & \multicolumn{1}{c}{} & \multicolumn{8}{c}{\textit{i.i.d. }case with measurement error} \\ 
&  &  & \multicolumn{1}{c}{} & \multicolumn{1}{c}{} & \multicolumn{1}{c}{} & 
\multicolumn{1}{c}{} & \multicolumn{1}{c}{} & \multicolumn{1}{c}{} & 
\multicolumn{1}{c}{} & \multicolumn{1}{c}{} & \multicolumn{1}{c}{} & 
\multicolumn{1}{c}{} & \multicolumn{1}{c}{} & \multicolumn{1}{c}{} & 
\multicolumn{1}{c}{} & \multicolumn{1}{c}{} & \multicolumn{1}{c}{} & 
\multicolumn{1}{c}{} & \multicolumn{1}{c}{} & \multicolumn{1}{c}{} \\ 
\multicolumn{1}{c}{$N$} & $\alpha $ &  & \multicolumn{1}{c}{$0.00$} & 
\multicolumn{1}{c}{$0.10$} & \multicolumn{1}{c}{$0.25$} & \multicolumn{1}{c}{$0.50$} & \multicolumn{1}{c}{$0.75$} & \multicolumn{1}{c}{$0.85$} & 
\multicolumn{1}{c}{$0.95$} & \multicolumn{1}{c}{$0.99$} & \multicolumn{1}{c}{
} & \multicolumn{1}{|c}{} & \multicolumn{1}{c}{$0.00$} & \multicolumn{1}{c}{$0.10$} & \multicolumn{1}{c}{$0.25$} & \multicolumn{1}{c}{$0.50$} & 
\multicolumn{1}{c}{$0.75$} & \multicolumn{1}{c}{$0.85$} & \multicolumn{1}{c}{$0.95$} & \multicolumn{1}{c}{$0.99$} \\ 
\multicolumn{1}{c}{} &  &  & \multicolumn{1}{c}{} & \multicolumn{1}{c}{} & 
\multicolumn{1}{c}{} & \multicolumn{1}{c}{} & \multicolumn{1}{c}{} & 
\multicolumn{1}{c}{} & \multicolumn{1}{c}{} & \multicolumn{1}{c}{} & 
\multicolumn{1}{c}{} & \multicolumn{1}{|c}{} & \multicolumn{1}{c}{} & 
\multicolumn{1}{c}{} & \multicolumn{1}{c}{} & \multicolumn{1}{c}{} & 
\multicolumn{1}{c}{} & \multicolumn{1}{c}{} & \multicolumn{1}{c}{} & 
\multicolumn{1}{c}{} \\ 
\multicolumn{1}{c}{$50$} &  &  & \multicolumn{1}{c}{$0.020$} & 
\multicolumn{1}{c}{$0.019$} & \multicolumn{1}{c}{$0.016$} & 
\multicolumn{1}{c}{$0.013$} & \multicolumn{1}{c}{$0.013$} & 
\multicolumn{1}{c}{$0.016$} & \multicolumn{1}{c}{$0.014$} & 
\multicolumn{1}{c}{$0.014$} & \multicolumn{1}{c}{} & \multicolumn{1}{|c}{} & 
\multicolumn{1}{c}{$0.052$} & \multicolumn{1}{c}{$0.057$} & 
\multicolumn{1}{c}{$0.049$} & \multicolumn{1}{c}{$0.053$} & 
\multicolumn{1}{c}{$0.046$} & \multicolumn{1}{c}{$0.053$} & 
\multicolumn{1}{c}{$0.052$} & \multicolumn{1}{c}{$0.059$} \\ 
\multicolumn{1}{c}{$100$} &  &  & \multicolumn{1}{c}{$0.049$} & 
\multicolumn{1}{c}{$0.045$} & \multicolumn{1}{c}{$0.043$} & 
\multicolumn{1}{c}{$0.036$} & \multicolumn{1}{c}{$0.033$} & 
\multicolumn{1}{c}{$0.027$} & \multicolumn{1}{c}{$0.022$} & 
\multicolumn{1}{c}{$0.018$} & \multicolumn{1}{c}{} & \multicolumn{1}{|c}{} & 
\multicolumn{1}{c}{$0.045$} & \multicolumn{1}{c}{$0.051$} & 
\multicolumn{1}{c}{$0.052$} & \multicolumn{1}{c}{$0.048$} & 
\multicolumn{1}{c}{$0.056$} & \multicolumn{1}{c}{$0.044$} & 
\multicolumn{1}{c}{$0.054$} & \multicolumn{1}{c}{$0.047$} \\ 
\multicolumn{1}{c}{$150$} &  &  & \multicolumn{1}{c}{$0.039$} & 
\multicolumn{1}{c}{$0.038$} & \multicolumn{1}{c}{$0.036$} & 
\multicolumn{1}{c}{$0.032$} & \multicolumn{1}{c}{$0.026$} & 
\multicolumn{1}{c}{$0.027$} & \multicolumn{1}{c}{$0.023$} & 
\multicolumn{1}{c}{$0.019$} & \multicolumn{1}{c}{} & \multicolumn{1}{|c}{} & 
\multicolumn{1}{c}{$0.044$} & \multicolumn{1}{c}{$0.052$} & 
\multicolumn{1}{c}{$0.051$} & \multicolumn{1}{c}{$0.050$} & 
\multicolumn{1}{c}{$0.054$} & \multicolumn{1}{c}{$0.045$} & 
\multicolumn{1}{c}{$0.066$} & \multicolumn{1}{c}{$0.049$} \\ 
\multicolumn{1}{c}{$200$} &  &  & \multicolumn{1}{c}{$0.058$} & 
\multicolumn{1}{c}{$0.055$} & \multicolumn{1}{c}{$0.055$} & 
\multicolumn{1}{c}{$0.052$} & \multicolumn{1}{c}{$0.047$} & 
\multicolumn{1}{c}{$0.035$} & \multicolumn{1}{c}{$0.034$} & 
\multicolumn{1}{c}{$0.033$} & \multicolumn{1}{c}{} & \multicolumn{1}{|c}{} & 
\multicolumn{1}{c}{$0.055$} & \multicolumn{1}{c}{$0.052$} & 
\multicolumn{1}{c}{$0.052$} & \multicolumn{1}{c}{$0.050$} & 
\multicolumn{1}{c}{$0.056$} & \multicolumn{1}{c}{$0.050$} & 
\multicolumn{1}{c}{$0.056$} & \multicolumn{1}{c}{$0.059$} \\ 
\multicolumn{1}{c}{} &  &  & \multicolumn{1}{c}{} & \multicolumn{1}{c}{} & 
\multicolumn{1}{c}{} & \multicolumn{1}{c}{} & \multicolumn{1}{c}{} & 
\multicolumn{1}{c}{} & \multicolumn{1}{c}{} & \multicolumn{1}{c}{} & 
\multicolumn{1}{c}{} & \multicolumn{1}{|c}{} & \multicolumn{1}{c}{} & 
\multicolumn{1}{c}{} & \multicolumn{1}{c}{} & \multicolumn{1}{c}{} & 
\multicolumn{1}{c}{} & \multicolumn{1}{c}{} & \multicolumn{1}{c}{} & 
\multicolumn{1}{c}{} \\ \hline\hline
\end{tabular}

}
\par
{\footnotesize 
\begin{tablenotes}
      \tiny
            \item The table contains the empirical rejection frequencies under the null of no changepoint, using $M=40$ orthonormal bases in (\ref{dgp}), for tests at a $5\%$ nominal level. The specifications of (\ref{dgp}) are described in the main text.
            
\end{tablenotes}
}
\end{table*}

\subsection{Empirical rejection frequencies under the alternative: further
results\label{alternative_further}}

We begin by reporting the power against one changepoint, with the same
design as in equation (\ref{amoc}) using $N=100$. As can be seen in Figures %
\ref{fig:FigM100} and \ref{fig:FigE100}, the results are similar, although
the test is less powerful compared to the results in Figures \ref%
{fig:FigM200} and \ref{fig:FigE200}, which is expected due to the smaller
value of $N$. In particular, in the presence of an end-of-sample break,
power is ensured only for large values of $\alpha $.

\begin{figure}[h!]
\caption{{\protect\footnotesize {Empirical rejection frequencies under a
mid-sample break with \textit{i.i.d.} data and data with serial dependence
and measurement error}}}
\label{fig:FigM100}\centering
\hspace{-2.5cm} 
\begin{minipage}{0.7\textwidth}
\centering
    \includegraphics[scale=0.65]{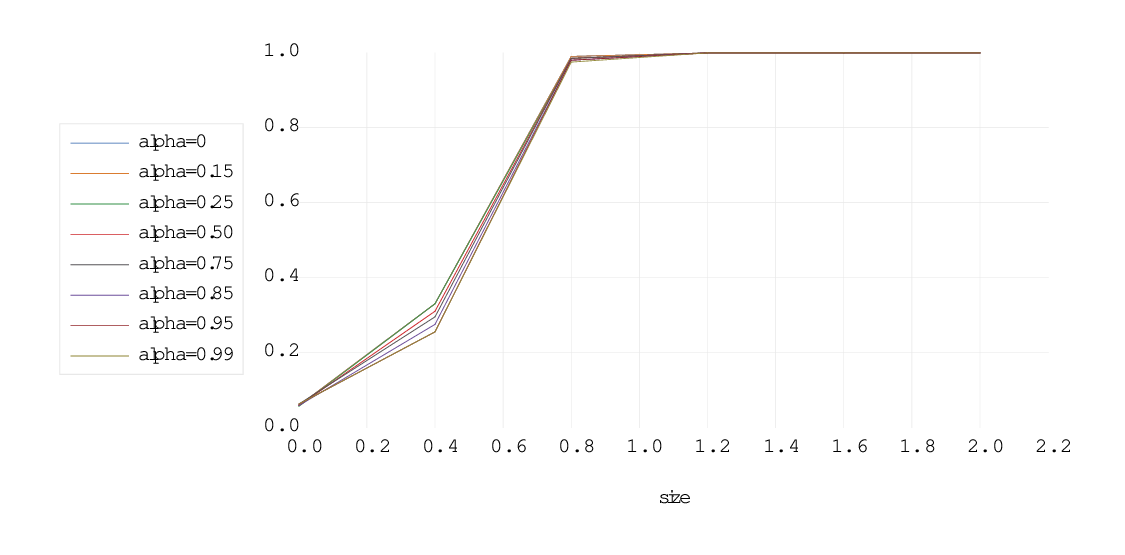}
    \label{fig:t11}
\end{minipage}
\par
\hspace{-2.5cm} 
\begin{minipage}{0.7\textwidth}
\centering
    \includegraphics[scale=0.65]{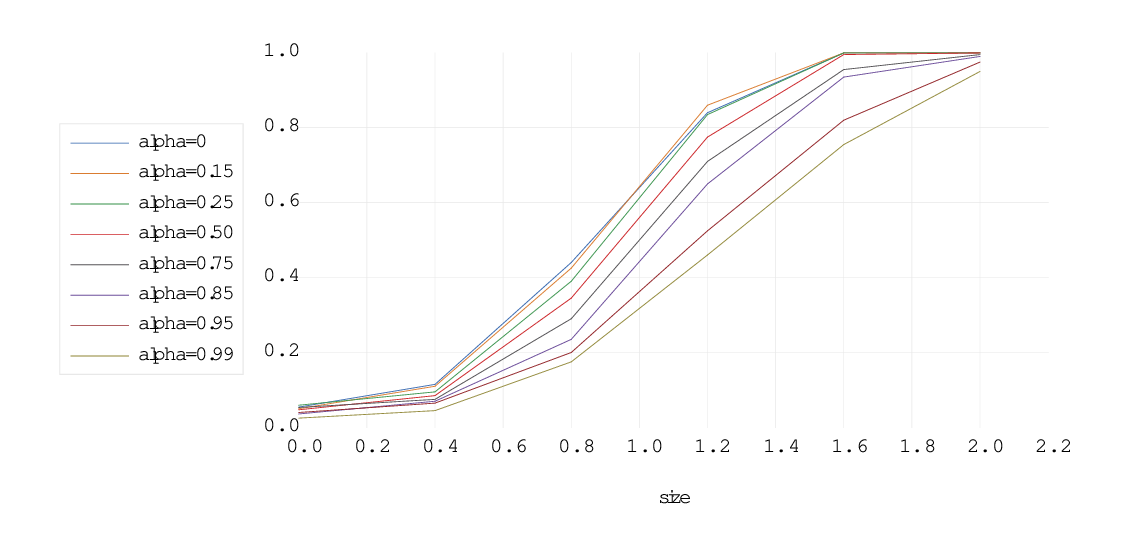}
    \label{fig:t13}
\end{minipage}
\par
\end{figure}

\medskip

\begin{figure}[h]
\caption{{\protect\footnotesize {Empirical rejection frequencies under a
mid-sample break with \textit{i.i.d.} data and data with serial dependence
and measurement error}}}
\label{fig:FigE100}\centering
\hspace{-2.5cm} 
\begin{minipage}{0.7\textwidth}
\centering
    \includegraphics[scale=0.65]{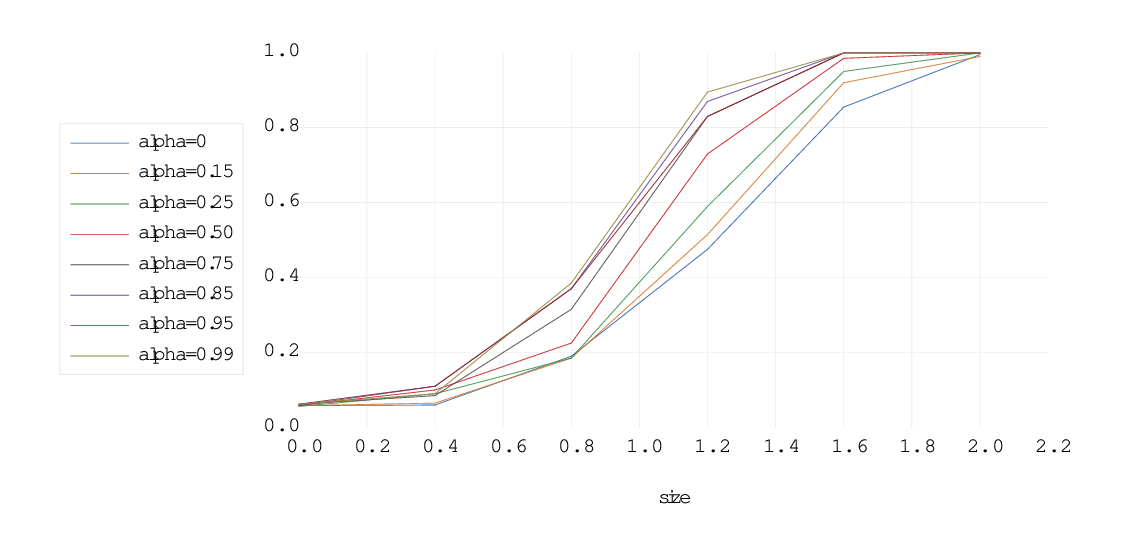}
    \label{fig:t111}
\end{minipage}
\par
\hspace{-2.5cm} 
\begin{minipage}{0.7\textwidth}
\centering
    \includegraphics[scale=0.65]{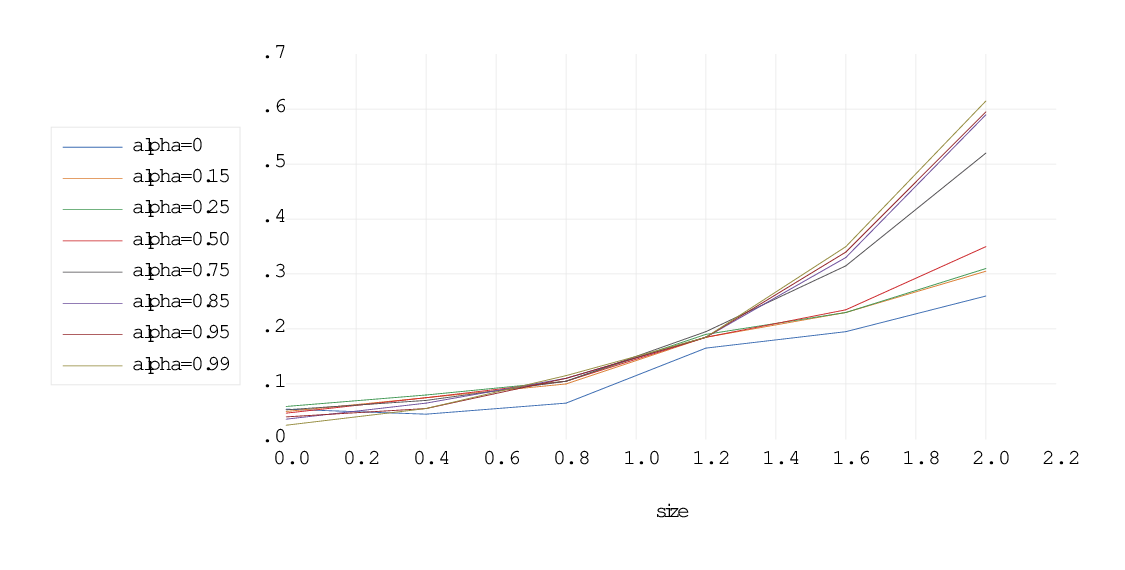}
    \label{fig:t113}
\end{minipage}
\par
\end{figure}

\medskip

We now report the median values of the estimated breakdate $k^{\ast }$, in
the case of a single changepoint (when this is detected), under the same
set-up as in Section \ref{simulations} - see equation (\ref{amoc}) in
particular. Results in Tables \ref{tab:MedianMid1}-\ref{tab:MedianLate2}
should be read in conjunction with Figures \ref{fig:FigM100}-\ref%
{fig:FigE100} and \ref{fig:FigM200}-\ref{fig:FigE200}, and broadly confirm
the theory spelled out in Theorem \ref{th-3}. In the case of mid-sample
breaks, the estimator of $k^{\ast }$ is usually very good when a changepoint
is detected (Tables \ref{tab:MedianMid1}-\ref{tab:MedianMid2}), even for
small break sizes like $\left\Vert \mathcal{\delta }\right\Vert =0.4$; this
is true across all values of $\alpha $, although, in the case of small
breaks ($\left\Vert \mathcal{\delta }\right\Vert =0.4$), the performance of $%
\widehat{k}$ when $\alpha $ gets closer to $1$ seems to worsen. Conversely,
when breaks occur close to the end of the sample ($k^{\ast }=0.9N$), results
in Table \ref{tab:MedianLate1} and \ref{tab:MedianLate2} differ dramatically
across $\alpha $: as expected, when $\alpha $ increases, $\widehat{k}$
performs better, and it performs very well when $\left\Vert \mathcal{\delta }%
\right\Vert \geq 0.8$ (and even more so when $N=200$). Interestingly, in
this case $\widehat{k}$ appears to have a downward bias, which vanishes as $%
N\left\Vert \mathcal{\delta }\right\Vert $ increases.

\medskip

\begin{table*}[t]
\caption{{\protect\footnotesize {Median estimated changepoints under a
mid-sample changepoint, \textit{i.i.d.} data, no measurement error}}}
\label{tab:MedianMid1}\centering
{\footnotesize {\ }}
\par
{\footnotesize {\ }}
\par
\resizebox{\textwidth}{!}{
\begin{tabular}{cccccccccccccccccccccc}
\hline\hline
&  &  &  &  &  &  &  &  &  &  &  &  &  &  &  &  &  &  &  &  &  \\ 
& \multicolumn{1}{c}{} & \multicolumn{1}{c}{} & \multicolumn{9}{c}{$N=100$, $k^{\ast }=50$} &  & \multicolumn{9}{c}{$N=200$, $k^{\ast }=100$} \\ 
&  &  & \multicolumn{1}{|c}{} &  &  &  &  &  &  &  &  &  & 
\multicolumn{1}{|c}{} &  &  &  &  &  &  &  &  \\ 
& $\alpha $ &  & \multicolumn{1}{|c}{} & $0.00$ & $0.10$ & $0.25$ & $0.50$ & 
$0.75$ & $0.85$ & $0.95$ & $0.99$ &  & \multicolumn{1}{|c}{} & $0.00$ & $0.10
$ & $0.25$ & $0.50$ & $0.75$ & $0.85$ & $0.95$ & $0.99$ \\ 
$\left\Vert \mathcal{\delta }\right\Vert $ &  &  & \multicolumn{1}{|c}{} & 
&  &  &  &  &  &  &  &  & \multicolumn{1}{|c}{} &  &  &  &  &  &  &  &  \\ 
&  &  & \multicolumn{1}{|c}{} &  &  &  &  &  &  &  &  &  & 
\multicolumn{1}{|c}{} &  &  &  &  &  &  &  &  \\ 
$0.40$ &  &  & \multicolumn{1}{|c}{} & $50$ & $50$ & $50$ & $50$ & $49$ & $50
$ & $49$ & $50$ &  & \multicolumn{1}{|c}{} & $100$ & $101$ & $101$ & $101$ & 
$101$ & $100$ & $100$ & $100$ \\ 
$0.80$ &  &  & \multicolumn{1}{|c}{} & $50$ & $50$ & $50$ & $50$ & $50$ & $50
$ & $50$ & $50$ &  & \multicolumn{1}{|c}{} & $100$ & $100$ & $100$ & $100$ & 
$100$ & $100$ & $100$ & $100$ \\ 
$1.20$ &  &  & \multicolumn{1}{|c}{} & $50$ & $50$ & $50$ & $50$ & $50$ & $50
$ & $50$ & $50$ &  & \multicolumn{1}{|c}{} & $100$ & $100$ & $100$ & $100$ & 
$100$ & $100$ & $100$ & $100$ \\ 
$1.60$ &  &  & \multicolumn{1}{|c}{} & $50$ & $50$ & $50$ & $50$ & $50$ & $50
$ & $50$ & $50$ &  & \multicolumn{1}{|c}{} & $100$ & $100$ & $100$ & $100$ & 
$100$ & $100$ & $100$ & $100$ \\ 
$2.00$ &  &  & \multicolumn{1}{|c}{} & $50$ & $50$ & $50$ & $50$ & $50$ & $50
$ & $50$ & $50$ &  & \multicolumn{1}{|c}{} & $100$ & $100$ & $100$ & $100$ & 
$100$ & $100$ & $100$ & $100$ \\ 
&  &  &  &  &  &  &  &  &  &  &  &  &  &  &  &  &  &  &  &  &  \\ 
\hline\hline
\end{tabular}}
\par
{\footnotesize 
\begin{tablenotes}
      \tiny
            \item The table contains the median estimated changepoint in the presence of a mid-sample break, for different values of $\alpha$, $\left\Vert 
\mathcal{\delta }\right\Vert $, and sample sizes $N$, with $\rho=0$. All figures are based on $200$ replications.
            
\end{tablenotes}
}
\end{table*}

\medskip

\begin{table*}[b]
\caption{{\protect\footnotesize {Median estimated changepoints under a
mid-sample changepoint, serially dependent data with measurement error}}}
\label{tab:MedianMid2}\centering
{\footnotesize {\ }}
\par
\resizebox{\textwidth}{!}{
\begin{tabular}{cccccccccccccccccccccc}
\hline\hline
&  &  &  &  &  &  &  &  &  &  &  &  &  &  &  &  &  &  &  &  &  \\ 
&  &  & \multicolumn{9}{c}{$N=100$, $k^{\ast }=50$} &  & \multicolumn{9}{c}{$N=200$, $k^{\ast }=100$} \\ 
&  &  & \multicolumn{1}{|c}{} &  &  &  &  &  &  &  &  &  & 
\multicolumn{1}{|c}{} &  &  &  &  &  &  &  &  \\ 
& $\alpha $ &  & \multicolumn{1}{|c}{} & $0.00$ & $0.10$ & $0.25$ & $0.50$ & 
$0.75$ & $0.85$ & $0.95$ & $0.99$ &  & \multicolumn{1}{|c}{} & $0.00$ & $0.10
$ & $0.25$ & $0.50$ & $0.75$ & $0.85$ & $0.95$ & $0.99$ \\ 
$\left\Vert \mathcal{\delta }\right\Vert $ &  &  & \multicolumn{1}{|c}{} & 
&  &  &  &  &  &  &  &  & \multicolumn{1}{|c}{} &  &  &  &  &  &  &  &  \\ 
&  &  & \multicolumn{1}{|c}{} &  &  &  &  &  &  &  &  &  & 
\multicolumn{1}{|c}{} &  &  &  &  &  &  &  &  \\ 
$0.40$ &  &  & \multicolumn{1}{|c}{} & $49$ & $49.5$ & $49.5$ & $54$ & $58.5$
& $58.5$ & $75$ & $88.5$ &  & \multicolumn{1}{|c}{} & $97$ & $104$ & $107$ & 
$107$ & $113$ & $111$ & $114.5$ & $119$ \\ 
$0.80$ &  &  & \multicolumn{1}{|c}{} & $50$ & $50$ & $51$ & $51$ & $52$ & $52
$ & $52$ & $51$ &  & \multicolumn{1}{|c}{} & $99$ & $100$ & $100$ & $101$ & $101$ & $100$ & $100$ & $100$ \\ 
$1.20$ &  &  & \multicolumn{1}{|c}{} & $50$ & $50$ & $50$ & $50$ & $50$ & $50
$ & $50$ & $50$ &  & \multicolumn{1}{|c}{} & $100$ & $100$ & $100$ & $100$ & 
$100$ & $100$ & $100$ & $100$ \\ 
$1.60$ &  &  & \multicolumn{1}{|c}{} & $50$ & $50$ & $50$ & $50$ & $50$ & $50
$ & $50$ & $50$ &  & \multicolumn{1}{|c}{} & $100$ & $100$ & $100$ & $100$ & 
$100$ & $100$ & $100$ & $100$ \\ 
$2.00$ &  &  & \multicolumn{1}{|c}{} & $50$ & $50$ & $50$ & $50$ & $50$ & $50
$ & $50$ & $50$ &  & \multicolumn{1}{|c}{} & $100$ & $100$ & $100$ & $100$ & 
$100$ & $100$ & $100$ & $100$ \\ 
&  &  &  &  &  &  &  &  &  &  &  &  &  &  &  &  &  &  &  &  &  \\ 
\hline\hline
\end{tabular}
}
\par
{\footnotesize 
\begin{tablenotes}
      \tiny
            \item The table contains the median estimated changepoint in the presence of a mid-sample break, for different values of $\alpha$, $\left\Vert 
\mathcal{\delta }\right\Vert $, and sample sizes $N$, with $\rho=0.5$. All figures are based on $200$ replications.
            
\end{tablenotes}
}
\end{table*}

\medskip

\begin{table*}[t]
\caption{{\protect\footnotesize {Median estimated changepoints under an
end-of-sample changepoint, \textit{i.i.d.} data, no measurement error}}}
\label{tab:MedianLate1}\centering
{\footnotesize {\ }}
\par
\resizebox{\textwidth}{!}{
\begin{tabular}{cccccccccccccccccccccc}
\hline\hline
&  &  &  &  &  &  &  &  &  &  &  &  &  &  &  &  &  &  &  &  &  \\ 
&  &  & \multicolumn{9}{c}{$N=100$, $k^{\ast }=90$} &  & \multicolumn{9}{c}{$N=200$, $k^{\ast }=180$} \\ 
&  &  & \multicolumn{1}{|c}{} &  &  &  &  &  &  &  &  &  & 
\multicolumn{1}{|c}{} &  &  &  &  &  &  &  &  \\ 
& $\alpha $ &  & \multicolumn{1}{|c}{} & $0.00$ & $0.10$ & $0.25$ & $0.50$ & 
$0.75$ & $0.85$ & $0.95$ & $0.99$ &  & \multicolumn{1}{|c}{} & $0.00$ & $0.10
$ & $0.25$ & $0.50$ & $0.75$ & $0.85$ & $0.95$ & $0.99$ \\ 
$\left\Vert \mathcal{\delta }\right\Vert $ &  &  & \multicolumn{1}{|c}{} & 
&  &  &  &  &  &  &  &  & \multicolumn{1}{|c}{} &  &  &  &  &  &  &  &  \\ 
&  &  & \multicolumn{1}{|c}{} &  &  &  &  &  &  &  &  &  & 
\multicolumn{1}{|c}{} &  &  &  &  &  &  &  &  \\ 
$0.40$ &  &  & \multicolumn{1}{|c}{} & $56$ & $57$ & $54.5$ & $56$ & $63$ & $81$ & $82$ & $87.5$ &  & \multicolumn{1}{|c}{} & $110.5$ & $106.5$ & $110.5$
& $145$ & $166.5$ & $175$ & $176$ & $177$ \\ 
$0.80$ &  &  & \multicolumn{1}{|c}{} & $57.5$ & $61$ & $62$ & $73$ & $85$ & $89$ & $90$ & $90$ &  & \multicolumn{1}{|c}{} & $145.5$ & $148.5$ & $161$ & $176$ & $179$ & $180$ & $180$ & $180$ \\ 
$1.20$ &  &  & \multicolumn{1}{|c}{} & $71$ & $78$ & $82$ & $89$ & $90$ & $90
$ & $90$ & $90$ &  & \multicolumn{1}{|c}{} & $161$ & $166$ & $175$ & $179$ & 
$180$ & $180$ & $180$ & $180$ \\ 
$1.60$ &  &  & \multicolumn{1}{|c}{} & $81$ & $84$ & $88$ & $90$ & $90$ & $90
$ & $90$ & $90$ &  & \multicolumn{1}{|c}{} & $177.5$ & $178$ & $179$ & $179$
& $180$ & $180$ & $180$ & $180$ \\ 
$2.00$ &  &  & \multicolumn{1}{|c}{} & $86$ & $87$ & $89$ & $90$ & $90$ & $90
$ & $90$ & $90$ &  & \multicolumn{1}{|c}{} & $178$ & $178$ & $179$ & $180$ & 
$180$ & $180$ & $180$ & $180$ \\ 
&  &  &  &  &  &  &  &  &  &  &  &  &  &  &  &  &  &  &  &  &  \\ 
\hline\hline
\end{tabular}}
\par
{\footnotesize 
\begin{tablenotes}
      \tiny
            \item The table contains the median estimated changepoint in the presence of a mid-sample break, for different values of $\alpha$, $\left\Vert 
\mathcal{\delta }\right\Vert $, and sample sizes $N$, with $\rho=0.5$ and $\sigma_{v}=0.25$. All figures are based on $200$ replications.
            
\end{tablenotes}
}
\end{table*}

\medskip

\begin{table*}[b]
\caption{{\protect\footnotesize {Median estimated changepoints under an
end-of-sample changepoint, serially dependent data with measurement error}}}
\label{tab:MedianLate2}\centering
{\footnotesize {\ }}
\par
\resizebox{\textwidth}{!}{
\begin{tabular}{cccccccccccccccccccccc}
\hline\hline
&  &  &  &  &  &  &  &  &  &  &  &  &  &  &  &  &  &  &  &  &  \\ 
&  &  & \multicolumn{9}{c}{$N=100$, $k^{\ast }=90$} &  & \multicolumn{9}{c}{$N=200$, $k^{\ast }=180$} \\ 
&  &  & \multicolumn{1}{|c}{} &  &  &  &  &  &  &  &  &  & 
\multicolumn{1}{|c}{} &  &  &  &  &  &  &  &  \\ 
& $\alpha $ &  & \multicolumn{1}{|c}{} & $0.00$ & $0.10$ & $0.25$ & $0.50$ & 
$0.75$ & $0.85$ & $0.95$ & $0.99$ &  & \multicolumn{1}{|c}{} & $0.00$ & $0.10
$ & $0.25$ & $0.50$ & $0.75$ & $0.85$ & $0.95$ & $0.99$ \\ 
$\left\Vert \mathcal{\delta }\right\Vert $ &  &  & \multicolumn{1}{|c}{} & 
&  &  &  &  &  &  &  &  & \multicolumn{1}{|c}{} &  &  &  &  &  &  &  &  \\ 
&  &  & \multicolumn{1}{|c}{} &  &  &  &  &  &  &  &  &  & 
\multicolumn{1}{|c}{} &  &  &  &  &  &  &  &  \\ 
$0.40$ &  &  & \multicolumn{1}{|c}{} & $58$ & $57.5$ & $61$ & $57.5$ & $75$
& $75.5$ & $75.5$ & $82$ &  & \multicolumn{1}{|c}{} & $93$ & $89$ & $105.5$
& $118$ & $118$ & $105.5$ & $118$ & $118$ \\ 
$0.80$ &  &  & \multicolumn{1}{|c}{} & $54.5$ & $54.5$ & $54$ & $54$ & $61$
& $61$ & $68$ & $81.5$ &  & \multicolumn{1}{|c}{} & $94$ & $90$ & $113.5$ & $116.5$ & $123$ & $142$ & $142$ & $165$ \\ 
$1.20$ &  &  & \multicolumn{1}{|c}{} & $53$ & $54.5$ & $53$ & $61$ & $76$ & $77$ & $83$ & $87.5$ &  & \multicolumn{1}{|c}{} & $103$ & $102$ & $116$ & $131.5$ & $163$ & $172$ & $177$ & $178$ \\ 
$1.60$ &  &  & \multicolumn{1}{|c}{} & $53$ & $54$ & $53$ & $73$ & $81$ & $87
$ & $89.5$ & $90$ &  & \multicolumn{1}{|c}{} & $111$ & $111$ & $125$ & $158$
& $178$ & $179$ & $180$ & $180$ \\ 
$2.00$ &  &  & \multicolumn{1}{|c}{} & $54$ & $58.5$ & $59$ & $76$ & $88$ & $90$ & $90$ & $90$ &  & \multicolumn{1}{|c}{} & $134$ & $143$ & $163$ & $179$
& $180$ & $180$ & $180$ & $180$ \\ 
&  &  &  &  &  &  &  &  &  &  &  &  &  &  &  &  &  &  &  &  &  \\ 
\hline\hline
\end{tabular}

}
\par
{\footnotesize 
\begin{tablenotes}
      \tiny
            \item The table contains the median estimated changepoint in the presence of an end-of-sample break, for different values of $\alpha$, $\left\Vert 
\mathcal{\delta }\right\Vert $, and sample sizes $N$, with $\rho=0.5$ and $\sigma_{v}=0.25$. All figures are based on $200$ replications.
            
\end{tablenotes}
}
\end{table*}

\clearpage

\subsection{Binary segmentation: pesudocode and Monte Carlo evidence\label%
{simulations_binary}}

We begin by reporting some pseudocode for the practical implementation of
the algorithm. Let, for short 
\begin{equation*}
\mathcal{Z}_{\ell ,u}^{k}=\left( \frac{k}{N}\left( 1-\frac{k}{N}\right)
\right) ^{-\alpha }V_{N}^{\left( \ell ,u\right) }\left( k\right) .
\end{equation*}%
The pseudocode is in Algorithm \ref{alg111} below.%

\bigskip

\begin{algorithm}[h!]
	\caption{Binary segmentation for functional data based on the empirical energy function: ENERGYSEG($\ell$, $u$, $\tau _{N}$)}
	\label{alg111}
	\begin{algorithmic}
		
		\Require starting index $\ell$; ending index $u$; threshold $\tau _{N}$
		\\
		\If{$u-\ell \leq 4$}  \State{STOP} 
		\ElsIf{$u-\ell > 4$} \State{Define $k_0=\sargmax_{\ell \leq k\leq u}\mathcal{Z}_{\ell,u}^{k}$, and $\mathcal{Z}=\mathcal{Z}_{\ell,u}^{k_0}$}%

		\EndIf
		\\
		\If{$\mathcal{Z}>\tau_N$} 
		\State{add $k_0$ to the
set of estimated changepoints.}
		\State{\textbf{run} ENERGYSEG($l$, $k_0$, 
$\tau _{N}$) and ENERGYSEG($k_0$, $u$, $\tau _{N}$)}

		\Else{}
STOP
		
		\EndIf

	\end{algorithmic} 	
\end{algorithm}

\bigskip

We now report a small Monte Carlo exercise to assess the performance of the
binary segmentation procedure discussed in Section \ref{segment}. In
particular, we consider the following DGP%
\begin{equation*}
X_{i}(t)=\sum_{j=1}^{R+1}\mu _{j}\left( t\right) I\left\{ k_{j-1}\leq
i<k_{j}\right\} +\sum_{\ell =1}^{M}\lambda _{\ell }^{1/2}\mathcal{Z}_{\ell
,i}\phi _{\ell }(t)+\nu _{i}\left( t\right) ,
\end{equation*}%
where the random part $\sum_{\ell =1}^{M}\lambda _{\ell }^{1/2}\mathcal{Z}%
_{\ell ,i}\phi _{\ell }(t)+\nu _{i}\left( t\right) $ is generated in the
same way as in Section \ref{simulations}, and we consider two mid-sample
changepoints (i.e., $R=2$) in the mean function $\mu _{j}(t)$, viz. 
\begin{equation}
\mu _{j}\left( t\right) =\left\{ 
\begin{array}{ll}
\mathcal{\delta }_{1}(t) & 1\leq i<k_{1}^{\ast } \\ 
\mathcal{\delta }_{2}(t) & k_{1}^{\ast }\leq i<k_{2}^{\ast } \\ 
\mathcal{\delta }_{3}(t) & k_{2}^{\ast }\leq i\geq N%
\end{array}%
\right. ,  \label{mean-seg}
\end{equation}%
with $k_{1}^{\ast }=\left\lfloor 0.35N\right\rfloor $ and $k_{2}^{\ast
}=\left\lfloor 0.7N\right\rfloor $, and $\delta_i(t)\equiv C_{\delta_i}$ are
constants, $C_{\delta_1}=0$, $C_{\delta_2}=2$ and $C_{ \mathcal{\delta }%
_{3}} =3$. When using Algorithm \ref{alg111}, we select the threshold 
\begin{equation}
\tau _{N}=c_{a}\log \log N,  \label{seg-thresh}
\end{equation}%
where $c_{a}$ is the critical value at nominal level $a=0.05$.

Results in Table \ref{tab:segment1} contain measures of location of the
estimator of the number of changepoints $\widehat{R}$, and the average
values, across simulations, of the estimated breakdates, using $500$
simulations. Results are obtained for $N=200$ and with $\mathcal{Z}_{\ell
,i} $ generated as \textit{i.i.d.} across $1\leq i\leq N$; we consider the
presence of measurement errors, setting $\sigma _{\nu }^{2}=0.25$ as in
Section \ref{simulations}, but in unreported experiments we noted that
having $\sigma _{\nu }^{2}=0$ does not change the results in any significant
way.

\bigskip

\begin{table*}[h]
\caption{{\protect\footnotesize {Measures of location for $\widehat{R}$ and
the estimated breakdates}}}
\label{tab:segment1}\centering
\par
{\footnotesize {\ }}
\par
{\footnotesize 
\begin{tabular}{ccccccccccccc}
\hline\hline
&  &  &  &  &  &  &  &  &  &  &  &  \\ 
\multicolumn{13}{c}{Measures of location for the estimated number of
changepoints $\widehat{R}$} \\ 
&  &  &  &  &  &  &  &  &  &  &  &  \\ 
&  & $\alpha $ &  & $0.00$ & $0.15$ & $0.25$ & $0.50$ & $0.75$ & $0.85$ & $%
0.95$ & $0.99$ &  \\ 
&  &  &  &  &  &  &  &  &  &  &  &  \\ 
& mean &  &  & $1.994$ & $1.998$ & $1.988$ & $1.984$ & $1.986$ & $1.990$ & $%
1.964$ & $2.000$ &  \\ 
& median &  &  & $2.000$ & $2.000$ & $2.000$ & $2.000$ & $2.000$ & $2.000$ & 
$2.000$ & $2.000$ &  \\ 
& min &  &  & $0.000$ & $0.000$ & $0.000$ & $0.000$ & $0.000$ & $0.000$ & $%
0.000$ & $0.000$ &  \\ 
& max &  &  & $3.000$ & $3.000$ & $3.000$ & $3.000$ & $3.000$ & $4.000$ & $%
4.000$ & $4.000$ &  \\ 
&  &  &  &  &  &  &  &  &  &  &  &  \\ \hline
&  &  &  &  &  &  &  &  &  &  &  &  \\ 
\multicolumn{13}{c}{Median estimated breakdates} \\ 
&  &  &  &  &  &  &  &  &  &  &  &  \\ 
&  & $\alpha $ &  & $0.00$ & $0.15$ & $0.25$ & $0.50$ & $0.75$ & $0.85$ & $%
0.95$ & $0.99$ &  \\ 
&  &  &  &  &  &  &  &  &  &  &  &  \\ 
& $\widehat{k}_{1}$ &  &  & $70$ & $70$ & $70$ & $70$ & $70$ & $70$ & $70$ & 
$70$ &  \\ 
& $\widehat{k}_{2}$ &  &  & $140$ & $140$ & $140$ & $140$ & $140$ & $140$ & $%
140$ & $140$ &  \\ 
&  &  &  &  &  &  &  &  &  &  &  &  \\ \hline\hline
\end{tabular}
}
\par
{\footnotesize \ }
\par
{\footnotesize 
\begin{tablenotes}
      \tiny
            \item The table contains various measures of location for $\widehat{R}$ and the median estimated breakdates under the alternative \eqref{mean-seg}; data have been generated as \textit{i.i.d.} with measurement error, with sample size $N=200$.
            
\end{tablenotes}
}
\end{table*}

\bigskip

In addition to investigating the performance of binary segmentation in the
presence of shifts in the mean as per (\ref{seg-thresh}), we also explore
its performance in the presence of changes in the distribution. In
particular, we consider the \textquotedblleft epidemic\textquotedblright\
alternative in a model with zero mean%
\begin{equation*}
{Y_{\ell }}(t)=\epsilon _{\ell }(t),
\end{equation*}%
and 
\begin{equation}
\epsilon _{\ell }(t)=\left\{ 
\begin{array}{ll}
\displaystyle\sum_{m=1}^{M}\lambda _{m}^{1/2}\mathcal{Z}_{m,\ell }\phi
_{m}(t) & 1\leq \ell <k_{1} \\ 
\displaystyle\sum_{m=1}^{M}\lambda _{m}^{1/2}t_{m,\ell }^{\left( 3\right)
}\phi _{m}(t) & k_{1}\leq \ell <k_{2} \\ 
\displaystyle\sum_{m=1}^{M}\lambda _{m}^{1/2}\mathcal{Z}_{m,\ell }\phi
_{m}(t) & k_{2}\leq \ell \leq N%
\end{array}%
\right. ,  \label{seg-distr}
\end{equation}%
where, as in (\ref{tail_distr}), $t_{m,\ell }^{\left( 3\right) }$\ are 
\textit{i.i.d.} random variables, independent across $m$ and $\ell $, with a
Student's t distribution with $3$ degrees of freedom; all the other
specifications are the same as above. We use $N=200$ and, as above, $%
k_{1}^{\ast }=\left\lfloor 0.35N\right\rfloor $ and $k_{2}^{\ast
}=\left\lfloor 0.7N\right\rfloor $. Alternative (\ref{seg-distr}) represents
a case, relevant in practice, where the data experience a period of
turbulence characterised by heavy tails, after which they revert to normal.
Results are in Table \ref{tab:segment2}; we found $\tau _{N}=c_{\alpha }(\ln
N)^{1/2}$ to be a better choice in this case, and we suggest this choice of
threshold when testing for changes in the distribution.

\medskip

\begin{table*}[h]
\caption{{\protect\footnotesize {Measures of location for $\widehat{R}$ and
the estimated breakdates - testing for distributional changes}}}
\label{tab:segment2}\centering
\par
{\footnotesize {\ }}
\par
{\footnotesize 
\begin{tabular}{ccccccccccccc}
\hline\hline
&  &  &  &  &  &  &  &  &  &  &  &  \\ 
\multicolumn{13}{c}{Measures of location for the estimated number of
changepoints $\widehat{R}$} \\ 
&  &  &  &  &  &  &  &  &  &  &  &  \\ 
&  & $\alpha $ &  & $0.00$ & $0.15$ & $0.25$ & $0.50$ & $0.75$ & $0.85$ & $%
0.95$ & $0.99$ &  \\ 
&  &  &  &  &  &  &  &  &  &  &  &  \\ 
& mean &  &  & $1.995$ & $2.100$ & $1.860$ & $2.050$ & $1.840$ & $1.815$ & $%
1.600$ & $1.335$ &  \\ 
& median &  &  & $2.000$ & $2.000$ & $2.000$ & $2.000$ & $2.000$ & $2.000$ & 
$0.000$ & $0.000$ &  \\ 
& min &  &  & $0.000$ & $0.000$ & $0.000$ & $0.000$ & $0.000$ & $0.000$ & $%
0.000$ & $0.000$ &  \\ 
& max &  &  & $6.000$ & $5.000$ & $5.000$ & $6.000$ & $7.000$ & $6.000$ & $%
6.000$ & $6.000$ &  \\ 
&  &  &  &  &  &  &  &  &  &  &  &  \\ \hline
&  &  &  &  &  &  &  &  &  &  &  &  \\ 
\multicolumn{13}{c}{Median estimated breakdates} \\ 
&  &  &  &  &  &  &  &  &  &  &  &  \\ 
&  & $\alpha $ &  & $0.00$ & $0.15$ & $0.25$ & $0.50$ & $0.75$ & $0.85$ & $%
0.95$ & $0.99$ &  \\ 
&  &  &  &  &  &  &  &  &  &  &  &  \\ 
& $\widehat{k}_{1}$ &  &  & $70$ & $70$ & $70$ & $70$ & $70$ & $70$ & $70$ & 
$70$ &  \\ 
& $\widehat{k}_{2}$ &  &  & $139$ & $140$ & $140$ & $140$ & $139$ & $137$ & $%
138$ & $137$ &  \\ 
&  &  &  &  &  &  &  &  &  &  &  &  \\ \hline\hline
\end{tabular}
}
\par
{\footnotesize \ }
\par
{\footnotesize 
\begin{tablenotes}
      \tiny
            \item The table contains various measures of location for $\widehat{R}$ and the median estimated breakdates under the alternative \eqref{seg-distr}; data have been generated as \textit{i.i.d.} with measurement error, with sample size $N=200$.
            
\end{tablenotes}
}
\end{table*}

\medskip

\clearpage

\newpage

\clearpage
\renewcommand*{\thesection}{\Alph{section}}

\setcounter{subsection}{-1} \setcounter{subsubsection}{-1} %
\setcounter{equation}{0} \setcounter{lemma}{0} \setcounter{theorem}{0} %
\renewcommand{\theassumption}{C.\arabic{assumption}} 
\renewcommand{\thetheorem}{C.\arabic{theorem}} \renewcommand{\thelemma}{C.%
\arabic{lemma}} \renewcommand{\theproposition}{C.\arabic{proposition}} %
\renewcommand{\thecorollary}{C.\arabic{corollary}} \renewcommand{%
\theequation}{C.\arabic{equation}}

\section{Further empirical evidence\label{further-empirical}}

\subsection{Further empirical evidence: changepoint analysis of cumulative
intraday returns\label{cidr}}

We complement our findings in Section \ref{empirical} by applying our tests
for changes in the mean and in the distribution to cumulative intraday
returns (CIDRs henceforth), whose usefulness is demonstrated in a
contribution by \citet{kokoszka2012functional}. We use the same dataset as
in Section \ref{empirical}, having removed the same curves consisting of
partial trading days. CIDRs are defined as%
\begin{equation*}
Y_{i}\left( t\right) =\log P_{i}\left( t\right) -\log P_{i}\left(
t_{0}\right) ,
\end{equation*}%
where $P_{i}\left( t\right) $ is the daily price evaluated at $t$, and $%
t_{0} $ is, for each day $1\leq i\leq N$, the start of trading for the day
(in our case, midnight). Contrary to the use of month-on-month returns, in
this case we can use the whole sample of daily curves consisting of $N=414$
functional datapoints.

Tests have been applied with the same specifications as in Section \ref%
{empirical} in the main paper. We did not find any changepoints in the mean,
irrespective of the value of $\alpha $. Conversely, applying the test for
distributional changes to the demeaned data, several changepoints are found,
summarised in Table XXX. In the table, as in the rest of the paper, we have
computed the measures of skewness and kurtosis as%
\begin{align}
&\frac{1}{T}\int_{0}^{T}\frac{\displaystyle{N^{-1}\sum_{i=1}^{N}\left(
Y_{i}\left( t\right) -N^{-1}\sum_{i=1}^{N}Y_{i}\left( t\right) \right) ^{3}}%
}{\displaystyle{\left( N^{-1}\sum_{i=1}^{N}\left( Y_{i}\left( t\right)
-N^{-1}\sum_{i=1}^{N}Y_{i}\left( t\right) \right) ^{2}\right) ^{3/2}}}dt,
\label{sk} \\
&\frac{1}{T}\int_{0}^{T}\frac{\displaystyle{N^{-1}\sum_{i=1}^{N}\left(
Y_{i}\left( t\right) -N^{-1}\sum_{i=1}^{N}Y_{i}\left( t\right) \right) ^{4}}%
}{\displaystyle{\left( N^{-1}\sum_{i=1}^{N}\left( Y_{i}\left( t\right)
-N^{-1}\sum_{i=1}^{N}Y_{i}\left( t\right) \right) ^{2}\right) ^{4/2}}}dt.
\label{ku}
\end{align}

\medskip

\begin{table*}[h]
\caption{{\protect\footnotesize {Changepoint in the distribution in
intraday, month-on-month return curves - S\&P 500, January 3rd, 2022 -
September 3rd, 2023}}}
\label{tab:empiricsSPD}\centering
\par
\resizebox{\textwidth}{!}{

\begin{tabular}{ccccccccccc}
\hline\hline
&  &  &  &  &  &  &  &  &  &  \\ 
\multicolumn{11}{c}{Changepoint detection using $\alpha =0.50$} \\ 
&  &  &  &  &  &  &  &  &  &  \\ 
& Iteration &  & Segment &  & Outcome &  & Estimated date &  & Notes &  \\ 
&  &  &  &  &  &  &  &  &  &  \\ 
& $1$ &  & Jan 3rd, 2022 -- Sep 3rd, 2023 &  & Reject &  & May 5th, 2022 & 
& significant also at $1\%$ &  \\ 
&  &  &  &  &  &  &  &  & break found also using $\alpha =0$, at the same
date, and using $\alpha =0.99$, at Apr 25th, 2022 &  \\ 
&  &  &  &  &  &  &  &  &  &  \\ 
& $2$ &  & Jan 3rd, 2022 -- May 4th, 2022 &  & Reject &  & Jan 20th, 2022 & 
& significant also at $1\%$ &  \\ 
&  &  &  &  &  &  &  &  & break found also using $\alpha =0,$ at Feb 13th,
2022, and using $\alpha =0.99$, at the same date &  \\ 
&  &  &  &  &  &  &  &  &  &  \\ 
& $3$ &  & Jan 3rd, 2022 -- Jan 19th, 2022 &  & Not reject &  &  &  & no
break found even at $10\%$, or with $\alpha =0,$ $0.99$ &  \\ 
&  &  &  &  &  &  &  &  & $\begin{array}{c}
\widehat{\sigma }_{N}^{2}=0.017 \\ 
sk=1.019 \\ 
ku=1.050\end{array}$ &  \\ 
&  &  &  &  &  &  &  &  &  &  \\ 
& $4$ &  & Jan 20th, 2022 -- May 4th, 2022 &  & Not reject &  &  &  & no
break found even at $10\%$, or with $\alpha =0,$ $0.99$ &  \\ 
&  &  &  &  &  &  &  &  & $\begin{array}{c}
\widehat{\sigma }_{N}^{2}=0.005 \\ 
sk=1.183 \\ 
ku=1.477\end{array}$ &  \\ 
&  &  &  &  &  &  &  &  &  &  \\ 
& $5$ &  & May 5th, 2022 -- Sep 3rd, 2023 &  & Reject &  & May 17th, 2023 & 
& significant also at $1\%$ &  \\ 
&  &  &  &  &  &  &  &  & break found also using $\alpha =0,$ $0.99$, at the
same date &  \\ 
&  &  &  &  &  &  &  &  &  &  \\ 
& $6$ &  & May 5th, 2022 -- May 18th, 2023 &  & Not reject &  &  &  & no
break found even at $10\%$, or with $\alpha =0,$ $0.99$ &  \\ 
&  &  &  &  &  &  &  &  & $\begin{array}{c}
\widehat{\sigma }_{N}^{2}=0.003 \\ 
sk=-1.522 \\ 
ku=2.737\end{array}$ &  \\ 
&  &  &  &  &  &  &  &  &  &  \\ 
& $7$ &  & May 18th, 2023 -- Sep 3rd, 2023 &  & Reject &  & Jun 12th, 2023 & 
& significant also at $1\%$ &  \\ 
&  &  &  &  &  &  &  &  & break found also using $\alpha =0,$ $0.99$, at the
same date &  \\ 
&  &  &  &  &  &  &  &  &  &  \\ 
& $8$ &  & May 18th, 2023 -- Jun 11th, 2023 &  & Not reject &  &  &  & no
break found even at $10\%$, or with $\alpha =0,$ $0.99$ &  \\ 
&  &  &  &  &  &  &  &  & $\begin{array}{c}
\widehat{\sigma }_{N}^{2}=0.001 \\ 
sk=1.334 \\ 
ku=2.349\end{array}$ &  \\ 
&  &  &  &  &  &  &  &  &  &  \\ 
& $9$ &  & Jun 11th, 2023 -- Sep 3rd, 2023 &  & Not reject &  &  &  & no
break found even at $10\%$, or with $\alpha =0,$ $0.99$ &  \\ 
&  &  &  &  &  &  &  &  & $\begin{array}{c}
\widehat{\sigma }_{N}^{2}=0.004 \\ 
sk=1.103 \\ 
ku=1.278\end{array}$ &  \\ 
&  &  &  &  &  &  &  &  &  &  \\ \hline\hline
\end{tabular}

}
\par
{\footnotesize 
\begin{tablenotes}
      \tiny
            \item We have used the estimator of the variance $\widehat{\sigma }_{N}^{2}$
defined in (\ref{sighat}). As far as the other descriptive statistics are concerned,  \textquotedblleft $sk$\textquotedblright\ and
\textquotedblleft $ku$\textquotedblright\ are computed according to equations (\ref{sk}) and (\ref{ku}) respectively.

\end{tablenotes}
}
\end{table*}

\subsection{Further empirical evidence: changepoint analysis of temperature
data\label{empiric_further}}

In this section, we illustrate our approach using temperature data, where
FDA is applied \textquotedblleft naturally\textquotedblright ; more broadly
speaking, recent contributions in the area of climate science show that
using time series methods can be beneficial (see %
\citealp{diebold2022probability}; \citealp{diebold2023will}; and %
\citealp{ditlevsen2023warning}).

Following \citet{berkes2009detecting}, we use a sample of $N=251$ yearly
curves, recorded on a daily basis between $1772$ and $2022$. Each curve
contains average daily temperatures (in degrees Celsius) recorded in Central
England; apart from removing the data corresponding to February 29th in leap
years in order to balance the sample, no further transformation is applied
to the data.\footnote{%
The data have been downloaded from %
\url{https://www.metoffice.gov.uk/hadobs/hadcet/}, where a brief description
of the dataset can also be found. A more complete description of the data
can be found in \citet{parker1992new}, to which we refer for details.} Our
techniques are particularly suited to this dataset for a number of reasons:
firstly, we do not need to invert any large-scale matrix, contrary to %
\citet{horvath1999testing}, and therefore we can use the daily sampling
frequency as opposed to transforming it into monthly averages; secondly,
temperature data might exhibit linear or nonlinear serial dependence (see
e.g. \citealp{bowers}), which our tests are designed to take into account,
unlike those proposed in \citet{berkes2009detecting}; and, finally, our
weighted test statistics are also suited to detect changepoints occurring
close to the end of the sample, thus allowing to shed light on the presence
and extent of changes in average temperatures in recent years. We apply our
tests for changepoints in the mean using our tests with $\alpha \in \left\{
0,0.5,0.65\right\} $, by way of comparison and robustness check; we note
that using different values of $\alpha $ does not alter the main
conclusions, although higher values of $\alpha $ seem to estimate the
changepoint date later and later. In order to take into account the possible
presence of multiple changes, we apply binary segmentation. We have
implemented our test using the same specifications as described in Section %
\ref{empirical} in the main paper, also carrying out the same robustness
checks with no noticeable changes in the results.

\bigskip

\begin{table*}[h]
\caption{{\protect\footnotesize {Changepoint detection in Central England
temperatures, $1772-2022$}}}
\label{tab:empirics}\centering
\par
{\scriptsize {\ 
\begin{tabular}{ccccccccccc}
\hline\hline
&  &  &  &  &  &  &  &  &  &  \\ 
\multicolumn{11}{c}{Changepoint detection with $\alpha =0.00$} \\ 
&  &  &  &  &  &  &  &  &  &  \\ 
& Iteration &  & Segment &  & Outcome &  & Estimated date &  & Notes &  \\ 
&  &  &  &  &  &  &  &  &  &  \\ 
& $1$ &  & $1772-2022$ &  & Reject &  & $1919$ &  & significant also at $1\%$
&  \\ 
& $2$ &  & $1772-1918$ &  & Reject &  & $1842$ &  & significant only at $5\%$
&  \\ 
& $3$ &  & $1772-1841$ &  & Not reject &  &  &  & no break found even at $%
10\%$ &  \\ 
& $4$ &  & $1842-1918$ &  & Not reject &  &  &  & no break found even at $%
10\%$ &  \\ 
& $5$ &  & $1919-2022$ &  & Reject &  & $1987$ &  & significant also at $1\%$
&  \\ 
& $6$ &  & $1919-1986$ &  & Not reject &  &  &  & no break found even at $%
10\%$ &  \\ 
& $7$ &  & $1987-2022$ &  & Not reject &  &  &  & no break found even at $%
10\%$ &  \\ 
&  &  &  &  &  &  &  &  &  &  \\ \hline
&  &  &  &  &  &  &  &  &  &  \\ 
\multicolumn{11}{c}{Changepoint detection with $\alpha =0.50$} \\ 
&  &  &  &  &  &  &  &  &  &  \\ 
& $1$ &  & $1772-2022$ &  & Reject &  & $1919$ &  & significant also at $1\%$
&  \\ 
& $2$ &  & $1772-1918$ &  & Reject &  & $1842$ &  & significant only at $5\%$
&  \\ 
& $3$ &  & $1772-1841$ &  & Not reject &  &  &  & no break found even at $%
10\%$ &  \\ 
& $4$ &  & $1842-1918$ &  & Not reject &  &  &  & no break found even at $%
10\%$ &  \\ 
& $5$ &  & $1919-2022$ &  & Reject &  & $1988$ &  & significant also at $1\%$
&  \\ 
& $6$ &  & $1919-1987$ &  & Not reject &  &  &  & no break found even at $%
10\%$ &  \\ 
& $7$ &  & $1988-2022$ &  & Not reject &  &  &  & no break found even at $%
10\%$ &  \\ 
&  &  &  &  &  &  &  &  &  &  \\ \hline
&  &  &  &  &  &  &  &  &  &  \\ 
\multicolumn{11}{c}{Changepoint detection with $\alpha =0.65$} \\ 
&  &  &  &  &  &  &  &  &  &  \\ 
& $1$ &  & $1772-2022$ &  & Reject &  & $1932$ &  & significant also at $1\%$
&  \\ 
& $2$ &  & $1772-1931$ &  & Reject &  & $1842$ &  & significant also at $1\%$
&  \\ 
& $3$ &  & $1772-1841$ &  & Not reject &  &  &  & no break found even at $%
10\%$ &  \\ 
& $4$ &  & $1842-1931$ &  & Not reject &  &  &  & no break found even at $%
10\%$ &  \\ 
& $5$ &  & $1932-2022$ &  & Reject &  & $1989$ &  & significant also at $1\%$
&  \\ 
& $6$ &  & $1932-1988$ &  & Not reject &  &  &  & no break found even at $%
10\%$ &  \\ 
& $7$ &  & $1989-2022$ &  & Not reject &  &  &  & no break found even at $%
10\%$ &  \\ 
&  &  &  &  &  &  &  &  &  &  \\ \hline\hline
\end{tabular}
} }
\par
{\scriptsize {\footnotesize \ } }
\end{table*}

\bigskip

Results are in Table \ref{tab:empirics}; in Figure \ref{fig:FigRough}, we
also report the average temperature functions between each of the estimated
changepoints for the various values of $\alpha $. With small and medium
values of $\alpha $ (i.e., $\alpha =0$ and $\alpha =0.5$), we identify three
changepoints. The first one to be identified (corresponding to the
\textquotedblleft strongest\textquotedblright\ break) is estimated to have
occurred in $1919$. This result is essentially in agreement with the
findings in \citet{berkes2009detecting} (and also in %
\citealp{horvath1999testing}), where the first changepoint is found around $%
1926$. This estimated date corresponds to the so-called Early Twentieth
Century Warming (\citealp{hegerl2018early}), a well-documented phenomenon
which partly coincides with the well-known phenomenon of warming of the
Arctic (\citealp{bengtsson2004early}), and which \textquotedblleft still
defies full explanation\textquotedblright\ (\citealp{bronnimann2009early},
p. 735). Our estimated date is ealier than that of %
\citet{berkes2009detecting}, which could be ascribed to the estimation
error, but also to the \textquotedblleft pull\textquotedblright\ effect, on
the data, of the UK heatwave of $1911$. We also estimate a breakdate at $%
1842 $, which could be ascribed to the anthropogenic effect of the
Industrial Revolution, and again it is similar to the estimate of $1850$ in %
\citet{berkes2009detecting}. On the other hand, \citet{berkes2009detecting}
also find one changepoint in $1808$. None of our statistics finds evidence
of a changepoint around this date, even at $10\%$ nominal level; on account
of the lack of serial correlation, our data could be roughly interpreted as
falling into the \textquotedblleft \textit{i.i.d.} with measurement
error\textquotedblright\ category, for which our simulations indicate no
undersizement and excellent power even in sample sizes. Finally, we find
clear evidence of a changepoint in $1987$ ($1988$ when using $\alpha =0.5$,
and $1989$ when using $\alpha =0.65$). This is significant even at $1\%$
nominal level, which corresponds to the beginning of (rapid) global warming
- the late $1980$'s date confirms the statement, in the \textquotedblleft
State of the World 1989\textquotedblright\ WorldWatch report (%
\citealp{koutaissoff1989state}), that the 90's would be the
\textquotedblleft turnaround decade\textquotedblright\ as far as climate
change is concerned. Indeed, Figure \ref{fig:FigRough} shows very clearly
the presence of a pronounced increase in average daily temperatures between
the first and the fourth subsamples. When using $\alpha =0.65$, essentially
the same results are found, but the first changepoint is estimated at $1932$%
, i.e. one decade later than the other two changepoint estimates. This can
be read in the light of the results in Table \ref{tab:MedianMid2}, which
suggest that, as $\alpha $ increases, the estimated changepoint may be
increasingly biased.

Finally, we also considered the possible presence of changes in the
distribution of temperature data, applying the test developed in Section \ref%
{distrib}. To this end, we demeaned the data in each segment, and applied
the test in Section \ref{distrib} using $d=1$. Results are in Table \ref%
{tab:empiricsD}. No changepoints were detected for $\alpha =0$ and $\alpha
=0.5$, even at $1\%$ nominal level, suggesting that, if a changepoint is
present, this is located towards the sample endpoints; indeed, when using $%
\alpha =0.65$, the presence of one changepoint emerges even at $10\%$
nominal level, with estimated date $1996$. Comparing descriptive statistics,
skewness and kurtosis in the two subperiods seem very similar (in both cases
suggesting Gaussianity); conversely, the estimated variances seem to
indicate that there is a changepoint in the \textit{variability} of
temperatures between the two subperiods.

\bigskip

\begin{table*}[h]
\caption{{\protect\footnotesize {Changepoint detection in Central England
temperatures, $1772-2022$ - changes in the distribution of demeaned data}}}
\label{tab:empiricsD}\centering
\par
{\scriptsize {\ 
\begin{tabular}{ccccccccccc}
\hline\hline
&  &  &  &  &  &  &  &  &  &  \\ 
\multicolumn{11}{c}{Changepoint detection with $\alpha =0.00$ and $\alpha
=0.50$} \\ 
&  &  &  &  &  &  &  &  &  &  \\ 
& Iteration &  & Segment &  & Outcome &  & Estimated date &  & Notes &  \\ 
&  &  &  &  &  &  &  &  &  &  \\ 
& $1$ &  & $1772-2022$ &  & Not reject &  &  &  & no break found even at $%
10\%$ &  \\ 
&  &  &  &  &  &  &  &  &  &  \\ \hline
&  &  &  &  &  &  &  &  &  &  \\ 
\multicolumn{11}{c}{Changepoint detection with $\alpha =0.65$} \\ 
&  &  &  &  &  &  &  &  &  &  \\ 
& $1$ &  & $1772-2022$ &  & Reject &  & $1996$ &  & significant also at $1\%$
&  \\ 
& $2$ &  & $1772-1995$ &  & Not reject &  &  &  & no break found even at $%
10\%$ &  \\ 
&  &  &  &  &  &  &  &  & $%
\begin{array}{c}
\widehat{\sigma }_{N}^{2}=7.42 \\ 
sk=-0.105 \\ 
ku=3.066%
\end{array}%
$ &  \\ 
& $3$ &  & $1996-2022$ &  & Not reject &  &  &  & no break found even at $%
10\%$ &  \\ 
&  &  &  &  &  &  &  &  & $%
\begin{array}{c}
\widehat{\sigma }_{N}^{2}=6.73 \\ 
sk=0.166 \\ 
ku=3.092%
\end{array}%
$ &  \\ 
&  &  &  &  &  &  &  &  &  &  \\ \hline\hline
\end{tabular}
} }
\par
{\footnotesize {\ 
\begin{tablenotes}
      \tiny
            \item We have used the estimator of the variance $\widehat{\sigma }_{N}^{2}$
defined in (\ref{sighat}). As far as the other descriptive statistics are concerned,  \textquotedblleft $sk$\textquotedblright\ and
\textquotedblleft $ku$\textquotedblright\ represent overall measures of
skewness and kurtosis respectively.
            
\end{tablenotes}
} }
\end{table*}

\medskip

\begin{figure}[h!]
\caption{{\protect\footnotesize {Average daily temperatures in Central
England}}}
\label{fig:FigRough}\centering
\hspace{0cm} 
\includegraphics[scale=0.8]{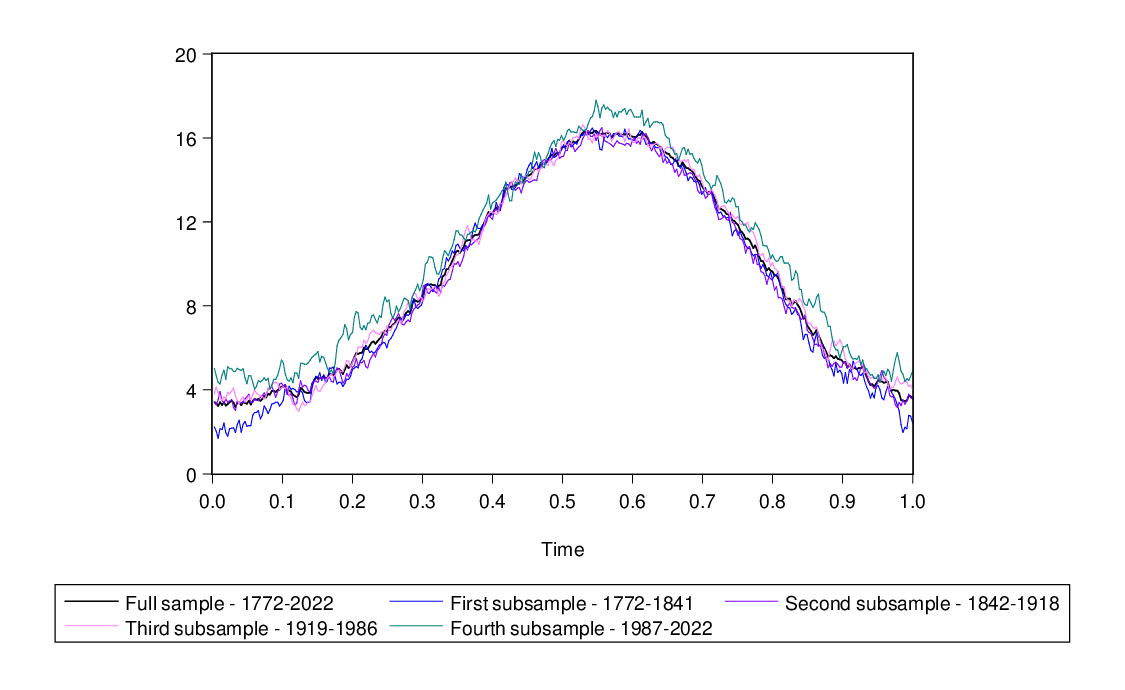} 
\end{figure}

\medskip

Descriptive statistics for the whole sample, and for the segments identified
when using $\alpha =0$, are in Table \ref{tab:empirics2}. Results using
other values of $\alpha $ (and, therefore, other estimated changepoints) are
available upon request.

\medskip

\begin{table*}[h!]
\caption{{\protect\footnotesize {Descriptive statistics for Central England
temperature data: full sample and individual segments}}}
\label{tab:empirics2}\centering
{\footnotesize {\ }}
\par
{\footnotesize {\ }}
\par
{\footnotesize {\ }}
\par
\resizebox{\textwidth}{!}{
\begin{tabular}{ccccccccccccccccc}
\hline\hline
&  &  &  &  &  &  &  &  &  &  &  &  &  &  &  &  \\ 
\multicolumn{17}{c}{Descriptive statistics - overall period $1772-2022$} \\ 
&  &  &  &  &  &  &  &  &  &  &  &  &  &  &  &  \\ 
& Average daily &  & Average low &  & Average high &  & Record low &  & 
Record high &  & First quartile &  & Median &  & Third quartile &  \\ 
&  &  &  &  &  &  &  &  &  &  &  &  &  &  &  &  \\ 
& $9.40$ $^{\circ }C$ &  & $-3.62$ $^{\circ }C$ &  & $20.99$ $^{\circ }C$ & 
& $\underset{\left( 20/01/1838\right) }{-11.90^{\circ }C}$ &  & $\underset{\left( 19/07/2022\right) }{28.10^{\circ }C}$ &  & $5.32$ $^{\circ }C$ &  & $9.34$ $^{\circ }C$ &  & $13.79$ $^{\circ }C$ &  \\ 
&  &  &  &  &  &  &  &  &  &  &  &  &  &  &  &  \\ \hline\hline
&  &  &  &  &  &  &  &  &  &  &  &  &  &  &  &  \\ 
\multicolumn{17}{c}{Descriptive statistics - subperiods} \\ 
&  &  &  &  &  &  &  &  &  &  &  &  &  &  &  &  \\ \cline{3-15}
&  &  &  &  &  &  &  &  &  &  &  &  &  &  &  &  \\ 
\multicolumn{17}{c}{Descriptive statistics - subperiod $1772-1841$} \\ 
&  &  &  &  &  &  &  &  &  &  &  &  &  &  &  &  \\ 
& Average daily &  & Average low &  & Average high &  & Record low &  & 
Record high &  & First quartile &  & Median &  & Third quartile &  \\ 
&  &  &  &  &  &  &  &  &  &  &  &  &  &  &  &  \\ 
& $9.12$ $^{\circ }C$ &  & $-4.70$ $^{\circ }C$ &  & $20.71$ $^{\circ }C$ & 
& $-11.90$ $^{\circ }C$ &  & $24.50$ $^{\circ }C$ &  & $4.87$ $^{\circ }C$ & 
& $9.14$ $^{\circ }C$ &  & $13.79$ $^{\circ }C$ &  \\ 
&  &  &  &  &  &  &  &  &  &  &  &  &  &  &  &  \\ \cline{3-15}
&  &  &  &  &  &  &  &  &  &  &  &  &  &  &  &  \\ 
\multicolumn{17}{c}{Descriptive statistics - subperiod $1842-1918$} \\ 
&  &  &  &  &  &  &  &  &  &  &  &  &  &  &  &  \\ 
& Average daily &  & Average low &  & Average high &  & Record low &  & 
Record high &  & First quartile &  & Median &  & Third quartile &  \\ 
&  &  &  &  &  &  &  &  &  &  &  &  &  &  &  &  \\ 
& $9.18$ $^{\circ }C$ &  & $-3.65$ $^{\circ }C$ &  & $20.60$ $^{\circ }C$ & 
& $-9.00$ $^{\circ }C$ &  & $24.00$ $^{\circ }C$ &  & $5.16$ $^{\circ }C$ & 
& $9.10$ $^{\circ }C$ &  & $13.51$ $^{\circ }C$ &  \\ 
&  &  &  &  &  &  &  &  &  &  &  &  &  &  &  &  \\ \cline{3-15}
&  &  &  &  &  &  &  &  &  &  &  &  &  &  &  &  \\ 
\multicolumn{17}{c}{Descriptive statistics - subperiod $1919-1986$} \\ 
&  &  &  &  &  &  &  &  &  &  &  &  &  &  &  &  \\ 
& Average daily &  & Average low &  & Average high &  & Record low &  & 
Record high &  & First quartile &  & Median &  & Third quartile &  \\ 
&  &  &  &  &  &  &  &  &  &  &  &  &  &  &  &  \\ 
& $9.48$ $^{\circ }C$ &  & $-3.28$ $^{\circ }C$ &  & $21.02$ $^{\circ }C$ & 
& $-8.70$ $^{\circ }C$ &  & $25.10$ $^{\circ }C$ &  & $5.45$ $^{\circ }C$ & 
& $9.44$ $^{\circ }C$ &  & $13.79$ $^{\circ }C$ &  \\ 
&  &  &  &  &  &  &  &  &  &  &  &  &  &  &  &  \\ \cline{3-15}
&  &  &  &  &  &  &  &  &  &  &  &  &  &  &  &  \\ 
\multicolumn{17}{c}{Descriptive statistics - subperiod $1987-2023$} \\ 
&  &  &  &  &  &  &  &  &  &  &  &  &  &  &  &  \\ 
& Average daily &  & Average low &  & Average high &  & Record low &  & 
Record high &  & First quartile &  & Median &  & Third quartile &  \\ 
&  &  &  &  &  &  &  &  &  &  &  &  &  &  &  &  \\ 
& $10.22$ $^{\circ }C$ &  & $-2.11$ $^{\circ }C$ &  & $22.28$ $^{\circ }C$ & 
& $-7.70$ $^{\circ }C$ &  & $28.10$ $^{\circ }C$ &  & $6.29$ $^{\circ }C$ & 
& $10.08$ $^{\circ }C$ &  & $14.36$ $^{\circ }C$ &  \\ 
&  &  &  &  &  &  &  &  &  &  &  &  &  &  &  &  \\ \hline\hline
\end{tabular}
}
\par
{\footnotesize \ }
\par
{\footnotesize \ }
\par
{\footnotesize 
\begin{tablenotes}
      \tiny
            \item The table contains various measures of location - for the full sample and each subsample - for the Central England temperature data. 
            
\end{tablenotes}
}
\end{table*}

\newpage

\clearpage
\renewcommand*{\thesection}{\Alph{section}}

\setcounter{subsection}{-1} \setcounter{subsubsection}{-1} %
\setcounter{equation}{0} \setcounter{lemma}{0} \setcounter{theorem}{0} %
\renewcommand{\theassumption}{C.\arabic{assumption}} 
\renewcommand{\thetheorem}{C.\arabic{theorem}} \renewcommand{\thelemma}{C.%
\arabic{lemma}} \renewcommand{\theproposition}{C.\arabic{proposition}} %
\renewcommand{\thecorollary}{C.\arabic{corollary}} \renewcommand{%
\theequation}{C.\arabic{equation}}

\section{Preliminary lemmas\label{lemmas}}

Henceforth, unless stated otherwise, we carry out our proofs for the case $%
r=d=1$, for simplicity and without loss of generality. We use the following
notation: $\left\lceil \cdot \right\rceil $ is the ceiling function, that is
the function that rounds a number to the nearest, largest integer; $C$
denotes a generic constant independent of $N$, $k$, $t$ that may change from
line to line.

\bigskip

We begin by recalling some results in \citet{berkes:horvath:rice:2013}.

\begin{lemma}
\label{l:BHR_11} We assume that Assumption \ref{as1} is satisfied. Let $%
w_{k}=w_{k}(t)=\sum_{i=1}^{k}\epsilon _{i}(t)$. Then, for each $N$, on a
suitably enlarged probability space, we may define a Gaussian process $%
\{G_{N}(u,t),u\geq 0,t\in \mathcal{T}\}$, whose distribution does not depend
on $N$, such that 
\begin{equation}
\sup_{0\leq u\leq 1}\Vert N^{-1/2}w_{\lfloor Nu\rfloor }-G_{N}(u,\cdot
)\Vert ^{2}=o_{P}(1),  \label{e:berkes_etal_th1}
\end{equation}%
where $EG_{N}(u,t)=0$, and $EG_{N}\left( u,t\right) G_{N}^{\top }\left(
u^{\prime },t^{\prime }\right) =\min \left\{ u,u^{\prime }\right\} \mathbf{D}%
\left( t,t^{\prime }\right) $ for every $N$.

\begin{proof}
For $\mathbb{R}$-valued functional observations, the desired result is
already established in Theorem 1.1 in of \citet{berkes:horvath:rice:2013}.
Minor adjustments to their proofs show they also hold for $\mathbb{R}^{r}$%
-valued functional observations.
\end{proof}
\end{lemma}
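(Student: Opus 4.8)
The plan is to establish the result by checking that Assumption \ref{as1} places us precisely within the scope of Theorem 1.1 in \citet{berkes:horvath:rice:2013}, and then to argue that the passage from $\mathbb{R}$-valued to $\mathbb{R}^{r}$-valued functional observations requires only bookkeeping. The content of the lemma is a \emph{strong} Gaussian approximation in the Hilbert space $L^{2}(\mathcal{T})$: it asserts not merely weak convergence of $N^{-1/2}w_{\lfloor Nu\rfloor}$ to a functional Brownian motion, but the existence — on an enlarged probability space — of a single Gaussian process $G_{N}$, of fixed ($N$-independent) law, that tracks the partial-sum process uniformly in $u$ in the $L^{2}(\mathcal{T})$ norm.

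First I would check that the hypotheses of the scalar-argument invariance principle are met. Assumption \ref{as1}(i) supplies the Bernoulli-shift representation $\epsilon_{\ell}=g(\eta_{\ell},\eta_{\ell-1},\ldots)$, rendering $\{\epsilon_{\ell}\}$ stationary and ergodic; Assumption \ref{as1}(ii) gives the moment bound $E\|\epsilon_{1}\|^{4+\epsilon}<\infty$; and Assumption \ref{as1}(iii) controls the coupling quantities $\big(E\|\epsilon_{1}-\epsilon_{1}^{(m)}\|^{4+\epsilon}\big)^{1/\kappa}$ in a summable fashion. These are exactly the ingredients used in \citet{berkes:horvath:rice:2013}: the decay of the coupling coefficients lets one approximate $\{\epsilon_{\ell}\}$ by $m$-dependent truncations $\{\epsilon_{\ell}^{(m)}\}$, a big-block/small-block decomposition reduces the big-block sums to a sum of nearly independent $L^{2}(\mathcal{T})$-valued summands, and a Gaussian embedding for these summands produces $G_{N}$. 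Along the way one verifies that the long-run covariance kernel $\mathbf{D}(t,t')$ in \eqref{e:def_D} is well defined as a trace-class covariance operator — the series converges because the summability in Assumption \ref{as1}(iii) together with the moment bound controls $\|E\epsilon_{0}\epsilon_{\ell}^{\top}\|$ — so that $\min\{u,u'\}\mathbf{D}(t,t')$ is a legitimate, $N$-independent covariance.

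Next I would address the vector-valued case $r>1$. The cleanest route is to view each $\epsilon_{\ell}$ as a single element of the Hilbert space $L^{2}(\mathcal{T};\mathbb{R}^{r})$ (equivalently $L^{2}$ over $\mathcal{T}\times\{1,\dots,r\}$), whose inner product and norm are exactly the $\langle\cdot,\cdot\rangle$ and $\|\cdot\|$ fixed in the Notation. Because Assumption \ref{as1} is stated directly in terms of this norm, every hypothesis — stationarity, the $(4+\epsilon)$-th moment bound, and the summable coupling condition — transfers verbatim, with no reference to the number of components. The blocking and Gaussian-embedding steps of \citet{berkes:horvath:rice:2013} use only the Hilbert-space geometry and scalar moment bounds on $\|\cdot\|$, so they run unchanged; the only alteration is that the limiting covariance becomes the matrix-valued kernel $\mathbf{D}(t,t')$, whose off-diagonal entries are controlled by the same summability estimate applied componentwise, via Cauchy--Schwarz. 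This yields the centered Gaussian process $G_{N}$ with $EG_{N}(u,t)G_{N}^{\top}(u',t')=\min\{u,u'\}\mathbf{D}(t,t')$.

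The main obstacle, and the step I would treat most carefully, is the \emph{uniformity} in $u$ in the $L^{2}(\mathcal{T})$ norm — controlling $\sup_{0\le u\le1}\|N^{-1/2}w_{\lfloor Nu\rfloor}-G_{N}(u,\cdot)\|^{2}$ rather than a fixed-$u$ bound. This requires a maximal inequality for the partial sums of the dependent functional sequence to close the gaps introduced by the small blocks and by the $m$-dependent truncation, and to render the embedding error uniformly negligible; the $(4+\epsilon)$ moment — strictly more than the second moment — is precisely what powers the requisite bound on the maxima. Once this uniform control is in place, the estimate is exactly \eqref{e:berkes_etal_th1}, and the $N$-independence of the law of $G_{N}$ is automatic from the embedding, since the target is the fixed Gaussian measure on $L^{2}(\mathcal{T};\mathbb{R}^{r})$ determined by $\mathbf{D}$.
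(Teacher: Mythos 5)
Your proposal is correct and follows essentially the same route as the paper: the paper's proof simply cites Theorem 1.1 of \citet{berkes:horvath:rice:2013} for the $\mathbb{R}$-valued case and notes that minor adjustments give the $\mathbb{R}^{r}$-valued case, which is exactly what you do — verify that Assumption \ref{as1} matches that theorem's hypotheses and observe that the argument lives entirely in the Hilbert-space geometry of $L^{2}(\mathcal{T};\mathbb{R}^{r})$, so nothing depends on $r$. Your additional detail on the blocking/coupling/embedding mechanics and the uniformity in $u$ is a faithful account of what the cited proof does rather than a departure from it.
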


\begin{lemma}
\label{l:BHR_33} Suppose $\{Y_{\ell },-\infty <\ell <\infty \}$ is a
zero-mean Bernoulli shift sequence such that, for some $s>0$, it holds that $%
E\Vert Y_{\ell }\Vert ^{2+s}<\infty $. Then, for every $j<k$, it holds that 
\begin{equation*}
E\Bigg\|\sum_{i=j+1}^{k}Y_{i}\Bigg\|^{2+s}\leq C\left( k-j\right) ^{1+s/2}.
\end{equation*}

\begin{proof}
For $\mathbb{R}$-valued functional observations, the desired result is
already established in Theorem 3.3 of \citet{berkes:horvath:rice:2013}.
Minor adjustments to their proofs show they also hold for $\mathbb{R}^{r}$%
-valued functional observations.
\end{proof}
\end{lemma}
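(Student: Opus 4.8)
The plan is to follow the martingale-approximation strategy underlying Theorem 3.3 of \citet{berkes:horvath:rice:2013}, and to verify that each ingredient survives the passage from scalar functional data to $\mathbb{R}^{r}$-valued functional data. The decisive structural fact is that the ambient space $L^{2}(\mathcal{T};\mathbb{R}^{r})$ is a separable Hilbert space, so all of the martingale moment inequalities used in the scalar case remain valid verbatim, with constants depending only on the exponent $2+s$. Throughout I would write $\|Z\|_{p}=(E\|Z\|^{p})^{1/p}$ for the $L^{p}(\Omega)$-norm of the $L^{2}(\mathcal{T};\mathbb{R}^{r})$-norm of a random element $Z$.

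First I would fix the filtration $\mathcal{F}_{\ell}=\sigma(\eta_{\ell},\eta_{\ell-1},\dots)$ generated by the innovations, and introduce the projection operators $P_{\ell}(\cdot)=E[\,\cdot\mid\mathcal{F}_{\ell}]-E[\,\cdot\mid\mathcal{F}_{\ell-1}]$. Since each $Y_{i}$ is $\mathcal{F}_{i}$-measurable and mean zero (the tail $\sigma$-field being trivial by the Bernoulli shift structure), it admits the telescoping expansion $Y_{i}=\sum_{m=0}^{\infty}P_{i-m}Y_{i}$, convergent in $L^{2+s}$. Summing over $i$ and interchanging the two sums gives $S_{j,k}:=\sum_{i=j+1}^{k}Y_{i}=\sum_{m=0}^{\infty}M_{m}$, where $M_{m}=\sum_{i=j+1}^{k}P_{i-m}Y_{i}$. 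For each fixed $m$, the summands $\{P_{i-m}Y_{i}\}_{i}$ form a martingale-difference sequence adapted to the shifted filtration $\{\mathcal{F}_{i-m}\}_{i}$, since $E[P_{i-m}Y_{i}\mid\mathcal{F}_{i-m-1}]=0$.

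Next I would estimate $\|S_{j,k}\|_{2+s}$ by Minkowski's inequality, $\|S_{j,k}\|_{2+s}\le\sum_{m\ge0}\|M_{m}\|_{2+s}$, and bound each term with the Burkholder--Davis--Gundy inequality for Hilbert-space-valued martingales, followed by the triangle inequality in $L^{(2+s)/2}$ (licit because $2+s\ge2$), obtaining $\|M_{m}\|_{2+s}\le C\bigl(\sum_{i=j+1}^{k}\|P_{i-m}Y_{i}\|_{2+s}^{2}\bigr)^{1/2}$. By stationarity $\|P_{i-m}Y_{i}\|_{2+s}=\|P_{0}Y_{m}\|_{2+s}=:\delta_{m}$ is independent of $i$, so $\|M_{m}\|_{2+s}\le C\,\delta_{m}\,(k-j)^{1/2}$ and hence $\|S_{j,k}\|_{2+s}\le C\,(k-j)^{1/2}\sum_{m\ge0}\delta_{m}$. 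The $L^{2+s}$-$m$-approximability of the Bernoulli shift — the dependence-decay condition carried over from Assumption \ref{as1} and implicit in the present hypotheses — is exactly what yields $\sum_{m\ge0}\delta_{m}<\infty$, via the standard coupling estimate $\|P_{0}Y_{m}\|_{2+s}\le C\,(E\|Y_{0}-Y_{0}^{(m)}\|^{2+s})^{1/(2+s)}$. Raising to the power $2+s$ then gives $E\|S_{j,k}\|^{2+s}\le C\,(k-j)^{1+s/2}$, as claimed.

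I expect the main obstacle to be bookkeeping of the vector-valued extension rather than a new idea: one must check that the projection decomposition converges in $L^{2+s}$ and that Burkholder's inequality holds with a dimension-free constant for $L^{2}(\mathcal{T};\mathbb{R}^{r})$-valued martingale differences. Both hold because this space is Hilbert (hence $2$-smooth with the optimal constant), which is precisely why the ``minor adjustments'' of \citet{berkes:horvath:rice:2013} suffice. The only genuinely quantitative input, the summability $\sum_{m}\delta_{m}<\infty$, is inherited unchanged from the scalar setting once the coupling coefficients are measured in the $L^{2}(\mathcal{T};\mathbb{R}^{r})$ norm; the remaining care is simply to ensure the constant $C$ absorbs this finite sum and depends only on $s$ and the dependence structure, not on $j$, $k$, or $N$.
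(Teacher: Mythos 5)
Your proposal is correct, but it is genuinely different in kind from what the paper does: the paper's entire proof is a citation of Theorem 3.3 of \citet{berkes:horvath:rice:2013} plus the remark that the passage from $\mathbb{R}$-valued to $\mathbb{R}^{r}$-valued functional data needs only minor adjustments, whereas you reconstruct a self-contained argument in the style of Wu's physical-dependence/martingale method: the telescoping decomposition $Y_{i}=\sum_{m\geq 0}P_{i-m}Y_{i}$ with $P_{\ell}=E[\,\cdot\mid\mathcal{F}_{\ell}]-E[\,\cdot\mid\mathcal{F}_{\ell-1}]$, Burkholder's inequality for Hilbert-space-valued martingales, the triangle inequality in $L^{(2+s)/2}$, stationarity, and the coupling bound $\big(E\|P_{0}Y_{m}\|^{2+s}\big)^{1/(2+s)}\leq 2\big(E\|Y_{0}-Y_{0}^{(m)}\|^{2+s}\big)^{1/(2+s)}$, which holds because $Y_{m}^{(m)}$ is independent of $\mathcal{F}_{0}$ and hence $P_{0}Y_{m}=P_{0}\big(Y_{m}-Y_{m}^{(m)}\big)$. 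Each step checks out: the telescoping expansion converges in $L^{2+s}$ since the i.i.d.\ innovations make the left tail $\sigma$-field trivial (reverse-martingale convergence plus uniform integrability from conditional Jensen), the shifted families $\{P_{i-m}Y_{i}\}_{i}$ are bona fide martingale differences, and the Burkholder constant in a separable Hilbert space depends only on the exponent $2+s$ --- which is precisely why the $\mathbb{R}^{r}$-valued extension is free, so your argument makes the paper's ``minor adjustments'' remark precise and exhibits the constant as $C_{s}\big(\sum_{m}\delta_{m}\big)^{2+s}$, independent of $j$, $k$, $N$. One caveat deserves emphasis, and you handled it correctly but it should be said plainly: under the lemma's literal hypotheses (Bernoulli shift, zero mean, $E\|Y_{1}\|^{2+s}<\infty$) the conclusion is false --- long-memory linear processes in $L^{2}(\mathcal{T})$ have partial sums growing faster than $(k-j)^{1/2}$ --- so both the cited theorem of \citet{berkes:horvath:rice:2013} and your proof additionally require the weak-dependence summability $\sum_{m}\big(E\|Y_{1}-Y_{1}^{(m)}\|^{2+s}\big)^{1/(2+s)}<\infty$, which is what the $\kappa$-root condition of Assumption \ref{as1}\textit{(iii)} delivers (the terms are eventually at most one, so the smaller root dominates). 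If your argument were to replace the citation, that hypothesis should be written into the lemma rather than treated as implicit.
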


\bigskip

For convenience, throughout the remainder of this section we use the
notation $S_{k}(t)=\sum_{i=1}^{k}X_{i}(t),\quad t\in \mathcal{T}$, and
without loss of generality under $H_{0}$ we assume $\mu _{i}=0$ for all $i$,
so that $EX_{i}(t)=E\epsilon _{i}(t)=0$. We now establish some intermediate
weighted approximation results related to the partial sums $S_{k}$.

\begin{lemma}
\label{l:weightappx} We assume that Assumption \ref{as1} is satisfied. Then,
for all $\gamma >1/2$, under $H_{0}$ it holds that 
\begin{equation}
\max_{1\leq k\leq N}\frac{\Vert S_{k}\Vert }{k^{\gamma }}=O_{P}(1),\quad
\max_{1\leq k\leq N}\frac{\Vert S_{N}-S_{k}\Vert }{(N-k)^{\gamma }}=O_{P}(1),
\label{e:Sk_Op1}
\end{equation}%
and 
\begin{equation}
\max_{1\leq k\leq N}\frac{1}{k^{\gamma }}\sum_{i=1}^{k}\big(\Vert X_{i}\Vert
^{2}-\sigma _{0}^{2}\big)=O_{P}(1),\quad \max_{1\leq k<N}\frac{1}{%
(N-k)^{\gamma }}\sum_{i=k+1}^{N}\big(\Vert X_{i}\Vert ^{2}-\sigma _{0}^{2}%
\big)=O_{P}(1).  \label{e:X2_Op1}
\end{equation}

\begin{proof}
We begin by establishing a maximal inequality for the partial sums process $%
S_{k}$. By Lemma \ref{l:BHR_33} with $Y_{\ell }=X_{\ell }$, for any $1\leq
j<k\leq N$, 
\begin{equation*}
E\Vert S_{k}-S_{j}\Vert ^{2+{\epsilon }}=E\Vert S_{k-j}\Vert ^{2+{\epsilon }%
}\leq C(k-j)^{1+{\epsilon }/2}.
\end{equation*}%
Thus, by Theorem 3.1 in \citet{moricz:serfling:stout:1982}, 
\begin{equation}
E\Big(\max_{1\leq \ell \leq k}\Vert S_{\ell }\Vert \Big)^{2+{\epsilon }}\leq
Ck^{1+{\epsilon }/2}.  \label{e:S_k_maximal_ineq}
\end{equation}%
This gives 
\begin{align*}
P&\left\{ \max_{1\leq k\leq N}\frac{\Vert S_{k}\Vert }{k^{\gamma }}>x\right\}
\\
&\leq P\left\{ \max_{1\leq m\leq \lceil \log N\rceil }\max_{\exp \left(
m-1\right) \leq k<\exp \left( m\right) }\frac{\Vert S_{k}\Vert }{k^{\gamma }}%
>x\right\} \\
&\leq \sum_{m=1}^{\lceil \log N\rceil }P\left\{ \max_{\exp \left( m-1\right)
\leq k<\exp \left( m\right) }\frac{\Vert S_{k}\Vert }{k^{\gamma }}>x\right\}
\\
&\leq \sum_{m=1}^{\lceil \log N\rceil }P\left\{ \max_{\exp \left( m-1\right)
\leq k<\exp \left( m\right) }\Vert S_{k}\Vert >xe^{\gamma (m-1)}\right\} \\
&\leq \sum_{m=1}^{\lceil \log N\rceil }P\left\{ \max_{1\leq k\leq \exp
\left( m\right) }\Vert S_{k}\Vert ^{2+{\epsilon }}>x^{2+{\epsilon }%
}e^{\gamma (2+{\epsilon })(m-1)}\right\} \\
&\leq x^{-\left( 2+{\epsilon }\right) }\sum_{m=1}^{\lceil \log N\rceil }\exp
\left( -\gamma (2+{\epsilon })(m-1)\right) E\max_{1\leq k\leq
e^{m}}\left\Vert S_{k}\right\Vert ^{2+{\epsilon }} \\
&\leq \frac{C}{x^{2+{\epsilon }}}\sum_{m=1}^{\lceil \log N\rceil }\exp
\left( m((1+{\epsilon }/2)-\gamma (2+{\epsilon }))\right) \leq \frac{C}{x^{2+%
{\epsilon }}}
\end{align*}%
This gives \eqref{e:Sk_Op1} for $S_{k}$; the argument for $S_{N}-S_{k}$ is
analogous.

We now turn to \eqref{e:X2_Op1}, and let $X_{k}^{(m)}=\epsilon
_{k}^{(m)}+\mu _{k}$ - as above, we set $\mu _{k}=0$ for simplicity and
without loss of generality. We begin by showing that, if $X_{k}$ is a
Bernoulli shift sequence satisfying Assumption \ref{as1}, then $\Vert
X_{k}\Vert ^{2}$ also is. We begin by noting that%
\begin{align*}
&\left\Vert X_{1}\right\Vert ^{2}-\big\Vert X_{1}^{(m)}\big\Vert ^{2} \\
&\quad=\left\langle X_{1},X_{1}\right\rangle -\left\langle
X_{1}^{(m)},X_{1}^{(m)}\right\rangle =\left\langle
X_{1}-X_{1}^{(m)},X_{1}+X_{1}^{(m)}\right\rangle \leq \left\Vert
X_{1}-X_{1}^{(m)}\right\Vert \left\Vert X_{1}+X_{1}^{(m)}\right\Vert ,
\end{align*}%
having used the Cauchy-Schwartz inequality in the last passage. Hence, for
any $p\geq 1$, using Minkowski's inequality%
\begin{align*}
&\left\vert \left\Vert X_{1}\right\Vert ^{2}-\left\Vert
X_{1}^{(m)}\right\Vert ^{2}\right\vert ^{p} \\
&\quad=\left\Vert X_{1}-X_{1}^{(m)}\right\Vert ^{p}\left\Vert
X_{1}+X_{1}^{(m)}\right\Vert ^{p}\leq \left\Vert
X_{1}-X_{1}^{(m)}\right\Vert ^{p}\left( \left\Vert X_{1}\right\Vert
^{p}+\left\Vert X_{1}^{(m)}\right\Vert ^{p}\right) .
\end{align*}%
Hence, taking $p=2+\widetilde{{\epsilon }}$, where $\widetilde{{\epsilon }}={%
\epsilon }/2$ (with ${\epsilon }$ as in Assumption \ref{as1}\textit{(ii)}),
and $\widetilde{\kappa }=\kappa /2>2+\widetilde{{\epsilon }}$, it follows
that 
\begin{equation*}
\left( E\left\vert \left\Vert X_{1}\right\Vert ^{2}-\left\Vert
X_{1}^{(m)}\right\Vert ^{2}\right\vert ^{2+\widetilde{{\epsilon }}}\right)
^{1/\widetilde{\kappa }}\leq C\left( E\left\Vert
X_{1}-X_{1}^{(m)}\right\Vert ^{4+{\epsilon }}\right) ^{1/\left( 2\widetilde{%
\kappa }\right) },
\end{equation*}%
whence finally%
\begin{equation*}
\sum_{m=1}^{\infty }\left( E\left\vert \left\Vert X_{1}\right\Vert
^{2}-\left\Vert X_{1}^{(m)}\right\Vert ^{2}\right\vert ^{2+\widetilde{{%
\epsilon }}}\right) ^{1/\widetilde{\kappa }}\leq C\sum_{m=1}^{\infty }\left(
E\left\Vert X_{1}-X_{1}^{(m)}\right\Vert ^{4+{\epsilon }}\right) ^{1/\kappa
}<\infty ,
\end{equation*}%
by Assumption \ref{as1}. Then, applying Lemma \ref{l:BHR_33} with $Y_{\ell
}=\Vert X_{\ell }\Vert ^{2}$, we obtain for every $j<k$, 
\begin{equation*}
E\Bigg|\sum_{\ell =j+1}^{k}\big(\Vert X_{\ell }\Vert ^{2}-\sigma _{0}^{2}%
\big)\Bigg|^{2+\widetilde{{\epsilon }}}\leq C(k-j)^{1+\widetilde{{\epsilon }}%
/2},
\end{equation*}%
which, by Theorem 3.1 in \citet{moricz:serfling:stout:1982}, gives 
\begin{equation*}
E\max_{1\leq \ell \leq k}\Bigg|\sum_{j=1}^{\ell }\big(\Vert X_{\ell }\Vert
^{2}-\sigma _{0}^{2}\big)\Bigg|^{2+\widetilde{{\epsilon }}}\leq Ck^{1+%
\widetilde{{\epsilon }}/2}.
\end{equation*}%
Henceforth, \eqref{e:X2_Op1} follows by repeating the arguments for %
\eqref{e:Sk_Op1}.
\end{proof}
\end{lemma}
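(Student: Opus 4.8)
The plan is to reduce both displays to a single template: establish a moment bound on partial-sum increments via Lemma \ref{l:BHR_33}, upgrade it to a maximal inequality over $\{1,\dots,k\}$, and then strip off the normalisation $k^{-\gamma}$ by a dyadic blocking argument in which the constraint $\gamma>1/2$ enters exactly once, as the condition that renders a geometric series summable.

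First I would prove \eqref{e:Sk_Op1}. Under $H_0$ the sequence $X_\ell=\epsilon_\ell$ is a zero-mean Bernoulli shift with $4+\epsilon$ (hence $2+\epsilon$) moments by Assumption \ref{as1}, so Lemma \ref{l:BHR_33} with $Y_\ell=X_\ell$ gives $E\Vert S_k-S_j\Vert^{2+\epsilon}\le C(k-j)^{1+\epsilon/2}$ for all $j<k$. A standard maximal inequality for sums with such an increment bound then yields $E(\max_{1\le\ell\le k}\Vert S_\ell\Vert)^{2+\epsilon}\le Ck^{1+\epsilon/2}$. To make this uniform over $1\le k\le N$ after dividing by $k^\gamma$, I would partition the indices into the geometrically growing blocks $[e^{m-1},e^m)$ for $m=1,\dots,\lceil\log N\rceil$, bound the weighted maximum over each block by replacing $k^\gamma$ with $e^{\gamma(m-1)}$ and $\Vert S_k\Vert$ by $\max_{k\le e^m}\Vert S_k\Vert$, and apply Markov's inequality block by block. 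Summing the bounds produces the series $\sum_m\exp\!\big(m[(1+\epsilon/2)-\gamma(2+\epsilon)]\big)$, whose exponent is negative precisely when $\gamma>(1+\epsilon/2)/(2+\epsilon)$; since this ratio equals $1/2$ irrespective of the moment order used, the threshold is exactly $\gamma>1/2$, reflecting the $\sqrt{k}$ scale of the partial sums. The resulting bound is uniform in $N$, i.e. $\Op{1}$, and the claim for $S_N-S_k$ follows by the same argument on the reversed sums.

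For \eqref{e:X2_Op1} the only new ingredient is to verify that $\Vert X_\ell\Vert^2$ is again a Bernoulli shift meeting the moment-plus-decay requirements of Assumption \ref{as1}; once this is in hand I would simply re-run the template above with $Y_\ell=\Vert X_\ell\Vert^2-\sigma_0^2$. The crucial algebraic step is polarisation together with Cauchy-Schwarz: $\Vert X_1\Vert^2-\Vert X_1^{(m)}\Vert^2=\langle X_1-X_1^{(m)},\,X_1+X_1^{(m)}\rangle$, so $\big|\Vert X_1\Vert^2-\Vert X_1^{(m)}\Vert^2\big|\le \Vert X_1-X_1^{(m)}\Vert\,\Vert X_1+X_1^{(m)}\Vert$. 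Raising to the power $2+\epsilon/2$ and applying H\"older/Minkowski, the coupling moments $E\Vert X_1-X_1^{(m)}\Vert^{4+\epsilon}$ dominate the corresponding moments of $\Vert X_1\Vert^2-\Vert X_1^{(m)}\Vert^2$, so the summability in Assumption \ref{as1}(iii) transfers with $\kappa$ replaced by $\kappa/2$. This places $\Vert X_\ell\Vert^2-\sigma_0^2$ within the reach of Lemma \ref{l:BHR_33} and the maximal inequality, and the dyadic blocking then closes the argument as before.

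I expect the main obstacle to be the bookkeeping in the dyadic blocking: one must tune the block scaling against the polynomial maximal-inequality bound and the weight $k^{-\gamma}$ so that the pieces assemble into a convergent geometric series, and it is exactly this calibration that pins down the sharp threshold $\gamma>1/2$. By comparison, showing that $\Vert X_\ell\Vert^2$ inherits the Bernoulli-shift structure is conceptually routine, but it demands care in tracking which moment is being used at each stage and in checking that Assumption \ref{as1}(ii)--(iii) is strong enough to supply the $2+\epsilon/2$ moments and the summable coupling bound for the squared norms.
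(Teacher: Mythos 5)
Your proposal is correct and follows essentially the same route as the paper's proof: Lemma \ref{l:BHR_33} applied to $X_\ell$ (resp.\ $\Vert X_\ell\Vert^2$) to get the increment moment bound, the M\'oricz--Serfling--Stout maximal inequality, and the geometric blocking over $[e^{m-1},e^m)$ producing the series $\sum_m \exp\big(m[(1+\epsilon/2)-\gamma(2+\epsilon)]\big)$, with $\gamma>1/2$ as the summability threshold. Your treatment of the squared norms via polarisation and Cauchy--Schwarz, transferring Assumption \ref{as1}(iii) with $\widetilde{\epsilon}=\epsilon/2$ and $\widetilde{\kappa}=\kappa/2$, is exactly the paper's argument as well.
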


\bigskip

The next lemma provides an approximation for $V_{N}(k)$ in terms of $S_{k}$.
For each $1\leq k\leq N$, recall $u=k/N$ and define 
\begin{align}
Q_{N}(k)& =2\Big(\frac{N}{k(N-k)}\Big)^{2}\Big\|S_{k}-\frac{k}{N}S_{N}\Big\|%
^{2}-\frac{2N}{k(N-k)}\sigma _{0}^{2}  \label{e:def_Qk} \\
& =\frac{2}{N}\big(u(1-u)\big)^{-2}\bigg(\Big\|N^{-1/2}\Big(S_{k}-\frac{k}{N}%
S_{N}\Big)\Big\|^{2}-\sigma _{0}^{2}u(1-u)\bigg).  \notag
\end{align}

\begin{lemma}
\label{l:replace_V_with_Q} We assume that Assumption \ref{as1} is satisfied.
Then, under $H_{0}$, it holds that, for all $0\leq \alpha <1$%
\begin{equation}
\max_{1\leq k\leq N}\Big[\frac{k}{N}\Big(1-\frac{k}{N}\Big)\Big]^{2-\alpha }%
\big|V_{N}(k)-Q_{N}(k)\big|=o_{P}\left( \frac{1}{N}\right) .
\label{e:Vk-Qk_appx_statement}
\end{equation}

\begin{proof}
Rewrite 
\begin{align*}
V_{N}(k)& =\frac{2}{k(N-k)}\sum_{i=1}^{k}\sum_{j=k+1}^{N}\Vert
X_{i}-X_{j}\Vert ^{2} \\
& \qquad \qquad \qquad -\frac{1}{\displaystyle{{\binom{k}{2}}}}\sum_{1\leq
i<j\leq k}\Vert X_{i}-X_{j}\Vert ^{2}-\frac{1}{\displaystyle{{\binom{N-k}{2}}%
}}\sum_{k<i<j\leq N}\Vert X_{i}-X_{j}\Vert ^{2} \\
& =T_{1}-T_{2}-T_{3}.
\end{align*}%
Using the identity $\Vert x-y\Vert ^{2}=\Vert x\Vert ^{2}+\Vert y\Vert
^{2}-2\langle x,y\rangle $, we obtain 
\begin{equation*}
T_{1}=\frac{2}{k}\sum_{i=1}^{k}\Vert X_{i}\Vert ^{2}+\frac{2}{N-k}%
\sum_{j=k+1}^{N}\Vert X_{i}\Vert ^{2}-\frac{4}{k(N-k)}\langle
S_{k},S_{N}-S_{k}\rangle ,
\end{equation*}%
and 
\begin{align}
T_{2}& =\frac{1}{\displaystyle{{\binom{k}{2}}}}\sum_{1\leq i\leq j\leq
k}\Vert X_{i}-X_{j}\Vert ^{2}  \notag \\
& =\frac{2}{k-1}\sum_{i=1}^{k}\Vert X_{i}\Vert ^{2}-\frac{2}{k(k-1)}\Vert
S_{k}\Vert ^{2}  \notag \\
& =\frac{2}{k}\sum_{i=1}^{k}\Vert X_{i}\Vert ^{2}-\frac{2}{k^{2}}\Vert
S_{k}\Vert ^{2}+\frac{2\sigma _{0}^{2}}{k}  \notag \\
& \qquad +\frac{2}{k(k-1)}\sum_{i=1}^{k}\big(\Vert X_{i}\Vert ^{2}-\sigma
_{0}^{2})+\frac{2\sigma _{0}^{2}}{k(k-1)}-\frac{2}{k^{2}(k-1)}\Vert
S_{k}\Vert ^{2}.  \notag
\end{align}%
Analogously, 
\begin{align}
T_{3}& =\frac{2}{N-k}\sum_{j=k+1}^{N}\Vert X_{j}\Vert ^{2}-\frac{2}{(N-k)^{2}%
}\Vert S_{N}-S_{k}\Vert ^{2}+\frac{2\sigma _{0}^{2}}{N-k}  \notag \\
& \qquad +\frac{2}{(N-k)(N-k-1)}\sum_{j=k+1}^{N}\big(\Vert X_{j}\Vert
^{2}-\sigma _{0}^{2})  \notag \\
& \qquad +\frac{2\sigma _{0}^{2}}{(N-k)(N-k-1)}-\frac{2}{(N-k)^{2}(N-k-1)}%
\Vert S_{N}-S_{k}\Vert ^{2}.  \label{e:Vk_3rdterm}
\end{align}%
Now, turning to the leading terms in $T_{1}-T_{2}-T_{3}$, observe 
\begin{align*}
-\frac{4}{k(N-k)}\langle S_{k},& S_{N}-S_{k}\rangle +\frac{2}{k^{2}}\Vert
S_{k}\Vert ^{2}+\frac{2}{(N-k)^{2}}\Vert S_{N}-S_{k}\Vert ^{2} \\
& =2\Big\|\frac{S_{k}}{k}-\frac{S_{N}-S_{k}}{N-k}\Big\|^{2} \\
& =2\Big(\frac{N}{k(N-k)}\Big)^{2}\Big\|S_{k}-\frac{k}{N}S_{N}\Big\|^{2}.
\end{align*}%
Therefore, 
\begin{align}
V_{N}(k)-Q_{N}(k)& =\frac{2}{k(k-1)}\sum_{i=1}^{k}\big(\Vert X_{i}\Vert
^{2}-\sigma _{0}^{2})+\frac{2\sigma _{0}^{2}}{k(k-1)}+\frac{2}{k^{2}(k-1)}%
\Vert S_{k}\Vert ^{2}  \label{e:Vk-Qk_decomp} \\
& \qquad +\frac{2}{(N-k)(N-k-1)}\sum_{j=k+1}^{N}\big(\Vert X_{j}\Vert
^{2}-\sigma _{0}^{2})  \notag \\
& \qquad +\frac{2\sigma _{0}^{2}}{(N-k)(N-k-1)}-\frac{2}{(N-k)^{2}(N-k-1)}%
\Vert S_{N}-S_{k}\Vert ^{2}.  \notag
\end{align}%
For the first term in \eqref{e:Vk-Qk_decomp}, applying Lemma \ref%
{l:weightappx}, when $\alpha >0$ we may take $\delta \in \lbrack 0,1/2)$ so
that $1/2<\alpha +\delta <1$, giving 
\begin{align*}
\max_{1\leq k\leq N}& \Big[\frac{k}{N}\Big(1-\frac{k}{N}\Big)\Big]^{2-\alpha
}\frac{2}{k(k-1)}\sum_{i=1}^{k}\left( \Vert X_{i}\Vert ^{2}-\sigma
_{0}^{2}\right) \\
\leq & \max_{1\leq k\leq N}\Big(\frac{k}{N}\Big)^{2-\alpha }\frac{2}{k(k-1)}%
\sum_{i=1}^{k}\big( \Vert X_{i}\Vert ^{2}-\sigma _{0}^{2}\big) \\
\leq & CN^{\alpha -2+\delta }\max_{1\leq k\leq N}\frac{1}{k^{\alpha +\delta }%
}\sum_{i=1}^{k}\big( \Vert X_{i}\Vert ^{2}-\sigma _{0}^{2}\big) %
=O_{P}(N^{\alpha +\delta -2}),\ 
\end{align*}%
and clearly ${\alpha +\delta -2}<-1$. When $\alpha =0$, taking $1/2<\delta
<1 $ immediately yields%
\begin{align*}
\max_{1\leq k\leq N}&\left[ \frac{k}{N}\left( 1-\frac{k}{N}\right) \right]
^{2}\frac{2}{k\left( k-1\right) }\sum_{i=1}^{k}\left( \Vert X_{i}\Vert
^{2}-\sigma _{0}^{2}\right) \\
\leq &\max_{1\leq k\leq N}\left( \frac{k}{N}\right) ^{2}\frac{2}{k\left(
k-1\right) }\sum_{i=1}^{k}\left( \Vert X_{i}\Vert ^{2}-\sigma _{0}^{2}\right)
\\
\leq &CN^{-2+\delta }\max_{1\leq k\leq N}\frac{1}{k^{\delta }}%
\sum_{i=1}^{k}\left( \Vert X_{i}\Vert ^{2}-\sigma _{0}^{2}\right) \\
=&O_{P}\left( N^{-2+\delta }\right) =o_{P}\left( \frac{1}{N}\right) .
\end{align*}%
Similarly, whenever $\alpha >0$ 
\begin{align*}
\max_{1\leq k\leq N}& \Big[\frac{k}{N}\Big(1-\frac{k}{N}\Big)\Big]^{2-\alpha
}\frac{2}{k^{2}(k-1)}\Vert S_{k}\Vert ^{2} \\
& \leq N^{\alpha -2}\max_{1\leq k\leq N}\frac{2}{k^{\alpha }(k-1)}\Vert
S_{k}\Vert ^{2}\leq CN^{\alpha -2}\Big(\max_{1\leq k\leq N}k^{-(\frac{1}{2}+%
\frac{\alpha }{2})}\Vert S_{k}\Vert \Big)^{2}=O_{P}(N^{\alpha -2});
\end{align*}%
when $\alpha =0$, we may take $\delta <1$ such that%
\begin{align*}
\max_{1\leq k\leq N}&\left[ \frac{k}{N}\left( 1-\frac{k}{N}\right) \right]
^{2}\frac{2}{k^{2}\left( k-1\right) }\Vert S_{k}\Vert ^{2} \\
\leq & N^{-2}\max_{1\leq k\leq N}\frac{2k^{\delta }}{k^{\delta }\left(
k-1\right) }\Vert S_{k}\Vert ^{2}\leq CN^{\delta -2}\left( \max_{1\leq k\leq
N}\frac{\Vert S_{k}\Vert }{k^{1/2+\delta /2}}\right) ^{2} \\
=&O_{P}\left( N^{\delta -2}\right) =o_{P}\left( N^{-1}\right) .
\end{align*}%
Finally, it is readily seen that 
\begin{equation*}
\max_{1\leq k\leq N}\Big[\frac{k}{N}\Big(1-\frac{k}{N}\Big)\Big]^{2-\alpha }%
\frac{2\sigma _{0}^{2}}{k(k-1)}=O(N^{\alpha -2}).
\end{equation*}%
Analogous arguments apply for the remaining terms in \eqref{e:Vk-Qk_decomp},
ultimately giving \eqref{e:Vk-Qk_appx_statement}.
\end{proof}
\end{lemma}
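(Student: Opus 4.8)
The plan is to establish the approximation by a direct algebraic expansion of the empirical energy statistic $V_N(k)$ in (\ref{v_nk}), matching its quadratic part against the squared-CUSUM expression $Q_N(k)$ in (\ref{e:def_Qk}) and then controlling the residual via the weighted maximal inequalities of Lemma \ref{l:weightappx}. Concretely, I would first write $V_N(k)=T_1-T_2-T_3$, where $T_1$ is the between-segment double sum, $T_2$ the within-first-segment sum, and $T_3$ the within-second-segment sum, and expand each using $\|x-y\|^2=\|x\|^2+\|y\|^2-2\langle x,y\rangle$. This recasts every piece as a combination of the segment energies $\sum_{i=1}^k\|X_i\|^2$ and $\sum_{j=k+1}^N\|X_j\|^2$, the squared partial sums $\|S_k\|^2$ and $\|S_N-S_k\|^2$, and the cross term $\langle S_k,S_N-S_k\rangle$. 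The key point is that, after collecting terms, the combination $-\tfrac{4}{k(N-k)}\langle S_k,S_N-S_k\rangle+\tfrac{2}{k^2}\|S_k\|^2+\tfrac{2}{(N-k)^2}\|S_N-S_k\|^2$ equals $2\big(N/(k(N-k))\big)^2\|S_k-\tfrac{k}{N}S_N\|^2$, which is precisely the leading part of $Q_N(k)$; the terms linear in the segment energies cancel against $T_1$, and the $\sigma_0^2$ centering is produced by passing from $\tfrac{1}{k-1}$ to $\tfrac{1}{k}$ normalisations.

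The residual $V_N(k)-Q_N(k)$ then consists only of lower-order pieces, arising because $T_2$ and $T_3$ carry the combinatorial normalisations $\binom{k}{2}$ and $\binom{N-k}{2}$ rather than $k^2$ and $(N-k)^2$. Tracking the difference $\tfrac{1}{k-1}-\tfrac{1}{k}=\tfrac{1}{k(k-1)}$ and its right-segment analogue, these residual terms are of three kinds: centred quadratic sums such as $\tfrac{2}{k(k-1)}\sum_{i=1}^k(\|X_i\|^2-\sigma_0^2)$; deterministic $\sigma_0^2$ terms such as $\tfrac{2\sigma_0^2}{k(k-1)}$; and partial-sum terms such as $\tfrac{2}{k^2(k-1)}\|S_k\|^2$, together with the mirror-image terms for the right segment.

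For the bound I would use the elementary weight estimate $[u(1-u)]^{2-\alpha}\le u^{2-\alpha}=(k/N)^{2-\alpha}$ for the left-segment residuals (and $[u(1-u)]^{2-\alpha}\le(1-u)^{2-\alpha}$ for the right-segment ones), and then invoke Lemma \ref{l:weightappx}. For the centred-sum residual, using $\tfrac{1}{k(k-1)}\sim k^{-2}$ and writing $k^{-\alpha}\sum_{i=1}^k(\|X_i\|^2-\sigma_0^2)=k^{\delta}\cdot k^{-(\alpha+\delta)}\sum_{i=1}^k(\|X_i\|^2-\sigma_0^2)$ with $k^{\delta}\le N^{\delta}$, the maximal inequality (valid for exponent $\alpha+\delta>1/2$) yields an overall rate $N^{\alpha+\delta-2}$; choosing $\delta$ so that $1/2<\alpha+\delta<1$, which is possible precisely because $\alpha<1$, gives $\alpha+\delta-2<-1$, hence $o_P(N^{-1})$. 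The $\|S_k\|^2$ residual is handled identically using $\max_k k^{-(1/2+\alpha/2)}\|S_k\|=O_P(1)$, and the deterministic $\sigma_0^2$ terms are $O(N^{\alpha-2})=o(N^{-1})$ outright. Summing the finitely many residuals delivers (\ref{e:Vk-Qk_appx_statement}).

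I expect the only genuine obstacle to be the exponent bookkeeping near $\alpha=1$: the admissible window $(1/2,1)$ for $\alpha+\delta$ collapses as $\alpha\uparrow 1$, and the strict requirement $\gamma>1/2$ in Lemma \ref{l:weightappx} — itself a consequence of the finite-variance CLT scaling — is exactly what obstructs the endpoint $\alpha=1$. The case $\alpha=0$ must be treated separately, since then the weight $u^2$ supplies one extra factor of $N^{-1}$ and any $\delta\in(1/2,1)$ already forces $N^{\delta-2}=o_P(N^{-1})$; here no cancellation beyond the polarisation identity is needed, only a careful matching of the $\binom{k}{2}$ and $\binom{N-k}{2}$ normalisations.
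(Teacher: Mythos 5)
Your proposal is correct and follows essentially the same route as the paper's own proof: the same polarisation-identity expansion of $T_1-T_2-T_3$, the same identification of the squared-CUSUM leading term, and the same residual bounds via Lemma \ref{l:weightappx} with the identical exponent bookkeeping (choosing $\delta$ with $1/2<\alpha+\delta<1$ for $\alpha>0$, and $\delta\in(1/2,1)$ for $\alpha=0$). Your closing remarks on the collapse of the admissible window as $\alpha\uparrow 1$ and on the role of the $\gamma>1/2$ constraint match the paper's own discussion of why the case $\alpha=1$ is excluded.
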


\begin{lemma}
\label{l:Q_N_boundary_bound} We assume that Assumption \ref{as1} is
satisfied. Then, for every fixed $x>0$, it holds that, for all $0\leq \alpha
<1$ 
\begin{equation}
\lim_{s\rightarrow 0}\limsup_{N\rightarrow \infty }P\Big\{N\max_{1\leq k\leq
Ns}\Big[\frac{k}{N}\Big(1-\frac{k}{N}\Big)\Big]^{2-\alpha }|Q_{N}(k)|>x\Big\}%
=0,  \label{e:Q_k_bound_near0}
\end{equation}%
and 
\begin{equation}
\lim_{s\rightarrow 0}\limsup_{N\rightarrow \infty }P\Big\{N\max_{(1-s)N\leq
k\leq N}\Big[\frac{k}{N}\Big(1-\frac{k}{N}\Big)\Big]^{2-\alpha }|Q_{N}(k)|>x%
\Big\}=0.  \label{e:Q_k_bound_near1}
\end{equation}

\begin{proof}
We first turn to \eqref{e:Q_k_bound_near0}. From \eqref{e:def_Qk}, for $%
1\leq k\leq Ns$, 
\begin{align}
N\Big[\frac{k}{N}\Big(1-\frac{k}{N}\Big)\Big]^{2-\alpha }|Q_{N}(k)|& \leq 2%
\Big[\frac{k}{N}\Big(1-\frac{k}{N}\Big)\Big]^{-\alpha }\Bigg(\Big\|N^{-1/2}%
\Big(S_{k}-\frac{k}{N}S_{N}\Big)\Big\|^{2}-\sigma _{0}^{2}\Big[\frac{k}{N}%
\Big(1-\frac{k}{N}\Big)\Big]\Bigg)  \notag \\
& \leq C\Bigg(N^{\alpha -1}\frac{\Vert S_{k}\Vert ^{2}}{k^{\alpha }}+\Big(%
\frac{k}{N}\Big)^{2-\alpha }\frac{\Vert S_{N}\Vert ^{2}}{N}+\sigma _{0}^{2}%
\Big(\frac{k}{N}\Big)^{1-\alpha }\Bigg)  \label{e:Q_k_bound}
\end{align}%
For the first term on the right-hand side of \eqref{e:Q_k_bound}, using the
maximal inequality \eqref{e:S_k_maximal_ineq}, 
\begin{align*}
P\Big\{N^{\alpha -1}\max_{1\leq k\leq Ns}k^{-\alpha }\Vert S_{k}\Vert ^{2}>x%
\Big\}& =P\Big\{N^{\alpha -1}\max_{1\leq j\leq \left\lceil \log
(Ns)\right\rceil }\max_{\exp \left( j-1\right) \leq k<\exp \left( j\right)
}k^{-\alpha }\Vert S_{k}\Vert ^{2}>x\Big\} \\
& \leq \sum_{j=1}^{\lceil \log (Ns)\rceil }P\Big\{N^{\alpha -1}\max_{\exp
\left( j-1\right) \leq k<\exp \left( j\right) }k^{-\alpha }\Vert S_{k}\Vert
^{2}>x\Big\} \\
& \leq \sum_{j=1}^{\lceil \log (Ns)\rceil }P\Big\{\max_{e^{j-1}\leq k\leq
e^{j}}\Vert S_{k}\Vert ^{2+\delta }>\big(xe^{\alpha (j-1)}N^{1-\alpha }\big)%
^{1+\delta /2}\Big\} \\
& \leq C\frac{N^{(\alpha -1)(1+\delta /2)}}{x^{1+\delta /2}}%
\sum_{j=1}^{\lceil \log (Ns)\rceil }e^{-(\alpha -1)(1+\delta /2)} \\
& \leq C\frac{N^{(\alpha -1)(1+\delta /2)}}{x^{1+\delta /2}}(Ns)^{(1-\alpha
)(1+\delta /2)}\leq C\frac{s^{(1-\alpha )(1+\delta /2)}}{x^{1+\delta /2}}.
\end{align*}%
This implies, for each $x>0$, 
\begin{equation*}
\lim_{s\rightarrow 0}\limsup_{N\rightarrow \infty }P\Big\{N^{\alpha
-1}\max_{1\leq k\leq Ns}k^{-\alpha }\Vert S_{k}\Vert ^{2}>x\Big\}=0.
\end{equation*}%
For the second term in \eqref{e:Q_k_bound}, since $\Vert S_{N}\Vert
^{2}/N=O_{P}(1)$, 
\begin{equation*}
\lim_{s\rightarrow 0}\limsup_{N\rightarrow \infty }P\bigg\{\max_{1\leq k\leq
Ns}\Big(\frac{k}{N}\Big)^{2-\alpha }\frac{\Vert S_{N}\Vert ^{2}}{N}>x\bigg\}%
=\lim_{s\rightarrow 0}\limsup_{N\rightarrow \infty }P\bigg\{\frac{\Vert
S_{N}\Vert ^{2}}{N}>xs^{\alpha -2}\bigg\}=0.
\end{equation*}%
Finally, for $1\leq k\leq Ns$, the third term in \eqref{e:Q_k_bound} clearly
tends to $0$ uniformly in $N$ as $s\rightarrow 0$ since $\alpha <1$, which
gives \eqref{e:Q_k_bound_near0}. Turning to \eqref{e:Q_k_bound_near1}, for
each $k$ in the range $N(1-s)\leq k\leq N$, since 
\begin{equation*}
S_{k}-\frac{k}{N}S_{N}=(S_{N}-S_{k})-\Big(1-\frac{k}{N}\Big)S_{N},
\end{equation*}%
we have 
\begin{equation*}
N\Big[\frac{k}{N}\Big(1-\frac{k}{N}\Big)\Big]^{2-\alpha }|Q_{N}(k)|\leq C%
\bigg(N^{\alpha -1}\frac{\Vert S_{N}-S_{k}\Vert ^{2}}{k^{\alpha }}+\Big(1-%
\frac{k}{N}\Big)^{2-\alpha }\frac{\Vert S_{N}\Vert ^{2}}{N}+\Big(1-\frac{k}{N%
}\Big)^{1-\alpha }\bigg).
\end{equation*}%
In view of \eqref{e:Q_k_bound}, the same arguments for %
\eqref{e:Q_k_bound_near0} therefore give \eqref{e:Q_k_bound_near1}, \textit{%
mutatis mutandis}.
\end{proof}
\end{lemma}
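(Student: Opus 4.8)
The plan is to bound the weighted quantity $N[\tfrac{k}{N}(1-\tfrac{k}{N})]^{2-\alpha}|Q_N(k)|$ directly in terms of the partial sums $S_k$ and $S_N$, and then treat the small-$k$ range by a maximal inequality. Using the definition of $Q_N(k)$ in \eqref{e:def_Qk}, the prefactor $N[u(1-u)]^{2-\alpha}$ cancels against the $\tfrac{2}{N}(u(1-u))^{-2}$ appearing in $Q_N(k)$, leaving $2[u(1-u)]^{-\alpha}$ multiplying the centred quantity $\|N^{-1/2}(S_k-\tfrac{k}{N}S_N)\|^2-\sigma_0^2 u(1-u)$. Since $k\le Ns$ keeps $1-u$ bounded away from $0$, after the triangle inequality $\|S_k-\tfrac{k}{N}S_N\|^2\le 2\|S_k\|^2+2(k/N)^2\|S_N\|^2$ I expect to reach a three-term bound of the form
\begin{equation*}
N\Big[\tfrac{k}{N}\big(1-\tfrac{k}{N}\big)\Big]^{2-\alpha}|Q_N(k)|\le C\Big(N^{\alpha-1}\tfrac{\|S_k\|^2}{k^{\alpha}}+\big(\tfrac{k}{N}\big)^{2-\alpha}\tfrac{\|S_N\|^2}{N}+\big(\tfrac{k}{N}\big)^{1-\alpha}\Big).
\end{equation*}

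The last two terms are handled immediately. The deterministic third term is at most $s^{1-\alpha}$ uniformly over $1\le k\le Ns$, which vanishes as $s\to0$ precisely because $\alpha<1$. For the middle term I would use $\|N^{-1/2}S_N\|^2=\Op{1}$ (a consequence of the invariance principle in Lemma \ref{l:BHR_11}) together with $(k/N)^{2-\alpha}\le s^{2-\alpha}$, so that its maximal contribution is $\Op{s^{2-\alpha}}$ and disappears in the iterated limit.

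The main obstacle is the genuinely maximal first term $N^{\alpha-1}\max_{1\le k\le Ns}k^{-\alpha}\|S_k\|^2$, where the weight $k^{-\alpha}$ blocks a direct application of any single maximal inequality. My plan is to invoke the moment bound $E\big(\max_{1\le\ell\le k}\|S_\ell\|\big)^{2+\epsilon}\le Ck^{1+\epsilon/2}$ established inside the proof of Lemma \ref{l:weightappx} (itself obtained from Lemma \ref{l:BHR_33} and the Móricz--Serfling--Stout inequality), and to slice the range $1\le k\le Ns$ into the blocks $e^{j-1}\le k<e^{j}$ for $1\le j\le\lceil\log(Ns)\rceil$. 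On each block I bound $k^{-\alpha}\le e^{-\alpha(j-1)}$, apply Markov's inequality at exponent $1+\epsilon/2$, and control the maximum over the block by the inequality above. Summing the resulting geometric series in $j$ should give a bound of the shape $Cs^{(1-\alpha)(1+\epsilon/2)}x^{-(1+\epsilon/2)}$, and since $\alpha<1$ the exponent on $s$ is positive, so the quantity tends to $0$ as $s\to0$; this is exactly the step where $\alpha<1$ is indispensable.

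Finally, for the boundary near $k=N$ in \eqref{e:Q_k_bound_near1}, I would exploit the identity $S_k-\tfrac{k}{N}S_N=(S_N-S_k)-(1-\tfrac{k}{N})S_N$, which reduces everything to the reversed partial sums $S_N-S_k$. Lemma \ref{l:weightappx} supplies the identical maximal control for $\|S_N-S_k\|$, so the same three-term decomposition and dyadic blocking argument apply \emph{mutatis mutandis}, with the roles of $k$ and $N-k$ interchanged.
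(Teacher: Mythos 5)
Your proposal is correct and follows essentially the same route as the paper's own proof: the identical three-term bound obtained by cancelling the weights in $Q_N(k)$, the same exponential blocking $e^{j-1}\le k<e^{j}$ combined with the maximal inequality \eqref{e:S_k_maximal_ineq} and Markov's inequality to control $N^{\alpha-1}\max_k k^{-\alpha}\|S_k\|^2$, and the same identity $S_k-\tfrac{k}{N}S_N=(S_N-S_k)-(1-\tfrac{k}{N})S_N$ for the right endpoint. No gaps; the argument is complete as sketched.
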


\begin{lemma}
\label{l:Delta_boundary_bound} For each fixed $x>0$, it holds that 
\begin{equation}
\begin{gathered} \lim_{s\rightarrow 0}P\bigg\{\sup_{0\leq u\leq s}u^{-\alpha
}|\Delta (u)|>x\bigg\}=0,\\ \mathnormal{and}\quad \lim_{s\rightarrow
0}P\bigg\{\sup_{1-s\leq u\leq 1}(1-u)^{-\alpha }|\Delta (u)|>x\bigg\}=0,
\label{e:estimate_on_Delta} \end{gathered}
\end{equation}
for all $0\leq \alpha <1$.

\begin{proof}
First note 
\begin{equation}
|\Delta (u)|\leq \int |\Gamma (u,t)|^{2}dt+\sigma _{0}^{2}u(1-u),
\label{ssG}
\end{equation}%
and 
\begin{align*}
\{\Gamma (1-u,t),0\leq u\leq 1,t\in \mathcal{T}\}& \overset{\mathcal{D}}{=}%
\{\Gamma (u,t),0\leq u\leq 1,t\in \mathcal{T}\} \\
& \overset{\mathcal{D}}{=}\{G(u,t)-uG(1,t),0\leq u\leq 1,t\in \mathcal{T}\},
\end{align*}%
where $\{G(u,t),u\geq 0,t\in \mathcal{T}\}$ is a Gaussian process with $%
EG(u,t)=0$ and $EG(u,t)G (u^{\prime },t^{\prime} )=\min \{u,u^{\prime }\}%
\mathbf{D}(t,t^{\prime })$, which can be verified by checking the covariance
functions. Thus, it suffices to establish 
\begin{equation}
\lim_{s\rightarrow 0}P\bigg\{\sup_{0\leq u\leq s}u^{-\alpha }\int
|G(u,t)|^{2}dt>x\bigg\}=0.  \label{implied}
\end{equation}%
Note that for each $c>0$, it holds that $\left\{ G(cu,t),u\geq 0,t\in 
\mathcal{T}\right\} \overset{\mathcal{D}}{=}\left\{ c^{1/2}G(u,t),u\geq
0,t\in \mathcal{T}\right\} $. Thus, for each $s>0$, we have 
\begin{align}
\sup_{0\leq u\leq s}u^{-\alpha }\int |G(u,t)|^{2}dt&=\sup_{0\leq u^{\prime
}\leq 1}\left( su^{\prime }\right) ^{-\alpha }\int \left\vert G\left(
su^{\prime },t\right) \right\vert ^{2}dt  \notag \\
&\overset{\mathcal{D}}{=}s^{1-\alpha }\sup_{0\leq v\leq 1}v^{-\alpha }\int
\left\vert G\left( v,t\right) \right\vert ^{2}dt.  \label{implies}
\end{align}%
Following exactly the same logic as in the proof of (\ref{integral-test}) in
Theorem \ref{th-1}, it follows that $\sup_{0\leq u\leq 1}v^{-\alpha }\int
|G(u,t)|^{2}dt=O_{P}\left( 1\right) $, which by (\ref{implies}) implies (\ref%
{implied}).
\end{proof}
\end{lemma}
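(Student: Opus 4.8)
The plan is to handle the two limits symmetrically and to separate the deterministic and random parts of $\Delta$. First I would note the crude bound $|\Delta(u)|\le \int_{\mathcal{T}}|\Gamma(u,t)|^2\,dt+\sigma_0^2 u(1-u)$, which disposes of the deterministic piece immediately: since $\alpha<1$, $\sup_{0\le u\le s}u^{-\alpha}\sigma_0^2 u(1-u)=\sigma_0^2\sup_{0\le u\le s}u^{1-\alpha}(1-u)\to 0$ as $s\to 0$. So matters reduce to controlling $\sup_{0\le u\le s}u^{-\alpha}\int_{\mathcal{T}}|\Gamma(u,t)|^2\,dt$. A direct comparison of covariance kernels shows $\{\Gamma(1-u,t)\}\overset{\mathcal{D}}{=}\{\Gamma(u,t)\}$, so the estimate near $u=1$ is the estimate near $u=0$ with $u$ replaced by $1-u$, and it is enough to treat a neighbourhood of the origin.

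Next I would replace the bridge $\Gamma$ by an unconditioned Gaussian process $G$ with kernel $\min\{u,u'\}\mathbf{D}(t,t')$; matching covariances gives the representation $\{\Gamma(u,t)\}\overset{\mathcal{D}}{=}\{G(u,t)-uG(1,t)\}$, and since the correction term contributes only $u^{2-\alpha}\int_{\mathcal{T}}|G(1,t)|^2\,dt=u^{2-\alpha}\Op{1}$ uniformly on $[0,s]$, it suffices to bound $\sup_{0\le u\le s}u^{-\alpha}\int_{\mathcal{T}}|G(u,t)|^2\,dt$. The structural fact I would exploit is the self-similarity $\{G(cu,t)\}\overset{\mathcal{D}}{=}\{c^{1/2}G(u,t)\}$ for each fixed $c>0$, which holds because the kernel is linear in $\min\{u,u'\}$. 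Taking $c=s$ and substituting $u=su'$ yields the distributional identity $\sup_{0\le u\le s}u^{-\alpha}\int_{\mathcal{T}}|G(u,t)|^2\,dt\overset{\mathcal{D}}{=}s^{1-\alpha}\sup_{0\le v\le 1}v^{-\alpha}\int_{\mathcal{T}}|G(v,t)|^2\,dt$.

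The argument then closes as soon as $\sup_{0\le v\le 1}v^{-\alpha}\int_{\mathcal{T}}|G(v,t)|^2\,dt=\Op{1}$, because multiplying an $\Op{1}$ quantity by $s^{1-\alpha}\to 0$ (where $\alpha<1$ is indispensable) drives the probability in the claim to zero. I expect this boundedness to be the main obstacle: it is a weighted near-origin growth bound for the Hilbert-space-valued process $G$, and the weight $v^{-\alpha}$ only just controls $\int_{\mathcal{T}}|G(v,t)|^2\,dt$, which by the law of the iterated logarithm is of order $v\log\log(1/v)$ as $v\to 0$ and hence $o(v^{\alpha})$ precisely when $\alpha<1$. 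Rather than rederiving it, I would invoke the companion estimate already obtained in the proof of Theorem \ref{th-1}, which establishes exactly this $\Op{1}$ bound and underlies the a.s.\ finiteness of the limit there.
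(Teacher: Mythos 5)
Your proposal is correct and follows essentially the same route as the paper's own proof: the same initial bound on $|\Delta(u)|$, the same covariance-matching identities $\{\Gamma(1-u,t)\}\overset{\mathcal{D}}{=}\{\Gamma(u,t)\}\overset{\mathcal{D}}{=}\{G(u,t)-uG(1,t)\}$, the same self-similarity/scaling step yielding the factor $s^{1-\alpha}$, and the same appeal to the $O_{P}(1)$ bound from the proof of Theorem \ref{th-1}. Your explicit handling of the deterministic term $\sigma_0^2 u(1-u)$ and of the correction term $u^{2-\alpha}\int|G(1,t)|^2\,dt$ only spells out reductions the paper leaves implicit.
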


\begin{lemma}
\label{refinement}We assume that Assumption \ref{as1} and (\ref{k-star}) are
satisfied, and that, as $N\rightarrow \infty $, (\ref{vanishing}) holds.
Then it holds that 
\begin{equation}
\Vert \mathcal{\delta }\Vert ^{2}\left( \widehat{k}_{N}-k^{\ast }\right)
=O_{P}\left( 1\right) .  \label{refine}
\end{equation}
\end{lemma}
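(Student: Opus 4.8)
The plan is to treat $\widehat{k}_{N}$ as the maximizer over $k$ of the weighted objective $R_{N}(k)=\big[\tfrac{k}{N}\big(1-\tfrac{k}{N}\big)\big]^{2-\alpha}|V_{N}(k)|$ and to run a peeling argmax-rate argument: I will show that for every $\eta>0$, once $N$ is large, the probability that the maximizer lies at distance at least $M\Vert\mathcal{\delta}\Vert^{-2}$ from $k^{\ast}$ tends to $0$ as $M\to\infty$, which is exactly \eqref{refine}. First I would invoke the consistency $\widehat{\theta}_{N}\overset{\mathcal{P}}{\rightarrow}\theta$ from Theorem \ref{th-3} to restrict attention to the window $|k-k^{\ast}|\le\eta N$, on which $u=k/N$ stays in a fixed compact subinterval of $(0,1)$ around $\theta$, so that the weight $(u(1-u))^{-\alpha}$ and its derivative are bounded. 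On this window I replace $V_{N}(k)$ by the CUSUM proxy $Q_{N}(k)$ of \eqref{e:def_Qk} via (the alternative version of) Lemma \ref{l:replace_V_with_Q}; writing $S_{k}=\sum_{i\le k}(\mu_{i}+\epsilon_{i})$ and decomposing $S_{k}-\tfrac{k}{N}S_{N}$ into its deterministic ``tent'' mean $f_{N}(k)$ and the bridge noise $\xi_{k}=E_{k}-\tfrac{k}{N}E_{N}$ (with $E_{k}=\sum_{i\le k}\epsilon_{i}$), I obtain, up to the bounded weight factor and an additive $O(1)$ from the $\sigma_{0}^{2}$ term, that $\tfrac12 N(u(1-u))^{2-\alpha}V_{N}(k)$ equals, to leading order, $(u(1-u))^{-\alpha}\big(\Vert N^{-1/2}f_{N}(k)\Vert^{2}+\tfrac{2}{N}\langle f_{N}(k),\xi_{k}\rangle+\Vert N^{-1/2}\xi_{k}\Vert^{2}\big)$. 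Since $f_{N}(k^{\ast})=\tfrac{k^{\ast}(N-k^{\ast})}{N}\mathcal{\delta}$ gives a signal of order $N\Vert\mathcal{\delta}\Vert^{2}\to\infty$, the quantity inside $|V_{N}|$ is positive with high probability and the absolute value may be dropped.

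Second, I would read off the behaviour near $k^{\ast}$. Because $f_{N}(k)=\tfrac{k(N-k^{\ast})}{N}\mathcal{\delta}$ for $k\le k^{\ast}$ and $f_{N}(k)=\tfrac{k^{\ast}(N-k)}{N}\mathcal{\delta}$ for $k>k^{\ast}$, the signal $\Vert N^{-1/2}f_{N}(k)\Vert^{2}$ is a weighted tent of height of order $N\Vert\mathcal{\delta}\Vert^{2}$ with a kink at $u=\theta$; combined with the smooth weight, the deterministic part of $R_{N}(k^{\ast})-R_{N}(k)$ is bounded below by $c\,\Vert\mathcal{\delta}\Vert^{2}\,|k-k^{\ast}|$ for some $c>0$, the two one-sided slopes being precisely the $m_{\alpha}$ of Theorem \ref{th-3}, which are strictly positive for $0\le\alpha<1$. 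For $k=k^{\ast}+m$ the only non-negligible random term is the cross term, whose increment from $k^{\ast}$ is, up to a bounded constant, $\langle\mathcal{\delta},\xi_{k}-\xi_{k^{\ast}}\rangle=\sum_{i=k^{\ast}+1}^{k}\langle\mathcal{\delta},\epsilon_{i}\rangle$ plus a lower-order term of order $(m/N)\langle\mathcal{\delta},E_{N}\rangle$; the pure-noise increment $N^{-1}(\Vert\xi_{k}\Vert^{2}-\Vert\xi_{k^{\ast}}\Vert^{2})$ is of order $\sqrt{m/N}$ and is dominated by the drift once $m\gtrsim\Vert\mathcal{\delta}\Vert^{-2}$, because $N\Vert\mathcal{\delta}\Vert^{2}\to\infty$. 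Hence the event $\{R_{N}(k^{\ast}+m)\ge R_{N}(k^{\ast})\}$ forces $\big|\sum_{i=k^{\ast}+1}^{k^{\ast}+m}\langle\mathcal{\delta},\epsilon_{i}\rangle\big|\ge c'\Vert\mathcal{\delta}\Vert^{2}m$.

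Third, I would bound this event by peeling over dyadic blocks $m\in(2^{j-1}m_{M},2^{j}m_{M}]$ with $m_{M}=M\Vert\mathcal{\delta}\Vert^{-2}$. Applying Lemma \ref{l:BHR_33} to $Y_{i}=\langle\mathcal{\delta},\epsilon_{i}\rangle=\Vert\mathcal{\delta}\Vert\langle\rho_{N},\epsilon_{i}\rangle$, together with the Móricz--Serfling--Stout maximal inequality exactly as in the proof of Lemma \ref{l:weightappx}, gives $E\max_{m\le m_{j}}\big|\sum_{i=k^{\ast}+1}^{k^{\ast}+m}\langle\mathcal{\delta},\epsilon_{i}\rangle\big|^{2+\epsilon}\le C\Vert\mathcal{\delta}\Vert^{2+\epsilon}m_{j}^{1+\epsilon/2}$ with $m_{j}=2^{j}m_{M}$. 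Markov's inequality then yields a per-block probability of order $(2^{j}M)^{-1-\epsilon/2}$, the powers of $\Vert\mathcal{\delta}\Vert$ cancelling precisely because $m_{M}\propto\Vert\mathcal{\delta}\Vert^{-2}$; summing the resulting geometric series over $j\ge1$ and over the two sides of $k^{\ast}$ produces a bound of order $M^{-1-\epsilon/2}\to0$ as $M\to\infty$, uniformly in $N$. Combined with the consistency-based reduction to $|k-k^{\ast}|\le\eta N$, this establishes \eqref{refine}; the constant $\sigma^{2}>0$ in Theorem \ref{th-3} being fixed, it does not affect the $O_{P}(1)$ conclusion.

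The main obstacle lies in the first two steps rather than in the peeling. I expect the delicate part to be making the linearization uniform over the whole window $|k-k^{\ast}|\le\eta N$: one must verify that Lemma \ref{l:replace_V_with_Q} remains valid under $H_{A}$ with the mean present, so that the $V_{N}$-to-$Q_{N}$ error stays $o_{P}$ relative to $\Vert\mathcal{\delta}\Vert^{2}|k-k^{\ast}|$, and one must control the remainders from Taylor-expanding both the tent signal and the weight $(u(1-u))^{-\alpha}$, showing that every contribution other than the linear drift and the cross term is uniformly negligible on each dyadic block. Checking that the net drift slope stays bounded below by a positive constant (and correctly tracking the interplay between the kink of the tent and the derivative of the weight, which together produce $m_{\alpha}$) is the one genuinely $\alpha$-dependent computation and is where the argument is most easily mishandled.
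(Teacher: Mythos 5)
Your proposal is correct and follows essentially the same route as the paper's proof: reduction to a compact window $aN\leq k\leq bN$ via consistency of $\widehat{\theta}_{N}$, isolation of a deterministic tent-shaped drift bounded below by $c\,\Vert\delta\Vert^{2}|k-k^{\ast}|$, control of the dominant stochastic increment $\sum_{i=k^{\ast}+1}^{k}\langle\delta,\epsilon_{i}\rangle$ through Lemma \ref{l:BHR_33} and the M\'oricz--Serfling--Stout maximal inequality with geometric blocking (your dyadic peeling, with the powers of $\Vert\delta\Vert$ cancelling because the window scale is $\propto\Vert\delta\Vert^{-2}$, is exactly the device the paper uses to prove its maximal inequality (\ref{fclt2})), and conclusion by letting the threshold constant $M$ (the paper's $C$) tend to infinity so that the $O_{P}(M^{-1/2})$-type noise contribution is beaten by the constant drift slope. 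The one step you flag but defer --- showing that the non-CUSUM remainders of $V_{N}$ stay uniformly negligible relative to $\Vert\delta\Vert^{2}|k-k^{\ast}|$ under $H_{A}$ --- is precisely what the paper carries out via its terms $\widetilde{V}_{k,7}$ through $\widetilde{V}_{k,10}$, using the same maximal-inequality technology you already deploy, so your plan closes without any new idea.
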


\begin{proof}
We begin by noting that $\widehat{\theta }_{N}-\theta =o_{P}\left( 1\right) $
implies that $\widehat{k}_{N}-k^{\ast }=o_{P}\left( N\right) $. This entails
that in our calculations below we can assume that $aN\leq k\leq bN$, for any 
$a<\theta <b$. We begin by defining%
\begin{equation*}
\gamma _{N}\left(C\right) =C\Vert \mathcal{\delta }\Vert ^{-2},
\end{equation*}%
for some positive constant $C$, and the function%
\begin{equation*}
\Pi \left( k\right) =\left\{ 
\begin{array}{ll}
\displaystyle\frac{k\left( N-k^{\ast }\right) }{N}\mathcal{\delta } & 1\leq
k\leq k^{\ast }, \\ 
\displaystyle\frac{k^{\ast }\left( N-k\right) }{N}\mathcal{\delta } & 
k^{\ast }+1\leq k\leq N.%
\end{array}%
\right.
\end{equation*}%
By standard algebra, it follows that $V_{N}\left( k\right) $ can be written
as%
\begin{align*}
V_{N}\left( k\right) = 2&\left[ \frac{N}{k\left( N-k\right) }\right]
^{2}\left\Vert S_{k}-\frac{k}{N}S_{N}\right\Vert ^{2}-\frac{2}{k\left(
k-1\right) }\sum_{i=1}^{k}\left\Vert X_{i}\right\Vert ^{2} \\
& -\frac{2}{\left( N-k\right) \left( N-k-1\right) }\sum_{i=k+1}^{N}\left%
\Vert X_{i}\right\Vert ^{2}+\frac{2}{k^{2}\left( k-1\right) }\left\Vert
S_{k}\right\Vert ^{2} \\
& +\frac{2}{\left( N-k\right) ^{2}\left( N-k-1\right) }\left\Vert
S_{N}-S_{k}\right\Vert ^{2}.
\end{align*}%
We will consider the following function%
\begin{equation*}
\widetilde{V}\left( k\right) =\frac{1}{2}\left[ \frac{N}{k\left( N-k\right) }%
\right] ^{\alpha -2}V_{N}\left( k\right) ,
\end{equation*}%
and study%
\begin{equation*}
\widetilde{V}\left( k\right) -\widetilde{V}\left( k^{\ast }\right)
=\sum_{j=1}^{10}\widetilde{V}_{k,j},
\end{equation*}%
where we have defined%
\begin{equation*}
\widetilde{V}_{k,1}=\left[ \left( \frac{N}{k\left( N-k\right) }\right)
^{\alpha }-\left( \frac{N}{k^{\ast }\left( N-k^{\ast }\right) }\right)
^{\alpha }\right] \left\Vert \sum_{i=1}^{k}\epsilon _{i}-\frac{k}{N}%
\sum_{i=1}^{N}\epsilon _{i}\right\Vert ^{2},
\end{equation*}%
\begin{align*}
\widetilde{V}_{k,2}=& \left( \frac{N}{k^{\ast }\left( N-k^{\ast }\right) }%
\right) ^{\alpha }\left\langle \sum_{i=1}^{k}\epsilon _{i}-\frac{k}{N}%
\sum_{i=1}^{N}\epsilon _{i}+\sum_{i=1}^{k^{\ast }}\epsilon _{i}-\frac{%
k^{\ast }}{N}\sum_{i=1}^{N}\epsilon _{i},\right. \\
&\qquad\qquad\qquad\qquad\qquad\qquad\qquad\qquad \left.
\sum_{i=1}^{k}\epsilon _{i}-\sum_{i=1}^{k^{\ast }}\epsilon _{i}-\frac{%
k-k^{\ast }}{N}\sum_{i=1}^{N}\epsilon _{i}\right\rangle ,
\end{align*}%
\begin{equation*}
\widetilde{V}_{k,3}=2\left\langle \left( \frac{N}{k^{\ast }\left( N-k^{\ast
}\right) }\right) ^{\alpha }\Pi \left( k\right) -\left( \frac{N}{k^{\ast
}\left( N-k^{\ast }\right) }\right) ^{\alpha }\Pi \left( k^{\ast }\right)
,\sum_{i=1}^{k}\epsilon _{i}-\frac{k}{N}\sum_{i=1}^{N}\epsilon
_{i}\right\rangle ,
\end{equation*}%
\begin{equation*}
\widetilde{V}_{k,4}=2\left\langle \left( \frac{N}{k^{\ast }\left( N-k^{\ast
}\right) }\right) ^{\alpha }\Pi \left( k^{\ast }\right) ,\frac{k-k^{\ast }}{N%
}\sum_{i=1}^{N}\epsilon _{i}\right\rangle ,
\end{equation*}%
\begin{equation*}
\widetilde{V}_{k,5}=2\left\langle \left( \frac{N}{k^{\ast }\left( N-k^{\ast
}\right) }\right) ^{\alpha }\Pi \left( k^{\ast }\right)
,\sum_{i=1}^{k}\epsilon _{i}-\sum_{i=1}^{k^{\ast }}\epsilon
_{i}\right\rangle ,
\end{equation*}%
\begin{equation*}
\widetilde{V}_{k,6}=\left( \frac{N}{k\left( N-k\right) }\right) ^{\alpha
}\left\Vert \Pi \left( k\right) \right\Vert ^{2}-\left( \frac{N}{k^{\ast
}\left( N-k^{\ast }\right) }\right) ^{\alpha }\left\Vert \Pi \left( k^{\ast
}\right) \right\Vert ^{2},
\end{equation*}%
\begin{align*}
\widetilde{V}_{k,7} =&-\left[ \frac{N}{k\left( N-k\right) }\right] ^{\alpha
-2}\frac{1}{k\left( k-1\right) }\sum_{i=1}^{k}\left\Vert X_{i}\right\Vert
^{2} \\
&+\left[ \frac{N}{k^{\ast }\left( N-k^{\ast }\right) }\right] ^{\alpha -2}%
\frac{1}{k^{\ast }\left( k^{\ast }-1\right) }\sum_{i=1}^{k^{\ast
}}\left\Vert X_{i}\right\Vert ^{2},
\end{align*}%
\begin{align*}
\widetilde{V}_{k,8}=& -\left[ \frac{N}{k\left( N-k\right) }\right] ^{\alpha
-2}\frac{1}{\left( N-k\right) \left( N-k-1\right) }\sum_{i=k+1}^{N}\left%
\Vert X_{i}\right\Vert ^{2} \\
& +\left[ \frac{N}{k^{\ast }\left( N-k^{\ast }\right) }\right] ^{\alpha -2}%
\frac{1}{\left( N-k^{\ast }\right) \left( N-k^{\ast }-1\right) }%
\sum_{i=k^{\ast }+1}^{N}\left\Vert X_{i}\right\Vert ^{2},
\end{align*}%
\begin{align*}
\widetilde{V}_{k,9} &=\left[ \frac{N}{k\left( N-k\right) }\right] ^{\alpha
-2}\frac{1}{k^{2}\left( k-1\right) }\left\Vert S_{k}\right\Vert ^{2} \\
&\qquad-\left[ \frac{N}{k^{\ast }\left( N-k^{\ast }\right) }\right] ^{\alpha
-2}\frac{1}{\left( k^{\ast }\right) ^{2}\left( k^{\ast }-1\right) }%
\left\Vert S_{k^{\ast }}\right\Vert ^{2},
\end{align*}%
\begin{align*}
\widetilde{V}_{k,10}=& \left[ \frac{N}{k\left( N-k\right) }\right] ^{\alpha
-2}\frac{1}{\left( N-k\right) ^{2}\left( N-k-1\right) }\left\Vert
S_{N}-S_{k}\right\Vert ^{2} \\
& -\left[ \frac{N}{k^{\ast }\left( N-k^{\ast }\right) }\right] ^{\alpha -2}%
\frac{1}{\left( N-k^{\ast }\right) ^{2}\left( N-k^{\ast }-1\right) }%
\left\Vert S_{N}-S_{k^{\ast }}\right\Vert ^{2}.
\end{align*}%
It immediately follows from the Mean Value Theorem that there exist two
positive constants $c_{1}\geq c_{2}$ such that%
\begin{equation}
-c_{1}N^{1-\alpha }\left\vert k^{\ast }-k\right\vert \Vert \mathcal{\delta }%
\Vert ^{2}\leq \widetilde{V}_{k,6}\leq -c_{2}N^{1-\alpha }\left\vert k^{\ast
}-k\right\vert \Vert \mathcal{\delta }\Vert ^{2}.  \label{hrice-1}
\end{equation}%
We note that, similarly to the proof of Theorem 2.2.1\textit{(i)} in %
\citet{chgreg}, (see also \cite{aue:gabrys:horvath:kokoszka:2009}) it
follows that%
\begin{equation*}
\max_{\left\vert k^{\ast }-k\right\vert \geq \gamma _{N}\left( C\right)
,aN\leq k\leq bN}\frac{\left\vert \widetilde{V}_{k,j}\right\vert }{%
N^{1-\alpha }\left\vert k^{\ast }-k\right\vert \Vert \mathcal{\delta }\Vert
^{2}}=o_{P}\left( 1\right) ,
\end{equation*}%
for all $1\leq j\leq 4$. Indeed,\footnote{%
We report passages for the case $k\leq k^{\ast }-\gamma _{N}\left( C\right) $%
; the case $k\geq k^{\ast }+\gamma _{N}\left( C\right) $ follows from the
same logic.} Lemma \ref{l:BHR_11} entails that%
\begin{equation}
\max_{1\leq k\leq N}\left\Vert \sum_{i=1}^{k}\epsilon _{i}\right\Vert
=O_{P}\left( N^{1/2}\right) .  \label{fclt1}
\end{equation}%
We now show that 
\begin{equation}
\max_{1\leq k\leq k^{\ast }-\gamma _{N}\left( C\right) }\frac{1}{k^{\ast }-k}%
\left\Vert \sum_{i=k+1}^{k^{\ast }}\epsilon _{i}\right\Vert =O_{P}\left(
\gamma _{N}^{-1/2}\left( C\right) \right) .  \label{fclt2}
\end{equation}%
The proof uses similar arguments as above, so we only report its main
passages. By stationarity%
\begin{align*}
&P\left( \max_{1\leq k\leq k^{\ast }-\gamma _{N}\left( C\right) }\frac{1}{%
k^{\ast }-k}\left\Vert \sum_{i=k+1}^{k^{\ast }}\epsilon _{i}\right\Vert \geq
x\gamma _{N}^{-1/2}\left( C\right) \right) \\
&\qquad=P\left( \max_{1\leq k\leq k^{\ast }-\gamma _{N}\left( C\right) }%
\frac{1}{k^{\ast }-k}\left\Vert \sum_{i=1}^{k^{\ast }-k}\epsilon
_{i}\right\Vert \geq x\gamma _{N}^{-1/2}\left( C\right) \right) .
\end{align*}%
Hence%
\begin{align}
P&\left( \max_{1\leq k\leq k^{\ast }-\gamma _{N}\left( C\right) }\frac{1}{%
k^{\ast }-k}\left\Vert \sum_{i=1}^{k^{\ast }-k}\epsilon _{i}\right\Vert \geq
x\gamma _{N}^{-1/2}\left( C\right) \right)  \notag \\
&=P\left( \max_{\gamma _{N}\left( C\right) \leq u\leq k^{\ast }}\frac{1}{u}%
\left\Vert \sum_{i=1}^{u}\epsilon _{i}\right\Vert \geq x\gamma
_{N}^{-1/2}\left( C\right) \right)  \notag \\
&\leq P\left( \max_{\left\lfloor \log \gamma _{N}\left( C\right)
\right\rfloor \leq \ell \leq \infty }\max_{\exp \left( \ell \right) \leq
u\leq \exp \left( \ell +1\right) }\frac{1}{u}\left\Vert
\sum_{i=1}^{u}\epsilon _{i}\right\Vert \geq x\gamma _{N}^{-1/2}\left(
C\right) \right)  \notag \\
&\leq \sum_{\ell =\left\lfloor \log \gamma _{N}\left( C\right) \right\rfloor
}^{\infty }P\left( \max_{\exp \left( \ell \right) \leq u\leq \exp \left(
\ell +1\right) }\frac{1}{u}\left\Vert \sum_{i=1}^{u}\epsilon _{i}\right\Vert
\geq x\gamma _{N}^{-1/2}\left( C\right) \right)  \notag \\
&\leq \sum_{\ell =\left\lfloor \log \gamma _{N}\left( C\right) \right\rfloor
}^{\infty }P\left( \max_{\exp \left( \ell \right) \leq u\leq \exp \left(
\ell +1\right) }\left\Vert \sum_{i=1}^{u}\epsilon _{i}\right\Vert \geq
x\gamma _{N}^{-1/2}\left( C\right) \exp \left( \ell +1\right) \right)  \notag
\\
&\leq x^{-\left( 2+s\right) }\gamma _{N}^{\left( 2+s\right) /2}\left(
C\right) \sum_{\ell =\left\lfloor \log \gamma _{N}\left( C\right)
\right\rfloor }^{\infty }\exp \left( \left( 2+s\right) \left( \ell +1\right)
\right) E\max_{1\leq u\leq \exp \left( \ell +1\right) }\left\Vert
\sum_{i=1}^{u}\epsilon _{i}\right\Vert ^{2+s},
\label{e:v3bound_intermediate}
\end{align}%
for some $0<s<2$. Using Lemma \ref{l:BHR_33}, expression %
\eqref{e:v3bound_intermediate} is bounded by%
\begin{align*}
& c_{0}x^{-\left( 2+s\right) }\gamma _{N}^{\left( 2+s\right) /2}\left(
C\right) \sum_{\ell =\left\lfloor \log \gamma _{N}\left( C\right)
\right\rfloor }^{\infty }\exp \left( -\left( 2+s\right) \ell \right) \exp
\left( \frac{1}{2}\left( 2+s\right) \ell \right) \\
&\leq c_{1}x^{-\left( 2+s\right) },
\end{align*}%
where $c_{0}$ and $c_{1}$\ are finite, positive constants. Now (\ref{fclt2})
follows. Finally, a routine application of the Mean Value Theorem yields%
\begin{equation}
\max_{aN\leq k\leq bN}\frac{1}{\left\vert k^{\ast }-k\right\vert }\left[
\left( \frac{N}{k\left( N-k\right) }\right) ^{\alpha }-\left( \frac{N}{%
k^{\ast }\left( N-k^{\ast }\right) }\right) ^{\alpha }\right] =O\left(
N^{-1-\alpha }\right) .  \label{mvt}
\end{equation}

Therefore%
\begin{align*}
\max_{{\left\vert k^{\ast }-k\right\vert \geq \gamma _{N}\left(
C\right) }{aN\leq k\leq bN}}&\frac{\left\vert \widetilde{V}_{k,1}\right\vert 
}{N^{1-\alpha }\left\vert k^{\ast }-k\right\vert \Vert \mathcal{\delta }%
\Vert ^{2}} \\
\leq & \frac{2}{N^{1-\alpha }\Vert \mathcal{\delta }\Vert ^{2}}\left(
\max_{\left\vert k^{\ast }-k\right\vert \geq \gamma _{N}\left( C\right)
,aN\leq k\leq bN}\left\Vert \sum_{i=1}^{k}\epsilon _{i}\right\Vert
^{2}+\max_{\left\vert k^{\ast }-k\right\vert \geq \gamma _{N}\left( C\right)
,aN\leq k\leq bN}\left\Vert \frac{k}{N}\sum_{i=1}^{N}\epsilon
_{i}\right\Vert ^{2}\right) \\
& \qquad\qquad\qquad\times \max_{\left\vert k^{\ast }-k\right\vert \geq
\gamma _{N}\left( C\right) ,aN\leq k\leq bN}\frac{1}{\left\vert k^{\ast
}-k\right\vert }\left[ \left( \frac{N}{k\left( N-k\right) }\right) ^{\alpha
}-\left( \frac{N}{k^{\ast }\left( N-k^{\ast }\right) }\right) ^{\alpha }%
\right] \\
=& \frac{O_{P}\left( N\right) +O_{P}\left( \gamma _{N}\left( C\right)
\right) }{N^{1-\alpha }\Vert \mathcal{\delta }\Vert ^{2}}N^{-1-\alpha }=%
\frac{O_{P}\left( N\right) }{N^{1-\alpha }\Vert \mathcal{\delta }\Vert ^{2}}%
N^{-1-\alpha }=o_{P}\left( 1\right) ,
\end{align*}%
where we have used (\ref{fclt1}) and (\ref{mvt}) in the final passage, and
the fact that $N\Vert \mathcal{\delta }\Vert ^{2}\rightarrow \infty $ and
that, by construction, $\gamma _{N}\left( C\right) =o_{P}\left( N\right) $.
Similarly%
\begin{align*}
\max_{\left\vert k^{\ast }-k\right\vert \geq \gamma _{N}\left( C\right)
,aN\leq k\leq bN}&\frac{\left\vert \widetilde{V}_{k,2}\right\vert }{%
N^{1-\alpha }\left\vert k^{\ast }-k\right\vert \Vert \mathcal{\delta }\Vert
^{2}} \\
& \hspace{-15ex}\leq\frac{1}{N^{1-\alpha }\Vert \mathcal{\delta }\Vert ^{2}}%
\max_{\left\vert k^{\ast }-k\right\vert \geq \gamma _{N}\left( C\right)
,aN\leq k\leq bN}\left( \frac{N}{k^{\ast }\left( N-k^{\ast }\right) }\right)
^{\alpha } \\
&\hspace{-12ex} \times \max_{\left\vert k^{\ast }-k\right\vert \geq \gamma
_{N}\left( C\right) ,aN\leq k\leq bN}\left( \left\Vert
\sum_{i=1}^{k}\epsilon _{i}\right\Vert +\left\Vert \sum_{i=1}^{k^{\ast
}}\epsilon _{i}\right\Vert +\left\Vert \frac{k}{N}\sum_{i=1}^{N}\epsilon
_{i}\right\Vert +\left\Vert \frac{k^{\ast }}{N}\sum_{i=1}^{N}\epsilon
_{i}\right\Vert \right) \\
& \hspace{-12ex}\times \max_{\left\vert k^{\ast }-k\right\vert \geq \gamma
_{N}\left( C\right) ,aN\leq k\leq bN}\frac{1}{\left\vert k^{\ast
}-k\right\vert }\left( \left\Vert \sum_{i=k+1}^{k^{\ast }}\epsilon
_{i}\right\Vert +\left\Vert \frac{k-k^{\ast }}{N}\sum_{i=1}^{N}\epsilon
_{i}\right\Vert \right) \\
&\hspace{-15ex}= O_{P}\left( 1\right) \frac{1}{N^{1-\alpha }\Vert \mathcal{%
\delta }\Vert ^{2}}N^{-\alpha }\left( N^{1/2}\right) \left( \gamma
_{N}^{-1/2}\left( C\right) +N^{-1/2}\right) =O_{P}\left( \frac{1}{%
N^{1/2}\Vert \mathcal{\delta }\Vert }\right) +O_{P}\left( \frac{1}{N\Vert 
\mathcal{\delta }\Vert ^{2}}\right) \\
&\hspace{-15ex}= o_{P}\left( 1\right) ,
\end{align*}%
\begin{align*}
\max_{\left\vert k^{\ast }-k\right\vert \geq \gamma _{N}\left( C\right)
,aN\leq k\leq bN}&\frac{\left\vert \widetilde{V}_{k,3}\right\vert }{%
N^{1-\alpha }\left\vert k^{\ast }-k\right\vert \Vert \mathcal{\delta }\Vert
^{2}} \\
&\hspace{-15ex}\leq \frac{2}{N^{1-\alpha }\Vert \mathcal{\delta }\Vert ^{2}}%
\max_{\left\vert k^{\ast }-k\right\vert \geq \gamma _{N}\left( C\right)
,aN\leq k\leq bN}\left\Vert \sum_{i=1}^{k}\epsilon _{i}-\frac{k}{N}%
\sum_{i=1}^{N}\epsilon _{i}\right\Vert \\
&\hspace{-10ex}\times \max_{\left\vert k^{\ast }-k\right\vert \geq \gamma
_{N}\left( C\right) ,aN\leq k\leq bN}\frac{1}{\left\vert k^{\ast
}-k\right\vert }\left( \frac{N}{k^{\ast }\left( N-k^{\ast }\right) }\right)
^{\alpha }\left\Vert \left( \Pi \left( k\right) -\Pi \left( k^{\ast }\right)
\right) \right\Vert \\
&\hspace{-15ex}= O_{P}\left( N^{1/2}\right) \frac{1}{N^{1-\alpha }\Vert 
\mathcal{\delta }\Vert ^{2}}N^{-\alpha }\Vert \mathcal{\delta }\Vert
=O_{P}\left( \frac{1}{N^{1/2}\Vert \mathcal{\delta }\Vert }\right)
=o_{P}\left( 1\right) ,
\end{align*}%
and 
\begin{align*}
\max_{\left\vert k^{\ast }-k\right\vert \geq \gamma _{N}\left( C\right)
,aN\leq k\leq bN}&\frac{\left\vert \widetilde{V}_{k,4}\right\vert }{%
N^{1-\alpha }\left\vert k^{\ast }-k\right\vert \Vert \mathcal{\delta }\Vert
^{2}} \\
&\hspace{-15ex}\leq \frac{2}{N^{1-\alpha }\Vert \mathcal{\delta }\Vert ^{2}}%
\max_{\left\vert k^{\ast }-k\right\vert \geq \gamma _{N}\left( C\right)
,aN\leq k\leq bN}\left( \frac{N}{k^{\ast }\left( N-k^{\ast }\right) }\right)
^{\alpha }\left\Vert \Pi \left( k^{\ast }\right) \right\Vert \\
&\hspace{-12ex}\times \max_{\left\vert k^{\ast }-k\right\vert \geq \gamma
_{N}\left( C\right) ,aN\leq k\leq bN}\frac{1}{\left\vert k^{\ast
}-k\right\vert }\left\Vert \frac{k-k^{\ast }}{N}\sum_{i=1}^{N}\epsilon
_{i}\right\Vert \\
& \hspace{-15ex}=O_{P}\left( N^{-1/2}\right) \frac{2}{N^{1-\alpha }\Vert 
\mathcal{\delta }\Vert ^{2}}N^{1-\alpha }\Vert \mathcal{\delta }\Vert
=O_{P}\left( \frac{1}{N^{1/2}\Vert \mathcal{\delta }\Vert }\right) \\
& \hspace{-15ex}= o_{P}\left( 1\right) ,
\end{align*}%
having used (\ref{fclt2}). We now consider $\widetilde{V}_{k,7}$, writing it
as%
\begin{align*}
\widetilde{V}_{k,7}=& \left[ \left( \frac{N}{k^{\ast }\left( N-k^{\ast
}\right) }\right) ^{\alpha -2}-\left( \frac{N}{k\left( N-k\right) }\right)
^{\alpha -2}\right] \frac{1}{k^{\ast }\left( k^{\ast }-1\right) }%
\sum_{i=1}^{k}\left\Vert X_{i}\right\Vert ^{2} \\
& \quad+\left( \frac{N}{k\left( N-k\right) }\right) ^{\alpha -2}\left[ \frac{%
1}{k^{\ast }\left( k^{\ast }-1\right) }-\frac{1}{k\left( k-1\right) }\right]
\sum_{i=1}^{k}\left\Vert X_{i}\right\Vert ^{2} \\
& \quad +\left( \frac{N}{k^{\ast }\left( N-k^{\ast }\right) }\right)
^{\alpha -2}\frac{1}{k^{\ast }\left( k^{\ast }-1\right) }\sum_{i=k+1}^{k^{%
\ast }}\left\Vert X_{i}\right\Vert ^{2} \\
=& \widetilde{V}_{k,7,1}+\widetilde{V}_{k,7,2}+\widetilde{V}_{k,7,3}.
\end{align*}%
Using the Law of Large Numbers, it holds that 
\begin{equation}
\max_{1\leq k\leq N}\sum_{i=1}^{k}\left\Vert X_{i}\right\Vert
^{2}=\sum_{i=1}^{N}\left\Vert X_{i}\right\Vert ^{2}=O_{P}\left( N\right) .
\label{lln}
\end{equation}%
Further, it holds that%
\begin{equation}
\max_{1\leq k\leq k^{\ast }-\gamma _{N}\left( C\right) }\frac{1}{k^{\ast }-k}%
\left\vert \sum_{i=k+1}^{k^{\ast }}\left( \left\Vert X_{i}\right\Vert
^{2}-E\left\Vert X_{i}\right\Vert ^{2}\right) \right\vert =O_{P}\left(
\gamma ^{-1/2}(N)\right) =o_{P}\left( 1\right) .  \label{m-ineq}
\end{equation}%
Indeed, for all $x$%
\begin{align*}
P&\left( \max_{1\leq k\leq k^{\ast }-\gamma _{N}\left( C\right) }\frac{1}{%
k^{\ast }-k}\left\vert \sum_{i=k+1}^{k^{\ast }}\left( \left\Vert
X_{i}\right\Vert ^{2}-E\left\Vert X_{i}\right\Vert ^{2}\right) \right\vert
>x\gamma _{N}^{-1/2}\left( C\right) \right) \\
&= P\left( \max_{\gamma _{N}\left( C\right) \leq j\leq k^{\ast }}\frac{1}{j}%
\left\vert \sum_{i=1}^{j}\left( \left\Vert X_{i}\right\Vert ^{2}-E\left\Vert
X_{i}\right\Vert ^{2}\right) \right\vert >x\gamma _{N}^{-1/2}\left( C\right)
\right) \\
&\leq P\left( \max_{\left\lceil \log \gamma _{N}\left( C\right) \right\rceil
\leq \ell \leq \left\lceil \log k^{\ast }\right\rceil }\max_{\exp \left(
\ell -1\right) \leq j\leq \exp \left( \ell \right) }\frac{1}{j}\left\vert
\sum_{i=1}^{j}\left( \left\Vert X_{i}\right\Vert ^{2}-E\left\Vert
X_{i}\right\Vert ^{2}\right) \right\vert >x\gamma _{N}^{-1/2}\left( C\right)
\right) \\
&\leq \sum_{\ell =\left\lceil \log \gamma _{N}\left( C\right) \right\rceil
}^{\left\lceil \log k^{\ast }\right\rceil }P\left( \max_{\exp \left( \ell
-1\right) \leq j\leq \exp \left( \ell \right) }\left\vert
\sum_{i=1}^{j}\left( \left\Vert X_{i}\right\Vert ^{2}-E\left\Vert
X_{i}\right\Vert ^{2}\right) \right\vert >x\gamma _{N}^{-1/2}\left( C\right)
\exp \left( -\ell +1\right) \right) \\
&\leq x^{-\left( 2+\epsilon /2\right) }\gamma _{N}\left( C\right) \sum_{\ell
=\left\lceil \log \gamma _{N}\left( C\right) \right\rceil }^{\left\lceil
\log k^{\ast }\right\rceil }\exp \left( -\left( 2+\epsilon /2\right) \left(
\ell -1\right) \right) E\left( \max_{1\leq j\leq \exp \left( \ell \right)
}\left\vert \sum_{i=1}^{j}\left( \left\Vert X_{i}\right\Vert
^{2}-E\left\Vert X_{i}\right\Vert ^{2}\right) \right\vert ^{2+\epsilon
/2}\right) \\
&\leq x^{-\left( 2+\epsilon /2\right) }\gamma _{N}\left( C\right) \sum_{\ell
=\left\lceil \log \gamma _{N}\left( C\right) \right\rceil }^{\left\lceil
\log k^{\ast }\right\rceil }\exp \left( -\left( 2+\epsilon /2\right) \left(
\ell -1\right) \right) \exp \left( \ell \left( 1+\epsilon /4\right) \right)
=c_{0}x^{-\left( 2+\epsilon /2\right) },
\end{align*}%
having used Lemma \ref{l:BHR_33}, since $\left\Vert X_{i}\right\Vert ^{2}$
is a Bernoulli shift sequence which satisfies Assumption \ref{as1}. Hence,
recalling that $k^{\ast }=\left\lfloor N\theta \right\rfloor $ with $%
0<\theta <1$, we have%
\begin{align*}
\max_{\left\vert k^{\ast }-k\right\vert \geq \gamma _{N}\left( C\right)
,aN\leq k\leq bN}&\frac{\left\vert \widetilde{V}_{k,7,3}\right\vert }{%
N^{1-\alpha }\left\vert k^{\ast }-k\right\vert \Vert \mathcal{\delta }\Vert
^{2}} \\
&\hspace{-15ex}\leq \frac{1}{N^{1-\alpha }\Vert \mathcal{\delta }\Vert ^{2}}%
\left( \frac{N}{k^{\ast }\left( N-k^{\ast }\right) }\right) ^{\alpha -2}%
\frac{1}{k^{\ast }\left( k^{\ast }-1\right) }\max_{\left\vert k^{\ast
}-k\right\vert \geq \gamma _{N}\left( C\right) ,aN\leq k\leq bN}\frac{1}{%
\left\vert k^{\ast }-k\right\vert }\sum_{i=k+1}^{k^{\ast }}\left\Vert
X_{i}\right\Vert ^{2} \\
&\hspace{-15ex}= O_{P}\left( 1\right) \frac{1}{N^{1-\alpha }\Vert \mathcal{%
\delta }\Vert ^{2}}N^{2-\alpha }\frac{1}{N^{2}}o_{P}\left( 1\right)
=O_{P}\left( \frac{1}{N\Vert \mathcal{\delta }\Vert ^{2}}\right)
=o_{P}\left( 1\right) .
\end{align*}

Further%
\begin{align*}
& \max_{\left\vert k^{\ast }-k\right\vert \geq \gamma _{N}\left( C\right)
,aN\leq k\leq bN}\frac{\left\vert \widetilde{V}_{k,7,1}\right\vert }{%
N^{1-\alpha }\left\vert k^{\ast }-k\right\vert \Vert \mathcal{\delta }\Vert
^{2}} \\
&\hspace{5ex} \leq \frac{1}{N^{1-\alpha }\Vert \mathcal{\delta }\Vert ^{2}}%
\frac{1}{k^{\ast }\left( k^{\ast }-1\right) }\sum_{i=1}^{N}\left\Vert
X_{i}\right\Vert ^{2} \\
&\hspace{15ex}\times \max_{\left\vert k^{\ast }-k\right\vert \geq \gamma
_{N}\left( C\right) ,aN\leq k\leq bN}\frac{1}{\left\vert k^{\ast
}-k\right\vert }\left[ \left( \frac{N}{k^{\ast }\left( N-k^{\ast }\right) }%
\right) ^{\alpha -2}-\left( \frac{N}{k\left( N-k\right) }\right) ^{\alpha -2}%
\right] \\
&\hspace{5ex}= O_{P}\left( N\right) \frac{1}{N^{1-\alpha }\Vert \mathcal{%
\delta }\Vert ^{2}}\frac{1}{N^{2}}O\left( N^{-1-\alpha +2}\right)
=O_{P}\left( \frac{1}{N\Vert \mathcal{\delta }\Vert ^{2}}\right)
=o_{P}\left( 1\right) ,
\end{align*}%
having used (\ref{lln}) and the Mean Value Theorem. Finally we have%
\begin{align*}
& \max_{\left\vert k^{\ast }-k\right\vert \geq \gamma _{N}\left( C\right)
,aN\leq k\leq bN}\frac{\left\vert \widetilde{V}_{k,7,2}\right\vert }{%
N^{1-\alpha }\left\vert k^{\ast }-k\right\vert \Vert \mathcal{\delta }\Vert
^{2}} \\
&\hspace{5ex}\leq \frac{1}{N^{1-\alpha }\Vert \mathcal{\delta }\Vert ^{2}}%
\sum_{i=1}^{N}\left\Vert X_{i}\right\Vert ^{2}\max_{\left\vert k^{\ast
}-k\right\vert \geq \gamma ,aN\leq k\leq bN}\left( \frac{N}{k\left(
N-k\right) }\right) ^{\alpha -2} \\
&\hspace{15ex} \times\max_{\left\vert k^{\ast }-k\right\vert \geq \gamma
_{N}\left( C\right) ,aN\leq k\leq bN}\frac{1}{\left\vert k^{\ast
}-k\right\vert }\left\vert \frac{1}{k^{\ast }\left( k^{\ast }-1\right) }-%
\frac{1}{k\left( k-1\right) }\right\vert \\
&\hspace{5ex}= \frac{1}{N^{1-\alpha }\Vert \mathcal{\delta }\Vert ^{2}}%
\sum_{i=1}^{N}\left\Vert X_{i}\right\Vert ^{2}\max_{\left\vert k^{\ast
}-k\right\vert \geq \gamma _{N}\left( C\right) ,aN\leq k\leq bN}\left( \frac{%
N}{k\left( N-k\right) }\right) ^{\alpha -2} \\
&\hspace{15ex}\times \max_{\left\vert k^{\ast }-k\right\vert \geq \gamma
_{N}\left( C\right) ,aN\leq k\leq bN}\frac{1}{\left\vert k^{\ast
}-k\right\vert }\frac{\left\vert k^{\ast }-k\right\vert \left( k^{\ast
}+k-1\right) }{kk^{\ast }\left( k-1\right) \left( k^{\ast }-1\right) } \\
&\hspace{5ex}= O_{P}\left( N\right) N^{-\alpha +2}\frac{1}{N^{1-\alpha
}\Vert \mathcal{\delta }\Vert ^{2}}N^{-3}=O_{P}\left( \frac{1}{N\Vert 
\mathcal{\delta }\Vert ^{2}}\right) =o_{P}\left( 1\right) ,
\end{align*}%
so that ultimately 
\begin{equation*}
\max_{\left\vert k^{\ast }-k\right\vert \geq \gamma _{N}\left( C\right)
,aN\leq k\leq bN}\frac{\left\vert \widetilde{V}_{k,7}\right\vert }{%
N^{1-\alpha }\left\vert k^{\ast }-k\right\vert \Vert \mathcal{\delta }\Vert
^{2}}=o_{P}\left( 1\right) .
\end{equation*}%
Using exactly the same logic, the same result can be shown for $\widetilde{V}%
_{k,8}$. Finally, consider%
\begin{align*}
\widetilde{V}_{k,9}=& \left[ \left( \frac{N}{k^{\ast }\left( N-k^{\ast
}\right) }\right) ^{\alpha -2}-\left( \frac{N}{k\left( N-k\right) }\right)
^{\alpha -2}\right] \frac{1}{\left( k^{\ast }\right) ^{2}\left( k^{\ast
}-1\right) }\left\Vert S_{k}\right\Vert ^{2} \\
& +\left( \frac{N}{k\left( N-k\right) }\right) ^{\alpha -2}\left[ \frac{1}{%
\left( k^{\ast }\right) ^{2}\left( k^{\ast }-1\right) }-\frac{1}{k^{2}\left(
k-1\right) }\right] \left\Vert S_{k}\right\Vert ^{2} \\
& +\left( \frac{N}{k^{\ast }\left( N-k^{\ast }\right) }\right) ^{\alpha -2}%
\frac{1}{\left( k^{\ast }\right) ^{2}\left( k^{\ast }-1\right) }\left(
\left\Vert S_{k^{\ast }}\right\Vert ^{2}-\left\Vert S_{k}\right\Vert
^{2}\right) \\
=& \widetilde{V}_{k,9,1}+\widetilde{V}_{k,9,2}+\widetilde{V}_{k,9,3}.
\end{align*}%
Similarly to the above, we can show that $\max_{1\leq k\leq N}\left\Vert
S_{k}\right\Vert =O_{P}\left( N\right) $; further, we will use fact that $%
\left\vert \left\Vert S_{k^{\ast }}\right\Vert ^{2}-\left\Vert
S_{k}\right\Vert ^{2}\right\vert \leq \left\Vert S_{k}+S_{k^{\ast
}}\right\Vert \left\Vert S_{k}-S_{k^{\ast }}\right\Vert $, and 
\begin{equation}
\max_{1\leq k\leq k^{\ast }-\gamma _{N}\left( C\right) }\frac{1}{k^{\ast }-k}%
\left\Vert S_{k}-S_{k^{\ast }}\right\Vert =O_{P}\left( \gamma ^{-1/2}\right)
=o_{P}\left( 1\right) ,  \label{m-ineq-2}
\end{equation}%
which can be shown by repeating the proof of (\ref{m-ineq}). Then we have%
\begin{align*}
& \max_{\left\vert k^{\ast }-k\right\vert \geq \gamma _{N}\left( C\right)
,aN\leq k\leq bN}\frac{\left\vert \widetilde{V}_{k,9,1}\right\vert }{%
N^{1-\alpha }\left\vert k^{\ast }-k\right\vert \Vert \mathcal{\delta }\Vert
^{2}} \\
&\hspace{4ex}\leq \frac{1}{N^{1-\alpha }\Vert \mathcal{\delta }\Vert ^{2}}%
\frac{1}{\left( k^{\ast }\right) ^{2}\left( k^{\ast }-1\right) } \\
& \hspace{6ex}\times \max_{\left\vert k^{\ast }-k\right\vert \geq \gamma
_{N}\left( C\right) ,aN\leq k\leq bN}\frac{1}{\left\vert k^{\ast
}-k\right\vert }\left[ \left( \frac{N}{k^{\ast }\left( N-k^{\ast }\right) }%
\right) ^{\alpha -2}-\left( \frac{N}{k\left( N-k\right) }\right) ^{\alpha -2}%
\right] \max_{1\leq k\leq N}\left\Vert S_{k}\right\Vert ^{2} \\
&\hspace{4ex}= O_{P}\left( N^{2}\right) \frac{1}{N^{1-\alpha }\Vert \mathcal{%
\delta }\Vert ^{2}}\frac{1}{N^{3}}O\left( N^{-1-\alpha +2}\right)
=O_{P}\left( \frac{1}{N\Vert \mathcal{\delta }\Vert ^{2}}\right)
=o_{P}\left( 1\right) ,
\end{align*}%
having recalled that $k^{\ast }=\left\lfloor N\theta \right\rfloor $, and
using the Mean Value Theorem. Also, using the fact that $k\leq bN$%
\begin{align*}
&\max_{\left\vert k^{\ast }-k\right\vert \geq \gamma _{N}\left( C\right)
,aN\leq k\leq bN}\frac{\left\vert \widetilde{V}_{k,9,2}\right\vert }{%
N^{1-\alpha }\left\vert k^{\ast }-k\right\vert \Vert \mathcal{\delta }\Vert
^{2}} \\
&\hspace{3ex} \leq \frac{1}{N^{1-\alpha }\Vert \mathcal{\delta }\Vert ^{2}}%
\max_{\left\vert k^{\ast }-k\right\vert \geq \gamma _{N}\left( C\right)
,aN\leq k\leq bN}\frac{1}{\left\vert k^{\ast }-k\right\vert }\left( \frac{N}{%
k\left( N-k\right) }\right) ^{\alpha -2}\left[ \frac{1}{\left( k^{\ast
}\right) ^{2}\left( k^{\ast }-1\right) }-\frac{1}{k^{2}\left( k-1\right) }%
\right] \\
& \hspace{6ex}\times \max_{1\leq k\leq N}\left\Vert S_{k}\right\Vert ^{2} \\
&\hspace{3ex} = O_{P}\left( N^{2}\right) \frac{1}{N^{1-\alpha }\Vert 
\mathcal{\delta }\Vert ^{2}} \\
&\hspace{6ex} \times \max_{\left\vert k^{\ast }-k\right\vert \geq \gamma
_{N}\left( C\right) ,aN\leq k\leq bN}\left( \frac{N}{k\left( N-k\right) }%
\right) ^{\alpha -2}\frac{1}{\left\vert k^{\ast }-k\right\vert }\left[ \frac{%
\left\vert k^{\ast }-k\right\vert \left( \left\vert k^{\ast }+k\right\vert
+\left( k^{\ast }\right) ^{2}+k^{2}+kk^{\ast }\right) }{\left( k^{\ast
}\right) ^{2}k^{2}\left( k^{\ast }-1\right) \left( k-1\right) }\right] \\
&\hspace{3ex} = O_{P}\left( N^{2}\right) \frac{N^{2-\alpha }}{N^{1-\alpha
}\Vert \mathcal{\delta }\Vert ^{2}}\frac{1}{N^{4}}=O_{P}\left( \frac{1}{%
N\Vert \mathcal{\delta }\Vert ^{2}}\right) =o_{P}\left( 1\right) .
\end{align*}%
Finally we have%
\begin{align*}
& \max_{\left\vert k^{\ast }-k\right\vert \geq \gamma _{N}\left( C\right)
,aN\leq k\leq bN}\frac{\left\vert \widetilde{V}_{k,9,3}\right\vert }{%
N^{1-\alpha }\left\vert k^{\ast }-k\right\vert \Vert \mathcal{\delta }\Vert
^{2}} \\
& \hspace{3ex} \leq \frac{1}{N^{1-\alpha }\Vert \mathcal{\delta }\Vert ^{2}}%
\frac{1}{\left( k^{\ast }\right) ^{2}\left( k^{\ast }-1\right) }\left( \frac{%
N}{k^{\ast }\left( N-k^{\ast }\right) }\right) ^{\alpha -2} \\
&\hspace{8ex} \times \left( \max_{\left\vert k^{\ast }-k\right\vert \geq
\gamma _{N}\left( C\right) ,aN\leq k\leq bN}\frac{1}{\left\vert k^{\ast
}-k\right\vert }\left\Vert S_{k}-S_{k^{\ast }}\right\Vert \right) \left(
\max_{1\leq k\leq N}\left( \left\Vert S_{k}\right\Vert +\left\Vert
S_{k^{\ast }}\right\Vert \right) \right) \\
&\hspace{3ex} =O_{P}\left( N\right) o_{P}\left( 1\right) \frac{1}{%
N^{1-\alpha }\Vert \mathcal{\delta }\Vert ^{2}}\frac{1}{N^{3}}N^{2-\alpha
}=O_{P}\left( \frac{1}{N\Vert \mathcal{\delta }\Vert ^{2}}\right)
=o_{P}\left( 1\right) .
\end{align*}

Putting all together, it follows that%
\begin{equation*}
\max_{\left\vert k^{\ast }-k\right\vert \geq \gamma _{N}\left( C\right)
,aN\leq k\leq bN}\frac{\left\vert \widetilde{V}_{k,9}\right\vert }{%
N^{1-\alpha }\left\vert k^{\ast }-k\right\vert \Vert \mathcal{\delta }\Vert
^{2}}=o_{P}\left( 1\right) ;
\end{equation*}%
a similar result can be shown for $\widetilde{V}_{k,10}$. On account of all
the results above and (\ref{hrice-1}), it follows that%
\begin{equation}
\max_{\left\vert k^{\ast }-k\right\vert \geq \gamma _{N}\left( C\right)
,aN\leq k\leq bN}\frac{\widetilde{V}_{k,j}}{N^{1-\alpha }\left\vert k^{\ast
}-k\right\vert \Vert \mathcal{\delta }\Vert ^{2}}=o_{P}\left( 1\right) ,
\label{cooper-1}
\end{equation}%
for all $1\leq j\leq 10$, $j\neq 5,6$; and, for all $0<c<1$%
\begin{equation*}
\max_{\left\vert k^{\ast }-k\right\vert \geq \gamma _{N}\left( C\right)
,aN\leq k\leq bN}\left( \sum_{j=1,j\neq 5,6}^{10}\widetilde{V}_{k,j}+c%
\widetilde{V}_{k,6}\right) \overset{\mathcal{P}}{\rightarrow }-\infty .
\end{equation*}%
Also, note that, as far as $\widetilde{V}_{k,5}$ is concerned, using (\ref%
{m-ineq-2})%
\begin{align}
\max_{\left\vert k^{\ast }-k\right\vert \geq \gamma _{N}\left( C\right)
,aN\leq k\leq bN}&\frac{\left\vert \widetilde{V}_{k,5}\right\vert }{%
N^{1-\alpha }\left\vert k^{\ast }-k\right\vert \Vert \mathcal{\delta }\Vert
^{2}}  \label{cooper-2} \\
& \hspace{-15ex}\leq2\frac{1}{N^{1-\alpha }\Vert \mathcal{\delta }\Vert ^{2}}%
\left( \frac{N}{k^{\ast }\left( N-k^{\ast }\right) }\right) ^{\alpha
}\max_{\left\vert k^{\ast }-k\right\vert \geq \gamma ,aN\leq k\leq
bN}\left\Vert \Pi \left( k^{\ast }\right) \right\Vert  \notag \\
& \hspace{-10ex}\times \max_{\left\vert k^{\ast }-k\right\vert \geq \gamma
_{N}\left( C\right) ,aN\leq k\leq bN}\frac{1}{\left\vert k^{\ast
}-k\right\vert }\left\Vert \sum_{i=k+1}^{k^{\ast }}\epsilon _{i}\right\Vert 
\notag \\
& \hspace{-15ex}=O_{P}\left( \gamma _{N}^{-1/2}\left( C\right) \right) \frac{%
1}{N^{1-\alpha }\Vert \mathcal{\delta }\Vert ^{2}}N^{-\alpha }N\Vert 
\mathcal{\delta }\Vert =C^{-1/2}O_{P}\left( 1\right) ,  \notag
\end{align}%
where we note that the $O_{P}\left( 1\right) $ term does not depend on $C$.
Further, seeing as $\widetilde{V}_{k,6}\leq -c_{2}N^{1-\alpha }\left\vert
k^{\ast }-k\right\vert \Vert \mathcal{\delta }\Vert ^{2}$ over the interval $%
aN\leq k\leq bN$, we have%
\begin{equation}
\max_{\left\vert k^{\ast }-k\right\vert \geq \gamma _{N}\left( C\right)
,aN\leq k\leq bN}\frac{\widetilde{V}\left( k\right) -\widetilde{V}\left(
k^{\ast }\right) }{N^{1-\alpha }\left\vert k^{\ast }-k\right\vert \Vert 
\mathcal{\delta }\Vert ^{2}}\leq -c_{2}+C^{-1/2}O_{P}\left( 1\right)
+o_{P}\left( 1\right) ,  \label{cooper-3}
\end{equation}%
whence it follows that%
\begin{equation}
\lim_{C\rightarrow \infty }\limsup_{N\rightarrow \infty }P\left(
\max_{\left\vert k^{\ast }-k\right\vert \geq \gamma _{N}\left( C\right)
,aN\leq k\leq bN}\frac{\widetilde{V}\left( k\right) -\widetilde{V}\left(
k^{\ast }\right) }{N^{1-\alpha }\left\vert k^{\ast }-k\right\vert \Vert 
\mathcal{\delta }\Vert ^{2}}\geq 0\right) =0.  \label{cooper-4}
\end{equation}
Hence we have%
\begin{align*}
P&\left( \max_{\left\vert k^{\ast }-k\right\vert \geq \gamma _{N}\left(
C\right) ,aN\leq k\leq bN}\widetilde{V}\left( k\right) -\widetilde{V}\left(
k^{\ast }\right) \geq 0\right) \\
&\leq P\left( \max_{\left\vert k^{\ast }-k\right\vert \geq \gamma _{N}\left(
C\right) ,aN\leq k\leq bN}\frac{\widetilde{V}\left( k\right) -\widetilde{V}%
\left( k^{\ast }\right) }{N^{1-\alpha }\left\vert k^{\ast }-k\right\vert
\Vert \mathcal{\delta }\Vert ^{2}}N^{1-\alpha }\left\vert k^{\ast
}-k\right\vert \Vert \mathcal{\delta }\Vert ^{2}\geq 0\right) \\
&\leq P\left( \max_{\left\vert k^{\ast }-k\right\vert \geq \gamma _{N}\left(
C\right) ,aN\leq k\leq bN}\frac{\widetilde{V}\left( k\right) -\widetilde{V}%
\left( k^{\ast }\right) }{N^{1-\alpha }\left\vert k^{\ast }-k\right\vert
\Vert \mathcal{\delta }\Vert ^{2}}N^{2-\alpha }\left( b-a\right) \Vert 
\mathcal{\delta }\Vert ^{2}\geq 0\right) \\
&=P\left( \max_{\left\vert k^{\ast }-k\right\vert \geq \gamma _{N}\left(
C\right) ,aN\leq k\leq bN}\frac{\widetilde{V}\left( k\right) -\widetilde{V}%
\left( k^{\ast }\right) }{N^{1-\alpha }\left\vert k^{\ast }-k\right\vert
\Vert \mathcal{\delta }\Vert ^{2}}\geq 0\right) ,
\end{align*}%
and therefore, by (\ref{cooper-4})%
\begin{equation*}
\lim_{C\rightarrow \infty }\limsup_{N\rightarrow \infty }P\left(
\max_{\left\vert k^{\ast }-k\right\vert \geq \gamma _{N}\left( C\right)
,aN\leq k\leq bN}\widetilde{V}\left( k\right) -\widetilde{V}\left( k^{\ast
}\right) \geq 0\right) =0.
\end{equation*}%
Now (\ref{refine}) follows from noting that%
\begin{align*}
P\left( \Vert \mathcal{\delta }\Vert ^{2}\left\vert \widehat{k}_{N}-k^{\ast
}\right\vert >C\right) &=P\left( \Vert \mathcal{\delta }\Vert ^{2}\left\vert 
\widehat{k}_{N}-k^{\ast }\right\vert >C,aN\leq k\leq bN\right) +o\left(
1\right) \\
&\leq P\left( \max_{\left\vert k^{\ast }-k\right\vert \geq \gamma _{N}\left(
C\right) ,aN\leq k\leq bN}\widetilde{V}\left( k\right) -\widetilde{V}\left(
k^{\ast }\right) \geq 0\right) +o(1).
\end{align*}
\end{proof}

\bigskip

The following two lemmas are useful for the proof of Theorem \ref{vostrikova}%
. For indices $1\leq \ell <u\leq N$, let%
\begin{equation}
\mathcal{M}_{a}\left( t\right) =\sum_{i=1}^{a}\sum_{j=1}^{R+1}\mu _{j}\left(
t\right) I\left\{ k_{j-1}\leq i<k_{j}\right\} ,  \label{m_a}
\end{equation}%
where $\mu _{j}\left( t\right) $ and $k_{j}$, $1\leq j\leq R+1$, are defined
in (\ref{multiple}), and introduce%
\begin{equation}
\Theta _{\ell ,u}^{k}=\displaystyle\left[ \frac{\left( u-\ell \right) ^{2}}{%
\left( k-\ell \right) \left( u-k\right) }\right] ^{\alpha }\displaystyle%
\frac{1}{u-\ell }\left\Vert \left( \mathcal{M}_{k}\left( t\right) -\mathcal{M%
}_{\ell }\left( t\right) \right) -\displaystyle\frac{k-\ell }{u-\ell }\left( 
\mathcal{M}_{u}\left( t\right) -\mathcal{M}_{\ell }\left( t\right) \right)
\right\Vert ^{2}.  \label{theta_lu}
\end{equation}%
If there are any changepoints between $\ell $ and $u$, we use the notation $%
i_{0}$ and $\beta $ to indicate the starting index and the number of
changepoints between $\ell $ and $u$, so that $k_{i_{0}}\leq \ell
<k_{i_{0}+1}<k_{i_{0}+2}<...<k_{i_{0}+\beta }<u\leq k_{i_{0}+\beta +1}$, and
we let $\mathcal{I=}\left\{ 1,2,...,\beta \right\} $ be the set of the
changepoints between $\ell $ and $u$.

\begin{lemma}
\label{drift}We assume that there exists at least one changepoint between $%
\ell $ and $u$. Letting $\overset{\circ }{k}=\sargmax_{l\leq k\leq u}\Theta
_{\ell ,u}^{k}$, it holds that $\overset{\circ }{k}=k_{j}$ for some $j\in
\left\{ 1,...,R\right\} $, with $\ell\leq k_{j}\leq u$.

\begin{proof}
The lemma is shown in Lemma D.5 in \citet{HT2022}.
\end{proof}
\end{lemma}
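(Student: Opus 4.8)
The plan is to reduce the claim to a statement about a deterministic, piecewise-defined real function and then rule out interior maximisers on each piece. First I would observe that, since the mean is piecewise constant with jumps exactly at $k_{1},\dots,k_{R}$, the cumulative mean $\mathcal{M}_{a}(t)$ is piecewise affine in $a$; consequently the centred quantity inside the norm in \eqref{theta_lu},
\[
g(k)=\bigl(\mathcal{M}_{k}-\mathcal{M}_{\ell}\bigr)-\frac{k-\ell}{u-\ell}\bigl(\mathcal{M}_{u}-\mathcal{M}_{\ell}\bigr),
\]
is, for real $k\in[\ell,u]$, an $\mathbb{R}^{r}$-valued piecewise affine function whose only kinks occur at the changepoints $k_{i_{0}+1},\dots,k_{i_{0}+\beta}$ lying strictly between $\ell$ and $u$, and which vanishes at both $k=\ell$ and $k=u$. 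Writing $v=(k-\ell)/(u-\ell)\in[0,1]$, the bracketed factor in \eqref{theta_lu} equals $[v(1-v)]^{-1}$, so that, up to the $k$-free constant $(u-\ell)^{-1}$,
\[
\Theta_{\ell,u}^{k}=\frac{1}{u-\ell}\,[v(1-v)]^{-\alpha}\,q(v),\qquad q(v):=\|g\|^{2}.
\]
On each open subinterval between consecutive kinks, $g$ is affine in $v$, hence $q$ is a nonnegative scalar quadratic with nonnegative leading coefficient, i.e. convex.

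The heart of the argument is to show that the relaxed function $\phi(v):=[v(1-v)]^{-\alpha}q(v)$ has no interior local maximum on any such subinterval, which I would establish by a second-order test. On a segment where $q>0$, writing $W(v)=[v(1-v)]^{-\alpha}$ and using $\phi'=W'q+Wq'=0$ to substitute $q'=-W'q/W$, a short computation gives
\[
\phi''=q\Bigl(W''-\tfrac{2(W')^{2}}{W}\Bigr)+Wq''=q\,W\bigl(L''-(L')^{2}\bigr)+Wq'',
\]
where $L=\log W=-\alpha[\log v+\log(1-v)]$. Since $q''\ge0$ and $W>0$, it suffices to check $L''-(L')^{2}>0$ on $(0,1)$. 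With $L'=-\alpha(1-2v)/[v(1-v)]$ and $L''=\alpha(2v^{2}-2v+1)/[v^{2}(1-v)^{2}]$, the numerator of $L''-(L')^{2}$ equals $\alpha\bigl[(2v^{2}-2v+1)-\alpha(2v-1)^{2}\bigr]$; at $\alpha=1$ the bracket is $2v(1-v)>0$, and it is decreasing in $\alpha$, so for $0<\alpha<1$ it is strictly positive throughout $(0,1)$. Hence every interior critical point is a strict local minimum, and $\phi$ attains its maximum over each closed segment at an endpoint of that segment, i.e. at a kink (a changepoint) or at $v\in\{0,1\}$.

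To finish, I would note that $\phi(0)=\phi(1)=0$, since $g$ vanishes at $\ell$ and $u$ and $[v(1-v)]^{-\alpha}\|g\|^{2}=O(v^{2-\alpha})\to0$ there as $\alpha<1$, while a genuine break forces $g\not\equiv0$: were $g$ to vanish at $\ell$, $u$ and all interior changepoints it would be identically zero, forcing the increments $g(k)-g(k-1)=\mu_{k}-\overline{\mu}$ to vanish and the per-observation means $\mu_{i}$ to be constant, contradicting $\|\mu_{k_{j}+1}-\mu_{k_{j}}\|\ge c_{0}>0$. Therefore $\max_{[\ell,u]}\phi>0$ and, by the previous paragraph, is attained only at an interior kink, i.e. at some changepoint $k_{j}$ with $\ell\le k_{j}\le u$. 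Since the kinks are integers and $\Theta_{\ell,u}^{k}$ is the restriction of $\phi$ to integers, every integer maximiser is a changepoint, so the smallest one, $\overset{\circ}{k}=\sargmax_{\ell\le k\le u}\Theta_{\ell,u}^{k}$, coincides with a $k_{j}$, giving the claim. I expect the main obstacle to be precisely the second-order computation yielding $L''-(L')^{2}>0$ uniformly in $v$ for $0<\alpha<1$, which is what lets endpoint-maximisation survive the singular weight; the remaining work is bookkeeping for the degenerate cases where $q$ touches zero in the interior (there $\phi=0$, a minimum) and the check that $\phi$ is nowhere constant on a segment (a nonconstant convex quadratic cannot equal $c\,[v(1-v)]^{\alpha}$).
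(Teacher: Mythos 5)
Your proof is correct, but it takes a genuinely different route from the paper: the paper does not prove Lemma \ref{drift} at all, it simply defers to Lemma D.5 of \citet{HT2022}, whereas you give a self-contained, elementary argument. Your reduction is sound: with $v=(k-\ell)/(u-\ell)$ one has $\Theta_{\ell,u}^{k}=(u-\ell)^{-1}\phi(v)$, where $\phi=[v(1-v)]^{-\alpha}q(v)$, $q=\|g\|^{2}$, $g$ is piecewise affine with kinks exactly at the changepoints and $g(\ell)=g(u)=0$, and on each inter-kink segment $q(v)=\|a+vb\|^{2}$ is a convex quadratic. The second-order test, which is the heart of your argument, checks out: at an interior critical point with $q>0$,
\[
\phi''=qW\bigl(L''-(L')^{2}\bigr)+Wq'',\qquad
L''-(L')^{2}=\frac{\alpha\bigl[(2v^{2}-2v+1)-\alpha(2v-1)^{2}\bigr]}{v^{2}(1-v)^{2}}
\ge\frac{2\alpha\,v(1-v)}{v^{2}(1-v)^{2}}>0
\]
for $0<\alpha<1$ (equivalently, $[v(1-v)]^{\alpha}$ is strictly concave), so every such critical point is a strict local minimum; interior zeros of $q$ give $\phi=0$ and cannot be segment maximizers unless $\phi\equiv0$ there. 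Together with $\phi(0)=\phi(1)=0$ (since $q=O(v^{2})$ at the ends and $\alpha<1$) and $\max\phi>0$ (a genuine kink forces $g\not\equiv0$), every continuous maximizer, hence every integer maximizer of $\Theta_{\ell,u}^{k}$, sits at a kink, i.e.\ at a changepoint, which is exactly the claim. What your route buys is a transparent, verifiable proof using only calculus and convexity; what the paper's citation buys is brevity.

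Two minor tidy-ups. First, the ``constant on a segment'' case you defer to bookkeeping is already excluded by your own test: if $\phi$ were a positive constant on a segment, every interior point would be critical, hence a strict local minimum, a contradiction; no separate polynomial-versus-$[v(1-v)]^{\alpha}$ argument is needed. Second, at $\alpha=0$ your strict inequality degenerates to zero, but there $\phi=q$ is convex on each segment and endpoint maximization is immediate, so the conclusion holds on the whole range $0\le\alpha<1$; in any case the paper only invokes the lemma inside Theorem \ref{vostrikova}, which assumes $0<\alpha<1$.
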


The next lemma provides a guarantee on the rate of divergence of the
maximally selected statistics used the binary segmentation algorithm
provided appropriate conditions are met on the lower and upper indices $%
\ell,u$. It is used in the proof of Theorem \ref{vostrikova} to demonstrate
each successive step of the algorithm detects one of the remaining
changepoints with probability tending to 1.

\begin{lemma}
\label{drift-2}Let $m_{N}=\zeta N\min_{i\in \left\{ 0,...,R\right\} }\left(
\theta _{i+1}-\theta _{i}\right) $ for some $\zeta \in \left( 0,1\right) $.
Assume that, for some integer $r\in \mathcal{I}$, on the sub-segments with
indices between $\ell $ and $u$, it holds that
\begin{equation}
\ell <k_{i_{0}+r}-m_{M}<k_{i_{0}+r}+m_{M}<u.  \label{8222}
\end{equation}%
Then,
\begin{equation}
\max_{\ell <k<u}\Theta _{\ell ,u}^{k}\geq c_{0}\left( N^{-1/2}\Delta
_{N}m_{N}\right) ^{2},  \label{8222_part1}
\end{equation}%
where $\Delta _{N}=\min_{1\leq j\leq R}\left\Vert \mu _{k_{j}+1}-\mu
_{k_{j}}\right\Vert $ and $c_{0}$ is a positive, finite constant. 

\begin{proof}
The lemma can be shown based on Lemma 8.2.2 in \citet{chgreg}, who prove it
for $\alpha =1$ (see also Lemma 3.3 of \cite{rice:zhang:2022} and Lemma 2.4
of \cite{venkatraman:1992}). However, for the sake of a self-contained
discussion, we report a sketch of the proof for arbitrary $0\leq \alpha <1$.

Note that if we shift each mean $\mu_i(t)$ by $\mu_i(t)+c(t)$, $i=1\ldots N$
for any $c(t)$, the corresponding value of $\Theta _{\ell ,u}^{k}$ remains
unchanged; thus by taking $c(t)=-\left( \mathcal{M}_{u}\left( t\right) -%
\mathcal{M}_{\ell }\left( t\right) \right)$, we may without loss of
generality assume that in \eqref{theta_lu} $\mathcal{M}_{u}\left( t\right) -%
\mathcal{M}_{\ell }\left( t\right) \equiv 0$. Now, by standard algebra, for
all $0\leq \alpha <1$%
\begin{equation*}
\frac{\left( u-\ell \right) }{\left( k-\ell \right) \left( u-k\right) }\geq 
\frac{4}{\left( u/N-\ell /N\right) }N^{-1};
\end{equation*}%
hence, for all $0\leq \alpha <1$ 
\begin{align}
\left[ \frac{\left( u-\ell \right) ^{2}}{\left( k-\ell \right) \left(
u-k\right) }\right] ^{\alpha }\frac{1}{u-\ell }& \geq \left( \frac{4}{\left(
u/N-\ell /N\right) }\right) ^{\alpha }N^{-\alpha }\left( u/N-\ell /N\right)
^{\alpha -1}N^{\alpha -1}  \label{algebr} \\
& \geq \frac{4^{\alpha }}{\left( u/N-\ell /N\right) }N^{-1}.  \notag
\end{align}%
%
%
Let $v=k_{i_{0}+r}$ and $v^{\prime }=k_{i_{0}+r+1}$ (with the convention
that if $v$ is the right most change point in the interval $\left( \ell
,u\right) $, then $v^{\prime }=u$). Let also $EX_{v}\left( t\right) =\mu
\left( t\right) $ and $EX_{v^{\prime }}\left( t\right) =\mu ^{\prime }\left(
t\right) $. By definition of $\Delta _{N}$, $\left\Vert \mu ^{\prime }\left(
t\right) -\mu \left( t\right) \right\Vert \geq \Delta _{N}$; hence, by the
triangular inequality, it follows that 
\begin{equation}
\max \left\{ \left\Vert \mu ^{\prime }\right\Vert ,\left\Vert \mu
\right\Vert \right\} \geq \Delta _{N}/2.  \label{rev-tr}
\end{equation}%
Note also that, by definition of $m_{N}$, there is no additional changepoint
between $\left[ v-m_{N},v\right) $ and $\left( v,v+m_{N}\right] $. Then, by
definition 
\begin{equation*}
\mathcal{M}_{v}\left( t\right) -\mathcal{M}_{v-m_{N}}\left( t\right)
=m_{N}\mu \left( t\right) ,\text{ \ \ and \ \ }\mathcal{M}_{v+m_{N}}\left(
t\right) -\mathcal{M}_{v}\left( t\right) =m_{N}\mu ^{\prime }\left( t\right)
,
\end{equation*}%
which, by (\ref{rev-tr}), implies%
\begin{equation}
\max \left\{ \left\Vert \mathcal{M}_{v}\left( t\right) -\mathcal{M}%
_{v-m_{N}}\left( t\right) \right\Vert ,\left\Vert \mathcal{M}%
_{v+m_{N}}\left( t\right) -\mathcal{M}_{v}\left( t\right) \right\Vert
\right\} \geq m_{N}\Delta _{N}/2.  \label{triangle}
\end{equation}%
In turn, this implies 
\begin{equation*}
\max\left\{\left\Vert \mathcal{M}_{v+m_{N}}\left( t\right) -\mathcal{M}%
_{\ell }\left( t\right) \right\Vert, \left\Vert \mathcal{M}_{v}\left(
t\right) -\mathcal{M}_{\ell }\left( t\right) \right\Vert, \left\Vert 
\mathcal{M}_{v-m_{N}}\left( t\right) -\mathcal{M}_{\ell }\left( t\right)
\right\Vert\right\} \geq m_{N}\Delta _{N}/4.
\end{equation*}
Thus, 
\begin{align*}
\Theta_{\ell,u}^k&=\max_{\ell <k<u}\displaystyle\left[ \frac{\left( u-\ell
\right) ^{2}}{\left( k-\ell \right) \left( u-k\right) }\right] ^{\alpha }%
\displaystyle\frac{1}{u-\ell }\left\Vert \left( \mathcal{M}_{k}\left(
t\right) -\mathcal{M}_{\ell }\left( t\right) \right) \right\Vert ^{2} \\
& \geq \max_{k\in\{v-m_N,v,v+m_N\}}\displaystyle\left[ \frac{\left( u-\ell
\right) ^{2}}{\left( k-\ell \right) \left( u-k\right) }\right] ^{\alpha }%
\displaystyle\frac{1}{u-\ell }\left\Vert \left( \mathcal{M}_{k}\left(
t\right) -\mathcal{M}_{\ell }\left( t\right) \right) \right\Vert ^{2} \\
&\geq \frac{4^{\left( \alpha -2\right) }}{\left( u/N-\ell /N\right) }\left(
N^{-1/2}\Delta _{N}m_{N}\right) ^{2}.
\end{align*}

\end{proof}
\end{lemma}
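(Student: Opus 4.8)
The quantity $\Theta_{\ell,u}^{k}$ is entirely deterministic---it is a weighted squared increment of the piecewise-constant mean process $\mathcal{M}_{a}$---so the plan is simply to produce a lower bound by evaluating it at a carefully chosen index near the genuine changepoint $v:=k_{i_{0}+r}$ and discarding the maximum over the remaining $k$. Hypothesis \eqref{8222} guarantees that $v$ sits at distance at least $m_{N}$ from both $\ell$ and $u$ within the current segment, and the definition of $m_{N}$ ensures there is no other changepoint within $m_{N}$ of $v$; these two facts are what make a localized computation clean.

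First I would exploit a translation invariance of $\Theta_{\ell,u}^{k}$: replacing every mean $\mu_{i}$ by $\mu_{i}+c$ shifts $\mathcal{M}_{a}$ by $ac$ and leaves the centered bracket $(\mathcal{M}_{k}-\mathcal{M}_{\ell})-\frac{k-\ell}{u-\ell}(\mathcal{M}_{u}-\mathcal{M}_{\ell})$ unchanged, while the jump sizes entering $\Delta_{N}$ (being differences of means) are also unaffected. Choosing $c$ so that $\mathcal{M}_{u}-\mathcal{M}_{\ell}=0$ after the shift, the interpolation correction vanishes and one is reduced to lower-bounding $[\text{weight}]\cdot\|\mathcal{M}_{k}-\mathcal{M}_{\ell}\|^{2}$. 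This is the one genuinely clever step; without it the linear term would obstruct the triangle-inequality argument below.

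Next comes the geometric core. Over any length-$m_{N}$ window abutting $v$ on which the mean is constant, the increment of $\mathcal{M}$ equals $m_{N}$ times that constant level (up to a single boundary index, immaterial for the order). Because the jump at $v$ has size at least $\Delta_{N}$, the reverse triangle inequality forces at least one of the two adjacent mean levels to have norm $\geq\Delta_{N}/2$, hence at least one of the increments $\mathcal{M}_{v}-\mathcal{M}_{v-m_{N}}$ and $\mathcal{M}_{v+m_{N}}-\mathcal{M}_{v}$ to have norm $\geq m_{N}\Delta_{N}/2$. A further triangle inequality among $\|\mathcal{M}_{v-m_{N}}-\mathcal{M}_{\ell}\|$, $\|\mathcal{M}_{v}-\mathcal{M}_{\ell}\|$, $\|\mathcal{M}_{v+m_{N}}-\mathcal{M}_{\ell}\|$ then shows at least one of them is $\geq m_{N}\Delta_{N}/4$; so for some admissible $k\in\{v-m_{N},v,v+m_{N}\}$ we have $\|\mathcal{M}_{k}-\mathcal{M}_{\ell}\|^{2}\geq(m_{N}\Delta_{N}/4)^{2}$.

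Finally I would bound the weight below uniformly in $\alpha$. By the AM--GM inequality $(k-\ell)(u-k)\leq(u-\ell)^{2}/4$ and $u-\ell\leq N$, so
\begin{equation*}
\left[\frac{(u-\ell)^{2}}{(k-\ell)(u-k)}\right]^{\alpha}\frac{1}{u-\ell}\geq\frac{4^{\alpha}}{(u-\ell)/N}\,N^{-1}\geq 4^{\alpha}N^{-1}.
\end{equation*}
Combining this with the increment bound yields $\max_{\ell<k<u}\Theta_{\ell,u}^{k}\geq 4^{\alpha-2}N^{-1}(m_{N}\Delta_{N})^{2}=c_{0}(N^{-1/2}\Delta_{N}m_{N})^{2}$ with $c_{0}=4^{\alpha-2}\in[1/16,1/4)$. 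I expect the main obstacle to be bookkeeping rather than ideas: verifying that the shift genuinely removes the interpolation term, checking the admissibility and clearance of the three candidate indices under \eqref{8222}, and confirming that the off-by-one boundary index in each increment does not degrade the constant. The uniformity over $\alpha\in[0,1)$ is the new feature relative to the $\alpha=1$ result in \citet{chgreg}, and it is secured precisely by the elementary weight inequality above.
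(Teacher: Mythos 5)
Your proposal is correct and follows essentially the same route as the paper's own proof: the translation-invariance reduction to $\mathcal{M}_{u}-\mathcal{M}_{\ell}\equiv 0$, the reverse triangle inequality giving $\max\{\|\mu\|,\|\mu'\|\}\geq \Delta_{N}/2$, the three candidate indices $\{v-m_{N},v,v+m_{N}\}$ with the second triangle inequality, and the AM--GM weight bound $(k-\ell)(u-k)\leq (u-\ell)^{2}/4$ are all the same steps, in the same order. The only (immaterial) difference is that you discard the factor $\left( (u-\ell)/N\right)^{-1}\geq 1$ via $u-\ell\leq N$, whereas the paper retains it in the final constant.
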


\newpage

\clearpage
\renewcommand*{\thesection}{\Alph{section}}

\setcounter{subsection}{-1} \setcounter{subsubsection}{-1} %
\setcounter{equation}{0} \setcounter{lemma}{0} \setcounter{theorem}{0} %
\renewcommand{\theassumption}{D.\arabic{assumption}} 
\renewcommand{\thetheorem}{D.\arabic{theorem}} \renewcommand{\thelemma}{D.%
\arabic{lemma}} \renewcommand{\theproposition}{D.\arabic{proposition}} %
\renewcommand{\thecorollary}{D.\arabic{corollary}} \renewcommand{%
\theequation}{D.\arabic{equation}}

\section{Main Proofs\label{proofs}}

\begin{proof}[Proof of Theorem \protect\ref{th-1}]
Let 
\begin{equation*}
Z_{N}(u)=\frac{1}{2}N(u(1-u))^{2-\alpha }Q_{N}\left( \lfloor Nu\rfloor
\right) .
\end{equation*}%
On account of Lemma \ref{l:replace_V_with_Q}, it is easy to see that%
\begin{align*}
\frac{1}{2}&N\sup_{0\leq u\leq 1}\left\vert (u(1-u))^{2-\alpha }\left(
V_{N}\left( \lfloor Nu\rfloor \right) -Q_{N}\left( \lfloor Nu\rfloor \right)
\right) \right\vert \\
& \leq CN\sup_{0\leq u\leq 1}\left\vert V_{N}\left( \lfloor Nu\rfloor
\right) -Q_{N}\left( \lfloor Nu\rfloor \right) \right\vert =o_{P}\left(
1\right) ,
\end{align*}%
and therefore we need only to establish 
\begin{equation}
Z_{N}(u)\underset{\mathcal{D}[0,1]}{\overset{w}{\rightarrow }}\frac{\Delta
(u)}{\left( u\left( 1-u\right) \right) ^{\alpha }}.
\label{e:convegence_of_ZN}
\end{equation}%
Note that, for some $0<\eta <1/2$ 
\begin{align*}
\sup_{0\leq u\leq 1}&\left\vert Z_{N}(u)-\frac{\Delta (u)}{\left( u\left(
1-u\right) \right) ^{\alpha }}\right\vert \\
\leq & \sup_{0\leq u\leq \eta }\left\vert Z_{N}(u)\right\vert +\sup_{0\leq
u\leq \eta }\left\vert \frac{\Delta (u)}{\left( u\left( 1-u\right) \right)
^{\alpha }}\right\vert +\sup_{\eta \leq u\leq 1-\eta }\left\vert Z_{N}(u)-%
\frac{\Delta (u)}{\left( u\left( 1-u\right) \right) ^{\alpha }}\right\vert \\
& \qquad \qquad +\sup_{1-\eta \leq u\leq 1}\left\vert Z_{N}(u)\right\vert
+\sup_{1-\eta \leq u\leq 1}\left\vert \frac{\Delta (u)}{\left( u\left(
1-u\right) \right) ^{\alpha }}\right\vert \\
=& I+II+III+IV+V.
\end{align*}%
It is easy to see that 
\begin{equation*}
II=\sup_{0\leq u\leq \eta }\left\vert \frac{\Delta (u)}{\left( u\left(
1-u\right) \right) ^{\alpha }}\right\vert \leq C\sup_{0\leq u\leq \eta
}u^{-\alpha }\Delta (u),
\end{equation*}%
and therefore, as $\eta \rightarrow 0$, by Lemma \ref{l:Delta_boundary_bound}
it follows that $II=o_{P}\left( 1\right) $; the same can be shown for $V$.
Similarly note that%
\begin{equation*}
I=\sup_{0\leq u\leq \eta }\left\vert Z_{N}(u)\right\vert \leq CN\sup_{0\leq
u\leq \eta }\left\vert (u(1-u))^{2-\alpha }Q_{N}\left( \lfloor Nu\rfloor
\right) \right\vert ,
\end{equation*}%
and therefore, using Lemma \ref{l:BHR_11}, it follows that%
\begin{equation*}
\lim_{\eta \rightarrow 0}\limsup_{N\rightarrow \infty }P\left\{ \sup_{0\leq
u\leq \eta }\left\vert Z_{N}(u)\right\vert >x\right\} =0,
\end{equation*}%
which yields $I=o_{P}\left( 1\right) $; the same can be shown for $IV$.
Finally, consider $III$ and let 
\begin{equation*}
Y_{N}(u,t)=N^{-1/2}\big(S_{\lfloor Nu\rfloor }(t)-\frac{\lfloor Nu\rfloor }{N%
}S_{N}(t)\big).
\end{equation*}%
For each $0<\eta <1/2$, as $N\rightarrow \infty ,$ Lemma \ref{l:BHR_11}
implies 
\begin{equation*}
\big(u(1-u)\big)^{-\alpha }\int |Y_{N}(u,t)|^{2}dt\underset{\mathcal{D}[\eta
,1-\eta ]}{\overset{w}{\rightarrow }}\big(u(1-u)\big)^{-\alpha }\int |\Gamma
(u,t)|^{2}dt.
\end{equation*}%
In turn, this gives 
\begin{align*}
Z_{N}(u)& =\frac{1}{2}N(u(1-u))^{2-\alpha }Q_{N}(\lfloor Nu\rfloor ) \\
& =\big(u(1-u)\big)^{-\alpha }\bigg(\big\|Y_{N}(u,\cdot )\big\|^{2}-\sigma
_{0}^{2}u(1-u)\bigg)\underset{\mathcal{D}[\eta ,1-\eta ]}{\overset{w}{%
\rightarrow }}\frac{\Delta (u)}{\big(u(1-u)\big)^{\alpha }}.
\end{align*}%
We now conclude the proof by showing that%
\begin{equation}
P\left\{ \sup_{0<u<1}\frac{\Delta (u)}{\left( u\left( 1-u\right) \right)
^{\alpha }}<\infty \right\} =1.  \label{integral-test}
\end{equation}%
This follows immediately if we show that%
\begin{equation*}
\lim_{x\rightarrow \infty }P\left\{ \sup_{0<u<1}\frac{\displaystyle{\int
|\Gamma (u,t)|^{2}dt}}{\left( u\left( 1-u\right) \right) ^{\alpha }}%
>x\right\} =0.
\end{equation*}%
Recalling (\ref{kl-gamma}), this is equivalent to showing%
\begin{equation*}
\lim_{x\rightarrow \infty }P\left\{ \sup_{0<u<1}\frac{\displaystyle%
\sum_{\ell =1}^{\infty }\lambda _{\ell }B_{\ell }^{2}(u)}{\left( u\left(
1-u\right) \right) ^{\alpha }}>x\right\} =0.
\end{equation*}%
It holds that%
\begin{align*}
P\left\{ \sup_{0<u<1}\frac{\displaystyle\sum_{\ell =1}^{\infty }\lambda
_{\ell }B_{\ell }^{2}(u)}{\left( u\left( 1-u\right) \right) ^{\alpha }}%
>x\right\} & \leq x^{-1}E\left( \sup_{0<u<1}\frac{\displaystyle\sum_{\ell
=1}^{\infty }\lambda _{\ell }B_{\ell }^{2}(u)}{\left( u\left( 1-u\right)
\right) ^{\alpha }}\right) \\
& \leq x^{-1}\sum_{\ell =1}^{\infty }\lambda _{\ell }E\left( \sup_{0<u<1}%
\frac{B_{\ell }^{2}(u)}{\left( u\left( 1-u\right) \right) ^{\alpha }}\right)
,
\end{align*}%
and recalling that the $B_{\ell }(u)$ are all standard Brownian bridges,
this entails that%
\begin{align*}
P\left\{ \sup_{0<u<1}\frac{\displaystyle\sum_{\ell =1}^{\infty }\lambda
_{\ell }B_{\ell }^{2}(u)}{\left( u\left( 1-u\right) \right) ^{\alpha }}%
>x\right\} & \leq x^{-1}\left( \sum_{\ell =1}^{\infty }\lambda _{\ell
}\right) E\left( \sup_{0<u<1}\frac{B_{0}^{2}(u)}{\left( u\left( 1-u\right)
\right) ^{\alpha }}\right) \\
& \leq Cx^{-1}E\left( \sup_{0<u<1}\frac{B_{0}^{2}(u)}{\left( u\left(
1-u\right) \right) ^{\alpha }}\right) ,
\end{align*}%
where $B_{0}(u)$ is a standard Brownian bridge and the last passage follows
from the fact that $\mathbf{D}\left( t,s\right) \in L^{2}\left( \mathcal{T}%
\right) $ entails the summability of the eigenvalues (see e.g. %
\citealp{horvath:kokoszka:2012}, p. 24). Note now that%
\begin{equation*}
B_{0}^{2}(u)\leq 2\left( W^{2}\left( u\right) +u^{2}W\left( 1\right) \right)
,
\end{equation*}%
where $\left\{ W\left( t\right) ,0\leq t\leq 1\right\} $ is a standard
Wiener process. By equation (2.6) in \citet{garsia1970}, it can be shown
that there exists a random variable $\xi $ such that $E\left\vert \xi
\right\vert ^{p}<\infty $ for all $p>0$ such that%
\begin{equation}
\left\vert W\left( u\right) \right\vert \leq \left\vert \xi \right\vert
\left( u\log \frac{1}{u}\right) ^{1/2}\text{ \ a.s.}  \label{garsia}
\end{equation}%
Hence we have%
\begin{align*}
E\left( \sup_{0<u\leq 1/2}\frac{B_{0}^{2}(u)}{\left( u\left( 1-u\right)
\right) ^{\alpha }}\right) &\leq 2E\left( \sup_{0<u\leq 1/2}\frac{%
W^{2}\left( u\right) }{u^{\alpha }}\right) +2E\left( \sup_{0<u\leq 1/2}\frac{%
u^{2}W^{2}\left( 1\right) }{u^{\alpha }}\right) \\
&\leq 2\left( E\left\vert \xi \right\vert ^{2}\right) \left( \sup_{0<u\leq
1/2}\frac{u\log \frac{1}{u}}{u^{\alpha }}\right) +2E\left( W^{2}\left(
1\right) \right) \left( \sup_{0<u\leq 1/2}\frac{u^{2}}{u^{\alpha }}\right) \\
&\leq C\sup_{0<u\leq 1/2}\frac{u\log \frac{1}{u}}{u^{\alpha }}%
+2\sup_{0<u\leq 1/2}\frac{u^{2}}{u^{\alpha }}\leq C,
\end{align*}%
where the last inequality follows from standard algebra; by symmetry, it
also follows that%
\begin{equation*}
E\left( \sup_{1/2\leq u<1}\frac{B_{0}^{2}(u)}{\left( u\left( 1-u\right)
\right) ^{\alpha }}\right) \leq C.
\end{equation*}%
Thus we finally have 
\begin{equation*}
P\left\{ \sup_{0<u<1}\frac{\displaystyle \sum_{\ell =1}^{\infty }\lambda
_{\ell }B_{\ell }^{2}(u)}{\left( u\left( 1-u\right) \right) ^{\alpha }}%
>x\right\} \leq Cx^{-1},
\end{equation*}%
whence (\ref{integral-test}) follows immediately. The desired result now
follows by putting everything together.
\end{proof}

\begin{proof}[Proof of Theorem \protect\ref{th-2}]
We show equation (\ref{e:T_N_asymp_norm}) in detail; the divergence
statement (\ref{power}) can be shown through a somewhat similar (and
shorter) proof. For brevity and simplicity we work under the assumption
that, as $N\rightarrow \infty $, it holds that%
\begin{equation*}
\left\Vert \Vert \mathcal{\delta }\Vert ^{-1}\mathcal{\delta }-\rho
\right\Vert \rightarrow 0,
\end{equation*}%
for some $\rho \in L^{2}(\mathcal{T})$; this condition can be dropped with
minor but tedious adjustments to the arguments that follow. Denote the time
of change by ${k^{\ast }}$. For $k<{k^{\ast }}$, using the identity $\Vert
\epsilon _{i}-\epsilon _{j}+\mathcal{\delta }\Vert ^{2}=\Vert \epsilon
_{i}-\epsilon _{j}\Vert ^{2}+\Vert \mathcal{\delta }\Vert ^{2}+2\big\langle%
\epsilon _{i}-\epsilon _{j},\mathcal{\delta }\big\rangle$, we have 
\begin{align*}
\sum_{i=1}^{k}\sum_{j=k+1}^{N}\Vert X_{i}-X_{j}\Vert ^{2}& =\sum_{i=1}^{k}%
\Bigg(\sum_{j=k+1}^{{k^{\ast }}}\Vert \epsilon _{i}-\epsilon _{j}\Vert
^{2}+\sum_{j={k^{\ast }}+1}^{N}\Vert \epsilon _{i}-\epsilon _{j}+\mathcal{%
\delta }\Vert ^{2}\Bigg) \\
& =\sum_{i=1}^{k}\sum_{j=k+1}^{N}\Vert \epsilon _{i}-\epsilon _{j}\Vert
^{2}+k(N-{k^{\ast }})\Vert \mathcal{\delta }\Vert ^{2}+2\big\langle(N-{%
k^{\ast }})w_{k}-k(w_{N}-w_{k^{\ast }}),\mathcal{\delta }\big\rangle,
\end{align*}%
where recall that $w_{k}=w_{k}(t)=\sum_{i=1}^{k}\epsilon _{i}(t)$.
Analogously, for $k\geq {k^{\ast }}$, it holds that 
\begin{equation*}
\sum_{i=1}^{k}\sum_{j=k+1}^{N}\Vert X_{i}-X_{j}\Vert
^{2}=\sum_{i=1}^{k}\sum_{j=k+1}^{N}\Vert \epsilon _{i}-\epsilon _{j}\Vert
^{2}+{k^{\ast }}(N-k)\Vert \mathcal{\delta }\Vert ^{2}+2\big\langle(N-k)w_{{%
k^{\ast }}}-{k^{\ast }}(w_{N}-w_{k}),\mathcal{\delta }\big\rangle.
\end{equation*}%
Similarly, 
\begin{align*}
\sum_{i,j=1}^{k}&\Vert X_{i}-X_{j}\Vert ^{2}= \\
&\sum_{i,j=1}^{k}\Vert \epsilon _{i}-\epsilon _{j}\Vert ^{2}+%
\begin{cases}
0,\quad1\leq k\leq {k^{\ast }}, &  \\ 
2{k^{\ast }}(k-{k^{\ast }})\Vert \mathcal{\delta }\Vert ^{2}+4\langle
(k-k^{\ast })w_{k^{\ast }}-k^{\ast }(w_{k}-w_{k^{\ast }}),\mathcal{\delta }%
\rangle , & {k^{\ast }}<k\leq N%
\end{cases}%
\end{align*}%
and 
\begin{align*}
& \sum_{i,j=k+1}^{N}\Vert X_{i}-X_{j}\Vert ^{2}=\sum_{i,j=k+1}^{N}\Vert
\epsilon _{i}-\epsilon _{j}\Vert ^{2} \\
& \hspace{2ex}+%
\begin{cases}
2({k^{\ast }}-k)(N-{k^{\ast }})\Vert \mathcal{\delta }\Vert ^{2}+4\langle
(N-k^{\ast })(w_{k^{\ast }}-w_{k})-(k^{\ast }-k)(w_{N}-w_{k^{\ast }}),%
\mathcal{\delta }\rangle , & 1\leq k\leq {k^{\ast },} \\ 
0,\quad {k^{\ast }}<k\leq N. & 
\end{cases}%
\end{align*}%
We therefore obtain 
\begin{equation}
V_{N}(k)=V_{N}^{0}(k)+g_{N}(k)+R_{N}(k),
\label{e:V_N(k)_decomposition_under_H1}
\end{equation}%
where 
\begin{equation*}
V_{N}^{0}(k)=\frac{2}{k(N-k)}\sum_{i=1}^{k}\sum_{j=k+1}^{N}\Vert \epsilon
_{i}-\epsilon _{j}\Vert ^{2}-\frac{1}{\displaystyle{{\binom{k}{2}}}}%
\sum_{1\leq i<j\leq k}\Vert \epsilon _{i}-\epsilon _{j}\Vert ^{2}-\frac{1}{%
\displaystyle{{\binom{N-k}{2}}}}\sum_{k<i<j\leq N}\Vert \epsilon
_{i}-\epsilon _{j}\Vert ^{2}
\end{equation*}%
with 
\begin{equation*}
g_{N}(k)=2\Vert \mathcal{\delta }\Vert ^{2}\times 
\begin{cases}
\displaystyle{\Big(\frac{N-{k^{\ast }}}{N-k}\Big)^{2}-\frac{({k^{\ast }}%
-k)(N-{k^{\ast }})}{(N-k)^{2}(N-k-1)}}, & 1\leq k\leq {k^{\ast }},\vspace{%
0.2cm} \\ 
\displaystyle{\Big(\frac{{k^{\ast }}}{k}\Big)^{2}-\frac{{k^{\ast }}(k-{%
k^{\ast }})}{k^{2}(k-1)}}, & {k^{\ast }}<k\leq N,%
\end{cases}%
\end{equation*}%
and the remainder $R_{N}(k)=R_{N}^{(1)}(k)+R_{N}^{(2)}(k)$, where 
\begin{equation}
R_{N}^{(1)}(k)=\frac{4}{k(N-k)}\times 
\begin{cases}
\big\langle(N-{k^{\ast }})w_{k}-k(w_{N}-w_{{k^{\ast }}}),\mathcal{\delta }%
\big\rangle, & 1\leq k<{k^{\ast }}\vspace{0.2cm} \\ 
\big\langle(N-k)w_{{k^{\ast }}}-{k^{\ast }}(w_{N}-w_{k}),\mathcal{\delta }%
\big\rangle, & {k^{\ast }}\leq k\leq N%
\end{cases}
\label{e:def_RN(u)}
\end{equation}%
and 
\begin{align}
&R_{N}^{(2)}(k)  \notag \\
&=4\times 
\begin{cases}
\displaystyle\frac{1}{(N-k)(N-k-1)}\langle (N-k^{\ast })(w_{k^{\ast
}}-w_{k})-(k^{\ast }-k)(w_{N}-w_{k^{\ast }}),\mathcal{\delta }\rangle , & 
1\leq k<{k^{\ast }}\vspace{0.2cm} \\ 
\displaystyle\frac{1}{k(k-1)}\langle (k-k^{\ast })w_{k^{\ast }}-k^{\ast
}(w_{k}-w_{k^{\ast }}),\mathcal{\delta }\rangle ,\quad {k^{\ast }}\leq k\leq
N. & 
\end{cases}%
\hspace{-6ex}  \label{e:def_RN(u)2}
\end{align}%
Writing 
\begin{equation*}
\widetilde{g}(u)=%
\begin{cases}
\displaystyle{(1-\theta )^{2}(1-u)^{-\alpha }u^{2-\alpha }} & 0\leq u\leq
\theta \vspace{0.2cm} \\ 
\displaystyle{\theta ^{2}u^{-\alpha }(1-u)^{2-\alpha }} & \theta <u\leq 1,%
\end{cases}%
\end{equation*}%
we have 
\begin{align}
\sup_{0\leq u\leq 1}\Big|& \frac{1}{2}N(u(1-u))^{2-\alpha }\Big(%
V_{N}(\lfloor Nu\rfloor )-R_{N}(\lfloor Nu\rfloor )\Big)-N\Vert \mathcal{%
\delta }\Vert ^{2}\widetilde{g}(u)\Big|  \notag \\
& =\sup_{0\leq u\leq 1}\Big|\frac{1}{2}N(u(1-u))^{2-\alpha
}V_{N}^{0}(\lfloor Nu\rfloor )+\frac{1}{2}N(u(1-u))^{2-\alpha }g_{N}(\lfloor
Nu\rfloor )-N\Vert \mathcal{\delta }\Vert ^{2}\widetilde{g}(u)\Big|  \notag
\\
& =O_{P}(1),  \label{e:V-gbar(u)_bound1}
\end{align}%
where we used that 
\begin{equation*}
\sup_{0\leq u\leq 1}|N(u(1-u))^{2-\alpha }V_{N}^{0}(\lfloor Nu\rfloor
)|=O_{P}(1),
\end{equation*}%
which is a consequence of Theorem \ref{th-1}. Thus, combining %
\eqref{e:V_N(k)_decomposition_under_H1} and \eqref{e:V-gbar(u)_bound1}, we
get 
\begin{equation}
\frac{T_{N}}{N\Vert \mathcal{\delta }\Vert ^{2}}=\widetilde{g}(u)+\frac{1}{2}%
\Vert \mathcal{\delta }\Vert ^{-2}(u(1-u))^{2-\alpha }R_{N}(\lfloor
Nu\rfloor )+\Psi _{N}(u),  \label{TNexp}
\end{equation}%
where $\sup_{0\leq u\leq 1}|\Psi _{N}(u)|=O_{P}(N^{-1}\Vert \mathcal{\delta }%
\Vert ^{-2})=o_{P}(1).$ Turning to $R_{N}$, by Lemma \ref{l:BHR_11}, for
each $N$ we may define a Gaussian process $\{G_{N}(u,t),u\geq 0,t\in 
\mathcal{T}\}$ such that 
\begin{equation*}
\sup_{0\leq u\leq 1}\Vert N^{-1/2}w_{\lfloor Nu\rfloor }-G_{N}(u,\cdot
)\Vert ^{2}=o_{P}(1),
\end{equation*}%
where $EG_{N}(u,t)=0$, and $EG_{N}\left( u,t\right) G_{N}^{\top }\left(
u^{\prime },t^{\prime }\right) =\min \{u,u^{\prime }\}\mathbf{D}(t,t^{\prime
})$ for every $N$. In particular, by \eqref{e:def_RN(u)}, this implies 
\begin{align}
\sup_{0\leq u\leq \theta }& \frac{1}{2}\Vert \mathcal{\delta }\Vert
^{-2}(u(1-u))^{2-\alpha }|R_{N}^{(1)}(\lfloor Nu\rfloor )|  \notag \\
& \leq CN^{-1/2}\Vert \mathcal{\delta }\Vert ^{-2}\sup_{0\leq u\leq \theta }%
\Big(|N^{-1/2}\langle w_{\lfloor Nu\rfloor },\mathcal{\delta }\rangle
|+|N^{-1/2}\langle w_{N}-w_{{k^{\ast }}},\mathcal{\delta }\rangle |\Big)%
=O_{P}(N^{-1/2}\Vert \mathcal{\delta }\Vert ^{-1}),  \label{e:R_N(u)_bound}
\end{align}%
and 
\begin{align}
\sup_{0\leq u\leq \theta }& \frac{1}{2}\Vert \mathcal{\delta }\Vert
^{-2}(u(1-u))^{2-\alpha }|R_{N}^{(2)}(\lfloor Nu\rfloor )|  \notag \\
& \leq C\Vert \mathcal{\delta }\Vert ^{-2}\sup_{0\leq u\leq \theta }\Big(%
\frac{1}{(N-k^{\ast })}|\langle w_{k^{\ast }}-w_{\lfloor Nu\rfloor },%
\mathcal{\delta }\rangle |+\frac{k^{\ast }}{(N-k^{\ast })^{2}}|\langle
w_{N}-w_{{k^{\ast }}},\mathcal{\delta }\rangle |\Big)  \notag \\
& =C\Vert \mathcal{\delta }\Vert ^{-2}\sup_{0\leq u\leq \theta } \Big(\frac{%
N^{1/2}}{(N-k^{\ast })}\frac{|\langle w_{k^{\ast }}-w_{\lfloor Nu\rfloor },%
\mathcal{\delta }\rangle |}{N^{1/2}}+\frac{k^{\ast }N^{1/2}}{(N-k^{\ast
})^{2}}\frac{|\langle w_{N}-w_{{k^{\ast }}},\mathcal{\delta }\rangle |}{%
N^{1/2}}\Big)  \notag \\
& \leq C\Vert \mathcal{\delta }\Vert ^{-2}\Big(N^{-1/2}O_{P}(\Vert \mathcal{%
\delta }\Vert )+N^{-1/2}O_{P}(\Vert \mathcal{\delta }\Vert )\Big)%
=O_{P}(N^{-1/2}\Vert \mathcal{\delta }\Vert ^{-1}).  \label{e:R_N(u)_bound1}
\end{align}%
Analogous arguments give 
\begin{equation}
\sup_{\theta \leq u\leq 1}\frac{1}{2}\Vert \mathcal{\delta }\Vert
^{-2}(u(1-u))^{2-\alpha }|R_{N}^{(i)}(\lfloor Nu\rfloor
)|=O_{P}(N^{-1/2}\Vert \mathcal{\delta }\Vert ^{-1}),\quad i=1,2.
\label{e:R_N(u)_bound2}
\end{equation}%
Thus, defining $\Phi _{N}(u)=\Phi _{N}^{(1)}(u)+\Phi _{N}^{(2)}(u)$, with 
\begin{equation*}
\Phi _{N}^{(i)}(u)=\frac{1}{2}\Vert \mathcal{\delta }\Vert
^{-2}(u(1-u))^{2-\alpha }R_{N}^{(i)}(\lfloor Nu\rfloor ),\quad i=1,2,
\end{equation*}%
we have $\sup_{0\leq u\leq 1}|\Phi _{N}(u)|=O_{P}(N^{-1/2}\Vert \mathcal{%
\delta }\Vert ^{-1})$ and by (\ref{TNexp}), 
\begin{equation*}
\big(N\Vert \mathcal{\delta }\Vert ^{2}\big)^{-1}T_{N}=\sup_{0\leq u\leq 1}%
\Big(\widetilde{g}(u)+\Psi _{N}(u)+\Phi _{N}(u)\Big)\overset{P}{\rightarrow }%
\widetilde{g}(\theta ).
\end{equation*}%
We now turn to establishing the limit behavior of $\Phi _{N}(u)$. To do
this, letting $\rho _{N}=\rho _{N}(t)=\mathcal{\delta }(t)/\Vert \mathcal{%
\delta }\Vert $, we first define processes $Z_{N}^{(1)}(u)$ and $%
Z_{N}^{(2)}(u)$, where%
\begin{equation*}
Z_{N}^{(1)}(u)=%
\begin{cases}
2(1-\theta )\big\langle G_{N}(u,\cdot ),\rho _{N}\big\rangle-2u\,\big\langle %
G_{N}(1,\cdot )-G_{N}(\theta ,\cdot ),\rho _{N}\big\rangle & 0\leq u\leq
\theta ,\vspace{0.2cm} \\ 
2(1-u)\big\langle G_{N}(\theta ,\cdot ),\rho _{N}\big\rangle-2\theta \,%
\big\langle G_{N}(1,\cdot )-G_{N}(u,\cdot ),\rho _{N}\big\rangle, & \theta<u
\leq 1,%
\end{cases}%
\end{equation*}%
and 
\begin{equation*}
Z_{N}^{(2)}(u)=%
\begin{cases}
2(1-\theta )\big\langle G_{N}(\theta ,\cdot )-G_{N}(u,\cdot ),\rho _{N}%
\big\rangle-2(\theta -u)\big\langle G_{N}(1,\cdot )-G_{N}(\theta ,\cdot
),\rho _{N}\rangle & 0\leq u\leq \theta, \vspace{0.2cm} \\ 
2(u-\theta )\big\langle G_{N}(\theta ,\cdot ),\rho _{N}\big\rangle-2\theta %
\big\langle G_{N}(u,\cdot )-G_{N}(\theta ,\cdot ),\rho _{N}\big\rangle,\quad
\theta< u \leq 1. & 
\end{cases}%
\end{equation*}%
Then, by \eqref{e:def_RN(u)}, we see that, for $0\leq u\leq \theta $, 
\begin{align*}
\frac{1}{2}u\big(1-u\big)& N^{1/2}\Vert \mathcal{\delta }\Vert
^{-1}R_{N}^{(1)}(\lfloor Nu\rfloor )-Z_{N}^{(1)}(u) \\
& =2(1-\theta )\big\langle N^{-1/2}w_{\lfloor Nu\rfloor }-G_{N}(u,\cdot
),\rho _{N}\big\rangle \\
& \qquad \qquad -2u\,\big \langle N^{-1/2}w_{N}-G_{N}(1,\cdot ),\rho _{N}%
\big\rangle+2u\big\langle N^{-1/2}w_{{k^{\ast }}}-G_{N}(\theta ,\cdot ),\rho
_{N}\big\rangle+O_{P}(N^{-1})
\end{align*}%
and therefore by \eqref{e:berkes_etal_th1}, 
\begin{equation}
\sup_{0\leq u\leq \theta }\Big|N^{1/2}\Vert \mathcal{\delta }\Vert \Phi
_{N}^{(1)}(u)-\big(u(1-u)\big)^{1-\alpha }Z_{N}^{(1)}(u)\Big|=o_{P}(1).
\label{e:R_N_appx1}
\end{equation}%
Analogously, 
\begin{equation}
\sup_{\theta \leq u\leq 1}\Big|N^{1/2}\Vert \mathcal{\delta }\Vert \Phi
_{N}^{(1)}(u)-\big(u(1-u)\big)^{1-\alpha }Z_{N}^{(1)}(u)\Big|=o_{P}(1).
\label{e:R_N_appx2}
\end{equation}%
Arguing similarly, 
\begin{equation}
\sup_{0\leq u\leq \theta }\Big|N^{1/2}\Vert \mathcal{\delta }\Vert \Phi
_{N}^{(2)}(u)-u^{2-\alpha }(1-u)^{-\alpha }Z_{N}^{(2)}(u)\Big|=o_{P}(1),
\label{e:R_N_appx3}
\end{equation}%
and 
\begin{equation}
\sup_{\theta \leq u\leq 1}\Big|N^{1/2}\Vert \mathcal{\delta }\Vert \Phi
_{N}^{(2)}(u)-u^{-\alpha }(1-u)^{2-\alpha }Z_{N}^{(2)}(u)\Big|=o_{P}(1).
\label{e:R_N_appx4}
\end{equation}%
Putting together \eqref{e:R_N_appx1}-\eqref{e:R_N_appx4} we have 
\begin{equation}
\sup_{0\leq u\leq 1}\Big|N^{1/2}\Vert \mathcal{\delta }\Vert \Phi _{N}(u)-%
\mathcal{Z}_{N}(u)\Big|=o_{P}(1),  \label{e:R_N_appx5}
\end{equation}%
where 
\begin{equation*}
\mathcal{Z}_{N}(u)=\big(u(1-u)\big)^{1-\alpha }Z_{N}^{(1)}(u)+%
\begin{cases}
u^{2-\alpha }(1-u)^{-\alpha }Z_{N}^{(2)}(u) & 0\leq u\leq \theta , \\ 
u^{-\alpha }(1-u)^{2-\alpha }Z_{N}^{(2)}(u) & \theta <u\leq 1.%
\end{cases}%
\end{equation*}%
We also remark that since $\langle G_{N}(u,\cdot ),\rho \rangle $ is
Gaussian, and 
\begin{align*}
E\left\langle G_{N}(u,\cdot ),\rho \right\rangle \left\langle
G_{N}(u^{\prime },\cdot ),\rho \right\rangle & =E\int \int \rho ^{\top
}\left( s\right) G_{N}(u^{\prime },s)G_{N}^{\top }(u,t)\rho (s)dtds \\
& =\min \left\{ u,u^{\prime }\right\} \int \int \rho ^{\top }\left( t\right) 
\mathbf{D}(t,s)\rho (s)dsdt,
\end{align*}%
the process $u\mapsto \langle G_{N}(u,\cdot ),\rho \rangle $ is a Brownian
motion in law and thus $Z_{N}^{(1)}$, $Z_{N}^{(2)}$ can be taken continuous.
Since $Z_{N}^{(2)}(\theta )=0,$ $\mathcal{Z}_{N}(u)$ is therefore
continuous. We now proceed to show the weak limit 
\begin{equation*}
\Vert \mathcal{\delta }\Vert N^{1/2}\Big(\frac{T_{N}}{N\Vert \mathcal{\delta 
}\Vert ^{2}}-\widetilde{g}(\theta )\Big)\overset{\mathcal{D}}{\rightarrow }%
\mathcal{Z}(\theta ).
\end{equation*}%
First observe since $\widetilde{g}(u)$ has a global maximum at $u=\theta $,
for every small $h>0$, 
\begin{align}
\sup_{u\in \lbrack 0,1]\setminus (\theta -h,\theta +h)}\Big|\frac{%
(u(1-u))^{2-\alpha }|V_{N}(\lfloor Nu\rfloor )|}{2\Vert \mathcal{\delta }%
\Vert ^{2}}\Big|& =\sup_{u\in \lbrack 0,1]\setminus (\theta -h,\theta +h)}%
\Big|\widetilde{g}(u)+\Psi _{N}(u)+\Phi _{N}(u)\Big|  \notag \\
& \overset{\mathcal{P}}{\rightarrow }\sup_{u\in \lbrack 0,1]\setminus
(\theta -h,\theta +h)}\widetilde{g}(u)<\widetilde{g}(\theta ).
\label{e:sup_near_theta0}
\end{align}%
This implies for each small $h>0$, 
\begin{equation}
\lim_{N\rightarrow \infty }P\Big\{T_{N}=T_{N,h}\Big\}=1,
\label{e:sup_near_theta}
\end{equation}%
where 
\begin{equation*}
T_{N,h}=\sup_{\theta -h\leq u\leq \theta +h}\frac{N}{2}(u(1-u))^{2-\alpha
}|V_{N}(\lfloor Nu\rfloor )|.
\end{equation*}%
Thus, for a fixed small $h>0$ define 
\begin{equation*}
A_{N}=\Big\{\omega :\sup_{0<u<1}\big|\Phi _{N}(u)+\Psi _{N}(u)\big|<\frac{1}{%
2}\inf_{\theta -h\leq u\leq \theta +h}g(u)\Big\}
\end{equation*}%
and note $P(A_{N})\rightarrow 1$ since $\sup_{0<u<1}|\Phi _{N}(u)+\Psi
_{N}(u)|=o_{P}(1)$. For each $\omega \in A_{N}$, clearly 
\begin{equation*}
\frac{T_{N,h}}{N\Vert \mathcal{\delta }\Vert ^{2}}-\widetilde{g}(\theta
)=\sup_{\theta -h\leq u\leq \theta +h}\big(\widetilde{g}(u)-g(\theta )+\Psi
_{N}(u)+\Phi _{N}(u)\big)=:\sup_{\theta -h\leq u\leq \theta +h}H(u).
\end{equation*}%
This gives, for every $\omega \in A_{N}$, 
\begin{equation*}
H(\theta )=\Psi _{N}(\theta )+\Phi _{N}(\theta )\leq \sup_{\theta -h\leq
u\leq \theta +h}H(u)=\Big(\frac{T_{N,h}}{N\Vert \mathcal{\delta }\Vert ^{2}}-%
\widetilde{g}(\theta )\Big)\leq \sup_{\theta -h\leq u\leq \theta +h}\big(%
\Psi _{N}(u)+\Phi _{N}(u)\big).
\end{equation*}%
Now, by \eqref{e:R_N_appx1} and \eqref{e:R_N_appx2}, on the set $A_{N}$, 
\begin{equation}
o_{P}(1)+\mathcal{Z}_{N}(\theta )\leq \Vert \mathcal{\delta }\Vert N^{1/2}%
\Big(\frac{T_{N,h}}{N\Vert \mathcal{\delta }\Vert ^{2}}-\widetilde{g}(\theta
)\Big)\leq o_{P}(1)+\sup_{\theta -h\leq u\leq \theta +h}\mathcal{Z}_{N}(u).
\label{e:Z_N_lowerupper}
\end{equation}%
Since $\rho _{N}\rightarrow \rho $ in $L^{2}(\mathcal{T})$, clearly $%
\sup_{0\leq u\leq 1}|Z_{N}(u)-\widetilde{Z}_{N}(u)|=o(1)$, where%
\begin{equation*}
\widetilde{\mathcal{Z}}_{N}(u)=\big(u(1-u)\big)^{1-\alpha }\widetilde{Z}%
_{N}^{(1)}(u)+%
\begin{cases}
u^{2-\alpha }(1-u)^{-\alpha }\widetilde{Z}_{N}^{(2)}(u), & 0\leq u\leq \theta
\\ 
u^{-\alpha }(1-u)^{2-\alpha }\widetilde{Z}_{N}^{(2)}(u), & \theta <u\leq 1,%
\end{cases}%
\end{equation*}%
with $\widetilde{Z}_{N}^{(1)}(u)$ and $\widetilde{Z}_{N}^{(2)}(u)$ the same
as ${Z}_{N}^{(1)}(u)$ and ${Z}_{N}^{(2)}(u)$ but with $\rho $ in place of $%
\rho _{N}$, namely:%
\begin{equation*}
\widetilde{Z}_{N}^{(1)}(u)=%
\begin{cases}
2(1-\theta )\big\langle G_{N}(u,\cdot ),\rho \big\rangle-2u\,\big\langle %
G_{N}(1,\cdot )-G_{N}(\theta ,\cdot ),\rho \big\rangle & 0\leq u\leq \theta 
\vspace{0.2cm} \\ 
2(1-u)\big\langle G_{N}(\theta ,\cdot ),\rho \big\rangle-2\theta \,%
\big\langle G_{N}(1,\cdot )-G_{N}(u,\cdot ),\rho \big\rangle, & \theta<u
\leq 1,%
\end{cases}%
\end{equation*}%
\begin{equation*}
\widetilde{Z}_{N}^{(2)}(u)=%
\begin{cases}
2(1-\theta )\big\langle G_{N}(\theta ,\cdot )-G_{N}(u,\cdot ),\rho %
\big\rangle-2(\theta -u)\big\langle G_{N}(1,\cdot )-G_{N}(\theta ,\cdot
),\rho \rangle & 0\leq u\leq \theta \vspace{0.2cm} \\ 
2(u-\theta )\big\langle G_{N}(\theta ,\cdot ),\rho _{N}\big\rangle-2\theta \,%
\big\langle G_{N}(u,\cdot )-G_{N}(\theta ,\cdot ),\rho \big\rangle,\quad
u<\theta \leq 1. & 
\end{cases}%
\end{equation*}%
Thus, on the set $A_{N}$, it follows from (\ref{e:Z_N_lowerupper}) that%
\begin{equation}
o_{P}(1)+\widetilde{\mathcal{Z}}_{N}(\theta )\leq \Vert \mathcal{\delta }%
\Vert N^{1/2}\Big(\frac{T_{N,h}}{N\Vert \mathcal{\delta }\Vert ^{2}}-%
\widetilde{g}(\theta )\Big)\leq o_{P}(1)+\sup_{\theta -h\leq u\leq \theta +h}%
\widetilde{\mathcal{Z}}_{N}(u).  \label{6.24bis1}
\end{equation}

Since $\widetilde{\mathcal{Z}}_{N}$ has the same distribution for each $N$,
letting $\widetilde{\mathcal{Z}}\overset{\mathcal{D}}{=}\widetilde{\mathcal{Z%
}}_{N}$, for every $x\in \mathbb{R}$, \eqref{6.24bis1} implies 
\begin{align}
\P \bigg\{\sup_{\theta -h\leq u\leq \theta +h}\widetilde{\mathcal{Z}}(u)\leq
x\bigg\}& \leq \liminf_{N\rightarrow \infty }\P \bigg\{\Vert \mathcal{\delta 
}\Vert N^{1/2}\Big(\frac{T_{N,h}}{N\Vert \mathcal{\delta }\Vert ^{2}}-%
\widetilde{g}(\theta )\Big)\leq x\bigg\}  \notag \\
& \leq \limsup_{N\rightarrow \infty }\P \bigg\{\Vert \mathcal{\delta }\Vert
N^{1/2}\Big(\frac{T_{N,h}}{N\Vert \mathcal{\delta }\Vert ^{2}}-\widetilde{g}%
(\theta )\Big)\leq x\bigg\}  \notag \\
& \leq \limsup_{N\rightarrow \infty }P\left\{ o_{P}(1)+\widetilde{\mathcal{Z}%
}_{N}(\theta )\leq x\right\}  \notag \\
& \leq P\bigg\{\widetilde{\mathcal{Z}}(\theta )\leq x+\epsilon \bigg\},
\label{e:Z_squeeze}
\end{align}%
for any $\epsilon >0$. Continuity of $\widetilde{\mathcal{Z}}$ implies 
\begin{equation*}
\lim_{h\rightarrow 0}\sup_{\theta -h\leq u\leq \theta +h}\widetilde{\mathcal{%
Z}}(u)\rightarrow \widetilde{\mathcal{Z}}(\theta ),\quad \text{a.s.}
\end{equation*}%
Therefore, using (\ref{e:sup_near_theta}) and \eqref{6.24bis1}, we obtain%
\begin{align*}
P\left( \widetilde{\mathcal{Z}}(\theta )\leq x\right) &\leq
\liminf_{N\rightarrow \infty }P\left\{ \Vert \mathcal{\delta }\Vert
N^{1/2}\left( \frac{T_{N}}{N\Vert \mathcal{\delta }\Vert ^{2}}-\widetilde{g}%
(\theta )\right) \leq x\right\} \\
&\leq \limsup_{N\rightarrow \infty }P\left\{ \Vert \mathcal{\delta }\Vert
N^{1/2}\left( \frac{T_{N}}{N\Vert \mathcal{\delta }\Vert ^{2}}-\widetilde{g}%
(\theta )\right) \leq x\right\} \\
&\leq P\left( \widetilde{\mathcal{Z}}(\theta )\leq x+\epsilon \right) ,
\end{align*}%
for any $\epsilon >0$, which gives%
\begin{equation*}
\Vert \mathcal{\delta }\Vert N^{1/2}\left( \frac{T_{N}}{N\Vert \mathcal{%
\delta }\Vert ^{2}}-\widetilde{g}(\theta )\right) \overset{\mathcal{D}}{%
\rightarrow }\widetilde{\mathcal{Z}}(\theta ).
\end{equation*}
Finally, since $E\widetilde{\mathcal{Z}}(\theta )=0$ and $EG_{N}(u,t){G}%
_{N}^{\top }\left( u^{\prime },t\right) =\min \{u,u^{\prime }\}\mathbf{D}%
(t,t^{\prime })$, we have 
\begin{align*}
E\widetilde{\mathcal{Z}}(\theta )^{2}=E\widetilde{Z}_{N}^{(1)}(\theta )^{2}&
=4E\big|\big\langle G_{N}(\theta ,\cdot )-\theta G_{N}(1,\cdot ),\rho %
\big\rangle\big|^{2} \\
& =4E\bigg(\int \big[G_{N}(\theta ,t)-\theta G_{N}(1,t)\big]^{\top }\rho
(t)dt\bigg)^{2} \\
& =4E\bigg(\iint \rho ^{\top }\left( s\right) \big[G_{N}(\theta ,s)-\theta
G_{N}(1,s)\big]\big[G_{N}(\theta ,t)-\theta G_{N}(1,t)\big]^{\top }\rho
(t)dtds\bigg) \\
& =4\theta (1-\theta )\iint \rho ^{\top }\left( t\right) \mathbf{D}(t,s)\rho
(s)dsdt.
\end{align*}
The result \eqref{e:T_N_asymp_norm} for $T_{N}$ then follows since $%
\widetilde{\mathcal{Z}}$ is Gaussian.
\end{proof}

\begin{proof}[Proof of Theorem \protect\ref{th-3}]
The proof follows on from the proof of the previous theorem. For each small $%
h>0$, write 
\begin{equation*}
c_{h}=\sup_{u\in \lbrack 0,1]\setminus (\theta -h,\theta +h)}\widetilde{g}%
(u).
\end{equation*}%
and note $c_{h}<\widetilde{g}(\theta )$. Expression \eqref{e:sup_near_theta0}
implies, for each small $h>0$, 
\begin{equation*}
\sup_{u\in \lbrack 0,1]\setminus (\theta -h,\theta +h)}\bigg|\frac{%
(u(1-u))^{2-\alpha }|V_{N}(\lfloor Nu\rfloor )|}{2\Vert \mathcal{\delta }%
\Vert ^{2}}\bigg|\overset{P}{\rightarrow }c_{h}.
\end{equation*}%
Thus, 
\begin{align*}
& P(|\widehat{\theta }_{N}-\theta |\geq h) \\
& \leq P\left( \frac{T_{N}}{N\Vert \mathcal{\delta }\Vert ^{2}}=\sup_{u\in
\lbrack 0,1]\setminus (\theta -h,\theta +h)}\bigg|\frac{(u(1-u))^{2-\alpha
}|V_{N}(\lfloor Nu\rfloor )|}{2\Vert \mathcal{\delta }\Vert ^{2}}\bigg|%
\right) \  \\
& \leq P\left( \bigg|\frac{T_{N}}{N\Vert \mathcal{\delta }\Vert ^{2}}-c_{h}%
\bigg|\leq \eta ,~\bigg|\sup_{u\in \lbrack 0,1]\setminus (\theta -h,\theta
+h)}\bigg|\frac{(u(1-u))^{2-\alpha }|V_{N}(\lfloor Nu\rfloor )|}{2\Vert 
\mathcal{\delta }\Vert ^{2}}\bigg|-c_{h}\bigg|\leq \eta \right) +o(1) \\
& \leq P\left( \bigg|\frac{T_{N}}{N\Vert \mathcal{\delta }\Vert ^{2}}-c_{h}%
\bigg|\leq \eta \right) +o(1)\leq P\left( \bigg|\frac{T_{N}}{N\Vert \mathcal{%
\delta }\Vert ^{2}}-\widetilde{g}(\theta )\bigg|\geq \widetilde{g}(\theta
)-c_{h}-\eta \right) +o(1).
\end{align*}%
We know from the above that $T_{N,h}-T_{N,h}=o_{P}\left( 1\right) $ and $%
T_{N,h}/\left( N\Vert \mathcal{\delta }\Vert ^{2}\right) -\widetilde{g}%
(\theta )=o_{P}\left( 1\right) $; hence, $T_{N}/(N\Vert \mathcal{\delta }%
\Vert ^{2})\overset{P}{\rightarrow }\widetilde{g}(\theta )$. Thus, the last
line above tends to zero by taking any $0<\eta <\widetilde{g}(\theta )-c_{h}$%
. This concludes the proof of the consistency of $\widehat{\theta }_{N}$.

We now turn to studying the limiting distribution. Assuming again that $%
\left\lfloor aN\right\rfloor \leq k\leq \left\lfloor bN\right\rfloor $ for
some $0<a<b<1$, and using a very similar logic to the proof of Lemma \ref%
{refinement}, it can be shown that $N^{-\left( 1-\alpha \right)
}\max_{\left\vert k^{\ast }-k\right\vert <C\sigma ^{2}/\Vert \mathcal{\delta 
}\Vert ^{2}}\left\vert \widetilde{V}_{k,j}\right\vert $ $=$ $o_{P}\left(
1\right) $, for all $1\leq j\leq 10$, with $j\neq 5,6$. Hence, the limiting
distribution of $\max_{\left\vert k^{\ast }-k\right\vert <C\sigma ^{2}/\Vert 
\mathcal{\delta }\Vert ^{2}}\widetilde{V}\left( k\right) -\widetilde{V}%
\left( k^{\ast }\right) $ is determined by $\widetilde{V}_{k,5}+\widetilde{V}%
_{k,6}$. Let $m=\sigma ^{2}\Vert \mathcal{\delta }\Vert ^{-2}$\ for short.
Observing that Lemma \ref{l:BHR_11} is shown, in %
\citet{berkes:horvath:rice:2013}, using a blocking argument, for any fixed $%
h\left( \cdot \right) \in L^{2}\left( \mathcal{T}\right)$ it follows that,
for each $N$, there are two independent, identically distributed Gaussian
processes $\{G_{1,N}(u,t),u\geq 0,t\in \mathcal{T}\}$ and $%
\{G_{2,N}(-u,t),u\geq 0,t\in \mathcal{T}\}$, whose distributions do not
depend on $N$, with $EG_{1,N}(u,t)=0$, and $EG_{1,N}\left( u,t\right)
G_{1,N}^{\top }\left( u^{\prime },t^{\prime }\right) =\min \left\{
u,u^{\prime }\right\} \mathbf{D}\left( t,t^{\prime }\right) $, and such that%
\begin{align*}
&\sup_{0\leq s\leq C}\left\vert \frac{1}{m^{1/2}}\sum_{i=k^{\ast
}+1}^{k^{\ast }+\left\lfloor ms\right\rfloor }\left\langle \epsilon
_{i}\left( \cdot \right) ,h\left( \cdot \right) \right\rangle -\left\langle
G_{1,N}(s,\cdot ),h\left( \cdot \right) \right\rangle \right\vert \\
&\qquad\quad +\sup_{-C\leq s\leq 0}\left\vert \frac{1}{m^{1/2}}%
\sum_{i=k^{\ast }+\left\lfloor ms\right\rfloor }^{k^{\ast }}\left\langle
\epsilon _{i}\left( \cdot \right) ,h\left( \cdot \right) \right\rangle
-\left\langle G_{2,N}(s,\cdot ),h\left( \cdot \right) \right\rangle
\right\vert =o_{P}\left( 1\right).
\end{align*}%
Given that 
\begin{align*}
&\hspace{-2ex}N^{-\left( 1-\alpha \right) }\widetilde{V}_{k^{\ast
}+\left\lfloor ms\right\rfloor ,5} \\
&= 2\left( \frac{N^{2}}{k^{\ast }\left( N-k^{\ast }\right) }\right) ^{\alpha
}\frac{k^{\ast }\left( N-k^{\ast }\right) }{N^{2}}\left( \sigma ^{2}\Vert 
\mathcal{\delta }\Vert ^{-2}\right) ^{1/2}\Vert \mathcal{\delta }\Vert \frac{%
1}{\left( \sigma ^{2}\Vert \mathcal{\delta }\Vert ^{-2}\right) ^{1/2}}%
\sum_{i=k^{\ast }+1}^{k^{\ast }+\left\lfloor ms\right\rfloor }\left\langle
\epsilon _{i}\left( \cdot \right) ,\rho \left( \cdot \right) \right\rangle
+o_P(1),
\end{align*}%
(where the $o_P(1)$ term holds uniformly in $u$ as a consequence of $%
\|\rho_N-\rho\|\to 0$) the above entails that, letting $%
G_{N}(u,t)=G_{1,N}(u,t)I\left( u\geq 0\right) +G_{2,N}(u,t)I\left( u\leq
0\right) $%
\begin{equation*}
\sup_{-C\leq s\leq C}\left\vert N^{-\left( 1-\alpha \right) }\widetilde{V}%
_{k^{\ast }+\left\lfloor ms\right\rfloor ,5}-2\sigma \left( \theta \left(
1-\theta \right) \right) ^{1-\alpha }\left\langle G_{N}(s, \cdot ),\rho
\left( \cdot \right) \right\rangle \right\vert =o_{P}\left( 1\right) .
\end{equation*}%
Note that $\left\langle G_{1,N}(s,\cdot ),\rho \left( \cdot \right)
\right\rangle =\int \rho ^{\top }\left( t\right) G_{1,N}(s,t)dt\overset{%
\mathcal{D}}{=}\sigma W_{1}\left( s\right) $, where $W_{1}\left( s\right) $
is a standard Wiener process. Indeed, both $\left\langle G_{1,N}(s,t),\rho
\left( t\right) \right\rangle $ and $\sigma W_{1}\left( s\right) $ are zero
mean Gaussian processes, with the same covariance kernel, as can be verified
by direct computation; similarly, $\left\langle G_{2,N}(s,\cdot ),\rho
\left( \cdot \right) \right\rangle \overset{\mathcal{D}}{=}\sigma
W_{2}\left( s\right) $, where $W_{2}\left( s\right) $ is a standard Wiener
process independent of $W_{1}\left( s\right) $. Hence, it follows that $%
\left\langle G_{N}(s,\cdot),\rho \left(\cdot\right) \right\rangle \overset{%
\mathcal{D}}{=}\sigma W\left( s\right) $, with $W\left( s\right) $ a
two-sided standard Wiener process. Finally, by marginally adapting equation
(2.2.13) in \citet{chgreg}, it can be shown - by elementary, if tedious,
arguments - that%
\begin{equation*}
\sup_{-C\leq s\leq C}\left\vert N^{-\left( 1-\alpha \right) }\widetilde{V}%
_{k^{\ast }+\left\lfloor ms\right\rfloor ,6}+2\sigma ^{2}\left( \theta
\left( 1-\theta \right) \right) ^{1-\alpha }\left\vert s\right\vert
m_{\alpha }\left( s\right) \right\vert =o\left( 1\right) .
\end{equation*}%
Hence, 
\begin{equation}  \label{e:weakconverg_sargmax_compacts}
N^{-\left( 1-\alpha \right) }\sum_{j=1}^{10}\widetilde{V}_{k^{\ast
}+\left\lfloor ms\right\rfloor ,j}\overset{w}{\underset{\mathcal{D}\left[
-C,C\right] }{\rightarrow }}2\left( \theta \left( 1-\theta \right) \right)
^{1-\alpha }\left( \sigma \left\langle G(s,\cdot ),\rho \right\rangle
-\left\vert s\right\vert \sigma ^{2}m_{\alpha }\left( s\right) \right) .
\end{equation}%
From the definition%
\begin{align*}
\widehat{k}_{N,C} &=\sargmax_{k\in \left\{ 1,...,N\right\} ,\left\vert
k^{\ast }-k\right\vert \leq C\sigma ^{2}/\left\Vert \mathcal{\delta }%
\right\Vert ^{2}}N\left( \frac{N^{2}}{k\left( N-k\right) }\right) ^{\alpha
-2}V_{N}\left( k\right) \\
&=\sargmax_{k\in \left\{ 1,...,N\right\} ,\left\vert k^{\ast }-k\right\vert
\leq C\sigma ^{2}/\left\Vert \mathcal{\delta }\right\Vert ^{2}}N^{\alpha
-1}\sum_{j=1}^{10}\widetilde{V}_{k,j},
\end{align*}%
where recall that \textquotedblleft $\sargmax$\textquotedblright\ denotes
the smallest integer that maximizes the relevant expression. Since the \text{%
sargmax} is continuous on $D[-C,C]$ at every point that is continuous and
possess a unique maximum (\cite{seijo:sen:2011}, Lemma 2.9), from %
\eqref{e:weakconverg_sargmax_compacts} we have 
\begin{equation*}
\left\Vert \mathcal{\delta }\right\Vert ^{2}\left( \widehat{k}_{N,C}-k^{\ast
}\right) /\sigma ^{2}\overset{\mathcal{D}}{\rightarrow }\argmax_{\left\vert
s\right\vert \leq C}\left( \sigma W(s) -\left\vert s\right\vert \sigma
^{2}m_{\alpha }\left( s\right) \right).
\end{equation*}

(Indeed, since $m_\alpha(t)$ is constant on either side of zero, $\sigma
W(t) -\left\vert t\right\vert \sigma ^{2}m_{\alpha }\left( t\right)$ is
clearly continuous, and the a.s. uniqueness of its maximizer follows from
the a.s. uniqueness and absolute continuity of its maximizer on either side
of zero and the independence of $\{W(t),t\geq 0\}$ and $\{W(t), t\leq 0\}$.)
Since by continuity, as $C\rightarrow \infty $%
\begin{equation*}
\argmax_{\left\vert s\right\vert \leq C}\left(\sigma W(s) -\left\vert
s\right\vert \sigma ^{2}m_{\alpha }(s) \right) \overset{a.s.}{\rightarrow }%
\argmax_{-\infty <s<\infty }\left(\sigma W(s) -\left\vert s\right\vert
\sigma ^{2}m_{\alpha }(s) \right) ,
\end{equation*}%
%
%
the desired result follows.
\end{proof}

\begin{proof}[Proof of Theorem \protect\ref{vostrikova}]
The proof follows a very similar logic to the proof of Theorem 8.2.2 in %
\citet{chgreg} (see also Theorem 2.2 in \cite{rice:zhang:2022}) and
therefore we report only the main two arguments where our proof differs: 
\textit{(i)} we begin by deriving a \textquotedblleft stopping
condition\textquotedblright\ for the algorithm (see (\ref{lemma823})); 
\textit{(ii)} we show that the first estimated breakdate is consistent (see (%
\ref{greg-1})). Proceeding as in the proof of Theorem 8.2.2 in \citet{chgreg}%
, the proof is completed through an inductive argument. 

Recall the definition of $\mathcal{M}_{a}\left( t\right) $ in (\ref{m_a});
we further define $S_{a}\left( t\right) =\sum_{i=1}^{a}X_{i}\left( t\right) $%
, and $\mathcal{W}_{a}\left( t\right) =\sum_{i=1}^{a}\epsilon _{i}\left(
t\right) $, omitting the index $t$ when possible. With this notation, we can
write%
\begin{align*}
\sum_{i=\ell }^{k}X_{i}\left( t\right) & =\mathcal{M}_{k}\left( t\right) -%
\mathcal{M}_{\ell }\left( t\right) +\mathcal{W}_{k}\left( t\right) -\mathcal{%
W}_{\ell }\left( t\right) , \\
\sum_{i=\ell }^{u}X_{i}\left( t\right) & =\mathcal{M}_{u}\left( t\right) -%
\mathcal{M}_{\ell }\left( t\right) +\mathcal{W}_{u}\left( t\right) -\mathcal{%
W}_{\ell }\left( t\right) ,
\end{align*}%
and therefore we can write%
\begin{align*}
&\hspace{-4ex} \left\Vert \sum_{i=\ell }^{k}X_{i}\left( t\right) -%
\displaystyle\frac{k-\ell }{u-\ell }\sum_{i=\ell }^{u}X_{i}\left( t\right)
\right\Vert ^{2} \\
=& \left\Vert \left( \mathcal{M}_{k}\left( t\right) -\mathcal{M}_{\ell
}\left( t\right) \right) -\displaystyle\frac{k-\ell }{u-\ell }\left( 
\mathcal{M}_{u}\left( t\right) -\mathcal{M}_{\ell }\left( t\right) \right)
\right\Vert ^{2} \\
& +\left\Vert \left( \mathcal{W}_{k}\left( t\right) -\mathcal{W}_{\ell
}\left( t\right) \right) -\displaystyle\frac{k-\ell }{u-\ell }\left( 
\mathcal{W}_{u}\left( t\right) -\mathcal{W}_{\ell }\left( t\right) \right)
\right\Vert ^{2} \\
& +2\left\langle \left( \mathcal{M}_{k}\left( t\right) -\mathcal{M}_{\ell
}\left( t\right) \right) -\displaystyle\frac{k-\ell }{u-\ell }\left( 
\mathcal{M}_{u}\left( t\right) -\mathcal{M}_{\ell }\left( t\right) \right)
\right. , \\
&\qquad\qquad \left. \left( \mathcal{W}_{k}\left( t\right) -\mathcal{W}%
_{\ell }\left( t\right) \right) -\displaystyle\frac{k-\ell }{u-\ell }\left( 
\mathcal{W}_{u}\left( t\right) -\mathcal{W}_{\ell }\left( t\right) \right)
\right\rangle .
\end{align*}%
Hence, under the alternative, after some algebra it holds that%
\begin{align*}
\frac{1}{2}&\left( u-\ell \right) \left( \frac{\left( k-\ell \right) \left(
u-k\right) }{\left( u-\ell \right) ^{2}}\right) ^{2-\alpha }V_{N}^{\left(
\ell ,u\right) }\left( k\right) \\
&=\left[ \frac{\left( u-\ell \right) ^{2}}{\left( k-\ell \right) \left(
u-k\right) }\right] ^{\alpha }\frac{1}{u-\ell }\left\Vert \left( \mathcal{M}%
_{k}\left( t\right) -\mathcal{M}_{\ell }\left( t\right) \right) -\frac{%
k-\ell }{u-\ell }\left( \mathcal{M}_{u}\left( t\right) -\mathcal{M}_{\ell
}\left( t\right) \right) \right\Vert ^{2} \\
&\quad +\left[ \frac{\left( u-\ell \right) ^{2}}{\left( k-\ell \right)
\left( u-k\right) }\right] ^{\alpha }\frac{1}{u-\ell }\left\Vert \left( 
\mathcal{W}_{k}\left( t\right) -\mathcal{W}_{\ell }\left( t\right) \right) -%
\frac{k-\ell }{u-\ell }\left( \mathcal{W}_{u}\left( t\right) -\mathcal{W}%
_{\ell }\left( t\right) \right) \right\Vert ^{2} \\
& \quad+2\left[ \frac{\left( u-\ell \right) ^{2}}{\left( k-\ell \right)
\left( u-k\right) }\right] ^{\alpha }\frac{1}{u-\ell }\left\langle \left( 
\mathcal{M}_{k}\left( t\right) -\mathcal{M}_{\ell }\left( t\right) \right) -%
\frac{k-\ell }{u-\ell }\left( \mathcal{M}_{u}\left( t\right) -\mathcal{M}%
_{\ell }\left( t\right) \right) ,\right. \\
&\qquad \qquad\qquad \qquad \qquad\qquad \qquad \qquad\qquad \left. \left( 
\mathcal{W}_{k}\left( t\right) -\mathcal{W}_{\ell }\left( t\right) \right) -%
\frac{k-\ell }{u-\ell }\left( \mathcal{W}_{u}\left( t\right) -\mathcal{W}%
_{\ell }\left( t\right) \right) \right\rangle \\
&\quad -\left[ \frac{\left( u-\ell \right) ^{2}}{\left( k-\ell \right)
\left( u-k\right) }\right] ^{\alpha -2}\frac{u-\ell }{\left( k-\ell \right)
\left( k-\ell -1\right) }\sum_{i=\ell }^{k}\left\Vert X_{i}\right\Vert ^{2}
\\
&\quad -\left[ \frac{\left( u-\ell \right) ^{2}}{\left( k-\ell \right)
\left( u-k\right) }\right] ^{\alpha -2}\frac{u-\ell }{\left( u-k\right)
\left( u-k-1\right) }\sum_{i=k+1}^{u}\left\Vert X_{i}\right\Vert ^{2} \\
&\quad +\left[ \frac{\left( u-\ell \right) ^{2}}{\left( k-\ell \right)
\left( u-k\right) }\right] ^{\alpha -2}\frac{u-\ell }{\left( k-\ell \right)
^{2}\left( k-\ell -1\right) }\left\Vert S_{k}-S_{\ell }\right\Vert ^{2} \\
&\quad +\left[ \frac{\left( u-\ell \right) ^{2}}{\left( k-\ell \right)
\left( u-k\right) }\right] ^{\alpha -2}\frac{u-\ell }{\left( u-k\right)
^{2}\left( u-k-1\right) }\left\Vert S_{u}-S_{k}\right\Vert ^{2} \\
&= \Theta _{\ell ,u}^{k}+\sum_{h=1}^{6}A_{\ell ,u}^{k,\left( h\right)} ,
\end{align*}%
where recall that $\Theta _{\ell ,u}^{k}$ is defined in (\ref{theta_lu}).

Recall that, if there are any changepoints between $\ell $ and $u$, we use
the notation $i_{0}$ and $\beta $ to indicate the starting index and the
number of changepoints between $\ell $ and $u$, so that%
\begin{equation*}
k_{i_{0}}\leq \ell <k_{i_{0}+1}<k_{i_{0}+2}<...<k_{i_{0}+\beta }<u\leq
k_{i_{0}+\beta +1},
\end{equation*}%
and we let $\mathcal{I=}\left\{ 1,2,...,\beta \right\} $ be the set of the
changepoints between $\ell $ and $u$. We begin by showing the following
intermediate result. Let $a_N$ be any given positive sequence and let $%
\mathcal{A}_{N}$ be any event on which 
\begin{equation}
\max_{\ell <k<u}\left\vert \sum_{h=1}^{6}A_{\ell ,u}^{k,\left( h\right)}
\right\vert \leq a_{N}.  \label{a-n}
\end{equation}%
Then if: \textit{(i)} $\beta =0$ and $k_{i_{0}}<\ell <u<k_{i_{0}+1}$; or 
\textit{(ii) }$\beta =1$ and $\min \left\{ k_{i_{0}+1}-\ell ,\right. $ $%
\left. u-k_{i_{0}+1}\right\} \leq f_{N}$; or \textit{(iii)} $\beta =2$ and $%
\max \left\{ k_{i_{0}+1}-\ell ,\right. $ $\left. u-k_{i_{0}+2}\right\} \leq
f_{N}$, for some sequence $f_{N}$; it holds that%
\begin{equation}
\max_{\ell <k<u}\left\vert \frac{1}{2}\left( u-\ell \right) \left( \frac{%
\left( k-\ell \right) \left( u-k\right) }{\left( u-\ell \right) ^{2}}\right)
^{2-\alpha }V_{N}^{\left( \ell ,u\right) }\left( k\right) \right\vert \leq
c_{0}\max \left\{ a_{N},f_{N}\right\} .  \label{lemma823}
\end{equation}%
This result can be shown similarly to Lemma 8.2.3 in \citet{chgreg}, who
prove it for $\alpha =1$. Indeed, under condition \textit{(i)}, there is no
break in the interval $(\ell,u)$ and, by (\ref{a-n}), it follows readily
that 
\begin{equation*}
\max_{\ell <k<u}\left\vert \Theta _{\ell ,u}^{k}+\sum_{h=1}^{6}A_{\ell
,u}^{k,\left( h\right)} \right\vert =\max_{\ell <k<u}\left\vert
\sum_{h=1}^{6}A_{\ell ,u}^{k,\left( h\right)}\right\vert \leq a_{N}.
\end{equation*}%
Under condition \textit{(ii)}, let the mean functions before and after $%
k_{i_{0}}$ be defined as $\mu \left( t\right) $ and $\mu ^{\prime }\left(
t\right) $; we know from Lemma \ref{drift} that 
\begin{align*}
\max_{\ell <k<u}\Theta _{\ell ,u}^{k}&=\Theta _{\ell ,u}^{k_{i_{0}+1}} \\
&= \left[ \frac{\left( u-\ell \right) ^{2}}{\left( k_{i_{0}+1}-\ell \right)
\left( u-k_{i_{0}+1}\right) }\right] ^{\alpha }\frac{1}{u-\ell } \\
&\qquad \times \left\Vert \left( k_{i_{0}+1}-\ell \right) \mu -\frac{%
k_{i_{0}+1}-\ell }{u-\ell }\left( \left( u-k_{i_{0}+1}\right) \mu ^{\prime
}+\left( k_{i_{0}+1}-\ell \right) \mu \right) \right\Vert ^{2} \\
&= \frac{\left( k_{i_{0}+1}-\ell \right) ^{2-\alpha }\left(
u-k_{i_{0}+1}\right) ^{2-\alpha }}{\left( u-\ell \right) ^{3-2\alpha }}%
\left\Vert \mu -\mu ^{\prime }\right\Vert ^{2} \\
&\leq c_{0}\min \left\{ k_{i_{0}+1}-\ell ,u-k_{i_{0}+1}\right\} \leq f_{N},
\end{align*}%
and since%
\begin{equation*}
\max_{\ell <k<u}\left\vert \Theta _{\ell ,u}^{k}+\sum_{h=1}^{6}A_{\ell
,u}^{k,\left( h\right)} \right\vert \leq \max_{\ell <k<u}\Theta _{\ell
,u}^{k}+\max_{\ell <k<u}\left\vert \sum_{h=1}^{6}A_{\ell ,u}^{k,\left(
h\right)}\right\vert \leq f_{N}+a_{N},
\end{equation*}%
the desired result follows. Finally, under condition \textit{(iii)}, by
Lemma \ref{drift} it follows that $\max_{\ell <k<u}\Theta _{\ell
,u}^{k}=\max \left\{ \Theta _{\ell ,u}^{k_{i_{0}+1}},\Theta _{\ell
,u}^{k_{i_{0}+2}}\right\} $, and after some elementary if tedious algebra it
can be shown that%
\begin{align*}
&\max \left\{ \Theta _{\ell ,u}^{k_{i_{0}+1}},\Theta _{\ell
,u}^{k_{i_{0}+2}}\right\} \\
&\quad\leq \max \left\{ \min \left\{ k_{i_{0}+1}-\ell ,u-k_{i_{0}+1}\right\}
,\min \left\{ k_{i_{0}+2}-\ell ,u-k_{i_{0}+2}\right\} \right\} \\
&\quad\leq \max \left\{ k_{i_{0}+1}-\ell ,u-k_{i_{0}+2}\right\} \leq f_{N},
\end{align*}%
whence \eqref{lemma823}. (N.b.: \eqref{lemma823} is used toward the very end
of this proof.)

We are now ready to start the proof. We begin by noting that the
segmentation procedure starts with indices $\ell =0$ and $u=N$. Also, by
Lemma \ref{drift-2}, it follows that $\max_{1\leq k\leq N}\Theta
_{1,N}^{k}\geq c_{0}N$, for some $c_{0}>0$ (see e.g. %
\citealp{rice:zhang:2022}). Moreover, consider $\max_{1\leq k\leq
N}\left\vert A_{1,N}^{k,\left( h\right) }\right\vert $, for $1\leq h\leq 6$,
and note that, by Lemma \ref{l:BHR_33}, by arguing similarly as in the proof
of Lemma \ref{l:weightappx}, it follows that%
\begin{align}
\max_{1\leq k\leq N}k^{-1/2}\left\Vert \sum_{i=1}^{k}\epsilon
_{i}\right\Vert &=O_{P}\left( \left( \ln N\right) ^{1/\nu }\right) ,
\label{mineq1} \\
\max_{1\leq k\leq N}\left( N-k\right) ^{-1/2}\left\Vert
\sum_{i=k+1}^{N}\epsilon _{i}\right\Vert &=O_{P}\left( \left( \ln N\right)
^{1/\nu }\right) .  \label{mineq2}
\end{align}%
Then we have%
\begin{align*}
\max_{1\leq k\leq N}\left\vert A_{1,N}^{k,\left( 1\right) }\right\vert &
=\max_{1\leq k\leq N}\left[ \frac{N^{2}}{k\left( N-k\right) }\right]
^{\alpha }\frac{1}{N}\left\Vert \frac{N-k}{N}\sum_{i=1}^{k}\epsilon _{i}-%
\frac{k}{N}\sum_{i=k+1}^{N}\epsilon _{i}\right\Vert ^{2} \\
& \leq 2\max_{1\leq k\leq N}\left[ \frac{N^{2}}{k\left( N-k\right) }\right]
^{\alpha }\frac{1}{N}\left( \frac{N-k}{N}\right) ^{2}\frac{k}{k}\left\Vert
\sum_{i=1}^{k}\epsilon _{i}\right\Vert ^{2} \\
& +2\max_{1\leq k\leq N}\left[ \frac{N^{2}}{k\left( N-k\right) }\right]
^{\alpha }\frac{1}{N}\left( \frac{k}{N}\right) ^{2}\frac{N-k}{N-k}\left\Vert
\sum_{i=k+1}^{N}\epsilon _{i}\right\Vert ^{2}=O_{P}\left( \left( \ln
N\right) ^{2/\nu }\right) ;
\end{align*}%
further, after some algebra%
\begin{align}
\frac{1}{2}\max_{1\leq k\leq N}\left\vert A_{1,N}^{k,\left( 2\right)
}\right\vert & =\max_{1\leq k\leq N}\left[ \frac{N^{2}}{k\left( N-k\right) }%
\right] ^{\alpha }\frac{1}{N}\left\langle \mathcal{M}_{k}\left( t\right) -%
\frac{k}{N}\mathcal{M}_{N}\left( t\right) ,\mathcal{W}_{k}\left( t\right) -%
\frac{k}{N}\mathcal{W}_{N}\left( t\right) \right\rangle  \notag \\
& \leq \max_{1\leq k\leq N}\left[ \frac{N^{2}}{k\left( N-k\right) }\right]
^{\alpha }\frac{1}{N}\left\Vert \mathcal{M}_{k}\left( t\right) -\frac{k}{N}%
\mathcal{M}_{N}\left( t\right) \right\Vert \left\Vert \mathcal{W}_{k}\left(
t\right) -\frac{k}{N}\mathcal{W}_{N}\left( t\right) \right\Vert  \notag \\
& \leq \max_{1\leq k\leq N}\left[ \frac{N^{2}}{k\left( N-k\right) }\right]
^{\alpha }\frac{1}{N}\left\Vert \mathcal{M}_{k}\left( t\right) -\frac{k}{N}%
\mathcal{M}_{N}\left( t\right) \right\Vert \frac{N-k}{N}\frac{k^{1/2}}{%
k^{1/2}}\left\Vert \sum_{i=1}^{k}\epsilon _{i}\right\Vert  \notag \\
&\quad +\max_{1\leq k\leq N}\left[ \frac{N^{2}}{k\left( N-k\right) }\right]
^{\alpha }\frac{1}{N}\left\Vert \mathcal{M}_{k}\left( t\right) -\frac{k}{N}%
\mathcal{M}_{N}\left( t\right) \right\Vert \frac{k}{N}\frac{\left(
N-k\right) ^{1/2}}{\left( N-k\right) ^{1/2}}\left\Vert
\sum_{i=k+1}^{N}\epsilon _{i}\right\Vert  \notag \\
& =O_{P}\left( N^{1/2}\left( \ln N\right) ^{1/\nu }\right)  \label{e:Ak2_1N}
\end{align}%
By similar passages as in Lemma \ref{refinement}, it can also be shown that $%
\max_{1\leq k\leq N}\left\vert A_{1,N}^{k,\left( h\right) }\right\vert
=O_{P}\left( 1\right) $, for $3\leq h\leq 6$. Putting all together, 
\begin{equation*}
\max_{1\leq k\leq N}\frac{1}{2}N\left( \frac{k}{N}\left( 1-\frac{k}{N}%
\right) \right) ^{2-\alpha }V_{N}^{\left( 1,N\right) }\left( k\right) \geq
\max_{1\leq k\leq N}\Theta _{1,N}^{k} -\max_{1\leq k\leq N}\ \left\vert
\sum_{h=1}^{2}A_{1 ,N}^{N,\left( h\right) }\right\vert - O_P(1)
\end{equation*}
and since $\tau _{N}^{-1}\Big(\max_{1\leq k\leq N}\Theta _{1,N}^{k} +
\left\vert \sum_{h=1}^{2}A_{1 ,N}^{N,\left( h\right) }\right\vert \Big)\geq
(N/\tau_N)\Big(c_0 - o_P(1)\Big)$, recalling that $\tau_N \to \infty$, $\tau
_{N}/N\rightarrow 0$, it follows that a changepoint is detected with
probability tending to 1, i.e.,%
\begin{align*}
\lim_{N\rightarrow \infty }P\left( \max_{1\leq k\leq N}\frac{1}{2}N\left( 
\frac{k}{N}\left( 1-\frac{k}{N}\right) \right) ^{2-\alpha }V_{N}^{\left(
1,N\right) }\left( k\right) >\tau _{N}\right) =1.
\end{align*}
Let now $\mathcal{H}=\{k_1,\ldots,k_R\}$ denote the set of all changepoints,
and let $\widehat{k}_{1}$ be defined as%
\begin{equation*}
\widehat{k}_{1}=\sargmax_{1\leq k\leq N}\frac{1}{2}N\left( \frac{k}{N}\left(
1-\frac{k}{N}\right) \right) ^{2-\alpha }V_{N}^{\left( 1,N\right) }\left(
k\right) .
\end{equation*}%
We now turn to showing that $\widehat k_1$ is consistent for some
changepoint, i.e., 
\begin{equation}
\dist\left( \widehat{k}_{1},\mathcal{H}\right) =O_{P}\left( 1\right) ,
\label{greg-1}
\end{equation}%
where $\dist\left( \widehat{k}_{1},\mathcal{H}\right)=\min_{k_i \in \mathcal{%
H}}|\widehat{k}_1-k_i|$ is the distance between $\widehat{k}_{1}$ and the
set $\mathcal{H}$ (note \eqref{greg-1} will serve as part of the base step
in the eventual induction argument). Let $\mathcal{H}_{\max }=\big\{ %
k_{i}:\Theta _{1,N}^{k_{i}}=\max_{1\leq k\leq N}\Theta _{1,N}^{k}\big\} $,
and let $0<a_{i}<b_{i}<1$ be two constants such that $\Theta _{1,N}^{k}$ is
strictly increasing over $\left\{ \left\lfloor Na_{i}\right\rfloor
,...,k_{i}\right\} $ and strictly decreasing over $\left\{
k_{i},...,\left\lfloor Nb_{i}\right\rfloor \right\} $ - these constants can
always be defined this way on account of Lemma \ref{drift}. Define the set $%
L_{N}=\cup _{i:k_{i}\in \mathcal{H}_{\max }}\left\{ \left\lfloor
Na_{i}\right\rfloor ,...,\left\lfloor Nb_{i}\right\rfloor \right\} $.
Following the arguments in the proof of Theorem 8.2.2 in \citet{chgreg}, it
follows $\lim_{N\rightarrow \infty }P\left( \widehat{k}_{1}=\widetilde{k}%
_{1}\right) =1$, where%
\begin{equation*}
\widetilde{k}_{1}=\sargmax_{k\in L_{N}}\frac{1}{2}N\left( u\left( 1-u\right)
\right) ^{2-\alpha }V_{N}^{\left( 1,N\right) }\left( k\right) .
\end{equation*}

This means that (\ref{greg-1}) can be shown if we show $\dist\left( 
\widetilde{k}_{1},\mathcal{H}\right) =O_{P}\left( 1\right) $. Let $%
I_{N,i}\left( M\right) =\left\{ k_{i-1}+M\right. $ $,...,$ $k_{i}-M,$ $%
k_{i}+M,$ $...,$ $\left. k_{i+1}-M\right\} \cap L_{N}$. Consider the cases $%
k\in \left\{ \left\lfloor Na_{i}\right\rfloor ,...,k_{i}\right\} $ and $k\in
\left\{ k_{i},...,\left\lfloor Nb_{i}\right\rfloor \right\} $. A routine
application of the Mean Value Theorem yields that there are positive
constants $c_{1}$, $c_{2}$, $c_{3}$ and $c_{4}$ such that%
\begin{align}
-c_{1}\left( k_{i}-k\right) &\leq \Theta _{1,N}^{k}-\Theta
_{1,N}^{k_{i}}\leq -c_{2}\left( k_{i}-k\right) ,\text{ \ \ for }k\in \left\{
\left\lfloor Na_{i}\right\rfloor ,...,k_{i}\right\} ,  \label{8.2.55} \\
c_{3}\left( k_{i}-k\right) &\leq \Theta _{1,N}^{k}-\Theta _{1,N}^{k_{i}}\leq
c_{4}\left( k_{i}-k\right) ,\text{ \ \ for }k\in \left\{
k_{i},...,\left\lfloor Nb_{i}\right\rfloor \right\} .  \label{8.2.56}
\end{align}%
Consider now the difference%
\begin{align}
\frac{1}{2}&N\left( u\left( 1-u\right) \right) ^{2-\alpha }V_{N}^{\left(
1,N\right) }\left( k\right) -\frac{1}{2}N\left( u\left( 1-u\right) \right)
^{2-\alpha }V_{N}^{\left( 1,N\right) }\left( k_{i}\right) \hspace{-15ex}
\label{domnate} \\
=& \Theta _{1,N}^{k}-\Theta _{1,N}^{k_{i}}+\left( \left[ \frac{N^{2}}{%
k\left( N-k\right) }\right] ^{\alpha }-\left[ \frac{N^{2}}{k_{i}\left(
N-k_{i}\right) }\right] ^{\alpha }\right) \left( \frac{1}{N}\left\Vert 
\mathcal{W}_{k}-\frac{k}{N}\mathcal{W}_{N}\right\Vert ^{2}\right) \hspace{%
-15ex}  \notag \\
& -\left[ \frac{N^{2}}{k_{i}\left( N-k_{i}\right) }\right] ^{\alpha }\left( 
\frac{1}{N}\left\Vert \mathcal{W}_{k_{i}}-\frac{k_{i}}{N}\mathcal{W}%
_{N}\right\Vert ^{2}-\frac{1}{N}\left\Vert \mathcal{W}_{k}-\frac{k}{N}%
\mathcal{W}_{N}\right\Vert ^{2}\right)  \notag \\
& +\frac{2}{N}\left( \left[ \frac{N^{2}}{k\left( N-k\right) }\right]
^{\alpha }-\left[ \frac{N^{2}}{k_{i}\left( N-k_{i}\right) }\right] ^{\alpha
}\right) \left\langle \mathcal{M}_{k}-\frac{k}{N}\mathcal{M}_{N},\mathcal{W}%
_{k}-\frac{k}{N}\mathcal{W}_{N}\right\rangle  \notag \\
& -\frac{2}{N}\left[ \frac{N^{2}}{k_{i}\left( N-k_{i}\right) }\right]
^{\alpha }\left( \left\langle \mathcal{M}_{k}-\frac{k}{N}\mathcal{M}_{N},%
\mathcal{W}_{k}-\frac{k}{N}\mathcal{W}_{N}\right\rangle -\left\langle 
\mathcal{M}_{k_{i}}-\frac{k_{i}}{N}\mathcal{M}_{N},\mathcal{W}_{k_{i}}-\frac{%
k_{i}}{N}\mathcal{W}_{N}\right\rangle \right) \hspace{-10ex}  \notag \\
& -\left\{ \left[ \frac{N^{2}}{k\left( N-k\right) }\right] ^{\alpha }\frac{N%
}{k\left( k-1\right) }\sum_{i=1}^{k}\left\Vert X_{i}\right\Vert ^{2}-\left[ 
\frac{N^{2}}{k_{i}\left( N-k_{i}\right) }\right] ^{\alpha }\frac{N}{%
k_{i}\left( k_{i}-1\right) }\sum_{i=1}^{k_{i}}\left\Vert X_{i}\right\Vert
^{2}\right\}  \notag \\
& -\left\{ \left[ \frac{N^{2}}{k\left( N-k\right) }\right] ^{\alpha }\frac{N%
}{\left( N-k\right) \left( N-k-1\right) }\sum_{i=k+1}^{N}\left\Vert
X_{i}\right\Vert ^{2}\right.  \notag \\
& -\left. \left[ \frac{N^{2}}{k_{i}\left( N-k_{i}\right) }\right] ^{\alpha }%
\frac{N}{\left( N-k_{i}\right) \left( N-k_{i}-1\right) }\sum_{i=k_{i}+1}^{N}%
\left\Vert X_{i}\right\Vert ^{2}\right\}  \notag \\
& +\left\{ \left[ \frac{N^{2}}{k\left( N-k\right) }\right] ^{\alpha }\frac{N%
}{k^{2}\left( k-1\right) }\left\Vert S_{k}\right\Vert ^{2}-\left[ \frac{N^{2}%
}{k_{i}\left( N-k_{i}\right) }\right] ^{\alpha }\frac{N}{k_{i}^{2}\left(
k_{i}-1\right) }\left\Vert S_{k_{i}}\right\Vert ^{2}\right\}  \notag \\
& +\left\{ \left[ \frac{N^{2}}{k\left( N-k\right) }\right] ^{\alpha }\frac{N%
}{\left( N-k\right) ^{2}\left( N-k-1\right) }\left\Vert
S_{N}-S_{k}\right\Vert ^{2}\right.  \notag \\
& -\left. \left[ \frac{N^{2}}{k_{i}\left( N-k_{i}\right) }\right] ^{\alpha }%
\frac{N}{\left( N-k_{i}\right) ^{2}\left( N-k_{i}-1\right) }\left\Vert
S_{N}-S_{k_{i}}\right\Vert ^{2}\right\}  \notag \\
=& \Theta _{1,N}^{k}-\Theta _{1,N}^{k_{i}}+\sum_{h=1}^{8}B_{1,N}^{k,\left(
h\right) },  \notag
\end{align}%
We will show that $\Theta _{1,N}^{k}-\Theta _{1,N}^{k_{i}}$ is the
dominating term, i.e. that%
\begin{align}
\max_{k\in I_{N,i}\left( M\right) ,k<k_{i}}\left\vert \frac{%
B_{1,N}^{k,\left( h\right) }}{\Theta _{1,N}^{k}-\Theta _{1,N}^{k_{i}}}%
\right\vert &=o_{P}\left( 1\right) ,  \label{v1} \\
\max_{k\in I_{N,i}\left( M\right) ,k\geq k_{i}}\left\vert \frac{%
B_{1,N}^{k,\left( h\right) }}{\Theta _{1,N}^{k}-\Theta _{1,N}^{k_{i}}}%
\right\vert &=o_{P}\left( 1\right) ,  \label{v2}
\end{align}%
for $1\leq h\leq 8$, using (\ref{8.2.55}) and (\ref{8.2.56}); indeed, we
will show (\ref{v1}), and (\ref{v2}) can be then derived by symmetry. In all
cases, we will use the fact that, by the Mean Value Theorem 
\begin{equation}
\left\vert \left[ \frac{N^{2}}{k\left( N-k\right) }\right] ^{\alpha }\frac{1%
}{N}-\left[ \frac{N^{2}}{k_{i}\left( N-k_{i}\right) }\right] ^{\alpha }\frac{%
1}{N}\right\vert \leq c_{0}N^{-2}\left\vert k_{i}-k\right\vert .
\label{mvt-1}
\end{equation}%
It holds that%
\begin{align*}
\max_{k\in I_{N,i}\left( M\right) ,k<k_{i}}&\left\vert \frac{%
B_{1,N}^{k,\left( 1\right) }}{\Theta _{1,N}^{k}-\Theta _{1,N}^{k_{i}}}%
\right\vert \\
& \hspace{-5ex}\leq C\max_{k\in I_{N,i}\left( M\right) ,k<k_{i}}\frac{1}{%
k_{i}-k}\left( \left[ \frac{N^{2}}{k\left( N-k\right) }\right] ^{\alpha }-%
\left[ \frac{N^{2}}{k_{i}\left( N-k_{i}\right) }\right] ^{\alpha }\right) 
\frac{1}{N}\left( \left\Vert \mathcal{W}_{k}-\frac{k}{N}\mathcal{W}%
_{N}\right\Vert ^{2}\right) \\
& \hspace{-5ex}\leq C\max_{k\in I_{N,i}\left( M\right) ,k<k_{i}}\frac{1}{%
k_{i}-k}N^{-2}\left\vert k_{i}-k\right\vert \left\Vert \mathcal{W}_{k}-\frac{%
k}{N}\mathcal{W}_{N}\right\Vert ^{2} \\
& \hspace{-5ex}\leq C\left( N^{-2}\max_{k\in I_{N,i}\left( M\right)
,k<k_{i}}\left\Vert \mathcal{W}_{k}\right\Vert ^{2}+N^{-2}\max_{k\in
I_{N,i}\left( M\right) ,k<k_{i}}\left\Vert \frac{k}{N}\mathcal{W}%
_{N}\right\Vert ^{2}\right) =O_{P}\left( \frac{\left( \ln N\right) ^{2/\nu }%
}{N}\right) ,
\end{align*}%
having used (\ref{mineq1}) and (\ref{mineq2}). Also, noting that $\mathcal{W}%
_{N}=O_{P}\left( N^{1/2}\right) $ and 
\begin{equation*}
\max_{k\in I_{N,i}\left( M\right) ,k<k_{i}}\frac{1}{\left( k_{i}-k\right)
^{1/2}}\left\Vert \sum_{i=k+1}^{k_{i}}\epsilon _{i}\right\Vert =O_{P}\left(
\left( \ln N\right) ^{1/\nu }\right) ,
\end{equation*}%
and noting that $\left\vert k_{i}-k\right\vert \geq M$, we have%
\begin{align*}
& \max_{k\in I_{N,i}\left( M\right) ,k<k_{i}}\left\vert \frac{%
B_{1,N}^{k,\left( 2\right) }}{\Theta _{1,N}^{k}-\Theta _{1,N}^{k_{i}}}%
\right\vert \\
\leq & C\max_{k\in I_{N,i}\left( M\right) ,k<k_{i}}\frac{1}{N\left(
k_{i}-k\right) }\left( \left\Vert \mathcal{W}_{k_{i}}-\mathcal{W}%
_{k}\right\Vert +\left\Vert \frac{k_{i}-k}{N}\mathcal{W}_{N}\right\Vert
\right) \left( \left\Vert \mathcal{W}_{k}\right\Vert +\left\Vert \mathcal{W}%
_{k_{i}}\right\Vert +\left\Vert \frac{k_{i}+k}{N}\mathcal{W}_{N}\right\Vert
\right) \\
\leq & C\max_{k\in I_{N,i}\left( M\right) ,k<k_{i}}\frac{1}{N\left(
k_{i}-k\right) }\left\Vert \mathcal{W}_{k_{i}}-\mathcal{W}_{k}\right\Vert
\left( \left\Vert \mathcal{W}_{k}\right\Vert +\left\Vert \mathcal{W}%
_{k_{i}}\right\Vert \right) \\
& +C\max_{k\in I_{N,i}\left( M\right) ,k<k_{i}}\frac{1}{N\left(
k_{i}-k\right) }\left\Vert \mathcal{W}_{k_{i}}-\mathcal{W}_{k}\right\Vert
\left\Vert \frac{k_{i}+k}{N}\mathcal{W}_{N}\right\Vert \\
& +C\max_{k\in I_{N,i}\left( M\right) ,k<k_{i}}\frac{1}{N\left(
k_{i}-k\right) }\left\Vert \frac{k_{i}-k}{N}\mathcal{W}_{N}\right\Vert
\left( \left\Vert \mathcal{W}_{k}\right\Vert +\left\Vert \mathcal{W}%
_{k_{i}}\right\Vert \right) \\
& +C\max_{k\in I_{N,i}\left( M\right) ,k<k_{i}}\frac{1}{N\left(
k_{i}-k\right) }\left\Vert \frac{k_{i}-k}{N}\mathcal{W}_{N}\right\Vert
\left\Vert \frac{k_{i}+k}{N}\mathcal{W}_{N}\right\Vert \\
=& M^{-1/2}O_{P}\left( \frac{\left( \ln N\right) ^{1/\nu }}{N^{1/2}}\right)
+O_{P}\left( \frac{1}{N}\right) =o_{P}\left( 1\right) .
\end{align*}%
Similarly%
\begin{align*}
\frac{1}{2}&\max_{k\in I_{N,i}\left( M\right) ,k<k_{i}}\left\vert \frac{%
B_{1,N}^{k,\left( 3\right) }}{\Theta _{1,N}^{k}-\Theta _{1,N}^{k_{i}}}%
\right\vert \\
& \leq c_{0}\max_{k\in I_{N,i}\left( M\right) ,k<k_{i}}\frac{1}{k_{i}-k}%
N^{-2}\left\vert k_{i}-k\right\vert \left\Vert \mathcal{M}_{k}-\frac{k}{N}%
\mathcal{M}_{N}\right\Vert \left\Vert \mathcal{W}_{k}-\frac{k}{N}\mathcal{W}%
_{N}\right\Vert \\
& \leq c_{0}\max_{k\in I_{N,i}\left( M\right) ,k<k_{i}}N^{-1}\left(
\left\Vert \mathcal{W}_{k}\right\Vert +\frac{k}{N}\left\Vert \mathcal{W}%
_{N}\right\Vert \right) \\
& =O_{P}\left( \frac{\left( \ln N\right) ^{1/\nu }}{N^{1/2}}\right)
+O_{P}\left( N^{-1/2}\right) =o_{P}\left( 1\right) .
\end{align*}%
We now write%
\begin{align*}
B_{1,N}^{k,\left( 4\right) }=& \frac{2}{N}\left[ \frac{N^{2}}{k_{i}\left(
N-k_{i}\right) }\right] ^{\alpha }\left\langle \mathcal{M}_{k_{i}}-\frac{%
k_{i}}{N}\mathcal{M}_{N},\left( \mathcal{W}_{k}-\frac{k}{N}\mathcal{W}%
_{N}\right) -\left( \mathcal{W}_{k_{i}}-\frac{k_{i}}{N}\mathcal{W}%
_{N}\right) \right\rangle \\
& +\frac{2}{N}\left[ \frac{N^{2}}{k_{i}\left( N-k_{i}\right) }\right]
^{\alpha }\left\langle \left( \mathcal{M}_{k}-\frac{k}{N}\mathcal{M}%
_{N}\right) -\left( \mathcal{M}_{k_{i}}-\frac{k_{i}}{N}\mathcal{M}%
_{N}\right) ,\mathcal{W}_{k}-\frac{k}{N}\mathcal{W}_{N}\right\rangle \\
=& B_{1,N,1}^{k,\left( 4\right) }+B_{1,N,2}^{k,\left( 4\right) },
\end{align*}%
and study%
\begin{align*}
& \frac{1}{2}\max_{k\in I_{N,i}\left( M\right) ,k<k_{i}}\left\vert \frac{%
B_{1,N,1}^{k,\left( 4\right) }}{\Theta _{1,N}^{k}-\Theta _{1,N}^{k_{i}}}%
\right\vert \\
& \leq C\max_{k\in I_{N,i}\left( M\right) ,k<k_{i}}\Bigg[\frac{1}{N\left(
k_{i}-k\right) }\left[ \frac{N^{2}}{k_{i}\left( N-k_{i}\right) }\right]
^{\alpha } \\
& \qquad\qquad\qquad\qquad\left\langle \mathcal{M}_{k_{i}}-\frac{k_{i}}{N}%
\mathcal{M}_{N},\left( \mathcal{W}_{k}-\frac{k}{N}\mathcal{W}_{N}\right)
-\left( \mathcal{W}_{k_{i}}-\frac{k_{i}}{N}\mathcal{W}_{N}\right)
\right\rangle \Bigg] \\
& \leq C\max_{k\in I_{N,i}\left( M\right) ,k<k_{i}}\frac{1}{N\left(
k_{i}-k\right) }\left\Vert \mathcal{M}_{k_{i}}-\frac{k_{i}}{N}\mathcal{M}%
_{N}\right\Vert \left\Vert \mathcal{W}_{k}-\mathcal{W}_{k_{i}}\right\Vert \\
& \qquad+ C\max_{k\in I_{N,i}\left( M\right) ,k<k_{i}}\frac{1}{N\left(
k_{i}-k\right) }\frac{k_{i}-k}{N}\left\Vert \mathcal{M}_{k_{i}}-\frac{k_{i}}{%
N}\mathcal{M}_{N}\right\Vert \left\Vert \mathcal{W}_{N}\right\Vert \\
& =C\max_{k\in I_{N,i}\left( M\right) ,k<k_{i}}\left( k_{i}-k\right) ^{\zeta
-1}+O_{P}\left( N^{-1/2}\right) =M^{\zeta -1}O_{P}\left( 1\right)
+O_{P}\left( N^{-1/2}\right) ,
\end{align*}%
for some $1/2<\zeta <1$, having noted that 
\begin{equation*}
\left\Vert \mathcal{M}_{k_{i}}-\frac{k_{i}}{N}\mathcal{M}_{N}\right\Vert
=c_{0}k_{i}\frac{N-k_{i}}{N},
\end{equation*}%
for some positive $c_{0}$, and having used the fact that, by Lemma \ref%
{l:BHR_33}, for all $\zeta >1/2$ we have%
\begin{equation*}
\max_{j\leq k\leq l}\frac{1}{\left( k-j\right) ^{\zeta }}\sum_{i=j+1}^{k}%
\epsilon _{i}=O_{P}\left( 1\right) .
\end{equation*}%
(\ref{mineq1}) and (\ref{mineq2}) and the definition of $M$. Hence it holds
that, for all $x>0$%
\begin{equation}
\lim_{M\rightarrow \infty }\limsup_{N\rightarrow \infty }P\left( \frac{1}{2}%
\max_{k\in I_{N,i}\left( M\right) ,k<k_{i}}\left\vert \frac{%
B_{1,N,1}^{k,\left( 4\right) }}{\Theta _{1,N}^{k}-\Theta _{1,N}^{k_{i}}}%
\right\vert >x\right) =0.  \label{b141}
\end{equation}%
Similarly%
\begin{align*}
\frac{1}{2}&\max_{k\in I_{N,i}\left( M\right) ,k<k_{i}}\left\vert \frac{%
B_{1,N,2}^{k,\left( 4\right) }}{\Theta _{1,N}^{k}-\Theta _{1,N}^{k_{i}}}%
\right\vert \\
&\leq c_{0}\max_{k\in I_{N,i}\left( M\right) ,k<k_{i}}\frac{1}{N\left(
k_{i}-k\right) }\left\Vert \mathcal{M}_{k}-\mathcal{M}_{k_{i}}\right\Vert
\left\Vert \mathcal{W}_{k}-\frac{k}{N}\mathcal{W}_{N}\right\Vert \\
&+c_{0}\max_{k\in I_{N,i}\left( M\right) ,k<k_{i}}\frac{1}{N\left(
k_{i}-k\right) }\frac{k_{i}-k}{N}\left\Vert \mathcal{M}_{N}\right\Vert
\left\Vert \mathcal{W}_{k}-\frac{k}{N}\mathcal{W}_{N}\right\Vert
=O_{P}\left( \frac{\left( \ln N\right) ^{1/\nu }}{N^{1/2}}\right) ,
\end{align*}%
having used the fact that, for some positive $c_{0}$, $\left\Vert \mathcal{M}%
_{k}-\mathcal{M}_{k_{i}}\right\Vert =c_{0}\left\vert k_{i}-k\right\vert $,
whence%
\begin{equation}
\frac{1}{2}\max_{k\in I_{N,i}\left( M\right) ,k<k_{i}}\left\vert \frac{%
B_{1,N,1}^{k,\left( 4\right) }}{\Theta _{1,N}^{k}-\Theta _{1,N}^{k_{i}}}%
\right\vert =O_{P}\left( \frac{1}{N^{1/2}}\right) .  \label{b142}
\end{equation}%
Combining (\ref{b141}) and (\ref{b142}), it follows that%
\begin{equation*}
\frac{1}{2}\max_{k\in I_{N,i}\left( M\right) ,k<k_{i}}\left\vert \frac{%
B_{1,N}^{k,\left( 4\right) }}{\Theta _{1,N}^{k}-\Theta _{1,N}^{k_{i}}}%
\right\vert =o_{P}\left( 1\right) .
\end{equation*}%
Continuing our proof, we have%
\begin{align*}
B_{1,N}^{k,\left( 5\right) }=& -\left\{ \left[ \frac{N^{2}}{k\left(
N-k\right) }\right] ^{\alpha }\frac{N}{k\left( k-1\right) }-\left[ \frac{%
N^{2}}{k_{i}\left( N-k_{i}\right) }\right] ^{\alpha }\frac{N}{k_{i}\left(
k_{i}-1\right) }\right\} \sum_{i=1}^{k}\left\Vert X_{i}\right\Vert ^{2} \\
& +\left[ \frac{N^{2}}{k_{i}\left( N-k_{i}\right) }\right] ^{\alpha }\frac{N%
}{k_{i}\left( k_{i}-1\right) }\sum_{i=k+1}^{k_{i}}\left\Vert
X_{i}\right\Vert ^{2}=B_{1,N,1}^{k,\left( 5\right) }+B_{1,N,2}^{k,\left(
5\right) }.
\end{align*}%
Using the Mean Value Theorem, 
\begin{equation*}
\left\vert \left[ \frac{N^{2}}{k\left( N-k\right) }\right] ^{\alpha }\frac{N%
}{k\left( k-1\right) }-\left[ \frac{N^{2}}{k_{i}\left( N-k_{i}\right) }%
\right] ^{\alpha }\frac{N}{k_{i}\left( k_{i}-1\right) }\right\vert \leq c_{0}%
\frac{k_{i}-k}{N^{2}},
\end{equation*}%
and therefore 
\begin{equation*}
\max_{k\in I_{N,i}\left( M\right) ,k<k_{i}}\left\vert \frac{%
B_{1,N,1}^{k,\left( 5\right) }}{\Theta _{1,N}^{k}-\Theta _{1,N}^{k_{i}}}%
\right\vert \leq \max_{k\in I_{N,i}\left( M\right) ,k<k_{i}}\frac{1}{k_{i}-k}%
\frac{k_{i}-k}{N^{2}}\sum_{i=1}^{k}\left\Vert X_{i}\right\Vert
^{2}=O_{P}\left( \frac{1}{N}\right) ,
\end{equation*}%
noting that $\sum_{i=1}^{k}\left\Vert X_{i}\right\Vert ^{2}\leq
\sum_{i=1}^{N}\left\Vert X_{i}\right\Vert ^{2}=O_{P}\left( N\right) $ by the
ergodic theorem. Also%
\begin{equation*}
\max_{k\in I_{N,i}\left( M\right) ,k<k_{i}}\left\vert \frac{%
B_{1,N,2}^{k,\left( 5\right) }}{\Theta _{1,N}^{k}-\Theta _{1,N}^{k_{i}}}%
\right\vert \leq c_{0}\max_{k\in I_{N,i}\left( M\right) ,k<k_{i}}\frac{1}{%
N\left( k_{i}-k\right) }\sum_{i=k+1}^{k_{i}}\left\Vert X_{i}\right\Vert
^{2}=O_{P}\left( \frac{1}{N}\right) ,
\end{equation*}%
using similar arguments as in (\ref{e:X2_Op1}) to show that $%
\sum_{i=k+1}^{k_{i}}\left\Vert X_{i}\right\Vert ^{2}=O_{P}\left(
k_{i}-k\right) $. Hence%
\begin{equation*}
\max_{k\in I_{N,i}\left( M\right) ,k<k_{i}}\left\vert \frac{%
B_{1,N}^{k,\left( 5\right) }}{\Theta _{1,N}^{k}-\Theta _{1,N}^{k_{i}}}%
\right\vert =O_{P}\left( \frac{1}{N}\right) =o_{P}\left( 1\right) ,
\end{equation*}%
and the same can be shown for $B_{1,N}^{k,\left( 6\right) }$. Finally we
study%
\begin{align*}
B_{1,N}^{k,\left( 7\right) } &=\left\{ \left[ \frac{N^{2}}{k\left(
N-k\right) }\right] ^{\alpha }\frac{N}{k^{2}\left( k-1\right) }-\left[ \frac{%
N^{2}}{k_{i}\left( N-k_{i}\right) }\right] ^{\alpha }\frac{N}{%
k_{i}^{2}\left( k_{i}-1\right) }\right\} \left\Vert S_{k}\right\Vert ^{2} \\
&\qquad-\left[ \frac{N^{2}}{k_{i}\left( N-k_{i}\right) }\right] ^{\alpha }%
\frac{N}{k_{i}^{2}\left( k_{i}-1\right) }\left( \left\Vert
S_{k_{i}}\right\Vert ^{2}-\left\Vert S_{k}\right\Vert ^{2}\right)
=B_{1,N,1}^{k,\left( 7\right) }+B_{1,N,2}^{k,\left( 7\right) }.
\end{align*}%
The Mean Value Theorem yields%
\begin{equation*}
\left\vert \left[ \frac{N^{2}}{k\left( N-k\right) }\right] ^{\alpha }\frac{N%
}{k^{2}\left( k-1\right) }-\left[ \frac{N^{2}}{k_{i}\left( N-k_{i}\right) }%
\right] ^{\alpha }\frac{N}{k_{i}^{2}\left( k_{i}-1\right) }\right\vert \leq
c_{0}\frac{k_{i}-k}{N^{3}},
\end{equation*}%
and therefore%
\begin{equation*}
\max_{k\in I_{N,i}\left( M\right) ,k<k_{i}}\left\vert \frac{%
B_{1,N,1}^{k,\left( 7\right) }}{\Theta _{1,N}^{k}-\Theta _{1,N}^{k_{i}}}%
\right\vert \leq c_{0}\max_{k\in I_{N,i}\left( M\right) ,k<k_{i}}\frac{1}{%
k_{i}-k}\frac{k_{i}-k}{N^{3}}\left\Vert S_{k}\right\Vert ^{2}=O_{P}\left( 
\frac{1}{N}\right) ,
\end{equation*}%
using the bound $\max_{1\leq k\leq N}\left\Vert S_{k}\right\Vert
=O_{P}\left( N\right) $. Further%
\begin{equation*}
\max_{k\in I_{N,i}\left( M\right) ,k<k_{i}}\left\vert \frac{%
B_{1,N,2}^{k,\left( 7\right) }}{\Theta _{1,N}^{k}-\Theta _{1,N}^{k_{i}}}%
\right\vert \leq c_{0}\max_{k\in I_{N,i}\left( M\right) ,k<k_{i}}\frac{1}{%
N^{2}\left( k_{i}-k\right) }\left\Vert S_{k}+S_{k_{i}}\right\Vert \left\Vert
S_{k_{i}}-S_{k}\right\Vert =O_{P}\left( \frac{1}{N}\right) ,
\end{equation*}%
recalling that $\max_{1\leq k\leq N}\left\Vert S_{k}\right\Vert =O_{P}\left(
N\right) $ and using (\ref{m-ineq-2}). Thus%
\begin{equation*}
\max_{k\in I_{N,i}\left( M\right) ,k<k_{i}}\left\vert \frac{%
B_{1,N}^{k,\left( 7\right) }}{\Theta _{1,N}^{k}-\Theta _{1,N}^{k_{i}}}%
\right\vert =o_{P}\left( 1\right) ,
\end{equation*}%
and the sample applies to $B_{1,N}^{k,\left( 8\right) }$. Putting all
together, we have shown (\ref{v1}). The inequalities (\ref{8.2.55}) and (\ref%
{8.2.56}) imply that, for $i$ such that $k_{i}\in \mathcal{H}_{\max }$,
there are constants $c_{i}>0$ such that $\max_{k\in I_{N,i}\left( M\right)
,k<k_{i}}\left( \Theta _{1,N}^{k}-\Theta _{1,N}^{k_{i}}\right) \leq -c_{i}M$%
, and therefore we have%
\begin{equation}
\lim_{M\rightarrow \infty }\limsup_{N\rightarrow \infty }\max_{k\in
I_{N,i}\left( M\right) ,k<k_{i}}\left( \Theta _{1,N}^{k}-\Theta
_{1,N}^{k_{i}}\right) =-\infty .  \label{dominate2}
\end{equation}%
On the other hand, 
\begin{equation}
\max_{k\in L_{N}}\left( \frac{1}{2}N\left( u\left( 1-u\right) \right)
^{2-\alpha }V_{N}^{\left( 1,N\right) }\left( k\right) -\frac{1}{2}N\left(
u\left( 1-u\right) \right) ^{2-\alpha }V_{N}^{\left( 1,N\right) }\left(
k_{i}\right) \right) \geq 0,  \label{dominate3}
\end{equation}%
for all $k_{i}\in \mathcal{H}_{\max }$, and therefore %
\begin{align*}
&P\left( \dist\left( \widetilde{k}_{1},\mathcal{H}\right)>M \right) \\
&\leq\hspace{-3mm} \sum_{i:k_{i}\in \mathcal{H}_{\max }}P\Bigg( \max_{k\in
I_{N,i}\left( M\right) }\bigg( \frac{1}{2}N\left( u\left( 1-u\right) \right)
^{2-\alpha }V_{N}^{\left( 1,N\right) }\left( k\right) - \frac{1}{2}N\left(
u\left( 1-u\right) \right) ^{2-\alpha }V_{N}^{\left( 1,N\right) }\left(
k_{i}\right)\geq 0 \bigg)\Bigg)
\end{align*}%
%
%
Combining (\ref{dominate2}), (\ref{dominate3}), and (\ref{v1}), we obtain%
\begin{align*}
& \lim_{M\rightarrow \infty }\limsup_{N\rightarrow \infty }P\left(
\max_{k\in I_{N,i}\left( M\right) }\left( \frac{1}{2}N\left( u\left(
1-u\right) \right) ^{2-\alpha }V_{N}^{\left( 1,N\right) }\left( k\right)
\right. \right. \\
& \qquad\qquad\qquad\qquad\qquad\qquad-\left. \left. \frac{1}{2}N\left(
u\left( 1-u\right) \right) ^{2-\alpha }V_{N}^{\left( 1,N\right) }\left(
k_{i}\right)\right) \geq 0 \right) =0,
\end{align*}%
whence finally $\dist\left( \widetilde{k}_{1},\mathcal{H}\right)
=O_{P}\left( 1\right) $, thus implying (\ref{greg-1}). Further note that, as
a consquence of \eqref{greg-1} and the fact that $k_{j}=\left\lfloor N\theta
_{j}\right\rfloor$, 
\begin{equation}  \label{e:khat_wellsep}
\lim_{N\rightarrow \infty }P\left( \min \{\widehat{k}_{1}, N-\widehat{k}_1\}
>a^{\prime }N\right) =1.
\end{equation}
for some $a^{\prime }\in(0,1)$. The proof now proceeds by induction, making
use of \eqref{lemma823}, and essentially by repeating the same arguments as
in the proof of Theorem 8.2.2 in \citet{chgreg}, which we summarize
hereafter for the sake of a complete discussion. Now, let $\widehat{k}_i(r)$
denote the $i$--th changepoint in increasing order appearing at the $r$--th
iteration of the binary segmentation algorithm. By way of induction, assume
that, at the $r$--th step, $1\leq r\leq R$ changepoints have been estimated, 
$1=\widehat{k}_{0}(r)<\widehat{k}_{1}(r)<\ldots <\widehat{k}_{r}(r)<\widehat{%
k}_{r+1}(r)=N$, satisfying 
\begin{equation}
\max_{1\leq i\leq r}\dist\left( \widehat{k}_{i}(r),\mathcal{H}\right)
=O_{P}\left( 1\right) ,  \label{ind1}
\end{equation}%
and as in \eqref{e:khat_wellsep}, the $\widehat{k}_i(r)$ are well-separated,
i.e., 
\begin{equation}
\lim_{N\rightarrow \infty }P\left( \min_{0\leq i\leq r}\widehat{k}_{i+1}(r)-%
\widehat{k}_{i}(r)>a^{\prime }N\right) =1,  \label{ind2}
\end{equation}%
for some $a^{\prime }\in \left( 0,1\right) $ (note \eqref{ind2} implies the $%
\widehat{k}_i(r)$, $1\leq i\leq r$ are concentrated around $r$ distinct
changepoints and remain separated from the boundary). Under (\ref{ind1}) and
(\ref{ind2}), following arguments analogous to those for $A_{1,N}^{k,\left(
h\right)}$, $h=1,\ldots,6$, it holds that%
\begin{equation}
\max_{0\leq i\leq r}\max_{\widehat{k}_{i}(r)\leq k\leq \widehat{k}%
_{i+1}(r)}\left\Vert \sum_{h=1}^{6}A_{\widehat{k}_{i}(r),\widehat{k}%
_{i+1}(r)}^{k,\left( h\right)}\right\Vert =O_{P}\left( N^{1/2}\left( \ln
N\right) ^{1/\nu }\right)  \label{ind3}
\end{equation}%
(c.f.~\eqref{e:Ak2_1N}). Consider now a sequence $i_{N}$ such that $%
N^{1/2}\left( \ln N\right) ^{1/\nu }/i_{N}+i_{N}/\tau _{N}\rightarrow 0$ as $%
N\rightarrow \infty $, and define the events 
\begin{equation*}
\mathcal{B}_{N,r}\left( N^{\prime }\right) =\left\{ \max_{1\leq i\leq r}\dist%
\left( \widehat{k}_{i}(r),\mathcal{H}\right) \leq N^{\prime }\right\} ,
\end{equation*}
and 
\begin{equation*}
\mathcal{A}_{N,r}=\left\{ \max_{0\leq i\leq r}\max_{\widehat{k}_{i}\leq
k\leq \widehat{k}_{i+1}}\left\Vert \sum_{h=1}^{6} A_{\widehat{k}_{i}(r),%
\widehat{k}_{i+1}(r)}^{k,\left( h\right) }\right\Vert \leq i_{N}\right\} .
\end{equation*}%
Recall the result in (\ref{lemma823}), using $a_{N}=i_{N}$ and $%
f_{N}=N^{\prime }$. From the induction hypothesis \eqref{ind1}, 
\begin{equation}  \label{e:B(N')_to_1}
\lim_{N^{\prime }\to\infty} \liminf_{N\to\infty}P\left( \mathcal{B}%
_{N,r}\left( N^{\prime }\right) \right) =1.
\end{equation}
Now, if $r=R$, it holds that as $N^{\prime },N\rightarrow \infty $ 
\begin{equation*}
P\left( \mathcal{A}_{N,r}\cap \mathcal{B}_{N,r}\left( N^{\prime }\right)
\right) \to 1
\end{equation*}%
Thus by \eqref{lemma823}, 
\begin{equation*}
\max_{\ell <k<u}\left\vert \frac{1}{2}\left( u-\ell \right) \left( \frac{%
\left( k-\ell \right) \left( u-k\right) }{\left( u-\ell \right) ^{2}}\right)
^{2-\alpha }V_{N}^{\left( \ell ,u\right) }\left( k\right) \right\vert \leq
c_{0}\max \left\{ i_N,N^{\prime }\right\}
\end{equation*}
with probability tending to 1 as $N,N^{\prime }\to\infty$. This implies that
the procedure terminates on the set $\mathcal{A}_{N,r}\cap \mathcal{B}%
_{N,r}\left( N^{\prime }\right) $, because $i_{N}=o\left( \tau _{N}\right) $
and $N^{\prime }$ can be chosen to diverge simultaneously with $N$
arbitrarily slowly, giving $c_{0}\max \left\{ i_N,N^{\prime }\right\}<\tau_N$
for all large $N$.

If instead we have $r<R$, then, on one of the subsegments determined by $%
\ell =\widehat{k}_{i}(r)$ and $u=\widehat{k}_{i+1}(r)$, equation \eqref{8222}
must hold with some $\zeta<1/2$ for all large $N$ (indeed, if not, then
since $r<R$, there would be at least one $i$, $1\leq i \leq R$, with two
changepoints in the interval $(\widehat{k}_i(r)-m_N,\widehat{k}_i(r) +m_N)$,
but this interval has length $2m_N <2\zeta \min\{k_i-k_{i-1} \} +O(1/N)<
\min\{k_i-k_{i-1} \}$ for all large $N$). Denote by $(\ell_*,u_*)=(\widehat{k%
}_{i_*}(r),\widehat{k}_{i_*+1}(r))$ where $i_*$ is the smallest index (say)
on which \eqref{8222} holds. From Lemma \ref{drift-2}, we have $%
\max_{\ell_*\leq k\leq u_*}\Theta _{\ell_*,u_*}^{k}\geq c_{0}N$ almost
surely, and arguing similarly to the case of $u=1,\ell=N$, another
changepoint is detected in the interval $(\ell_*,u_*)$ with probability
tending to 1. It now remains to show the next estimated changepoint remains
a bounded distance from $\mathcal{H}$ and is well-separated from the
previous estimates $\widehat{k}_i(r)$. To this end, let 
\begin{equation*}
\widehat{k}^{\ast }=\sargmax_{\ell_* <k<u_*}\frac{1}{2}\left( u_*-\ell_*
\right) \left( \frac{\left( k-\ell_* \right) \left( u_*-k\right) }{\left(
u_*-\ell_* \right) ^{2}}\right) ^{2-\alpha }V_{N}^{\left( \ell_* ,u_*\right)
}\left( k\right) ,
\end{equation*}%
and define the event%
\begin{equation*}
\mathcal{B}_{N}^{\ast }\left( N^{\ast }\right) =\left\{ \dist\left( \widehat{%
k}^{\ast },\mathcal{H}\right) >N^{\ast },\min_{0\leq i\leq r+1}\left\vert 
\widehat{k}^{\ast }-\widehat{k}_{i}(r)\right\vert >a^{\prime }N\right\}.
\end{equation*}%
with some $N^*>0.$ Note that on the event $\mathcal{B}_{N,r}(N^{\prime })$,
both $\ell_*$ and $u_*$ are within $N^{\prime }$ distance of $\mathcal{H}$,
i.e., $\min_{1\leq i<i^{\prime }\leq R}\max\{|\ell_*-k_i|,|u_*-k_{i^{\prime
}}|\}<N^{\prime }$. Using \eqref{e:B(N')_to_1}, it holds that, for each $%
\epsilon >0$ we may fix a large $N^{\prime }$ such that for all large $N$%
\begin{align}
P\left( \mathcal{B}_{N}^{\ast }\left( N^{\ast }\right) \right) &\leq P\left( 
\mathcal{B}_{N}^{\ast }\left( N^{\ast }\right) \cap \mathcal{B}_{N,r}\left(
N^{\prime }\right) \right) +\epsilon  \notag \\
&\leq P\left( \mathcal{B}_{N}^{\ast }\left( N^{\ast }\right) \cap \Big\{%
\min_{1\leq i<i^{\prime }\leq R}\max\{|\ell_*-k_i|,|u_*-k_{i^{\prime
}}|\}<N^{\prime }\Big\}\right) +\epsilon  \notag \\
&\leq \sum_{1\leq i<i^{\prime }\leq R }\sum_{\left\{ (\ell,u): \left\vert
\ell -k_{i}\right\vert <N^{\prime }\left\vert u-k_{i^{\prime }}\right\vert
<N^{\prime }\right\} }P\left( \mathcal{B}_{N}^{\ast }\left( N^{\ast }\right)
,\ell_* =\ell ,u_*=u\right) +\epsilon .  \label{eP(BNstar)_bound}
\end{align}
Repeating the proof of (\ref{greg-1}), it can be shown that, for all $%
(\ell,u)$ in the set $\{ (\ell,u):\left\vert \ell -k_{i}\right\vert
<N^{\prime },\left\vert u-k_{i^{\prime }}\right\vert <N^{\prime }\} $, it
holds that $\lim_{N^{\ast }\rightarrow \infty }\limsup_{N\rightarrow \infty
}P\left( \mathcal{B}_{N}^{\ast }\left( N^{\ast }\right) ,\ell_* =\ell ,u_*=u
\right) =0$. Since $\big|\!\left\{ (\ell,u): \left\vert \ell
-k_{i}\right\vert <N^{\prime }\left\vert u-k_{i^{\prime }}\right\vert
<N^{\prime }\right\}\!\big|< 4N^{\prime }$, and $N^{\prime }$ is fixed, from %
\eqref{eP(BNstar)_bound} we obtain the limit $\lim_{N^{\ast }\rightarrow
\infty }\limsup_{N\rightarrow \infty }P\left( \mathcal{B}_{N}^{\ast }\left(
N^{\ast }\right) \right) <\epsilon$ for every $\epsilon>0$; in turn, this
implies
\begin{equation*}
\dist\left( \widehat{k}^{\ast },\mathcal{H}\right) =O_{P}\left( 1\right)
\quad\text{and}\quad \lim_{N\rightarrow \infty }P\left( \min_{0\leq i\leq r}%
\widehat{k}^{\ast }-\widehat{k}_{i}(r)>a^{\prime }N\right) =1,
\end{equation*}%
implying $\widehat{k}^*$ is concentrated around a new changepoint, yielding
the final result.
\end{proof}

\begin{proof}[Proof of Lemma \protect\ref{vcv}]
The lemma can be shown by following, with minor modifications, the arguments
in \citet{horvath:kokoszka:reeder:2013} and \citet{berkes:horvath:rice:2016}.
\end{proof}

\begin{proof}[Proof of Lemma \protect\ref{x-bernoulli}]
We begin by noting that, on account of Assumption \ref{as:Yk}\textit{(iii)} 
\begin{equation}
E\left\vert \xi _{1,\ell }-\xi _{1,\ell }^{(m)}\right\vert ^{\beta }\leq
\Vert Y_{j}-Y_{j}^{(m)}\Vert ^{\beta }\leq Cm^{-\alpha _{0}}.
\label{e:xi-xi^(m)_bound}
\end{equation}%
Using the elementary inequality $|\exp (\mathbf{i}x)-\exp (\mathbf{i}y)|\leq
\min \{2,|x-y|\}$, valid for all $x,y\in \mathbb{R}$, it holds that 
\begin{equation*}
\Vert X_{1}-X_{1}^{(m)}\Vert \leq C\sum_{\ell =1}^{d}\min \left\{ 2,|\xi
_{1,\ell }-\xi _{1,\ell }^{(m)}|\right\} .
\end{equation*}%
Now, for $0<u<2$, 
\begin{align*}
& E\min \left\{ 2,|\xi _{1,\ell }-\xi _{1,\ell }^{(m)}|\right\} ^{\gamma } \\
& =E\min \left\{ 2,|\xi _{1,\ell }-\xi _{1,\ell }^{(m)}|\right\} ^{\gamma
}\left( \mathbf{1}_{\{|\xi _{1,\ell }-\xi _{1,\ell }^{(m)}|>u\}}+\mathbf{1}%
_{\{|\xi _{1,\ell }-\xi _{1,\ell }^{(m)}|\leq u\}}\right) \\
& \leq 2^{\gamma }P\left( |\xi _{1,\ell }-\xi _{1,\ell }^{(m)}|>u\right)
+u^{\gamma }\leq 2^{\gamma }u^{-\beta }E|\xi _{1,\ell }-\xi _{1,\ell
}^{(m)}|^{\beta }+u^{\gamma }\leq Cu^{-\beta }m^{-\alpha _{0}}+u^{\gamma },
\end{align*}%
where we used \eqref{e:xi-xi^(m)_bound} on the fourth line above. By picking 
$u=m^{-\alpha _{0}/(\beta +\gamma )}$, we obtain $E\min \left\{ 2,|\xi
_{1,\ell }-\xi _{1,\ell }^{(m)}|\right\} ^{\gamma }\leq Cm^{-\gamma \alpha
_{0}/(\beta +\gamma )}$. Since $\gamma \alpha _{0}/(\beta +\gamma )>2$ if
and only if $\gamma >2\beta /(\alpha _{0}-2)$, the desired result follows.
\end{proof}

\end{document}